\documentclass{article}
\usepackage{hyperref, graphicx}
\usepackage{amsmath}
\usepackage{tocloft}
\usepackage{amssymb}
\usepackage{slashed}
\usepackage{yfonts}
\usepackage[T1]{fontenc}
\usepackage[utf8]{inputenc}
\usepackage[english]{babel}
\usepackage{amsthm}
\usepackage{mathabx}
\usepackage{subcaption}
\usepackage{tikz-cd}
\usepackage{quiver}
\usepackage{mathabx}
\usepackage{stmaryrd}
\usepackage{braket}
\usepackage{physics}
\usepackage{makecell}
\usepackage{multirow,array}
\usepackage{adjustbox}
\usepackage[most]{tcolorbox}
\usepackage{geometry}
\usepackage[
backend=biber,
style=numeric-comp,
sorting=none,
]{biblatex}
\usepackage{url}

\makeatletter
\newcommand{\ostar}{\mathbin{\mathpalette\make@circled\star}}
\newcommand{\make@circled}[2]{%
  \ooalign{$\m@th#1\smallbigcirc{#1}$\cr\hidewidth$\m@th#1#2$\hidewidth\cr}%
}
\newcommand{\smallbigcirc}[1]{%
  \vcenter{\hbox{\scalebox{0.77778}{$\m@th#1\bigcirc$}}}%
}
\makeatother

\newtheorem{theorem}{Theorem}[section]

\newtheorem{corollary}[theorem]{Corollary}
\newtheorem{proposition}[theorem]{Proposition}
\newtheorem{conjecture}[theorem]{Conjecture}

\usepackage{authblk}
    \allowdisplaybreaks

\hypersetup{
pdfstartview = {FitH},
}
\hypersetup{
	colorlinks=true,         
	linkcolor=blue,          
	citecolor=red,        
	urlcolor=blue            
}

\DeclareMathAlphabet{\mathpzc}{OT1}{pzc}{m}{it}

\makeatletter
\def\@font@info#1{}
\makeatother

\theoremstyle{remark}
\newtheorem{remark}{Remark}[section]
\newenvironment{rmk}{\begin{remark}}{\hfill$\Diamond$\end{remark}}

\newtheorem{example}[remark]{Example}

\theoremstyle{definition}
\newtheorem{definition}[theorem]{Definition}

\usepackage{mathtools}

\usepackage{mathrsfs}

\numberwithin{equation}{section}

\newcommand{\be}{\begin{equation}}
\newcommand{\ee}{\end{equation}}
\newcommand{\ba}{\begin{aligned}}
\newcommand{\ea}{\end{aligned}}

\def\bbra{\langle\langle}
\def\kket{\rangle\rangle}
\def\Tr{\mathbb{T}\mathrm{r}}
\def\Dim{\mathbb{D}\mathrm{im}}

\def\bbC{{\mathbb{C}}}

\def\bbG{{\mathbb{G}}}

\def\bbZ{{\mathbb{Z}}}

\newcommand\ttC         {\mathtt{C}}

\newcommand\ttI         {\mathtt{I}}

\newcommand\ttU         {\mathtt{U}}
\newcommand\ttV         {\mathtt{V}}
\newcommand\ttW         {\mathtt{W}}
\newcommand\ttX         {\mathtt{X}}

\def\bbZ{\mathbb{Z}}
\def\R{\mathbb{R}}
\def\bbC{\mathbb{C}}

\newcommand{\cA}{{\cal A}}
\newcommand{\cC}{{\cal C}}
\newcommand{\cD}{{\cal D}}

\newcommand{\cH}{{\cal H}}

\newcommand{\cV}{{\cal V}}

\makeatletter
\newsavebox{\@brx}
\newcommand{\llangle}[1][]{\savebox{\@brx}{\(\m@th{#1\langle}\)}%
  \mathopen{\copy\@brx\kern-0.5\wd\@brx\usebox{\@brx}}}
\newcommand{\rrangle}[1][]{\savebox{\@brx}{\(\m@th{#1\rangle}\)}%
  \mathclose{\copy\@brx\kern-0.5\wd\@brx\usebox{\@brx}}}
\makeatother

\makeatletter
\newcommand{\xRrightarrow}[2][]{\ext@arrow 0359\Rrightarrowfill@{#1}{#2}}
\newcommand{\Rrightarrowfill@}{\arrowfill@\equiv\equiv\Rrightarrow}
\newcommand{\xLleftarrow}[2][]{\ext@arrow 3095\Lleftarrowfill@{#1}{#2}}
\newcommand{\Lleftarrowfill@}{\arrowfill@\Lleftarrow\equiv\equiv}
\makeatother

\newcommand{\id}{\operatorname{id}}

\begin{document}


\title{Categorical spin-networks: state-sum invariants of 4-manifolds from higher-gauge theory}

\bigskip

\author[1,2]{{\sf Hank Chen}\thanks{hank.chen@uwaterloo.ca}\thanks{chunhaochen@bimsa.cn}}
\author[3]{{\sf Zhian Jia}\thanks{giannjia@foxmail.com}}
\author[4]{{\sf Ran Luo}\thanks{ranluo@pku.edu.cn}}
\author[1]{{\sf Wei Cui}\thanks{cwei@bimsa.cn}}

\affil[1]{\small Beijing Institute of Mathematical Sciences and Applications, Beijing 101408, China}
\affil[2]{\small Yau Mathematical Sciences Center, Tsinghua University, Beijing 100084, China}
\affil[3]{\small Institute of Quantum Physics, School of Physics, Central South University,
Changsha 410012, China}
\affil[4]{\small School of Physics, Peking University, 209 Chengfu Road, Beijing 100871, China}

\date{\normalsize\today}

\maketitle

\bigskip

\begin{abstract}
In this work, we define $(3+1)$-dimensional topological quantum field theories (TQFTs) and their lattice realizations with input data given by a pivotal fusion 2-category; more precisely, our framework applies in wider generality to presemisimple locally fusion pivotal tensor $2$-categories, which are equipped with a so-called \textit{shadow trace}. 
We introduce a decoration of triangulated manifolds by the input data, and develop a $20j$-symbol formalism for computing 4-simplex scattering amplitudes. We verify the topological invariance of 20$j$-symbol summations under the 4-dimensional Pachner moves. Furthermore, we also construct explicitly a 3-dimensional Levin-Wen-type lattice Hamiltonian model realizing the state-sum invariant. Our construction can be understood as a higher-dimensional analogue of the spin-networks construction for the Turaev-Viro invariant. Examples arising from $2$-groups are discussed from the perspective of higher gauge theory.
\end{abstract}

\bigskip


\tableofcontents


\section{Introduction}
\label{intro}






Topological quantum field theories (TQFTs) and their lattice realizations have attracted significant attention over the past decades. Beyond their fundamental theoretical importance, they also play a central role in condensed matter physics, topological quantum computation, and topological quantum memory, see, e.g., \cite{wang2010topological,simon2023topoligical,Wen2004}.
For a given TQFT, a lattice realization consists of an equivalence class of gapped Hamiltonians and their ground-state Hilbert spaces, denoted by $(\boldsymbol H,\mathcal{H})$, for which the TQFT emerges as the corresponding low-energy effective theory. Typical examples include the Kitaev quantum double model \cite{Kitaev2003,Buerschaper2013a,chang2014kitaev,Meusburger:2016usq,jia2023boundary,Jia2023weak,Chen2z:2023}, which realizes the Dijkgraaf-Witten theory (or BF theory) \cite{dijkgraaf1990}, and the Levin-Wen string-net model \cite{Levin2004,kirillov2011stringnet,KitaevKong_2012,lan2014topological,Lin2014generalizations,jia2024weakTube,xi2021latticerealizationgeneralthreedimensional}, which realizes the Turaev-Viro-Barrett-Westbury TQFT \cite{Turaev:1992hq,Barrett1993}.

The $ (2+1)\text{d} $ Turaev-Viro-Barrett-Westbury TQFT is constructed from a spherical fusion category. When extending this framework to higher dimensions, higher fusion categories must be considered as algebraic input. Nevertheless, explicit constructions of both the state-sum TQFT and the corresponding lattice model in higher dimensions are still very rare and remain under intensive investigation. For $ (3+1)\text{d} $, the Douglas-Reutter construction \cite{Douglas:2018} is known to provide a state-sum TQFT based on a fusion $2$-category. The $ 3\text{d} $ string-net model for realizing such state-sum TQFTs is discussed in \cite{xi2021latticerealizationgeneralthreedimensional}, but the problem is far from settled. Note that in the physics literature \cite{Levin2004}, there is also a $ 3\text{d} $ string-net lattice model, but it is constructed from a fusion $1$-category and is therefore not genuinely $ (3+1)\text{d} $ in the general sense.

Motivated by the above considerations, the goal of this paper is to categorify the state-sum TQFT \cite{Turaev:1992hq,Barrett1993} and the corresponding Levin-Wen string-net construction \cite{Levin:2004mi} to (3+1)-dimensions, based on the data of a "sufficiently good" (see \S \ref{prelim} for details) linear tensor 2-category $\cC$. We pay particular attention to the case where  $\cC=\operatorname{2Rep}(H)=\mathsf{Mod}_\cV(H)$ is given by the exact 2-representations of a linear (Abelian) Hopf category $H$. Here,  by a "Hopf category" we specifically mean a Hopf algebra object in  an ambient {linear} symmetric monoidal closed bicomplete bicategory $\cV$; such higher algebraic structures have been studied in, for instance, \cite{neuchl1997representation,Huang2026-dr,Chen:2023tjf,Chen:2025?,Green:2023qqr}.

\begin{table}\hspace{-0.7cm}
    \centering
    \begin{tabular}{c|c|c|c}
     \hline
      \hline
         (3+1)d TQFT & gauge 2-group $\mathbb{G}$ & ambient $\cV$ & combinatorial invariants \\ 
         \hline
         \makecell{Dijkgraaf-Witten \\ \cite{Kapustin:2013uxa,Wen:2019,KongTianZhou:2020}} & finite & \makecell{Kapranov-Voevodsky \\ 2-vector spaces \\ $\cV=\mathsf{2Vect}$ \cite{Kapranov:1994}} & {\makecell{Yetter invariant \\ \cite{Bullivant:2016clk,Martins:2006hx,Bochniak:2020vil,Bullivant:2019tbp}}} \\ 
         \hline
        \makecell{2-Chern-Simons \\ \cite{Martins:2010ry,Mikovic:2016xmo,Baez:2002highergauge,Baez:2004in,chen:2022,Bochniak_2021,Chen1:2025?,chen2:2025}} & compact Lie & \makecell{Crane-Yetter \\ measureable categories \\ $\cV=\mathsf{Meas}$ \cite{Crane:2003gk,Yetter2003MeasurableC,Baez:2012,Kristel:2023gus} } & \makecell{{\color{blue} \textit{quantum} 20$j$-symbols} \cite{Liu:2024qth} \\ ("Crane-Yetter-like" \cite{Baez:1995ph})}\\
        \hline
      \hline
    \end{tabular}
    \caption{This table describes the layout for the construction of TQFT invariants from higher-group gauge theories, based on the 2-representation theory of finite (resp. compact Lie) 2-groups $\mathbb{G}$ within the context of the symmetric linear bicategory $\cV=\mathsf{2Vect}$ (resp. $\mathsf{Meas}$). The category $\ttI=\mathsf{Vect}$ of $\mathbb{C}$-vector spaces serves as the unit object; in the case of $\mathsf{Meas}$, these vector spaces can be infinite-dimensional.}
    \label{tab:4dtqfts}
\end{table}


The main inspiration for our construction is higher-gauge theory and its space of  \textit{2-holonomy} states. In this case, the context determines the ambient bicategory $\cV$ and the Hopf category $H\in\cV$ therein; see  Tab. \ref{tab:4dtqfts}. For example, $H$ is either the Karoubi complete category  $H=\mathsf{Vect}_\mathbb{G}\in\mathsf{2Vect}$ of finite-dimensional vector (or Hilbert) spaces graded by a finite 2-group $\mathbb{G}$ \cite{SOZER2023109155,Huang:2024},\footnote{Briefly, the objects and 1-morphisms of $H$ are graded by those of the 2-group $\mathbb{G}$.} or the Abelianized measureable category $H=\mathsf{Meas}_\mathbb{G} \in\mathsf{Meas}$ on a compact Lie 2-group $\mathbb{G}$ \cite{Chen1:2025?}.

\begin{rmk}\label{bicategorybgd}
     Generally, one think of $H_G$ as a category generated by "vector spaces $\underline{\bbC}_g$ localized at $g\in G$" (but this need not be literally true). In the setting where $G$ is an algebraic group scheme, this can be made precise by the quasicoherent \textit{skyscraper sheaves} $H_G=\mathsf{Sky}_G$ on $G$ \cite{Jia:2025vrj,Freed:2009qp}. As indicated in Tab. \ref{tab:4dtqfts}, there are measureable categories of Crane-Yetter which, by viewing $G$ (with its Haar-Radon measure $\mu$) as ringed by the commutative $W^*$-algebra $L^\infty(G,\mu)$, generalizes $\operatorname{QCoh}(G)\rightsquigarrow$ "$\operatorname{QCoh}_\mathsf{Meas}(G)$" to the locally compact Hausdorff setting. Similar geometric models for $\mathsf{Hilb}_\mathbb{G}$ have also appeared in  \cite{Stockall:2025ngz,Jia:2026vcr,2026arXiv260603368X}. A further generalization of $\mathsf{Meas}$, which consist of (projective locally-presentable) modules of non-commutative von Neumann algebras, has also been considered in \cite{Kristel:2023gus}. These are sheaf-theoretic versions of the bicommutant categories of Henriques-Penneys \cite{Henriques2017-gm}.
\end{rmk}


Denote by $\Gamma$ any oriented combinatorial graph complex embedded in a 3-manifold $M^3$; for instance, $\Gamma$ may be taken as the simplicial complex $\Delta$ underlying a triangulation of $M^3$. In higher-gauge theory, the fundamental degrees-of-freedom are the admissible configuration $\{(h_e\xrightarrow{b_t}h_{e'})\}_{(e\xrightarrow{t}e')\in\Delta^{(2)}}$ of $\mathbb{G}$-decorated 2-simplices $e\xrightarrow{f}e'\in\Delta^{(2)}$ \cite{Martins:2010ry,Martins:2006hx,Bochniak:2020vil,Bochniak_2021} --- aka. the so-called discrete \textit{2-holonomies} \cite{Chen1:2025?,schreiber2013connectionsnonabeliangerbesholonomy}. 

In Prop. 6.9 of \cite{chen2:2025}, it was shown that the quantum states associated to these $\mathbb{G}$-decorated polygons are modelled by (\textit{homogeneous morphisms} in) a linear Hopf monoidal category $H_\Delta^*$, which plays the role of a "categorified algebra of functions". Under some technical assumptions, $H^*_\Delta$ is given by functorial assignments of the morphisms in the Yoneda dual Hopf category $H^*\simeq\operatorname{Fun}_\cV(H,\mathsf{Vect})$ to the 2-skeleton of $\Delta$,
\begin{equation*}
    H^*_\Delta = \left\{ \Delta^{\leq 2} \rightarrow \operatorname{Mor}_\bbC (H^*)\right\},
\end{equation*}
where $\operatorname{Mor}_\bbC(C)$ denotes the direct sum of all hom-spaces of a linear category $C$.

Together with the categorical analogue of the \textit{Peter-Weyl theorem}, which is formally an equivalence
\begin{equation}
  H_\Delta^* \simeq \bigboxtimes_{t\in\Delta^{(2)}} \bigboxplus_{\mathtt{U}\in\operatorname{Irr}H}\mathtt{U}\boxtimes \mathtt{U}^\text{op}\, \qquad \underline{\text{in $\mathsf{2Vect}$}}\,,\label{formalPW}  
\end{equation}
higher-gauge theory then prompts us to assign (objects and morphisms of) indecomposable linear $H$-module categories $\ttU$ to (edges and faces of the 2-simplices in) the complex $\Delta$. See \S \ref{2pw} for a more precise formulation.

\medskip

This is the higher-categorical/-dimensional analogue of the spin-network/lattice Chern-Simons construction \cite{Baez:1994hx,Alekseev:1994pa,Alekseev:1994au}, where the $G$-holonomy states on the graph $\Delta$ form the function Hopf algebra $C(G^{\Delta^{(1)}}) \cong \bigotimes_{e\in\Delta^{(1)}}C(G)$. Indeed, the usual {Peter-Weyl theorem} \cite{Weyl1927} then gives $$C(G^{\Delta^{(1)}})\cong \bigotimes_{e\in\Delta^{(1)}} \bigoplus_{\rho\in\operatorname{Irr}G}\rho\otimes \rho^*\,,$$
which dictates that lattice $G$-gauge theory assigns irreducible representations of $G$ to the edges of $\Delta$. 

\medskip

However, here we wish to interpret \eqref{formalPW} and the corresponding decorated lattice in terms of the input 2-representation 2-category $\cC=\operatorname{2Rep}(H)$. This is achieved through the hom-objects $\ttU\boxtimes \ttU^\text{op}\simeq\operatorname{End}_H(\ttU)$ in $\cC$, whence the prescription \eqref{formalPW} is recast as assignments
\begin{equation*}
    K: \begin{pmatrix}
        \Delta^{(2)} \\ 
        \Delta^{(1)}\\
        \Delta^{(0)}
    \end{pmatrix}\to\begin{pmatrix}
        \text{2-morphisms of $\cC$}\\
        \text{1-morphisms of $\cC$}\\
        \text{objects of $\cC$}
    \end{pmatrix},\qquad \cC=\operatorname{2Rep}(H)\,
\end{equation*}
of 1- and 2-morphisms in $\cC$ to the simplices $\Delta$. For any sufficiently good 2-category $\cC$, this idea will serve as the foundation for the "$\cC$-decorated simplices" defined in  \S \ref{decorated2graphs}.
Furthermore, the maps determined from the Peter-Weyl theorem give rise to the higher {Clebsch-Gordan coefficients}.

\subsection{Relation to previous works: 4-manifold invariants}\label{diffinvars}
Given a positively/negatively-oriented 4-simplex $\sigma^\pm$ and a $\cC$-decoration $\mathcal{K}$ on it, we define its associated 4-simplex scattering amplitude $\langle\sigma^\pm(\mathcal{K})\rangle$ in \S \ref{20jsymbols}. They involve the so-called "20$j$-symbols" (\textbf{Definition \ref{20jsymboldef}}), which is explicitly evaluated under mild assumptions in \S \ref{eval}. Such 20$j$-symbols and their relevance to 4d state sum invariants have also been studied in \cite{Liu:2024qth}. 

However, since our construction of the 4-simplex amplitude differs from the conventional approach of \cite{Douglas:2018} (but is related to it; see \S \ref{discrete2gau}), we will explicitly prove their invariance under the 4-dimensional Pachner moves in \S \ref{pachner}. This  then allows us to define a state sum invariant \eqref{partition}  $$Z_\cC(M^4) = \sum_{\mathcal{K}}\prod_{\sigma\hookrightarrow T_{M^4}}\langle\sigma^\pm(\mathcal{K})\rangle$$
of oriented smooth 4-manifolds, where $T_{M^4}$ is any choice of triangulation on $M^4$. Here, the sum is taken over all \textit{simple} $\cC$-decorations $\mathcal{K}$ on the 4-simplices in $T_{M^4}$.

By considering the various loopings of $\cC$,
\begin{equation*}
    \Omega\cC = \operatorname{End}_\cC(\ttI),\qquad \Omega^2\cC = \operatorname{End}_{\operatorname{End}_\cC(\ttI)}(1_\ttI)\,,
\end{equation*}
we induce respectively $\cC$-decorations  $\Omega \cal K$  which assign the unit $\ttI\in\cC$ to all vertices in $\Delta$, or those $\Omega^2 \cal K$ which also assign the identity $1_\ttI: \ttI\to \ttI$ to every edge. We call the respective 4-simplex amplitudes $\langle\sigma^\pm({\Omega\mathcal{K}})\rangle\,,\langle\sigma^\pm(\Omega^2\mathcal{K})\rangle$ the 15$j$- and the 5$j$-symbols.
\[\begin{tikzcd}
	{\cal K} & {\Omega \cal K} & {\Omega^2 \cal K} \\
	20j & 15j & 5j
	\arrow["\langle\sigma^\pm\rangle",squiggly, from=1-1, to=2-1]
	\arrow["\langle\sigma^\pm\rangle",squiggly, from=1-2, to=2-2]
	\arrow["\langle\sigma^\pm\rangle",squiggly, from=1-3, to=2-3]
\end{tikzcd}\]
We justify the name "20$j$ symbols" in \S \ref{eval}, and the other names are due to the following facts:
\begin{enumerate}
    \item If $\Omega^2\cC\cong \bbC^\times$, then $\langle\sigma^+(\Omega^2\mathcal{K})\rangle$ produces the tetrahedral phases of Kashaev \cite{Kashaev2018}. 
    \item Given a tensor equivalence $\Omega\cC\simeq\operatorname{Rep}(U_q\mathfrak{sl}_2)$, $\langle\sigma^+(\Omega \mathcal{K})\rangle$ produces the 15$j$-symbols of Crane-Yetter \cite{Crane:1994ji}. This partially verifies the {assumption} that the Crane-Yetter TQFT is in actuality a certain higher-gauge theory.\footnote{Specifically, the conjecture is that the higher-gauge TQFT based on the inertia groupoid $\operatorname{Inn}SU(2) \simeq SU(2)//SU(2)$ is equivalent to the $SO(4)$-Crane-Yetter-Broda TQFT \cite{Baez:1995ph}. In terms of the categorical input (as described in \S 3.4.1 of \cite{Douglas:2018}), this is a tensor equivalence $\operatorname{2Rep}(\operatorname{Inn}SU_q(2))\simeq\Sigma\operatorname{Mod}(U_q\mathfrak{sl}_2)$; and indeed, one can verify that $\Omega\operatorname{2Rep}(\operatorname{Inn}SU_q(2))\simeq\operatorname{Mod}(U_q\mathfrak{sl}_2)$.}
\end{enumerate}
The property of "being reducible to the Kashaev and Crane-Yetter invariants" is also shared by the Hopf-triple invariant of \cite{Meusburger2025Categorical4I}.


\subsection{Relation to previous works: higher-gauge theory}\label{highergauexamples}
As mentioned in Tab. \ref{tab:4dtqfts}, when $\cC=\operatorname{2Rep}(\mathbb{G})$ is given by the 2-representations of a 2-group $\mathbb{G}$, the 4d TQFT $Z_\cC = Z_\mathbb{G}$ corresponds to lattice higher-gauge theory. We expand more on this here.

\subsubsection{Discrete 2-gauge theory}\label{discrete2gau}
For finite $\mathbb{G}$, we describe it in terms of its \textit{H{\'o}ang} data \cite{Ang2018,Baez2023HoangXS} $\mathbb{G} = (G,A,\tau)$ (or equivalently in the categorical extension notation $A[1]._\tau G$ \cite{Johnson-Freyd:2020}), where $\tau\in H^3(G,A)$ is its {Postnikov class}. We say $\mathbb{G}$ is \textit{split/strict} if $\tau=1.$

As mentioned in \S \ref{diffinvars} previously, our construction differ in the usual approach. Specifically for the Yetter-Dijkgraaf-Witten TQFT,  we work in the 2-representation/gauge symmetry basis \eqref{formalPW} $\cC= \operatorname{2Rep}(\mathbb{G})$ here as opposed to the 2-group algebra/global symmetry basis $\cD= \mathsf{2Vect}_\mathbb{G}$ as in \cite{Huang2026-dr,Kapustin:2013uxa} (see also \S 3.4.2 of \cite{Douglas:2018}). If we denote by $Z^\cD$ the TQFT resulting from the Douglas-Reutter construction \cite{Douglas:2018}, the underlying Morita duality \cite{Decoppet:2023bay}
\begin{equation}
    \operatorname{2Rep}(\mathbb{G}) \xrightarrow{\mathsf{2Vect}} \mathsf{2Vect}_\mathbb{G}\,\label{moritaeq}
\end{equation}
then provides two distinct ways to compute the same Yetter-Dijkgraaf-Witten invariant,
\begin{equation*}
    Z_{\operatorname{2Rep}(\mathbb{G})}(M^4)=\mathsf{YDW}_{\mathbb{G}}(M^4) = Z^{\mathsf{2Vect}_\mathbb{G}}(M^4)\,,
\end{equation*}
corresponding to the homotopy 2-type $\textbf{B}\mathbb{G}$. This duality is the foundation of the correspondence between the BF gauge-theoretic construction and the quantum double construction of the 4d Yetter-Dijkgraaf-Witten SymTFT \cite{Kapustin:2013uxa,Chen2z:2023,Huang:2025hpk,Zhang:2022rbg}. 

\medskip

In particular, by viewing the underlying 2-group $\mathbb{G}$ as a mixed 1-form and 0-form symmetry, the lattice model we construct in \S \ref{latticemodel} is an explicit microscopic realization of the 3d phase which one gets upon gauging these global symmetries \textit{simultaneously}.

Leveraging this duality, our 4d TQFT $Z_\cC=Z_\mathbb{G}$ gives the 4d Yetter-Dijkgraaf-Witten theory and produces many well-known examples upon reduction:
\[\begin{tikzcd}
	& \begin{array}{c} \substack{\mathbb{G} = A[1]._\tau G \\ \text{Yetter-Dijkgraaf-Witten theory}} \end{array} & \\
	\begin{array}{c} \substack{\mathbb{G} = \ast[1]. G \\ \text{4d Dijkgraaf-Witten theory}} \end{array} && \begin{array}{c} \substack{\mathbb{G}=A[1].\ast\\ \text{Walker-Wang model}} \end{array}
	\arrow[from=1-2, to=2-1]
	\arrow[from=1-2, to=2-3]
\end{tikzcd}\]
For instance, for $\mathbb{G}=\bbZ_2[1]._\tau \bbZ_2$ one produces the 4d toric code.

\medskip

We expect the \textbf{electromagnetic duality} between our 4d TQFT $Z_\cC$ and the Douglas-Reutter construction $Z^\cD$ to hold more generally, in analogy with the duality of the Kitaev quantum double model and the Levin-Wen model \cite{Buerschaper2009,Buerschaper:2010yf,Zoltan:2010,Wan:2020,Delcamp:2016yix}. 
\begin{conjecture}\label{2-duality}
    If $\cC,\cD$ are Morita dual fusion 2-categories, then  $$Z_\cC(M^4) = Z^\cD(M^4) \qquad \forall~ \text{closed }M^4\,$$
    computes the same state-sum invariant.
\end{conjecture}
\noindent This will be the subject of a future work; we will also say more about this in \S \ref{outlook}.

\subsubsection{Continuous 2-gauge theory}
As mentioned in \textit{Remark \ref{bicategorybgd}}, must of the results of this paper can also be applied to \textit{compact Lie 2-group} gauge theories; ie. 2-Chern-Simons theories, in accordance with Tab. \ref{tab:4dtqfts}. Specifically, we can set our 2-category $\cC$ to be the so-called  "finitary" 2-representations ({\` a} la  \cite{Macpherson20212RepresentationsAA}) of a given model for the infinite Hopf category $H=\mathsf{Hilb}_\mathbb{G}$; such a $\cC$ was studied in \cite{Chen:2025?}. The corresponding state sum invariant can be understood as a higher-group version of the Turaev-Viro invariant.  

For strict Lie $\mathbb{G}$, we describe it in terms of a Lie group crossed-module $\mathbb{G} = \mathsf{H}\xrightarrow{t}G$ \cite{Porst2008Strict2A}. We have the following two special \textit{split} cases: 
\begin{enumerate}
    \item \textbf{Skeletal 2-group ${t}=1$}: we have $\mathbb{G} = A[1]. G$ with $A$ Abelian and $\operatorname{dim}A = \operatorname{dim}G$. This corresponds to the \textbf{4d BF theory}, whose partition function was computed \cite{Schwarz1978-df,Schwarz1979-fa,Gegenberg:1993gd} to gives rise to the \textit{Ray-Singer torsion} $\tau(M^4)$ \cite{RAY1971145}. 
    \item \textbf{Inner automorphism 2-group ${t}=\id$}: we have $\mathbb{G}=\operatorname{Inn}G\simeq G//G$ (the loop/inertia groupoid of $G$). This case corresponds to the \textbf{4d BF-BB/twisted BF theory}, whose partition function was computed \cite{Baez:1995ph,Crane:1993cm} as the signature $\sigma(M^4)$ under a certain ansatz.
\end{enumerate}
These examples are also highly relevant for 4d quantum gravity \cite{Baez:1999sr,Girelli:2021zmt,Mikovic:2011si,Girelli:2007tt,Freidel:2012np}.

As such, for arbitrary values of $t\neq0,1$, the corresponding higher-gauge theory/partition function can be understood as a family of theories "interpolating" between 4d BF and BF-BB theories and their combinatorial invariants
\[\begin{tikzcd}
	0 &&&& 1 \\
	{BF\text{ theory}} && {\substack{\text{strict} \\ \text{2-Chern-Simons}}} && {BF\text{-}BB\text{ theory}} \\
	{\tau(M^4)} && {Z_{2CS}(M^4)} && {\sigma(M^4)}
	\arrow["{\mathsf{t}}", from=1-1, to=1-5]
	\arrow[no head, from=2-1, to=2-3]
	\arrow[squiggly, from=2-1, to=3-1]
	\arrow[no head, from=2-3, to=2-5]
	\arrow[squiggly, from=2-3, to=3-3]
	\arrow[squiggly, from=2-5, to=3-5]
\end{tikzcd}.\] 
Moreover, for the special values ${t}=1,\id$ (and only at these values), the underlying gauge Lie 2-algebras have canonical \textit{double structures} \cite{chen:2022,Chen2z:2023,Bai_2013} (ie. the corresponding 2-category $\cC$ is a Drinfeld centre). This explains why $Z_{2CS}$ at $\mathsf{t}=1,\id$, namely the BF SymTFT \cite{Jia:2025vrj,2026arXiv260603368X} and the Crane-Yetter TQFT, are \textit{invertible}, while in general $Z_{2CS}$ is not.




\subsubsection*{Acknowledgments}
H. C. is supported by the National Natural Science Foundation of China (grant No. W2533012). W. C. is supported by National Natural Science Foundation of China ( grant No. 42530108). 
Z. J. is supported by the National Natural Science Foundation of China (grant
No. 12605015), the start-up grant of Central South University (Grant No. 502045031) and by the Frontier Interdisciplinary Direction "Quantum Technologies" Project of Central South University (Grant No. 506010805). 
R.L. is supported by National Natural Science
Foundation of China under Grant No. 12422503.

\section{Preliminaries: pivotal tensor 2-categories and toroidal traces}\label{prelim}
Let us first introduce the categorical structures that we will be focusing on in the rest of this paper. More details on the following notions can be found in \cite{Douglas:2018,Chen:2025?,Chen2024ManifestlyUH}.

\subsection{Rigidity and planar-pivotality}
Recall that a 2-category $\cC$ is \textit{linear} iff it is 2-enriched in $\mathsf{Vect}$; namely that the 2-hom-spaces are vector spaces. A linear 2-category  $\cC$ with all (right) adjoints $F^\dagger\vdash F$ for its 1-morphisms $F:\ttU\to \ttV$ is called \textbf{planar-pivotal} iff the adjunction co/units
    $$e_F:F^\dagger\circ F\Rightarrow  1_\ttU,\qquad \iota_F: 1_\ttV\Rightarrow F\circ F^\dagger$$
    satisfy the snake equations 
    $$\id_F=(F\circ e_F)\ast(\iota_F\circ F)\,,$$ are compatible with the composition. Moreover, one can choose an invertible 2-isomorphism $\phi_F:F^{\dagger\dagger}\xRightarrow{\sim}F$
    for which
\begin{equation}
    e_F\ast(F^\dagger\circ \phi_F)\ast\iota_{F^\dagger} = (\operatorname{dim}F)\cdot \id_{1_\ttU}\label{1mordim}
\end{equation}
by definition computes the \textit{pivotal dimension}. By presemisimplicity, we have $\operatorname{dim}F\neq 0 $ for non-zero $F$.
    
    Given a linear 2-category $\cC$ with all adjoints, it is called \textbf{rigid tensor} iff $(\cC,\boxtimes,\ttI)$ is in addition monoidal, and each object $\ttU\in\cC$ has a dual $\ttU^*\in\cC$ equipped with the following co/evaluation 1-morphisms
    \begin{gather*}
        \operatorname{ev}_\ttU: \ttU^*\boxtimes \ttU\to \ttI,\qquad \operatorname{coe}_\ttU: \ttI\to \ttU\boxtimes \ttU^*\,,
    \end{gather*}
    equipped with their own adjoints and are compatible with the monoidal structure, as well as the following \textit{rigidity data}:
    \begin{enumerate}
    \item for each 1-morphism $F:\ttU\to \ttV$ and its dual-mate $F^*:\ttV^*\to \ttU^*$, we have the following 2-morphisms
     \begin{equation*}
              \operatorname{ev}_F: \operatorname{ev}_\ttU\circ (F^*\boxtimes \ttU)\Rightarrow \operatorname{ev}_{\ttV}\circ (\ttV^*\boxtimes F),\qquad \operatorname{coe}_F: (\ttU\boxtimes F^*)\circ \operatorname{coe}_\ttU\Rightarrow (F\boxtimes \ttV^*)\circ \operatorname{coe}_\ttV\,,
        \end{equation*}
        such that the collection $(\operatorname{ev}_\ttU,\operatorname{ev}_F)$ defines a \textit{2-cowedge} $\operatorname{ev}: -^*\boxtimes - \Rightarrow \ttI$, and the collection $(\operatorname{coe}_\ttU,\operatorname{coe}_F)$ defines a \textit{2-wedge} $\operatorname{coe}: \ttI \Rightarrow -\boxtimes -^*$ \cite{Loregian_2021};
        \item for each $\ttU\in\cC$, we have the following \textit{snakerator} 2-morphisms
        \begin{equation*}
            \rho_U:(\ttU\boxtimes \operatorname{ev}_\ttU)\circ(\operatorname{coe}_\ttU\boxtimes \ttU)\Rightarrow 1_\ttU,\qquad  \varrho_U: 1_{\ttU^*}\Rightarrow (\operatorname{ev}_\ttU\boxtimes \ttU^*)\circ(\ttU^*\boxtimes \operatorname{coe}_\ttU)
        \end{equation*}
        which satisfy the \textit{swallowtail equations}.
    \end{enumerate}
It would be convenient to write the adjunction data as a 1-op 2-functor $-^\dagger: \cC\to\cC^\text{1-op}$, and the duality as a op-monoidal 2-functor $-^*:\cC\to \cC^\text{m-op}$.

\subsubsection{Higher-dagger structure}
Given a tensor 2-category, we can introduce a ($*$-)involution $-^T: \cC\to \cC^\text{2-op}$ which we call the \textbf{dagger structure}. The compatibility between adjunction-mates and dagger structure is the strong-commutativity
$$ (-^{T})^{\text{1-op,2-op}} \circ -^\dagger = (-^{\dagger})^{\text{2-op}} \circ -^T\,,$$ which makes $\cC$ into a \textbf{2-dagger} tensor 2-category \cite{ferrer2024daggerncategories,stehouwer2023dagger}.

\begin{definition}\label{2unitary}
    Let $\cC$ be a dagger tensor 2-category. We say that a 2-morphism $\eta$ is \textbf{dagger unitary} iff it satisfies $\eta^T =\eta^{-1}$. 
\end{definition}

For a \textit{2}-dagger tensor 2-category $\cC$, the adjunction co/units $e_F,\iota_F$ are compatible with the dagger structure. Concretely,  up to the planar-pivotal 2-isomorphism $\phi_F: F^{\dagger\dagger}\cong F$ this means the following condition
\begin{equation}
        \iota_F^T\ast(\phi_F\circ F^\dagger )= e_{F^\dagger}\,\label{unitaryadjoint}
\end{equation}
from the definition of the pivotal dimension \eqref{1mordim}. Moreover, these co/units $\iota,e$ and the planar-pivotal structure $\phi$ can be chosen to be co/isometries
\begin{equation*}
    e_F\ast e_F^T = \iota_F^T\ast\iota_F = \operatorname{dim}F \cdot \id_{1_\ttU}
\end{equation*}
in terms of the 1-morphism dimension \eqref{1mordim}. We will always work with such a choice of unitary adjunctions in the following.








\begin{rmk}\label{2dagger}
    It is worth emphasizing that the 2-dagger structure mentioned above have a geometric origin in terms of string diagrams representations. The adjunction $-^\dagger$ corresponds to a planar $\pi$-rotation, while $-^T$ is a planar reflection. Their composite $-^{\dagger T}= -^{T \dagger}$ is therefore a planar orientation reversal; see also Def. 4.29 and Def. 5.3 of \cite{Liu:2024qth}.
\end{rmk}

Now specifically in the case $\cC=\operatorname{2Rep}(H)$, in order for $\cC$ to have 1-morphism adjoints, we need to consider {$\mathsf{Hilb}$-linear} $H$-modules $\operatorname{2Rep}(H) = \operatorname{Mod}_\mathsf{2Hilb}^u(H)$ in the 2-dagger symmetric bicategory $\cV=\mathsf{2Hilb}$ \cite{Chen2024ManifestlyUH}. One can replace all mention of $\mathsf{(2)Vect}$ by $\mathsf{(2)Hilb}$ in previous discussions.

\subsubsection{Pivotal structure}\label{pivotality}
Notice that a 2-dagger rigid tensor 2-category $\cC$ with all adjoints for 1-morphisms has equipped \textit{three} levels of duality, each acting on different layers of the 2-category $\cC$. A table summarizing their properties is given in Tab. \ref{tab:adjoints}. 

\begin{table}
    \centering
    \begin{tabular}{c|c|c|c}
        & objects & 1-morphisms & 2-morphisms \\
        \hline
         $-^T:\cC\to \cC^\text{2-op}$ & $\ttU$ & $F:\ttU\to \ttV$ & $\eta^T: G\Rightarrow F$ \\
         $-^\dagger:\cC\to \cC^\text{1-op}$ & $\ttU$ & $F^\dagger: \ttV\to \ttU$ & $\eta^\dagger: G^\dagger\Rightarrow F^\dagger$  \\
         $-^*:\cC\to \cC^\text{m-op}$ & $\ttU^*$ & $F^*: \ttV^*\to \ttU^*$ & $\eta^*: G^*\Rightarrow F^*$ 
    \end{tabular}
    \caption{The various duality structures endowed on a dagger rigid tensor 2-category $\cC$. By $\cC^{\text{1-op}}$ we mean $\cC$ with both op-1-/2-morphisms.}
    \label{tab:adjoints}
\end{table}
 
For the object-level duality $-^*$, its involutive-ness is \textit{datum} $-^{**}\simeq 1_\cC$. 
\begin{definition}\label{pivotal2cat}
    A \textbf{pivotal structure} on a unitary rigid tensor 2-category $\cC$ is a tuple $(p,P)$, where
\begin{equation*}
    p: -^{**}\xRightarrow{\sim} 1_{\cC}\,
\end{equation*}
is a monoidal 2-isomorphism and $P$ is a 3-isomorphism with components
\begin{gather}
   P_\ttU:\operatorname{coe}^\dagger_\ttU\circ (p_\ttU\boxtimes \ttU^*) \Rightarrow \operatorname{ev}_{\ttU^*}\,,
\end{gather}
witnessing the compatibility of $p$ against the rigid duality. 
\end{definition}
Importantly, if $\cC$ is dagger rigid, then the pivotal datum $(p,P)$ is in fact \textit{canonically} determined; indeed, we can define
\begin{equation*}
    p_\ttU=(\operatorname{ev}_{\ttU^*}\boxtimes\ttU)\circ (\ttU^{**}\boxtimes\operatorname{ev}_\ttU^\dagger ),\qquad P_\ttU :\operatorname{cev}^\dagger_\ttU\circ (p_\ttU\boxtimes \ttU^*) \Rightarrow \operatorname{ev}_{\ttU^*}
\end{equation*}
which makes $\cC$ pivotal \cite{Chen:2025?}. This structure $(p,P)$ is in general \textit{not} strict $p=\id$ as in \cite{Douglas:2018}. 




\subsubsection{Pivotal tensor 2-categories from 2-representations}\label{2reppivotal}
The construction of monoidal (and even braided) 2-categories from representations of higher Hopf algebras $H$ has been extensively studied from various angles \cite{BAEZ1996196,neuchl1997representation,gurski2006algebraic,Chen:2023tjf,Decoppet:2023bay,Green:2023qqr}. The central example in this paper is the 2-category $\cC=\operatorname{2Rep}(H)=\operatorname{Mod}^u_{\mathsf{2Hilb}}(H)$ of finite $\mathsf{Hilb}$-linear \textit{exact} left $H$-module $H^*$-categories in $\cV$ (or more generally, the so-called \textit{perfect} $H$-modules \cite{gainutdinov2026fullyexactfullydualizable}), and exact $H$-module functors between them \cite{fuchs2025spherical}. 

As the underlying Hopf category $H$ serves as the foundation of higher-gauge theory,\footnote{The idea that "4d lattice $\mathbb{G}$-gauge theory  \cite{Kapustin:2013uxa,Bochniak_2021,Bullivant:2016clk} determines a TQFT described by the tensor 2-category $\cC=\operatorname{2Rep}(H_\mathbb{G})$ through its observables" is a well documented folklore \cite{Kong:2014qka,Wen:2019}. This was verified for special cases in \cite{Huang:2025hpk,Zhang:2022rbg,Chen2z:2023}, where the braided monoidal data for $\cC$ was extracted directly out of the lattice gauge fields.} let us describe some relevant structures on $H$.
\begin{enumerate}
    \item Fact 1: $\cC$ has duals given by $\ttU^*\simeq \ttU^\text{op}$, which is made into a right $H^\text{m-op}$-module by the antipode $S :H\to H^\text{m-op}$. This endows $\cC$ with rigidity under the relative Deligne tensor product.
    \item Fact 2: there is an equivalence (of exact linear categories) between relative Deligne tensor products and exact functors,
    \begin{equation*}
        \ttU \boxtimes_H \ttV^\text{op} \simeq \operatorname{Fun}_H(\ttV,\ttU)=\operatorname{Hom}_\cC(\ttV,\ttU),\qquad \ttV,\ttU\in\cC\,,
    \end{equation*}
    which endows hom-objects to $\cC$. In particular, we have $\ttU\boxtimes_H \ttU^\text{op}\simeq\operatorname{End}_\cC(\ttU)$. 
    \item Fact 3: $\cC$ has all adjoints for the 1-morphisms in $\cC$. The double adjunction trivialization $F^{\dagger\dagger} \cong F$ is given by the relative Serre functor $\mathbb{S}$. In particular, all end-categories of $\cC$ are made pivotal monoidal in this way. 
\end{enumerate}
These facts can be found in \cite{fuchs2025spherical,douglas2020dualizable}, and also in \cite{Chen:2025?} in the manifestly unitary case.


In terms of lattice (higher-)gauge theory, it can be shown \cite{Chen1:2025?} that the Hopf structure of $H=H_\mathbb{G}$ have an interpretation in terms of the lattice geometry; this is in direct analogy in lattice Chern-Simons theory \cite{Alekseev:1994pa,Alekseev:1994au} (see also \cite{KitaevKong_2012}).

\medskip

For the rest of this paper, we will also assume that $\cC$ is presemisimple and locally finite idempotent complete \cite{Douglas:2018}. 
\begin{definition}\label{presemisimple}
    A tensor 2-category $\cC$ with all adjoints for 1-morphisms is called \textbf{presemisimple} iff 
    \begin{itemize}
        \item all objects in $\cC$ admit a decomposition into finite direct sums of simple objects, and
        \item all endomorphism categories $\operatorname{End}_\cC(\ttU)$ of simple objects $\ttU\in\cC$ are so-called \textbf{infusion}; that is, $\operatorname{End}_\cC(\ttU)$ is linear semisimple rigid monoidal with duals given by the adjunction $-^\dagger$, and the identity $1_\ttU$ is a simple 1-morphism.
    \end{itemize}
    $\cC$ is \textbf{locally finite idempotent complete} iff all hom-categories are finite idempotent complete.
\end{definition}
\noindent In particular, $\cC$ is \textit{locally pivotal fusion} (over $\mathbb{C}$), while finiteness at the object-level is not yet presumed. Such 2-representations $\cC=\operatorname{2Rep}(H)$ are also known by the name "{wide finitary}" \cite{Macpherson20212RepresentationsAA}. We will only require $\cC$ to have finitely many connected components in \S \ref{latticemodel}.

\subsection{Shadows and traces in a tensor 2-category}\label{shadows}
In explicit computations of the scattering amplitudes, we will require a higher-dimensional notion of traces. Specifically, we will need a sort of "toroidal" trace for cubical 2-cells $\eta:F_1\circ F_2\Rightarrow F_3\circ F_4$ in $\cC$, as opposed to "spherical" traces for globular 2-cells defined in \cite{Douglas:2018}. 

For this, we will adopt the formulation of \textit{shadows} given in \cite{ponto2013shadows}. 
\begin{definition}
    Let $\cC$ denote a bicategory, and let $\textbf{T}$ denote a fixed linear category. A \textbf{$\textbf{T}$-shadow} for $\cC$ is a \textit{linear} functor
    \begin{equation*}
        \bbra-\kket_\ttU: \operatorname{End}_\cC(\ttU)\to \textbf{T}
    \end{equation*}
    for each object $\ttU\in\cC$, equipped with natural isomorphisms 
    \begin{equation*}
        \theta_{F,G}: \bbra F\circ G\kket_\ttV \xrightarrow{\sim}\bbra G\circ F\kket_\ttU,\qquad \theta_{F,G}^{-1}=\theta_{G,F}
    \end{equation*}
    in $\textbf{T}$ for each $F:\ttU\to \ttV$ and $G:\ttV\to \ttU$. Moreover, for $H:\ttV\to \ttV$, the following diagrams commute,
\[\begin{tikzcd}
	{\bbra (F\circ G)\circ H\kket_\ttV} && {\bbra H\circ (F\circ G)\kket_\ttV} && {\bbra (H\circ F)\circ G\kket_\ttV} \\
	{\bbra F\circ (G\circ H)\kket_\ttV} && {\bbra (G\circ H)\circ F\kket_\ttU} && {\bbra G\circ (H\circ F)\kket_\ttU}
	\arrow["{\theta_{F\circ G,H}}", from=1-1, to=1-3]
	\arrow["{\bbra\alpha_{F,G,H}\kket}"', from=1-1, to=2-1]
	\arrow["{\bbra\alpha_{H,F,G}^{-1}\kket}", from=1-3, to=1-5]
	\arrow["{\theta_{H\circ F,G}}", from=1-5, to=2-5]
	\arrow["{\theta_{G\circ H,F}^{-1}}"', from=2-1, to=2-3]
	\arrow["{\bbra\alpha_{G,H,F}\kket}"', from=2-3, to=2-5]
\end{tikzcd}\]
\[\begin{tikzcd}
	{\bbra H\circ 1_\ttV\kket_\ttV} & {\bbra 1_\ttV\circ H\kket} \\
	{\bbra H\kket_\ttV}
	\arrow["{\theta_{F,G}}", from=1-1, to=1-2]
	\arrow[from=1-1, to=2-1]
	\arrow[from=1-2, to=2-1]
\end{tikzcd}\]
We call the tuple $(\cC,\bbra-\kket,\textbf{T})$ a \textbf{bicategory with ($\textbf{T}$-)shadow}.
\end{definition}
\noindent Strictly speaking, $\theta_{F,G}^{-1}=\theta_{G,F}$ need not be part of the definition, as it can be directly proven. We call $\theta$ the \textbf{cyclicity operator} for the shadow $\bbra-\kket$.

In contrast to the notion of a fibre 2-functor $\cC\to\mathsf{2Vect}$ \cite{Decoppet:2023bay}, a shadow structure with target $\textbf{T} = \mathsf{Vect}$ or $\mathsf{Hilb}$ --- see \textit{Example \ref{2charexample}} later --- is rather a sort of "local fibre functor" for $\cC$. Though, keep in mind that $\bbra-\kket$ are not required to be monoidal.

\subsubsection{Shadow traces}\label{shadotrace}
Provided $\cC$ has all 1-morphism adjoints, then all of its 1-morphisms are (left) dualizable and hence shadow traces can be defined.
\begin{definition}
    Let $(\cC,\bbra-\kket,\textbf{T})$ denote a 2-category with shadow and adjoints for 1-morphisms, and let $F:\ttU\to \ttV$, $G\in\operatorname{End}_\cC(\ttV),\, H\in\operatorname{End}_\cC(\ttU)$. The \textbf{shadow trace} of a 2-cell  $\eta: G\circ F\Rightarrow F\circ H$  is given by the composite in $\textbf{T}$:
    \begin{align}
        \operatorname{tr}_\textbf{T}(\eta)&: \bbra G\kket_\ttV \xrightarrow{\bbra \id_G\circ \iota_F\kket} \bbra G\circ (F\circ F^\dagger)\kket_\ttV \cong \bbra (G\circ F)\circ F^\dagger\kket_\ttV \nonumber\\
        &\qquad\qquad \quad\xrightarrow{\bbra\eta\circ F^\dagger\kket}\bbra  (F\circ H)\circ F^\dagger\kket_\ttV \xrightarrow{\theta_{F\circ H,F^\dagger}} \bbra F^\dagger\circ (F\circ H)\kket_\ttU \nonumber\\
        &\qquad \qquad \quad \cong  \bbra (F^\dagger\circ  F)\circ H\kket_\ttU\xrightarrow{\bbra e_F\circ H\kket} \bbra H\kket_\ttU\,,\label{shadowtrace}
    \end{align}
    where we have suppressed the morphisms induced by the composition associators $\alpha$. This morphism $ \bbra G\kket_\ttV\to  \bbra H\kket_\ttU$ in $\textbf{T}$ does not depend on the choice of the adjunction data $F^\dagger,\iota_F,e_F$.
\end{definition}
\noindent Clearly, this $\textbf{T}$-shadow trace can be applied to any cubical 2-morphism, as long as two of its opposite 1-morphisms coincide.



Let us collect some useful facts about the shadow trace. Their proofs can be found in \S 7 of \cite{ponto2013shadows}. 
\begin{theorem}\label{shadowproperties}
Let $(\cC,\bbra-\kket,\textbf{T})$ denote a bicategory with rigid adjunctions and a $\textbf{T}$-shadow. We shall suppress the composition associators $\alpha$ in the following.
    \begin{enumerate}
        \item \textbf{Tightening:} For 2-cells $\eta: G\circ F\Rightarrow F\circ H$, and $\nu: G'\Rightarrow G\,, \mu: H\Rightarrow H'$, we have 
        \begin{equation*}
            \bbra \mu\kket\circ \operatorname{tr}_\textbf{T}(\eta)\circ \bbra\nu\kket = \operatorname{tr}_\textbf{T}\left( (\nu\circ F)\ast \eta\ast (F\circ \mu) \right)
        \end{equation*}
        as a morphism $\bbra G'\kket_\ttV \to \bbra H'\kket_\ttU$ in $\textbf{T}$.
        \item \textbf{Cyclicity:} for 2-cells $\eta_1: G_1\circ F\Rightarrow F'\circ H_1$ and $\eta_2: G_2\circ F'\Rightarrow F\circ H_2$, we form the  composites 
        \begin{align*}
            \eta_1\ast_A \eta_2:\,& (G_1\circ G_2)\circ F'\cong G_1\circ( G_2\circ F')\xRightarrow{G_1\circ \eta_2}G_1\circ(F\circ H_2) \cong (G_1\circ F)\circ H_2\\
            &\qquad \xRightarrow{\eta_1\circ H_2} (F'\circ H_1)\circ H_2 \cong F'\circ (H_1\circ H_2)\\
            \eta_2\ast_A\eta_1 :\,& (G_2\circ G_1)\circ F\cong  G_2\circ (G_1\circ F)\xRightarrow{G_2\circ \eta_1} G_2\circ(F'\circ H_1) \cong (G_2\circ G')\circ H_1 \\
            &\qquad \xRightarrow{\eta_2\circ H_1} (F\circ H_2)\circ H_1\cong F\circ(H_2\circ H_1).
        \end{align*}
        The following diagram commutes,
\[\begin{tikzcd}
	{\bbra G_1\circ G_2\kket} && {\bbra H_1\circ H_2\kket} \\
	{\bbra G_2\circ G_1\kket} && {\bbra H_1\circ H_2\kket}
	\arrow["{\operatorname{tr}_{\bf T}(\eta_1\ast_A\eta_2)}", from=1-1, to=1-3]
	\arrow["{\theta_{G_1,G_2}}"', from=1-1, to=2-1]
	\arrow["{\theta_{H_1,H_2}}", from=1-3, to=2-3]
	\arrow["{\operatorname{tr}_{\bf T}(\eta_2\ast_A\eta_1)}"', from=2-1, to=2-3]
\end{tikzcd}\]
In particular if $\eta_1: F_1\Rightarrow F_2$ and $\eta_2: F_2\Rightarrow F_1$, then $$\operatorname{tr}_\textbf{T}(\eta_1\ast \eta_2)= \operatorname{tr}_\textbf{T}(\eta_2\ast\eta_1)\,.$$
        \item If $\eta: G\circ 1_\ttV\Rightarrow 1_\ttV\circ H$, then $\operatorname{tr}_\textbf{T}(\eta) = \bbra \eta\kket_\ttV$.
        \item \textbf{1-functoriality:} for 2-cells $\eta_1: G\circ F_1\Rightarrow F_1\circ H$ and $\eta_2: H\circ F_2\Rightarrow F_2\circ K$, we form the composite
        \begin{align*}
            \eta_2\ast_B\eta_1:\,& G\circ (F_1\circ F_2)\cong (G\circ F_1)\circ F_2\xRightarrow{\eta_1\circ F_2} (F_1\circ H)\circ F_2\cong F_1\circ (H\circ F_2)\\
            &\qquad \xRightarrow{F_1\circ \eta_2} F_1\circ(F_2\circ K)\cong (F_1\circ F_2)\circ K\,.
        \end{align*}
        Then we have $$\operatorname{tr}_\textbf{T}(\eta_2\ast_B\eta_1) = \operatorname{tr}_\textbf{T}(\eta_2)\circ \operatorname{tr}_\textbf{T}(\eta_1): \bbra G\kket \to \bbra H\kket \to\bbra K\kket\,.$$
        \item \textbf{Conjugation invariance:} The dagger adjunctions and the composition associators $\alpha $ in $\cC$  give rise to the linear natural isomorphism
\begin{align}
    \bar\cdot &:\operatorname{2Hom}_\cC(G\circ F,F'\circ G')\cong \operatorname{2Hom}_\cC(F,G^\dagger\circ (F'\circ G'))\nonumber\\
    &\qquad\qquad\stackrel{\alpha}{\cong} \operatorname{2Hom}_\cC(F,(G^\dagger\circ F')\circ G')\cong \operatorname{2Hom}_\cC( F\circ G'^\dagger,G^\dagger\circ F')
\label{conjugation}
\end{align}
such that, for a cubical 2-cell $\eta: G\circ F\Rightarrow F\circ H$ we have
        $$\operatorname{tr}_\textbf{T}(\eta) = \operatorname{tr}_\textbf{T}(\bar\eta): \bbra G\kket_\ttV \to \bbra H\kket_\ttU,$$
        where $ \operatorname{tr}_\textbf{T}(\bar\eta)$ is evaluated using the left-dualizability of $F^\dagger$.
    \end{enumerate}
\end{theorem}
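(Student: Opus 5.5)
The plan is to prove each item directly from the defining composite \eqref{shadowtrace}. The only tools needed are naturality of the shadow functors $\bbra-\kket$, naturality of the cyclicity operator $\theta$, the two coherence diagrams in the definition of a shadow, and the snake equations for the adjunctions $(\iota_F,e_F)$. Throughout I would work in a strictified setting: by the coherence theorem for bicategories we may take $\alpha$ and the unitors to be identities, so the shadow axioms reduce to $\theta_{FG,H}=\theta_{G,HF}\circ\theta_{F,GH}$ (up to the suppressed $\alpha$) and $\theta_{1,H}=\id$.

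\textbf{Tightening} follows from naturality alone. Precomposing with $\bbra\nu\kket$ commutes past $\bbra\id\circ\iota_F\kket$ by the interchange law. It is then absorbed into $\bbra\eta\circ F^\dagger\kket$ as $\bbra((\nu\circ F)\ast\eta)\circ F^\dagger\kket$. Similarly, $\bbra\mu\kket$ slides back through $\bbra e_F\circ H\kket$ and through $\theta$ by naturality of $\theta$ in its first argument.

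\textbf{Item 3} is immediate. Take $F=1_\ttV$ with the trivial adjunction: $\iota$ and $e$ are identities, and $\theta_{H,1}$ is trivial by the unit axiom.

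\textbf{Independence of the adjunction data.} Any two adjunctions are related by a unique invertible 2-cell $F^\dagger\Rightarrow F'^\dagger$ compatible with the units and counits. Inserting this 2-cell and using naturality of $\theta$ in $F^\dagger$ shows the two composites agree.

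\textbf{1-functoriality.} Use the composite adjunction $(F_1F_2)^\dagger=F_2^\dagger F_1^\dagger$, with unit $\iota_{F_1}\ast(F_1\iota_{F_2}F_1^\dagger)$ and the corresponding counit. Writing out $\operatorname{tr}(\eta_2\ast_B\eta_1)$, the middle $\theta_{F_1F_2K,F_2^\dagger F_1^\dagger}$ splits via the hexagon axiom into two cyclicity moves. Naturality then lets the $F_2$-unit and counit pass across, and the expression factors as $\operatorname{tr}(\eta_2)\circ\operatorname{tr}(\eta_1)$.

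\textbf{Cyclicity.} Expand $\operatorname{tr}(\eta_1\ast_A\eta_2)$. Then use naturality of $\theta$ to move $\theta_{G_1,G_2}$ through the unit $\iota_F$ and the cells $\eta_1,\eta_2$, reassociating with the hexagon, until it emerges as $\theta_{H_1,H_2}$ on the output. The particular case follows by taking all $G,H$ to be identities, so both $\theta$'s are trivial.

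\textbf{Conjugation invariance} is where I expect the main obstacle. The trace of $\bar\eta$ uses $F^\dagger$ as the traced 1-morphism, with unit $e_F^\dagger$-type data coming from left-dualizability. Unfolding $\bar\eta$ via \eqref{conjugation} introduces extra units and counits. One must cancel them with snake equations while commuting them past a $\theta$, using $\theta_{F,G}^{-1}=\theta_{G,F}$ to reverse the direction of cyclicity. The bookkeeping of which zigzag appears on which side of $\theta$ is the delicate part. I would first draw both composites as string diagrams on an annulus (the shadow as a cylinder closure) and observe that they are isotopic. I would then translate each isotopy step into either a snake identity or a naturality square of $\theta$. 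This mirrors the argument in \S7 of \cite{ponto2013shadows}.
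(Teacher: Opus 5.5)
Your arguments for tightening, item 3, independence of the adjunction data and 1-functoriality are sound. The paper itself proves nothing here and simply cites \S7 of Ponto--Shulman.

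\textbf{Cyclicity.} This step has a genuine gap. The two traces are taken along different 1-cells. $\operatorname{tr}_{\textbf{T}}(\eta_1\ast_A\eta_2)$ is built from $\iota_{F'},e_{F'}$ and $\theta_{-,F'^\dagger}$, whereas $\operatorname{tr}_{\textbf{T}}(\eta_2\ast_A\eta_1)$ is built from $\iota_F,e_F$ and $\theta_{-,F^\dagger}$. In particular, $\iota_F$ does not occur in the expansion of $\operatorname{tr}_{\textbf{T}}(\eta_1\ast_A\eta_2)$ that you say you start from.

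Naturality of $\theta$ and the hexagon (first shadow) axiom only rearrange cells already present; they cannot trade one unit/counit pair for another. Pushing $\theta_{G_1,G_2}$ through $\iota_F,\eta_1,\eta_2$ does produce $\theta_{H_1,H_2}$ on the output. But what remains is still made of $\iota_F,e_F$ and a $\theta$ transporting $F'$, which is not $\operatorname{tr}_{\textbf{T}}(\eta_1\ast_A\eta_2)$.

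The missing idea is a zig-zag insertion:
\begin{itemize}
\item Put $\id_F=(F\circ e_F)\ast(\iota_F\circ F)$ into the $F'$-trace, between $\eta_2$ and $\eta_1$.
\item Put $\id_{F'}$ into the $F$-trace in the same way.
\item Set $A=F'H_1F^\dagger$, $B=FH_2F'^\dagger$, $\beta_1=(\eta_1\circ F^\dagger)\ast(G_1\circ\iota_F):G_1\Rightarrow A$ and $\beta_2=(\eta_2\circ F'^\dagger)\ast(G_2\circ\iota_{F'}):G_2\Rightarrow B$.
\item The $F'$-trace then factors as $\bbra\beta_1\beta_2\kket$ followed by an output map $\bbra AB\kket\to\bbra H_1H_2\kket$. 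The $F$-trace factors as $\bbra\beta_2\beta_1\kket$ followed by an output map $\bbra BA\kket\to\bbra H_2H_1\kket$.
\item Naturality of $\theta$ in both variables gives $\bbra\beta_2\beta_1\kket\circ\theta_{G_1,G_2}=\theta_{A,B}\circ\bbra\beta_1\beta_2\kket$.
\end{itemize}
Only at this point do the hexagon axiom and naturality finish the job. Both $\theta_{H_1,H_2}$ composed with the first output map and the second output map composed with $\theta_{A,B}$ reduce to $\bbra H_2e_{F'}H_1\kket\circ\theta_{F'H_1,\,H_2F'^\dagger}\circ\bbra F'H_1e_FH_2F'^\dagger\kket$.

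\textbf{Conjugation invariance.} Your plan does not match the statement as written. By \eqref{conjugation}, $\bar\eta$ is obtained by conjugating with $G$ and $H$, so it lies in $\operatorname{2Hom}_\cC(F\circ H^\dagger,G^\dagger\circ F)$. $F^\dagger$ does not appear in it, and any trace of it relates $\bbra H^\dagger\kket$ and $\bbra G^\dagger\kket$ rather than giving a map $\bbra G\kket\to\bbra H\kket$. Literally, the claim does not typecheck without extra data, such as the isomorphisms $d$ introduced later for reflection-positivity. No annulus isotopy will prove it as stated.

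The version consistent with ``left-dualizability of $F^\dagger$'' (and with Ponto--Shulman) uses the mate along the $F$-adjunction:
\begin{itemize}
\item $\eta^\flat=(e_F\circ H\circ F^\dagger)\ast(F^\dagger\circ\eta\circ F^\dagger)\ast(F^\dagger\circ G\circ\iota_F):F^\dagger\circ G\Rightarrow H\circ F^\dagger$,
\item traced as $\bbra e_F\circ H\kket\circ\theta_{F\circ H,F^\dagger}\circ\bbra F\circ\eta^\flat\kket\circ\bbra\iota_F\circ G\kket$.
\end{itemize}
Its equality with $\operatorname{tr}_{\textbf{T}}(\eta)$ is a single snake identity plus interchange. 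So the item you flag as the main obstacle is actually the easy one, and the real work is in cyclicity, which you treated as routine.
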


Following \cite{Joyal_Street_Verity_1996,ponto2013shadows}, we call the tuple $(\cC,\bbra-\kket,\textbf{T})$ a \textbf{traced bicategory/2-category}.

\subsubsection{Toroidal traces}\label{torustrace}
To recap, upon applying the shadow trace to a cubical 2-cell $\eta: G\circ F \Rightarrow F\circ H$, we obtain a morphism $$\operatorname{tr}_\textbf{T}(\eta): \bbra G\kket_\ttV\to \bbra H\kket_\ttU$$ in $\textbf{T}$. Now suppose $\textbf{T}$ itself is pivotal monoidal, such that $\bbra G\kket_\ttV\cong \bbra H\kket_\ttU\in\textbf{T}$, then we can apply an ordinary trace to the shadow trace.

\begin{definition}
    Let $(\cC,\textbf{T},\bbra-\kket)$ denote a {traced monoidal bicategory} and suppose $\textbf{T}$ is linear monoidal. Let $\eta: G\circ F \Rightarrow F\circ G$ be a 2-cell in $\cC$ for which $X=\bbra G\kket_\ttV\in\textbf{T}$ is (right-)dualizable with the duality datum $(X^\vee,\mathfrak{i}_X,\mathfrak{e}_X)$. The \textbf{right toroidal trace} of $\eta$ is the right trace of the endomorphism $\operatorname{tr}_\textbf{T}(\eta): X\to X$,
    \begin{align*}
        \Tr_R(\eta) = \mathfrak{i}_{X^\vee}\circ (X^\vee\otimes \operatorname{tr}_\textbf{T}(\eta)) \circ \mathfrak{e}_X \in \operatorname{End}_\textbf{T}(e)\,,
    \end{align*}
    where $e\in \textbf{T}$ is the monoidal unit; similarly for the left-trace $\Tr_L(\eta)\in\operatorname{End}_\textbf{T}(e)$.
\end{definition}
\noindent  If $\textbf{T}$ is in addition spherical, then we call $       \Tr (\eta)= \Tr_R(\eta)=\Tr_L(\eta)$ the \textbf{toroidal trace} of $\eta$ --- this is a manifestation of "3-sphericality"; see \textit{Remark \ref{3sphericality}}. 

Let us prove formulas that exhibit the cyclic symmetries along the two "$A/B$-cycles". Consider 2-cells $\eta_1,\eta_2,\eta_3,\eta_4$ in $\cC$ in the following configuration
\[\begin{tikzcd}
	{\ttU} && {\ttV} && {\ttU} \\
	{\ttW} && {\ttW} && {\ttW} \\
	{\ttU} && {\ttV} && {\ttU}
	\arrow[""{name=0, anchor=center, inner sep=0}, "{f_1}", from=1-1, to=1-3]
	\arrow["{g_2}"', from=1-1, to=2-1]
	\arrow[""{name=1, anchor=center, inner sep=0}, "{f_2}", from=1-3, to=1-5]
	\arrow["{g_1}", from=1-3, to=2-3]
	\arrow["{g_2}", from=1-5, to=2-5]
	\arrow[""{name=2, anchor=center, inner sep=0}, "{f_3}"{description}, from=2-1, to=2-3]
	\arrow["{g_4}"', from=2-1, to=3-1]
	\arrow[""{name=3, anchor=center, inner sep=0}, "{f_4}"{description}, from=2-3, to=2-5]
	\arrow["{g_3}"', from=2-3, to=3-3]
	\arrow["{g_4}", from=2-5, to=3-5]
	\arrow[""{name=4, anchor=center, inner sep=0}, "{f_1}"', from=3-1, to=3-3]
	\arrow[""{name=5, anchor=center, inner sep=0}, "{f_2}"', from=3-3, to=3-5]
	\arrow["{\eta_1}", between={0.2}{0.8}, Rightarrow, from=0, to=2]
	\arrow["{\eta_2}", between={0.2}{0.8}, Rightarrow, from=1, to=3]
	\arrow["{\eta_3}"', between={0.2}{0.8}, Rightarrow, from=2, to=4]
	\arrow["{\eta_4}"', between={0.2}{0.8}, Rightarrow, from=3, to=5]
\end{tikzcd}\]
Given the functoriality and cyclicity of the shadow trace, we have the cycicity in the $A$-cycle:
\begin{align*}
    \operatorname{tr}_\textbf{T}\big((\eta_1\ast_A\eta_2)\ast_B(\eta_3\ast_A\eta_4)\big) &= \operatorname{tr}_\textbf{T}(\eta_1\ast_A\eta_2)\circ\operatorname{tr}_\textbf{T}(\eta_3\ast_A\eta_4) \\
    &= \big(\theta_{f_1,f_2}\operatorname{tr}_\textbf{T}(\eta_2\ast_A\eta_1)\theta_{f_3,f_4}^{-1}\big)\circ \big(\theta_{f_3,f_4}\operatorname{tr}_\textbf{T}(\eta_4\ast_A\eta_3)\theta_{f_1,f_2}^{-1}\big)\\
    &= \theta_{f_1,f_2}\big(\operatorname{tr}_\textbf{T}(\eta_2\ast_A\eta_1)\circ \operatorname{tr}_\textbf{T}(\eta_4\ast_A\eta_3)\big)\theta_{f_1,f_2}^{-1} \\ 
    &= \theta_{f_1,f_2}\big(\operatorname{tr}_\textbf{T}\big((\eta_2\ast_A\eta_1) \ast_B(\eta_4\ast_A\eta_3)\big)\big)\theta_{f_1,f_2}^{-1}\,,
\end{align*}
where we have used the invertibility of the cyclicity operator $\theta$. By taking the second trace in $\textbf{T}$, we get the cyclicity in the $B$-cycle:
\begin{align}
    &\Tr_R\big((\eta_1\ast_A\eta_2)\ast_B(\eta_3\ast_A\eta_4)\big) =  \Tr_R\big((\eta_2\ast_A\eta_1) \ast_B(\eta_4\ast_A\eta_3)\big) = \Tr_R\big((\eta_4\ast_A\eta_3)\ast_B(\eta_2\ast_A\eta_1) \big) \,.\label{cublicycle}
\end{align}
By the \textit{interchange law}
    \begin{equation*}
        (\eta_1\ast_A\eta_2)\ast_B(\eta_3\ast_A\eta_4) = (\eta_1\ast_B\eta_3)\ast_A(\eta_2\ast_B\eta_4)\,,
    \end{equation*}
\eqref{cublicycle} is also equivalent to 
\begin{equation*}
    \Tr_R\big((\eta_1\ast_B\eta_3)\ast_A(\eta_2\ast_B\eta_4)\big) =  \Tr_R\big((\eta_2\ast_B\eta_4) \ast_A(\eta_1\ast_B\eta_3)\big) = \Tr_R\big((\eta_4\ast_B\eta_2)\ast_A(\eta_3\ast_B\eta_1) \big) \, .
\end{equation*}
These equations can also be displayed as thus
\[\Tr_R\begin{tikzcd}
	\ttU & \ttV & \ttU \\
	\ttW & \ttW & \ttW \\
	\ttU & \ttV & \ttU
	\arrow[""{name=0, anchor=center, inner sep=0}, "{f_1}", from=1-1, to=1-2]
	\arrow["{g_2}"', from=1-1, to=2-1]
	\arrow[""{name=1, anchor=center, inner sep=0}, "{f_2}", from=1-2, to=1-3]
	\arrow[from=1-2, to=2-2]
	\arrow["{g_2}", from=1-3, to=2-3]
	\arrow[""{name=2, anchor=center, inner sep=0}, from=2-1, to=2-2]
	\arrow["{g_4}"', from=2-1, to=3-1]
	\arrow[""{name=3, anchor=center, inner sep=0}, from=2-2, to=2-3]
	\arrow[from=2-2, to=3-2]
	\arrow["{g_4}", from=2-3, to=3-3]
	\arrow[""{name=4, anchor=center, inner sep=0}, "{f_1}"', from=3-1, to=3-2]
	\arrow[""{name=5, anchor=center, inner sep=0}, "{f_2}"', from=3-2, to=3-3]
	\arrow["{\eta_1}", between={0.2}{0.8}, Rightarrow, from=0, to=2]
	\arrow["{\eta_2}", between={0.2}{0.8}, Rightarrow, from=1, to=3]
	\arrow["{\eta_3}"', between={0.2}{0.8}, Rightarrow, from=2, to=4]
	\arrow["{\eta_4}"', between={0.2}{0.8}, Rightarrow, from=3, to=5]
\end{tikzcd}=
\Tr_R\begin{tikzcd}
	\ttW & \ttW & \ttW \\
	\ttU & \ttV & \ttU \\
	\ttW & \ttW & \ttW
	\arrow[""{name=0, anchor=center, inner sep=0}, "{f_3}", from=1-1, to=1-2]
	\arrow["{g_4}"', from=1-1, to=2-1]
	\arrow[""{name=1, anchor=center, inner sep=0}, "{f_4}", from=1-2, to=1-3]
	\arrow[from=1-2, to=2-2]
	\arrow["{g_4}", from=1-3, to=2-3]
	\arrow[""{name=2, anchor=center, inner sep=0}, from=2-1, to=2-2]
	\arrow["{g_2}"', from=2-1, to=3-1]
	\arrow[""{name=3, anchor=center, inner sep=0}, from=2-2, to=2-3]
	\arrow[from=2-2, to=3-2]
	\arrow["{g_2}", from=2-3, to=3-3]
	\arrow[""{name=4, anchor=center, inner sep=0}, "{f_3}"', from=3-1, to=3-2]
	\arrow[""{name=5, anchor=center, inner sep=0}, "{f_4}"', from=3-2, to=3-3]
	\arrow["{\eta_3}"', between={0.2}{0.8}, Rightarrow, from=0, to=2]
	\arrow["{\eta_4}"', between={0.2}{0.8}, Rightarrow, from=1, to=3]
	\arrow["{\eta_1}"', between={0.2}{0.8}, Rightarrow, from=2, to=4]
	\arrow["{\eta_2}"', between={0.2}{0.8}, Rightarrow, from=3, to=5]
\end{tikzcd}=
\Tr_R\begin{tikzcd}
	\ttW & \ttW & \ttW \\
	\ttV & \ttU & \ttV \\
	\ttW & \ttW & \ttW
	\arrow[""{name=0, anchor=center, inner sep=0}, "{f_4}", from=1-1, to=1-2]
	\arrow["{g_1}"', from=1-1, to=2-1]
	\arrow[""{name=1, anchor=center, inner sep=0}, "{f_3}", from=1-2, to=1-3]
	\arrow[from=1-2, to=2-2]
	\arrow["{g_1}", from=1-3, to=2-3]
	\arrow[""{name=2, anchor=center, inner sep=0}, from=2-1, to=2-2]
	\arrow["{g_3}"', from=2-1, to=3-1]
	\arrow[""{name=3, anchor=center, inner sep=0}, from=2-2, to=2-3]
	\arrow[from=2-2, to=3-2]
	\arrow["{g_3}", from=2-3, to=3-3]
	\arrow[""{name=4, anchor=center, inner sep=0}, "{f_4}"', from=3-1, to=3-2]
	\arrow[""{name=5, anchor=center, inner sep=0}, "{f_3}"', from=3-2, to=3-3]
	\arrow["{\eta_4}"', between={0.2}{0.8}, Rightarrow, from=0, to=2]
	\arrow["{\eta_3}", between={0.2}{0.8}, Rightarrow, from=1, to=3]
	\arrow["{\eta_2}"', between={0.2}{0.8}, Rightarrow, from=2, to=4]
	\arrow["{\eta_1}"', between={0.2}{0.8}, Rightarrow, from=3, to=5]
\end{tikzcd}\]
This explains why $\Tr$ is called the \textit{toroidal trace}.


\medskip

\begin{example}\label{2charexample}
    Let us return to the example $\cC=\operatorname{2Rep}(H)$ of interest. For each $\ttU\in\cC$, consider the $\mathsf{Hilb}$-shadow given by the \textit{end}
    \begin{equation*}
        \bbra F\kket_\ttU = \int_{x\in \ttU} \operatorname{Hom}_\ttU(x,F(x))\cong \operatorname{Nat}(1_\ttU,F)
    \end{equation*}
    for each $F\in\operatorname{End}_\cC(\ttU)$.\footnote{In fact, for  2-representations $\cC=\operatorname{2Rep}(G)$ (in $\mathsf{2Vect}$) of a finite (2-)group $G$, the end is exactly the \textit{2-character} $$\chi:\operatorname{2Rep}(G)\to \mathsf{Vect},\qquad \chi_{(M,\rho)}(g) = \operatorname{Nat}(1_M,\rho(g))$$ introduced in \cite{Ganter:2006,Ganter:2014,Bartlett:2009PhD,Huang:2024}, and the toroidal trace of a 2-cell $\eta_{g,h}: \rho(g)\rho(h)\Rightarrow\rho(h)\rho(g)$ coming from mutually commuting elements $g,h\in G$ is precisely the \textit{joint character}. The cyclicity operator $\theta$ furnishes $\chi_{(M,\rho)}$ as a representation of the double $D(G)\simeq G//G$.} Following Prop. 3.8 of \cite{Ganter:2006}, we have a linear map $ \tilde\theta_{F,G}:\bbra F'\kket_\ttU \rightarrow \bbra G\circ F\circ G^\dagger\kket_\ttV$ for each $G: \ttU\to \ttV$, which precomposes a natural transformation $1_\ttU\Rightarrow F$ by the counit $\iota_G: 1_\ttV\Rightarrow G\circ G^\dagger$. By dagger adjunctions \eqref{unitaryadjoint}, this map $\tilde\theta$ is invertible and induces the requisite cyclicity operator $\theta$ (cf. Remark 4.7 of \cite{Huang:2024}). 
    
    Recall that, for simple $\ttU\in\cC$, its identity is simple; this is Prop. 1.2.14 in \cite{Douglas:2018}. Then, $\bbra 1_\ttU\kket_\ttU = \operatorname{Nat}(1_\ttU,1_\ttU)\cong \mathbb{C}$ is the simple unit in $\textbf{T}=\mathsf{Hilb}$. Now specifically for finite groups $H=\mathsf{Hilb}_G$, we have $\Omega\cC = \operatorname{End}_{\operatorname{2Rep}(G)}(\ttI)\simeq \operatorname{Rep}(G)$ \cite{Bartsch:2022mpm} whence 
    \begin{equation*}
        \bbra F\kket_\ttI \cong \operatorname{Nat}(1_\ttI,F)\cong\operatorname{Hom}_G(\mathbb{C},V_F) \cong V^G_F
    \end{equation*}
    is the $G$-coinvariants of the representation determined by $F: \ttI\to \ttI$. The shadow trace of a 2-cells $ \eta: F\circ H'\Rightarrow H\circ F$ for $H',H\in\Omega\cC$ can therefore be viewed as a linear map
    \begin{equation*}
        \operatorname{tr}_\mathsf{Vect}(\eta): V^{G}_{H'}\to V^{G}_H\,\,\quad \underline{\text{in } \mathsf{Hilb}}\,,
    \end{equation*}
    and the toroidal trace $\Tr(\eta)$ can then be seen as the usual trace of this matrix. The same computation can be performed for generic endomorphism categories $\operatorname{End}_H(\ttU)$ in $\cC=\operatorname{2Rep}(H).$
\end{example}

We will use the toroidal trace to compute 4-simplex scattering amplitudes in \S \ref{20jsymbols}.

\medskip

In summary, our input datum is a  tensor 2-category as follows
\begin{equation}
   \mathscr{C} =(\cC,\LaTeXunderbrace{\ttI,\boxtimes,\alpha,\pi}_{\text{monoidal}},\LaTeXunderbrace{\LaTeXunderbrace{-^*,\operatorname{ev},\operatorname{coe}}_{\text{rigid}},p,P}_{\text{pivotal}},\LaTeXunderbrace{\LaTeXunderbrace{-^\dagger,e,\iota}_{\text{adjoints}},-^T}_\text{2-dagger},\LaTeXunderbrace{\textbf{T},\bbra-\kket}_\text{"traced"})\,\label{input}
\end{equation}
which is also presemisimple and locally finite idempotent complete. Pivotal fusion 2-categories in the sense of \cite{Douglas:2018}, equipped with a compatible dagger structure, fit as an example.

\section{$\cC$-decorated simplices}\label{decorated2graphs}
Let $\cC$ denote a presemisimple 2-dagger pivotal 2-category $\cC $, eg. $\operatorname{2Rep}(H)$. We will not use its pivotal structure directly, but it will become relevant later.

We fix a simplicial complex $\Delta$ underlying a triangulation of the oriented 4-manifold $M^4$, and a total ordering of its vertices. More specifically, the complex $$\Delta=\Delta^{(2)} {\rightrightarrows}\Delta^{(1)}{\rightrightarrows} \Delta^{(0)}$$ is given by 2- and 1-simplices of $\Delta$, where each 2-simplex comes with a choice (consistent with the vertex ordering) of an oriented \textit{root edge} $e_\ast\in\Delta^{(1)}$.\footnote{The "double source" $ss(\Delta) = s(e_\ast)$ of a 2-simplex $\Delta\in\Delta^{(2)}$ is often called the \textit{root} of $\Delta$ \cite{Martins:2006hx,Bullivant:2019tbp,Bochniak_2021}.} This is an edge (or a composite of edges) determined by one of the face maps on $\Delta^{(2)}$.

The 1-simplex source/target maps $ \Delta^{(1)}\to \Delta^{(0)}$ and its composition law are obvious, while the 2-simplex source (resp. target) maps $\Delta^{(2)}\rightrightarrows \Delta^{(1)}$ are given respectively by the root edge (resp the composite of the remaining edge(s)). The horizontal composition of 2-simplices is given by a attaching along the source edges, while the vertical composition is given by attaching the source and target edge. 
\begin{definition}\label{2decorate}
    A \textbf{$\cC$-decoration on $\Delta$} is an assignment
    \begin{equation}
K([i]) = \ttU_{[i]},\qquad K([ij]) = F_{[ij]}: \ttU_{[i]}\to \ttU_{[j]},\qquad K([ijk]) = \eta_{[ijk]}: F_{[jk]}\circ F_{[ij]}\Rightarrow F_{[ik]}
\nonumber
\end{equation}
 such that, for each $i,j$,
   \begin{enumerate}
       \item $F_{[ji]} = F^\dagger_{[ij]}$ is the adjoint 1-morphism, 
       \item the transpose (2-morphism adjoint) of $ \eta_{[ijk]}$ is given by $\eta_{[ijk]}^T: F_{[ik]}\Rightarrow F_{[jk]}\circ F_{[ij]}$,
       \item the adjoint-mate of the 2-morphism $\eta_{[ijk]}$ is $\eta^\dagger_{[ijk]}=\eta_{[kji]}^T: F_{[ki]}\to F_{[ji]}\circ F_{[kj]}$,
       \item the identity 1-morphism $F_{[ii]} = 1_{\ttU_{[i]}}$ is (1-)self-adjoint,
       \item the identity 2-morphism $\eta_{[iij]} = \id_{F_{[ij]}}=\eta_{[ijj]}$ is (2-)self-adjoint, 
       \item gluing of 1-simplices is mapped to the composition of 1-morphisms,
       \item the horizontal/vertical gluing of 2-simplices is mapped to the horizontal/vertical composition $\ast_A,\ast_B$ of 2-morphisms (these are the definition of the "thorn/theta nets"; see \S \ref{scalaringredients}), and
       \item $e_{F_{[ij]}}=\eta_{[iji]},\,\iota_{F_{[ij]}}=\eta_{[jij]}^T$ are the counit/unit of the adjunction, for which
       \begin{equation*}
            \id^\dagger_{F_{[ij]}} = \id_{F_{[ji]}},\qquad e_{F_{[ij]}}^\dagger = \iota_{F_{[ji]}}: \id_{F_{[ji]}}\Rightarrow F_{[ji]}\circ F_{[ji]}^\dagger = F_{[ij]}^\dagger\circ F_{[ij]}\,.
       \end{equation*}
   \end{enumerate}
We denote such $\cC$-decorated simplices by $K:\Delta^{(2)}\to \cC$, and call it a \textbf{2-holonomy state} (see \S \ref{2pw}).
\end{definition}


\subsection{$\cC$-decorations on the PL 2-disc; states of 2-holonomies}\label{2pw} 
For this section, let $\mathcal{F}: \cC=\operatorname{2Rep}(H)\rightarrow\cV$ denote the forgetful {fibre 2-functor} \cite{Decoppet:2023bay}. The categorical Peter-Weyl theorem states that there is an equivalence
\begin{equation}
        H^*\simeq \int^{\ttU\in\operatorname{2Rep}(H)}\mathcal{F}(\ttU)\boxtimes \mathcal{F}(\ttU^\text{op}),\label{2pwthm}
\end{equation}
of Hopf algebra objects in the symmetric monoidal bicategory $\cV$, where the right-hand side involves the universal 2-coend \cite{Loregian_2021}. Provided $\operatorname{2Rep}(H)$ is rigid, the Yoneda dual version of this statement is in fact a manifestation of the \textit{2-Tannaka-Krein reconstruction} mentioned in \S \ref{2reppivotal},
\begin{equation*}
    H\simeq \int_{\ttU\in\operatorname{2Rep}(H)}\mathcal{F}(\ttU^\text{op}\boxtimes \ttU)\simeq \int_{\ttU\in\operatorname{2Rep}(H)}\operatorname{Hom}_H(\mathcal{F}\ttU,\mathcal{F}\ttU) \simeq \operatorname{Nat}(\mathcal{F},\mathcal{F})\,.
\end{equation*}
In the fusion case $\cV=\mathsf{2Vect}$, this has been established in \cite{Green:2023qqr}; see also \cite{huang2023tannaka} for the special case of finite 2-groups $H=\mathsf{Vect}_\mathbb{G}$ (ie. $\cC$ symmetric).

Now consider $\Delta=\Delta_{D^2}$ as the {simplicial complex} of a PL 2-disc $D^2$, consisting of one of each 0-/1-/2-simplex $[0],\ell,p$ respectively.  By specifying a $\cC$-decoration $K$ at the 0-simplex by $K([0]) = \ttU\in\operatorname{2Rep}(H)$, the PL 2-disc is decorated with a 2-endomorphism 
    \begin{equation*}
        K(p)= 
\begin{tikzcd}
	{} \\
	{\ttU}
	\arrow[between={0}{0.9}, Rightarrow, from=1-1, to=2-1]
	\arrow["{K(\ell)}", from=2-1, to=2-1, loop, in=50, out=130, distance=15mm]
\end{tikzcd}\in\operatorname{2End}_H(K(\ell))\,.
    \end{equation*}    
    By taking the direct sum over \textit{simple} such decorations $K=K_s$ (see \S \ref{tethilbertspace}) with $K_s([0])=\ttU$, we obtain a Hilbert space 
     \begin{equation*}
        \cH_{\Delta_{D^2},\ttU} =\bigoplus_{K_s \text{ simple}}K_s(p) \cong \bigoplus_{F\in\operatorname{Irr}(\ttC)}\operatorname{Mor}_\bbC(\ttC_F)\,,
    \end{equation*}
    where $\ttC=\operatorname{Hom}_\cC(\ttU,\ttU)$ is treated as a hom-object and $\ttC_F\subset \ttC$ is the full subcategory generated by the simple object $F$. By linearity and the local semisimplicity of $\cC$, we then have in turn
    \begin{equation}
        \cH_{\Delta_{D^2},\ttU} \cong \operatorname{Mor}_\bbC\left(\bigoplus_{F\in\operatorname{Irr}(\ttC)}\ttC_F\right)\cong \operatorname{Mor}_\bbC(\ttC)\cong \operatorname{Mor}_\bbC(\ttU\boxtimes \ttU^*)\,\label{internalU}
    \end{equation}
    where we have used the equivalence $\ttU\boxtimes \ttU^\text{op}\simeq\operatorname{Hom}_\cC(\ttU,\ttU)=\ttC$ as objects in $\cC$. 
    
    Upon forgetting the $H$-equivariance $\mathcal{F}:\operatorname{2Rep}(H)\to \mathsf{2Vect}$, the Peter-Weyl theorem \eqref{2pwthm} and the universal property of 2-coends give an embedding
    \begin{equation*}
        \mathcal{F}\ttU \boxtimes \mathcal{F}\ttU^\text{op}  \hookrightarrow \int^{\ttU\in\operatorname{2Rep}(H)}\mathcal{F}\ttU\boxtimes \mathcal{F}\ttU^\text{op} \simeq H^*\,,
    \end{equation*}
    which together with \eqref{internalU} tells us that the Hilbert space of the decorated 2-disc
    \begin{equation*}
        \cH_{\Delta_{D^2},\ttU} \hookrightarrow \operatorname{Mor}_\bbC(H^*)
    \end{equation*}
    embeds into the space of morphisms of the Hopf category $H^*$. This demonstrates that the $\cC$-decorations $K$ on the simplicial 2-disc can indeed be viewed as the data in higher lattice $\mathbb{G}$-gauge theory.

\subsection{States of decorated tetrahedra}\label{decoratedtet}
We now begin our construction of the quantum states on tetrahedra $\tau=[0123]$. Let $\Delta^{(2)}=\partial \tau$ denote its triangulated boundary.
\begin{enumerate}
    \item Decompose $\partial\tau = \Delta^+\cup_{\partial\Delta^\pm}\Delta^-$ into two 2-simplces:
    \begin{equation*}
        \Delta^+=[012]\cup_{[02]}[032],\qquad \Delta^-=[103]\cup_{[13]}[123].
    \end{equation*}
    Note each 1-simplex in $\partial\Delta^+$ also appears in $\partial\Delta^-$ (with possibly reversed orientation).
    \item Fix a $\cC$-decoration $K$ on $\Delta^{(2)}=\partial\tau$, with 1-morphism components $F_{[ij]}\in \operatorname{Hom}_\cC(\ttU_{[i]},\ttU_{[j]})$ for each 1-simplex $[ij]$ in $\partial\Delta^+\cong\partial\Delta^-$. Define the linear tensor product of 2-morphisms by
\begin{align}
        \eta_{\Delta^+} = \eta_{[032]}^T \otimes \eta_{[012]}&\in  \operatorname{2Hom}_\cC(F_{[02]},F_{[32]}\circ F_{[03]} )\otimes  \operatorname{2Hom}_\cC(F_{[12]}\circ F_{[01]},F_{[02]})\nonumber\\ 
        \eta_{\Delta^-} =\eta_{[123]}^T \otimes \eta_{[103]}& \in \operatorname{2Hom}_\cC(F_{[13]},F_{[03]}\circ F_{[10]})\otimes \operatorname{2Hom}_\cC(F_{[23]}\circ F_{[12]},F_{[13]}) \,,\label{Deltastates}
    \end{align}
    where "$\circ$" denotes the composition of 1-morphisms. These 2-simplex decorations are then combined along the attaching rules of the tetrahedron to form, up to the composition associator $\alpha$, the following \textbf{tetrahedron states},
    \begin{align}
        \big|\tau_{K}^+\big\rangle  &= (F_{[32]}\circ \eta_{\Delta^-}) \otimes \alpha_{F_{[32]},F_{[03]},F_{[10]}} \otimes (\eta_{\Delta^+}\circ F_{[10]})  \label{taustate+}\\
        \big|\tau_{K}^-\big\rangle  &= (F_{[23]}\circ \eta_{\Delta^+})\otimes\alpha_{F_{[23]},F_{[12]},F_{[01]}}\otimes (\eta_{\Delta^-}\circ F_{[01]})\label{taustate-}\,.
    \end{align}
    See also \textit{Remark \ref{internaltensor}} for a "$\cC$-internal" interpretation of these linear tensor products. The associator $F$-symbols $\alpha$ appearing here is the analogue of the Postnikov anomaly in lattice 2-gauge theory \cite{Kapustin:2013uxa,Wen:2019,Bullivant:2019tbp}. 
    
    \item By vertical 2-morphism composition (denoted $\ast$), the 2-simplex states \eqref{Deltastates} can be contracted to give
    \begin{align*}
        \sum\eta_{\Delta^+} = \eta_{[032]}^T \ast \eta_{[012]}&: F_{[12]}\circ F_{[01]}\Rightarrow F_{[02]}\Rightarrow F_{[32]}\circ F_{[03]} \nonumber\\ 
        \sum \eta_{\Delta^-} =\eta_{[123]}^T \ast\eta_{[103]}&: F_{[03]}\circ F_{[10]}\Rightarrow F_{[13]}\Rightarrow F_{[23]}\circ F_{[12]}\,,
    \end{align*}
and similarly for the tetrahedron states \eqref{taustate+}, \eqref{taustate-}:
\begin{align*}
    \sum \big|\tau_{K}^+\big\rangle  &= \left(F_{[32]}\circ \sum\eta_{\Delta^-}\right) \ast \alpha_{F_{[32]},F_{[03]},F_{[10]}} \ast \left(\sum\eta_{\Delta^+}\circ F_{[10]}\right) \nonumber\\
    &\qquad\qquad :F_{[12]}\circ F_{[01]}\circ F_{[10]}\Rightarrow F_{[32]}\circ  F_{[03]}\circ F_{[10]} \Rightarrow F_{[32]}\circ F_{[23]}\circ F_{[12]}\\
    \sum \big|\tau_{K}^-\big\rangle  &= \left(F_{[23]}\circ \sum\eta_{\Delta^+}\right)\ast\alpha_{F_{[23]},F_{[12]},F_{[01]}}\ast \left(\sum\eta_{\Delta^-}\circ F_{[01]}\right)\\
    &\qquad\qquad :  F_{[03]}\circ F_{[10]} \circ F_{[01]} \Rightarrow F_{[23]}\circ  F_{[12]}\circ F_{[01]}  \Rightarrow F_{[23]}\circ F_{[32]}\circ F_{[03]}\,.
\end{align*}
These contractions can be understood as the summation over all interior $\cC$-decorations.
\end{enumerate}

\begin{figure}
    \centering    \includegraphics[width=1\linewidth]{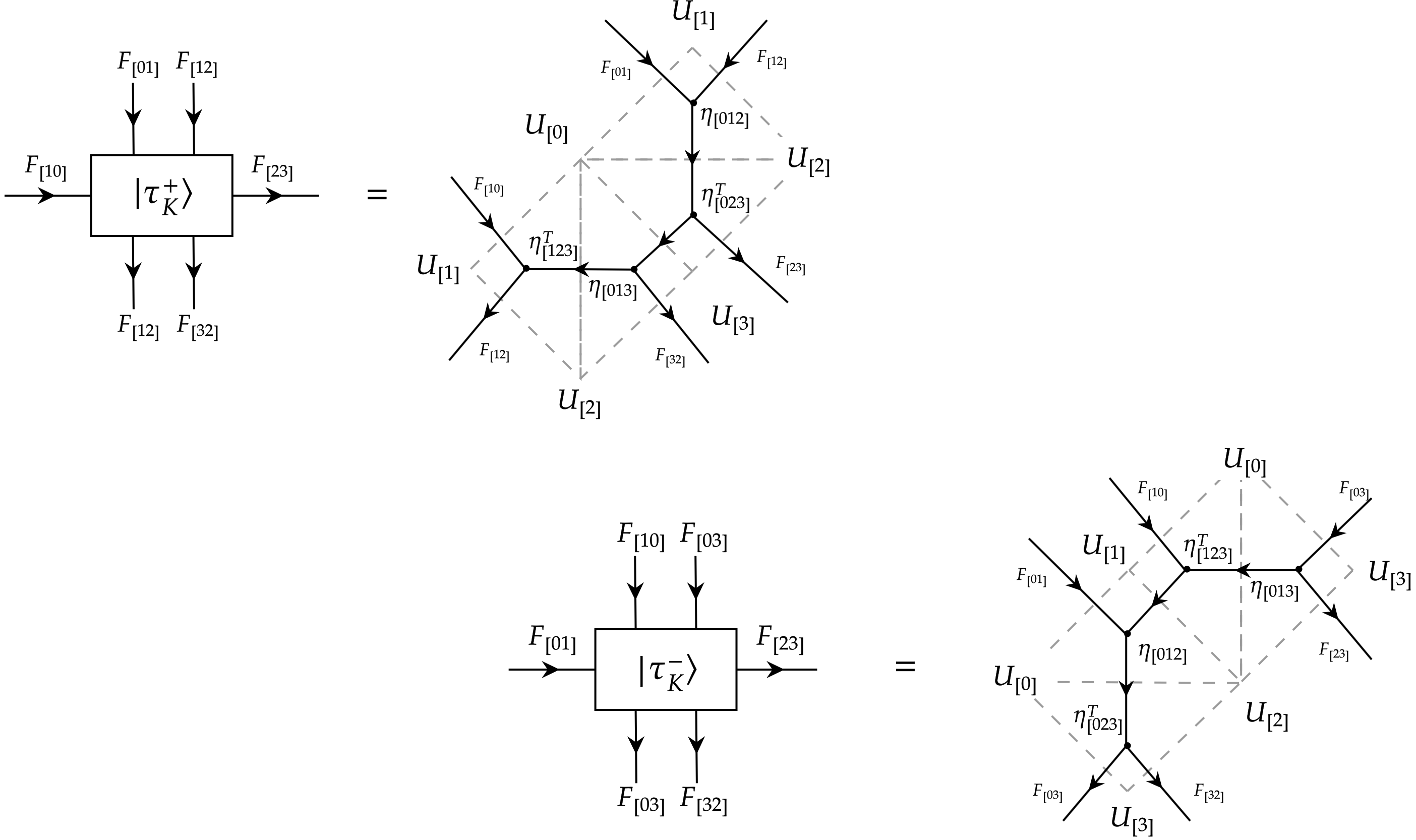}
    \caption{The tensor networks and the string diagrams corresponding to the (contracted) tetrahedron states \eqref{taustate+} (left) and \eqref{taustate-} (right). The regions are coloured by objects $\ttU_{[0]},\dots \ttU_{[3]}\in\cC$. The graphs given by the gray dashed lines depict the 2-simplices $\Delta^+,\Delta^-$.}
    \label{fig:planargraphs1}
\end{figure}

The tensor networks and string diagrams for the contracted tetrahedral states are shown in Fig. \ref{fig:planargraphs1}, and their  pasting diagrams are expressed as
\[\begin{tikzcd}
	&& {\ttU_{[1]}} \\
	& {\ttU_{[0]}} && {\ttU_{[2]}} \\
	{\ttU_{[1]}} && {\ttU_{[3]}} \\
	& {\ttU_{[2]}}
	\arrow["{F_{[12]}}", from=1-3, to=2-4]
	\arrow["{F_{[01]}}"{pos=0.3}, from=2-2, to=1-3]
	\arrow[""{name=0, anchor=center, inner sep=0}, from=2-2, to=2-4]
	\arrow["{F_{[03]}}"{description}, from=2-2, to=3-3]
	\arrow["{F_{[10]}}", from=3-1, to=2-2]
	\arrow[""{name=2, anchor=center, inner sep=0}, from=3-1, to=3-3]
	\arrow["{F_{[12]}}"', from=3-1, to=4-2]
	\arrow["{F_{[32]}}"', from=3-3, to=2-4]
	\arrow["{F_{[23]}}"'{pos=0.6}, from=4-2, to=3-3]
	\arrow["{\eta_{[012]}}"', between={0.1}{0.8}, Rightarrow, from=1-3, to=0]
	\arrow["{\eta_{[032]}^T}", between={0.2}{0.9}, Rightarrow, from=0, to=3-3]
	\arrow["{\eta_{[103]}}"', between={0.1}{0.8}, Rightarrow, from=2-2, to=2]
	\arrow["{\eta_{[123]}^T}", between={0.2}{0.9}, Rightarrow, from=2, to=4-2]
\end{tikzcd},\qquad \begin{tikzcd}
	&& {\ttU_{[0]}} \\
	& {\ttU_{[1]}} && {\ttU_{[3]}} \\
	{\ttU_{[0]}} && {\ttU_{[2]}} \\
	& {\ttU_{[3]}}
	\arrow["{F_{[03]}}", from=1-3, to=2-4]
	\arrow["{F_{[10]}}"{pos=0.6}, from=2-2, to=1-3]
	\arrow[""{name=0, anchor=center, inner sep=0}, from=2-2, to=2-4]
	\arrow["{F_{[12]}}"{description, pos=0.4}, from=2-2, to=3-3]
	\arrow["{F_{[01]}}", from=3-1, to=2-2]
	\arrow[""{name=2, anchor=center, inner sep=0}, from=3-1, to=3-3]
	\arrow["{F_{[03]}}"', from=3-1, to=4-2]
	\arrow["{F_{[23]}}"', from=3-3, to=2-4]
	\arrow["{F_{[32]}}"'{pos=0.6}, from=4-2, to=3-3]
	\arrow["{\eta_{[103]}}"', between={0.1}{0.9}, Rightarrow, from=1-3, to=0]
	\arrow["{\eta_{[123]}^T}", between={0.2}{0.9}, Rightarrow, from=0, to=3-3]
	\arrow["{\eta_{[012]}}"', between={0.2}{0.9}, Rightarrow, from=2-2, to=2]
	\arrow["{\eta_{[032]}^T}", between={0.2}{0.9}, Rightarrow, from=2, to=4-2]
\end{tikzcd}\]

Since $\cC$ is 2-$\mathsf{Vect}$-enriched, these states \eqref{taustate+}, \eqref{taustate-}  are vectors in some (Hilbert) space ${\cH}_{K(\tau)}^\pm\in\mathsf{Vect}$. In the case $\cC=\operatorname{2Rep}(H),$ these Hilbert spaces consist of \textit{2-intertwiners}.
\begin{proposition}\label{tethilbertspace}
    There are linear maps
    $${\cH}_{K(\tau)}^+\to \operatorname{2Hom}_\cC(F_{[12]},F_{[12]}),\qquad \cH^-_{\tau(K)}\to \operatorname{2Hom}_\cC(F_{[03]},F_{[03]})\,.$$
    See Fig. \ref{fig:planargraphs}.
\end{proposition}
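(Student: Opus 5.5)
The plan is to build the maps from the contraction $\sum$ already described in the construction, followed by a partial shadow/adjunction closure that removes the extra 1-morphism legs. I treat $\cH^+_{K(\tau)}$ as the tensor product of the 2-hom spaces in \eqref{taustate+}:
\[
\operatorname{2Hom}_\cC(F_{[13]},F_{[03]}\circ F_{[10]})\otimes\operatorname{2Hom}_\cC(F_{[23]}\circ F_{[12]},F_{[13]})\otimes\operatorname{2Hom}_\cC(F_{[02]},F_{[32]}\circ F_{[03]})\otimes\operatorname{2Hom}_\cC(F_{[12]}\circ F_{[01]},F_{[02]}).
\]
Each factor is whiskered by $F_{[32]}$ or $F_{[10]}$, and the associator $\alpha$ is inserted as a fixed vector. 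The first step is the linear map given by vertical composition. Since $\ast$ is bilinear and whiskering is linear, it descends to the tensor product and sends a state to
\[
\textstyle\sum|\tau^+_K\rangle\in\operatorname{2Hom}_\cC\big(F_{[12]}\circ F_{[01]}\circ F_{[10]},\,F_{[32]}\circ F_{[23]}\circ F_{[12]}\big).
\]
For $\tau^-$ the same step gives a 2-morphism $F_{[03]}\circ F_{[10]}\circ F_{[01]}\Rightarrow F_{[23]}\circ F_{[32]}\circ F_{[03]}$.

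The second step caps off the adjoint pairs using Definition \ref{2decorate}. There $F_{[10]}=F_{[01]}^\dagger$ and $F_{[32]}=F_{[23]}^\dagger$, and the unit and counit are themselves decorations, $\iota_{F_{[01]}}=\eta_{[101]}^T$ and $e_{F_{[23]}}=\eta_{[232]}$.

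\begin{itemize}
\item On the source side, precompose with $F_{[12]}\circ\iota_{F_{[01]}}:F_{[12]}\Rightarrow F_{[12]}\circ F_{[01]}\circ F_{[10]}$.
\item On the target side, postcompose with $e_{F_{[23]}}\circ F_{[12]}:F_{[32]}\circ F_{[23]}\circ F_{[12]}\Rightarrow F_{[12]}$.
\end{itemize}

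Associators are suppressed here, or absorbed using the coherence of $\alpha$. The resulting composite is linear in the state, being a composite of linear maps in the $\mathsf{Vect}$-enriched 2-category, and lands in $\operatorname{2Hom}_\cC(F_{[12]},F_{[12]})$. This is the ``partial shadow trace'' picture: closing the $F_{[01]}$ and $F_{[23]}$ legs into loops, as in Fig. \ref{fig:planargraphs}.

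For $\tau^-$ I proceed symmetrically. I cap $F_{[10]}\circ F_{[01]}$ on the source with $F_{[03]}\circ\iota_{F_{[10]}}$, using $F_{[10]}^\dagger\cong F_{[01]}$ via $\phi$, and cap $F_{[23]}\circ F_{[32]}$ on the target with $e_{F_{[32]}}\circ F_{[03]}$. This yields an element of $\operatorname{2Hom}_\cC(F_{[03]},F_{[03]})$. In the case $\cC=\operatorname{2Rep}(H)$, these endo-2-morphisms are $H$-module natural transformations, which is the sense in which they are called 2-intertwiners.

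The only delicate point is the $\tau^-$ cap. There the double adjoint $F_{[10]}^\dagger$ appears, so I must insert the planar-pivotal isomorphism $\phi_{F_{[01]}}$ and check, via the unitarity condition \eqref{unitaryadjoint}, that the choice is consistent with Definition \ref{2decorate}(8). Linearity is unaffected whatever normalization is chosen.
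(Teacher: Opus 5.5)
Your proposal is correct and follows the paper's proof: you contract the tetrahedron state by vertical composition into $\operatorname{2Hom}_\cC(F_{[12]}\circ F_{[01]}\circ F_{[10]},F_{[32]}\circ F_{[23]}\circ F_{[12]})$ (resp.\ its $\tau^-$ analogue), then precompose with $F_{[12]}\circ\iota_{F_{[01]}}$ and postcompose with $e_{F_{[23]}}\circ F_{[12]}$ (resp.\ $F_{[03]}\circ\iota_{F_{[10]}}$ and $e_{F_{[32]}}\circ F_{[03]}$). The only differences are that the paper writes the associators $\alpha^{-1}$ out explicitly where you suppress them, and that your explicit handling of $\phi$ in the $\tau^-$ cap is a point the paper absorbs silently into Definition \ref{2decorate}(1).
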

\begin{proof}
    As described below \eqref{taustate+}, \eqref{taustate-}, the tetrahedron states can be contracted through the composition of 2-morphisms. This defines a pair of linear maps
    \begin{align*}
    \cH_{K(\tau)}^+ &\to \operatorname{2Hom}_\cC(F_{[12]}\circ F_{[01]}\circ F_{[10]},F_{[32]}\circ F_{[23]}\circ F_{[12]}): \big|\tau_{K}^+\big\rangle \mapsto \sum\big|\tau_{K}^+\big\rangle  \\
        \cH^-_{K(\tau)}&\to\operatorname{2Hom}_\cC(F_{[03]}\circ F_{[10]} \circ F_{[01]}, F_{[23]}\circ F_{[32]}\circ F_{[03]}):\big|\tau_{K}^-\big\rangle \mapsto \sum\big|\tau_{K}^-\big\rangle\,.
    \end{align*}
Now recall that, for each 1-simplex $[ij]$, the adjoint of $F_{[ij]}$  is given by  $F_{[ji]}$. Moreover, the co/units  $\iota_F,e_F$ of this adjunction $F_{[ji]}\vdash F_{[ij]}$ satisfy \eqref{unitaryadjoint}, hence pre/post composing with them induce linear isomorphisms of 2-hom spaces. These then give rise to the linear isomorphisms
\begin{align*}
     \sum\big|\tau_{K}^+\big\rangle &\mapsto (e_{F_{[23]}}\circ F_{[12]})\ast (\alpha_{F_{[32]},F_{[23]},F_{[12]}}^{-1})\ast \left(\sum\big|\tau_{K}^+\big\rangle \right) \ast (\alpha_{F_{[12]},F_{[01]},F_{[01]}^\dagger}^{-1})\ast(F_{[12]}\circ \iota_{F_{[01]}}) \\
     &\qquad \qquad \qquad \qquad \qquad \qquad\qquad\in \operatorname{2Hom}_\cC(F_{[12]},F_{[12]})\\ 
    \sum \big|\tau_{K}^-\big\rangle &\mapsto (e_{F_{[32]}}\circ F_{[03]})\ast (\alpha_{F_{[23]},F_{[32]},F_{[03]}}^{-1})\ast\left(\sum \big|\tau_{K}^-\big\rangle\right)\ast (\alpha_{F_{[03]},F_{[10]},F_{[10]}^\dagger}^{-1})\ast(F_{[03]}\circ \iota_{F_{[10]}})\\
    &\qquad \qquad \qquad \qquad \qquad \qquad\qquad\in \operatorname{2Hom}_\cC(F_{[03]},F_{[03]})\,.
\end{align*}    
    Composing the two above maps then proves the statement.
\end{proof}

\begin{figure}
    \centering
    \includegraphics[width=0.8\linewidth]{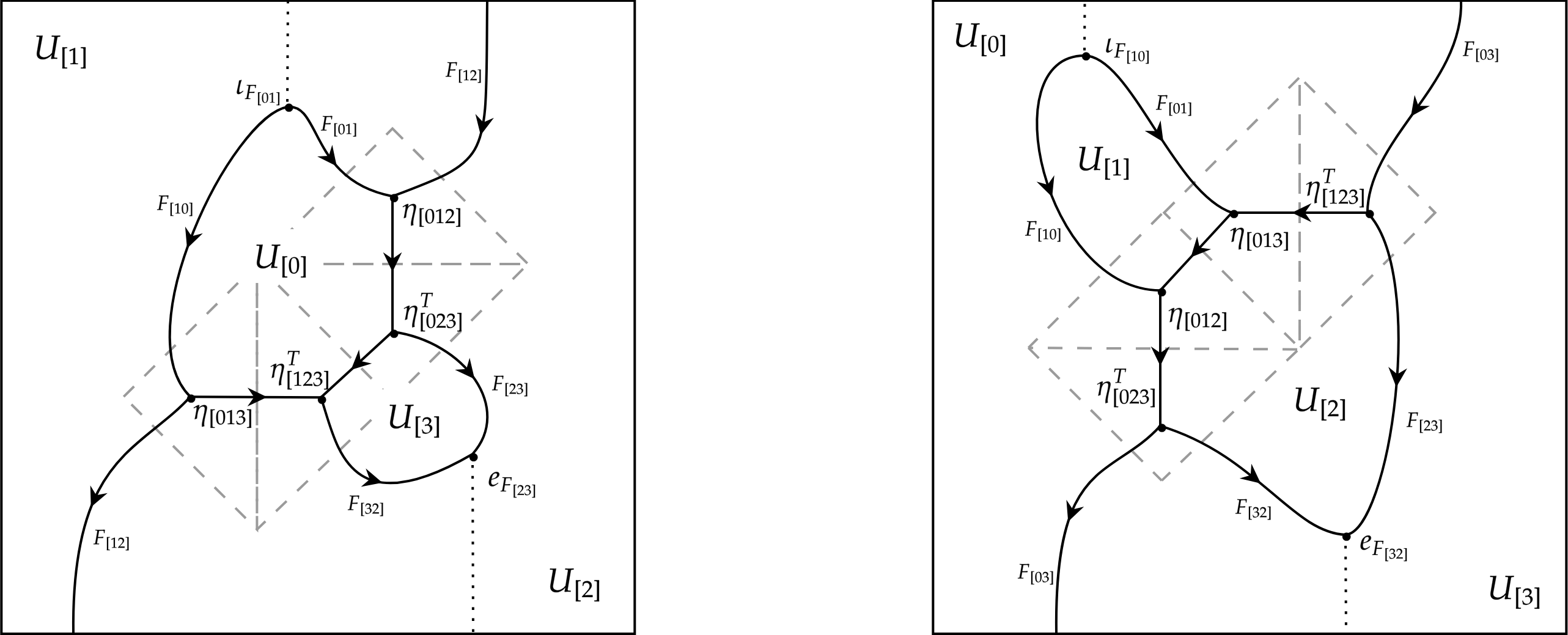}
    \caption{The tetrahedral states upon contracting pairs of its free ends as in \textbf{Proposition \ref{tethilbertspace}}. The dotted lines indicate identity 1-morphisms.} 
    \label{fig:planargraphs}
\end{figure}

If $F_{[12]}$ were a simple 1-morphism in $\cC$, then the linear map in \textbf{Proposition \ref{tethilbertspace}} extracts a scalar from the positively-oriented $\cC$-decorated tetrahedron
\begin{equation}
    \big|\tau_{K}^+\big\rangle\mapsto |K^+|\cdot \id_{F_{[12]}}\,,\qquad |K^+|\in\mathbb{C}\,;\label{tetrahedronscalar}
\end{equation}
similarly for the negatively-oriented decorated tetrahedron.\footnote{As inspired by the works on the 3d state-integral formulation of the Virasoro TQFT \cite{Andersen2018THETT,Garoufalidis:2014ifa,Shim:2026uxv,Yagi:2026nwu}, this scalar $\delta_{F_{[12]}}|K^+|$ may be referred to as the \textit{tetrahedral Boltzmann weight} of the lattice model. In contrast to those works, however, our Boltzmann weights do \textit{not} in general satisfy the (shaped) pentagon relation/3-2 Pachner move.} We will compute (the square of) this scalar $|K|=|K^+|$ in \textbf{Proposition \ref{sectionretraction}} later.

\subsection{Triangulated tetrahedral Hilbert space}\label{tethilbspace}
We call a $\cC$-decoration $K=K_s$ \textbf{simple} iff each $K([i])\in\cC$ and $K([ij])\in\operatorname{Hom}_\cC(K([i]),K([j]))$ are simple as objects/1-morphisms. For a triangulated closed oriented 3-manifold $(M^3,T_{M^3})$, a (simple) $\cC$-decoration $K: T_{M^3}\to \cC$ on the triangulation $T_{M^3}$ is a collection of (simple) $\cC$-decorations on all of its tetrahedra $\tau\subset T_{M^3}$.
\begin{definition}\label{tetspaces}
    For each oriented tetrahedron $\tau^\pm\subset T_{M^3}$, let $\cH_{K_s(\tau^\pm)} =\cH_{K_s(\tau)}^\pm$ denote the Hilbert space of 2-morphisms of the form \eqref{taustate+}, \eqref{taustate-}. The \textbf{triangulation Hilbert space on $M^3$} is given by
    \begin{equation}
        \cH_{T_{M^3}} = \bigoplus_{K_s: T_{M^3}\to \cC}\,\,\bigotimes_{\tau^\pm\subset T_{M^3}}\cH^\pm_{K_s(\tau)}\,,\label{totaltetspace}
    \end{equation}
    where the direct sum is taken over all simple $\cC$-decorations $K: T_{M^3}\to \cC$  on $T_{M^3}$.
\end{definition}
\noindent  Given a simple $\cC$-decoration $K=K_s$, the categorical Schur lemma (Prop. 1.2.19 in \cite{Douglas:2018}) implies that the tetrahedron Hilbert space $\cH_{K_s(\tau)}^\pm\neq 0$ unless $K=0$. 

However, $\cH_{K_s(\tau)}^\pm$ is \textit{not} necessarily 1-dimensional; the composition of simple 1-morphisms (which happens in both \eqref{Deltastates} and \eqref{taustate+}, \eqref{taustate-}) is in general not simple.  The best one can do is to expand 2-morphisms in a basis of simple 1-morphisms. We will demonstrate this in \S \ref{hamiltonian}.

\medskip

    The \textit{orientation-preserving} symmetries of the tetrahedron $\tau$, namely the alternating group $A_4$, acts by even permutations on the vertices of the tetrahedron. As such, tetrahedral Hilbert space should furnish an $A_4$-representation,
\begin{equation}
        f:A_4\to \operatorname{GL}(\cH_{K(\tau)}^\pm).\label{orientationrep}
    \end{equation}
We will argue in \textit{Remark \ref{branenetpivotal}} how this representation should arise canonically from the data of the tensor 2-category $\cC$. For now, we shall work under this assumption in the rest of the following.

\begin{proposition}\label{3dreversal}
    Suppose $\cC$ denote a 2-dagger rigid 2-category whose canonical pivotal structure induces  an $A_4$-representation \eqref{orientationrep}. Then, the conjugation $-^{\dagger T}:\cC\to \cC^\text{1-op,2-top}$ implements a 3-dimensional orientation reversal; namely we have a linear isomorphism
    \begin{equation*}
        (\cH^+_{K(\tau)})^{\dagger T}\xrightarrow{\sim}\cH_{\bar K(\tau)}^-
    \end{equation*}
    of the tetrahedral Hilbert spaces, where $\bar K$ denotes the negatively-oriented $\cC$-decoration.
\end{proposition}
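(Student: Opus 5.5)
The plan is to apply the composite involution $-^{\dagger T}$ factor by factor to the positively-oriented tetrahedron state \eqref{taustate+}. We then identify the result, after relabelling the vertices by an orientation-reversing permutation, with a state of the form \eqref{taustate-} for the reversed decoration $\bar K$. Since $-^\dagger$ and $-^T$ are linear involutions on 2-hom spaces and strongly commute (the 2-dagger axiom), $-^{\dagger T}$ is a linear isomorphism on each 2-hom space. It reverses 1-morphisms and is covariant on 2-morphisms, i.e.\ it lands in $\cC^{\text{1-op,2-op}}$ twice-opposed. The work is therefore to track what it does to each ingredient $\eta_{[ijk]}$, $F_{[ij]}$ and $\alpha$.

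First, use Definition \ref{2decorate} (items 2, 3) to compute, on a single 2-simplex,
\begin{equation*}
\eta_{[ijk]}^{\dagger T} = (\eta_{[kji]}^T)^T = \eta_{[kji]}: F_{[ji]}\circ F_{[kj]}\Rightarrow F_{[ki]},
\end{equation*}
and similarly $(\eta^T_{[ijk]})^{\dagger T}=\eta_{[kji]}^T$. Thus $-^{\dagger T}$ acts on the decoration as the vertex reversal $[ijk]\mapsto[kji]$, which is exactly the simplicial orientation reversal (Remark \ref{2dagger}). On 1-morphisms it sends $F_{[ij]}\mapsto F_{[ji]}$, by item 1. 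Applying this to $\eta_{\Delta^\pm}$ in \eqref{Deltastates}, we get $\eta_{\Delta^+}^{\dagger T}=\eta_{[230]}^T\otimes\eta_{[210]}$, and similarly for $\Delta^-$. Next we treat whiskering. Since $-^\dagger$ is a 1-op 2-functor and $-^T$ is a 2-functor on composition, we have $(F\circ\eta)^{\dagger T}=\eta^{\dagger T}\circ F^\dagger$. The associator transforms as $\alpha_{F,G,H}^{\dagger T}\cong\alpha_{H^\dagger,G^\dagger,F^\dagger}$ up to the coherence data of the 1-op functor $-^\dagger$; by unitarity of the associator (Definition \ref{2unitary}), this coherence is itself unitary.

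Assembling these, $|\tau^+_K\rangle^{\dagger T}$ becomes
\begin{equation*}
(\eta_{\Delta^+}^{\dagger T}\circ F_{[01]})\otimes\alpha_{F_{[01]},F_{[30]},F_{[23]}}\otimes(F_{[23]}\circ\eta_{\Delta^-}^{\dagger T}).
\end{equation*}
This has precisely the shape of \eqref{taustate-} for the decoration $\bar K$ obtained from $K$ by an odd vertex permutation, which reverses the orientation of $\tau$. The remaining mismatch is which even relabelling carries this shape exactly onto \eqref{taustate-}: the roles of $\Delta^\pm$ and the choice of root edges differ by a permutation of $\{0,1,2,3\}$ that we can arrange to be even. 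Here we invoke the assumed $A_4$-representation \eqref{orientationrep} to identify the two presentations by a linear isomorphism of $\cH^-_{\bar K(\tau)}$. Composing gives the claimed isomorphism, and invertibility follows because $-^{\dagger T}$ is an involution.

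The main obstacle is the associator and root-edge bookkeeping. We must check that the coherence 2-cells of $-^\dagger$ relating $\alpha^{\dagger T}$ to $\alpha$ with reversed arguments, together with the pivotal data $(p,P)$ used to make $-^\dagger$ strict on objects, assemble into exactly the $A_4$ action. Failing that, they should at least differ from it by a linear automorphism. My first attempt would be to verify this on the contracted states $\sum|\tau^\pm\rangle$ via the string diagrams of Fig.~\ref{fig:planargraphs1}, where $-^{\dagger T}$ is literally a planar orientation reversal.
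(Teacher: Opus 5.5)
Your overall route is the paper's. Its proof is only the remark that $-^{\dagger T}$ is the planar orientation reversal of the diagrams in Fig.~\ref{fig:planargraphs1}, plus the assertion that ``direct comparison'' gives an even $x\in A_4$ with $f_x\big(|\tau^+_K\rangle^{\dagger T}\big)=|\tau^-_K\rangle$. So the paper, like you, takes the final identification directly from the $A_4$ hypothesis. The explicit bookkeeping you add, however, contains a slip, and correcting it changes what that last step has to accomplish.

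\textbf{The slip.} Use your own (correct) rule $(F\circ\eta)^{\dagger T}=\eta^{\dagger T}\circ F^\dagger$, together with the fact that $-^{\dagger T}$ preserves the order of vertical composition. Then the image of \eqref{taustate+} is
\[
(\eta^{\dagger T}_{\Delta^-}\circ F_{[23]})\otimes\alpha_{F_{[01]},F_{[30]},F_{[23]}}\otimes(F_{[01]}\circ\eta^{\dagger T}_{\Delta^+}),
\]
which contracts to $F_{[01]}\circ F_{[10]}\circ F_{[21]}\Rightarrow F_{[01]}\circ F_{[30]}\circ F_{[23]}\Rightarrow F_{[21]}\circ F_{[32]}\circ F_{[23]}$. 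The expression you wrote does not typecheck: $\sum\eta^{\dagger T}_{\Delta^+}$ has source $\ttU_{[2]}$, while $F_{[01]}$ has target $\ttU_{[1]}$.

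\textbf{Consequence.} In the corrected expression the first step is whiskered on the left and the second on the right. This is the opposite of \eqref{taustate-}, and also of \eqref{taustate+}. In particular, the source has the adjoint pair $F_{[01]}\circ F_{[10]}$ on the left rather than on the right.

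So the image is the left--right mirror of \eqref{taustate-} and lies in a hom-space of mirrored type. It is not ``precisely the shape of \eqref{taustate-}'', and no vertex relabelling, even or odd, carries it there. The isomorphism onto $\cH^-_{\bar K(\tau)}$ has to be built from the adjunction co/units and pivotal data, by bending the whiskering 1-morphisms to the other side. That is exactly what the assumed pivotality-induced $A_4$-action is silently being asked to supply, in your argument and in the paper's alike. Your closing paragraph points in this direction, but it should be stated as the actual content of the last step, not as a residual check.

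Relatedly, the componentwise rule $\eta_{[ijk]}\mapsto\eta_{[kji]}$ amounts to reading $K$ in the reversed vertex order $[3210]$. That is the relabelling $(03)(12)$, which is even. The orientation reversal therefore resides in the mirroring, not in an odd relabelling as you claim.
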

\begin{proof}
    Recall form \textit{Remark \ref{2dagger}} that $-^{\dagger T}$ reverses the planar orientation of the pasting/string diagrams Fig. \ref{fig:planargraphs}. A direct comparison tells us that is an even cycle $x\in A_4$ on the vertex labels for which
    \begin{equation*}
        f_{x}\left(\big|\tau_{K}^+\big\rangle^{\dagger T}\right) = \big|\tau_{K}^-\big\rangle \,,
    \end{equation*}
    where $f:A_4\to \operatorname{GL}(\cH^+_\tau)$ is the linear isomorphism \eqref{orientationrep} representing the symmetries of the tetrahedron.
\end{proof}
\noindent We note that which even permutation $x\in A_4$ is used here depends on the way in which the vertices of the tetrahedron is labelled, but any linear ordering of the vertices has such an element $x$ implementing \textbf{Proposition \ref{3dreversal}}.


\begin{rmk}\label{internaltensor}
Consider the case $\cC=\operatorname{2Rep}(H)$ of main interest. The ($\mathbb{C}$-linear) tensor products involved in \S \ref{decoratedtet} can be understood as morphisms in the Delign tensor product of hom objects of $\cC$ --- for instance in \eqref{Deltastates}, we have
\begin{align*}
    \eta_{\Delta^+}\in \operatorname{Hom}_{\ttC\boxtimes \ttC}\left(F_{[02]}\boxtimes (F_{[12]}\circ F_{[01]}),(F_{[32]}\circ F_{[03]})\boxtimes F_{[02]}\right)\,,
\end{align*}
where $\ttC=\operatorname{Hom}_H(\ttU_{[0]},\ttU_{[2]})$ is understood as a hom-category object in $\cC$. Moreover, the 2-morphism adjunction $-^T:\ttC\to\ttC^\text{op}\simeq\ttC^*$ can be understood in terms of the rigid duality of the hom-object $\ttC\in\cC=\operatorname{2Rep}(H)$.  In other words, sections \S \ref{decoratedtet} and \S \ref{tethilbspace} can be completely understood in terms of the objects of $\cC=\operatorname{2Rep}(H)$. 
\end{rmk}

In \S \ref{3dlevinwen}, we will prove (\textbf{Theorem \ref{levinwentet}}) that the tetrahedron Hilbert space $\cH_\tau$ can be understood as a higher analogue of the intertwining spaces in the spin networks construction.

\section{Amplitude for 4-simplex scattering; the 20$j$-symbols}\label{20jsymbols}
Given the tetrahedral states defined above \eqref{taustate+}, \eqref{taustate-}, we now seek to construct the corresponding scattering states on a 4-simplex (aka. pentachoron, aka. hypertetrahedron, aka. pentatope, aka. pentahedroid) and compute its amplitude. To do so, we first briefly review the geometry of 4-simplices. 

Given a total ordering of the vertices, we denote by $\sigma = [01234]$ a positively oriented 4-simplex. Its boundary can be described as follows,
$$ \partial\sigma = \bigcup_{0\leq i\leq 4}[0\dots \hat i\dots 4]^{\epsilon_i} = \bigcup_{0\leq i\leq 4}\tau_i^{\epsilon_i}, $$
where "$\hat i$" denotes a missing index and $\epsilon_i=(-1)^i$ denotes their (mutual) orientations. These tetrahedra $\tau_i$ are glued upon a shared face, which is given by
\begin{equation*}
    \tau_i \cap \tau_j = \Delta_{ij}= [0\dots \hat i\dots \hat j\dots 4].
\end{equation*}
It is not hard to see that each (distinct) pair of tetrahedra $\tau_i,\tau_j\subset\partial\sigma$ share a unique face across which they are glued together.

The simplicial geometry of $\sigma$ is organized by its dual (non-planar, non-directed) 2-graph $\Gamma^5$, with (for $0\leq i<j\leq 4$) five vertices $v_i\in \Gamma^5$ each correspond to a tetrahedron $\tau_i\subset\partial\sigma$, and \textit{ten} links $e_{ij}$ each corresponding to the face $\Delta_{ij}$ shared between two tetrahedra $\tau_i,\tau_j$.
\begin{equation}\begin{tikzcd}
	&& {v_0} \\
	{v_1} &&&& {v_4} \\
	\\
	& {v_2} && {v_3}
	\arrow["{e_{02}}"{description}, no head, from=1-3, to=4-2]
	\arrow["{e_{03}}"{description}, no head, from=1-3, to=4-4]
	\arrow["{e_{01}}"{description}, no head, from=2-1, to=1-3]
	\arrow["{e_{14}}"{description}, no head, from=2-1, to=2-5]
	\arrow["{e_{12}}"{description}, no head, from=2-1, to=4-2]
	\arrow["{e_{13}}"{description}, no head, from=2-1, to=4-4]
	\arrow["{e_{40}}"',no head, from=2-5, to=1-3]
	\arrow["{e_{24}}"{description}, no head, from=2-5, to=4-2]
	\arrow["{e_{23}}"{description}, no head, from=4-2, to=4-4]
	\arrow["{e_{34}}"{description}, no head, from=4-4, to=2-5]
\end{tikzcd}\label{dual4-simplex}
\end{equation}
In the following, it will be useful to partition the set of edges of $\Gamma^5$,
\begin{equation*}
    E_\text{c} = \{e_{ij}\mid |i-j|=1\},\qquad E_\text{nc} = \{e_{ij}\mid |i-j|>1\}\,,
\end{equation*}
where $i,j\in\mathbb{Z}_4$ run over integer indices mod 4.

\medskip

To set the stage, given the 4-simplex $\sigma^\pm$ with a given orientation, we wish to construct in the following a Hilbert space $\cH_{\partial\sigma}^\pm$ associated to the boundary $\partial\sigma^\pm$ of the 4-simplex, such that the 4-simplex scattering amplitude 
\begin{equation*}
    \langle\sigma^\pm\rangle: \cH_{\partial\sigma}^\pm\to \bbC
\end{equation*}
can be understood as a linear functional. We call vectors $\psi\in \cH_{\partial\sigma}^\pm$ the {\it 4-simplex scattering states}.

\subsection{Ingredients for the 4-simplex scattering}\label{scalaringredients}
In order to construct the scattering Hilbert space, we first introduce the following ingredients. To express them in a way that is amenable to explicit computations, we shall make full use of the presemisimplicity of, and the unitary adjunctions on, $\cC$.

\begin{figure}
    \centering
    \includegraphics[width=0.8\linewidth]{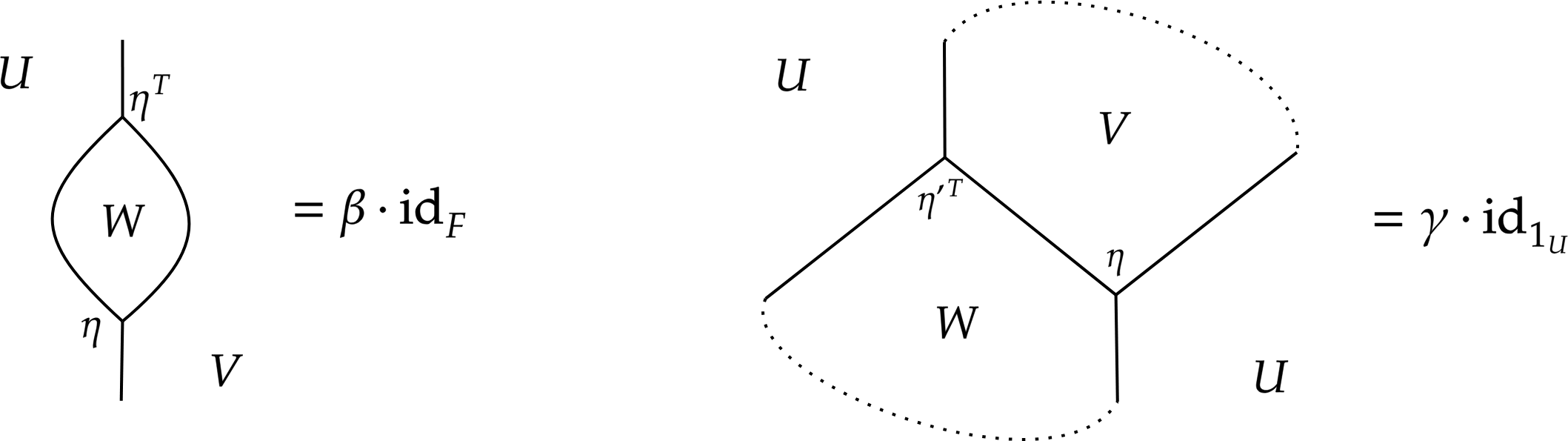}
    \caption{The planar string diagrams for the {\it \th}-net (left) and the $\theta$-net (right). They are named after their shapes. Similar $\theta$-nets have also appeared in the Crane-Yetter construction \cite{Crane:1993cm,Crane:1994ji,Crane:1993if}.}
    \label{fig:betagammascalar}
\end{figure}

Recall the composition of 2-cells along the "$A/B$-cycle" of the torus introduced in \S \ref{shadows}.
\begin{enumerate}
    \item \textbf{Scalars $\beta$ and the {\it \th}-nets ("thorn nets").} Consider composable 1-morphisms $\ttU\xrightarrow{G}\ttW\xrightarrow{H}\ttV$ and a 1-morphism $F: \ttU\to \ttV$. The pasting diagram which computes the vertical composition of a 2-morphism $\eta: H\circ G\Rightarrow F$ with its 2-dagger adjoint is given by  
\[\eta\ast_B\eta^T=\begin{tikzcd}
	\ttU && \ttV \\
	& {\ttW} \\
	\ttU && \ttV
	\arrow[""{name=0, anchor=center, inner sep=0}, "F", from=1-1, to=1-3]
	\arrow[from=1-1, to=2-2]
	\arrow["{1_\ttU}"', dotted, from=1-1, to=3-1]
	\arrow["{1_\ttV}", dotted, from=1-3, to=3-3]
	\arrow[from=2-2, to=1-3]
	\arrow[from=2-2, to=3-3]
	\arrow[from=3-1, to=2-2]
	\arrow[""{name=1, anchor=center, inner sep=0}, "F"', from=3-1, to=3-3]
	\arrow["{\eta^T}", between={0.2}{0.9}, Rightarrow, from=0, to=2-2]
	\arrow["\eta", between={0.1}{0.8}, Rightarrow, from=2-2, to=1]
\end{tikzcd}=
\begin{tikzcd}
	\ttU & \ttW & \ttV
	\arrow[from=1-1, to=1-2]
	\arrow[""{name=0, anchor=center, inner sep=0}, "F", curve={height=-24pt}, from=1-1, to=1-3]
	\arrow[""{name=1, anchor=center, inner sep=0}, "F"', curve={height=24pt}, from=1-1, to=1-3]
	\arrow[from=1-2, to=1-3]
	\arrow["{\eta^T}"', between={0.2}{1}, Rightarrow, from=0, to=1-2]
	\arrow["\eta", between={0}{0.9}, Rightarrow, from=1-2, to=1]
\end{tikzcd}\]
We call this composite a \textbf{{\it \th}-net}, and it is proportional to the identity,
\begin{equation}
    \eta\ast_B\eta^T = \beta\cdot\id_F\label{betascalar}
\end{equation}
for some scalar $\beta=|\eta|^2 $ which we call the "\textit{norm-squared}" of $\eta$. If $\cC$ is dagger unitary, $|\eta|^2=1$ for non-zero $\eta$. The pivotal string diagram for \eqref{betascalar} is displayed on the left side of Fig. \ref{fig:betagammascalar}.

\item \textbf{Scalars $\gamma$ and the $\theta$-nets ("theta nets").} Consider simple objects $\ttU,\ttV,\ttW\in\cC$ and the two unitary adjoint pairs $F^\dagger\vdash F:\ttV\to \ttU,\, G^\dagger\vdash G: \ttU\to \ttW$. For a 1-morphism $H:\ttW\to \ttV$, consider the 2-morphisms $\eta,\eta'^T$
in the following configuration 
\[\begin{tikzcd}
	{\ttU} && {\ttV} \\
	& {\ttW} && {\ttU}
	\arrow[""{name=0, anchor=center, inner sep=0}, from=1-1, to=1-3]
	\arrow[from=1-1, to=2-2]
	\arrow[from=1-3, to=2-4]
	\arrow[from=2-2, to=1-3]
	\arrow[""{name=1, anchor=center, inner sep=0}, from=2-2, to=2-4]
	\arrow["{\eta'^T}", between={0.2}{0.9}, Rightarrow, from=0, to=2-2]
	\arrow["{\eta}", between={0.1}{0.8}, Rightarrow, from=1-3, to=1]
\end{tikzcd}=
\begin{tikzcd}
	{\ttU} && {\ttV} \\
	{\ttW} && {\ttU}
	\arrow[""{name=0, anchor=center, inner sep=0}, from=1-1, to=1-3]
	\arrow[from=1-1, to=2-1]
	\arrow[from=1-3, to=2-3]
	\arrow[""{name=1, anchor=center, inner sep=0}, from=2-1, to=2-3]
	\arrow["{\eta\ast_A \eta'^T}"{description}, between={0.2}{0.9}, Rightarrow, from=0, to=1]
\end{tikzcd}\,,\]
where we have defined the composite
\begin{align*}
    \eta\ast_A\eta'^T: F \circ F^\dagger \xRightarrow{F\circ \eta'^T}F\circ (H\circ G)
    \xRightarrow{\alpha_{G,H,F}^{-1}} (F\circ  H)\circ G  \xRightarrow{\eta\circ G}G^\dagger \circ G\,
\end{align*}
which we call a \textbf{$\theta$-net}. By planar rigidity, we obtain a natural transformation
\begin{equation*}
    e_{G}\ast(\eta\ast_A\eta'^T)\ast \iota_{F}: 1_{\ttU}\Rightarrow 1_{\ttU}
\end{equation*}
which by presemisimplicity evaluates to an element in $\operatorname{End}(1_\ttU)\cong \bbC$.\footnote{This follows from Prop. 1.2.14 \cite{Douglas:2018}: given the object $\ttU$ is simple, so is its identity $1_\ttU$.} Provided $\eta,\,\eta'\neq 0$, this 2-morphism determines a non-zero complex number denoted by $\gamma\in\bbC^\times$:
\begin{equation}
    e_{G}\ast(\eta\ast_A\eta'^T)\ast \iota_{F} = {\gamma \cdot \id_{1_\ttU}}.\label{gammascalar}
\end{equation}
The pivotal string diagram for \eqref{gammascalar} is displayed on the right side of Fig. \ref{fig:betagammascalar}.
\item \textbf{Scalars $\delta$ and complete idempotents.}  Consider a rigid adjoint pair $F^\dagger\vdash F:\ttU\to \ttV$ of 1-morphisms. By dagger adjunction \eqref{unitaryadjoint}, the co/units $e_F,\iota_{F}$ form a section-retraction pair (cf. Def. 1.3.2 \cite{Douglas:2018}) up to the chosen 2-isomorphism $\phi_F:(F^\dagger)^\dagger\cong F$,
\begin{equation}
     \iota'_{F^\dagger}\ast e_F = p_F ,\qquad \iota'_{F^\dagger}= (F^\dagger\circ\phi_F)\ast\iota_{F^\dagger}\,\label{deltascalars}
\end{equation}
where  $p_F\in\operatorname{2Hom}_\cC(F^\dagger\circ F,F^\dagger\circ F)$ is an idempotent. By local idempotent completeness, these idempotents form a complete orthonormal basis 
\begin{equation}
    \sum_{F\in\operatorname{Irr}\cC(\ttU,\ttV)} p_F = \id_A,\qquad p_Fp_G = \delta_{F,G}p_F \label{deltadecouple}
\end{equation}
upon normalizing by the scalar $\delta_F = \operatorname{dim }F\in\bbC^\times$.  Here, $A$ is the universal coend
\begin{equation*}
    A= \int^{F\in\operatorname{Hom}_\cC(\ttU,\ttV)}F^\dagger\circ F \cong \bigoplus_{F\in\operatorname{Irr}\cC(\ttU,\ttV)} F^\dagger\circ F
\end{equation*}
in $\operatorname{End}_\cC(\ttU)$. 
Now \eqref{deltadecouple} allows us to perform an important graphical move: the sum over parallel surfaces decorated by adjoint 1-morphisms $F^\dagger\vdash F$ can be merged. This is displayed in Fig. \ref{fig:deltascalar}. 
\end{enumerate}

\begin{figure}
    \centering
    \includegraphics[width=0.7\linewidth]{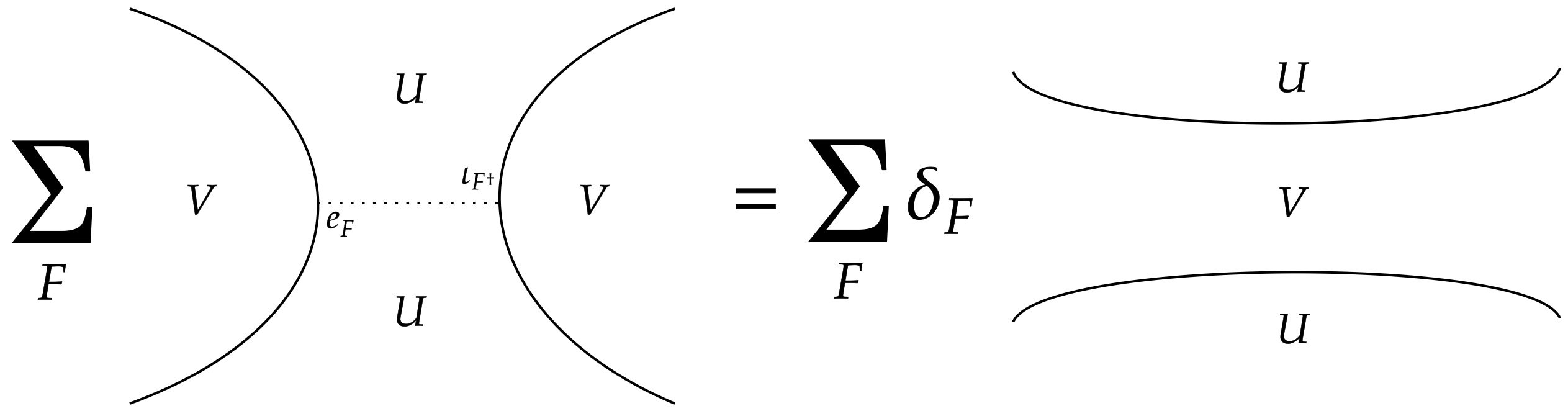}
    \caption{The surface diagram arising from the operation \eqref{deltadecouple}. The scalar $\delta_F=\operatorname{dim}F$ appears due to the normalization of the projectors $p_F$ \eqref{deltascalars}. This is precisely the {\bf Kirby colour} of \cite{Liu:2023dhj}.}
    \label{fig:deltascalar}
\end{figure}


We show in \S \ref{eval} that these scalars \eqref{betascalar}, \eqref{deltascalars}, \eqref{gammascalar} serve as the building blocks of the 4-simplex scattering amplitudes.

\subsection{Scattering states on the positively-oriented 4-simplex}
For each $0\leq i\leq 4$, we will fix notation such that $\tau^+_i=\tau_i = [jklm]$, where $j,k,l,m\neq i$ are ordered integers mod 4; in other words, we have explicitly
\begin{gather*}
    \tau_0= [1234],\qquad \tau_1= [2340],\qquad \tau_2=[3401],\\
    \tau_3=[4012],\qquad \tau_4=[0123].
\end{gather*}
The gluing law corresponding to the edge $e_{ij}$ are implemented through a PL attaching map $f_{{ij}}: \tau_i^{\epsilon_i} \to \tau_j^{\epsilon_j}$ in accordance with the graph $\Gamma^5$, where $f_{{ij}}$ is orientation-preserving iff $\epsilon_i\epsilon_j=+$, and orientation-reversing otherwise.

By considering the states \eqref{taustate+}, \eqref{taustate-} associated to $\tau_i$, we see that this PL homeomorphism corresponds to 2-morphism compositions. For instance, attaching $\tau_4^+,\tau_0^+$ at the face $[123]$ amounts to the contraction
\[\hspace{-0.9cm}\begin{tikzcd}
	&&& {\ttU_{[1]}} \\
	&& {\ttU_{[0]}} && {\ttU_{[2]}} \\
	& {\ttU_{[1]}} && {\ttU_{[3]}} \\
	&& {\ttU_{[2]}} \\
	& {\ttU_{[1]}} && {\ttU_{[3]}} \\
	{\ttU_{[2]}} && {\ttU_{[4]}} \\
	& {\ttU_{[3]}}
	\arrow[from=1-4, to=2-5]
	\arrow[from=2-3, to=1-4]
	\arrow[""{name=0, anchor=center, inner sep=0}, from=2-3, to=2-5]
	\arrow[from=2-3, to=3-4]
	\arrow[from=3-2, to=2-3]
	\arrow[""{name=1, anchor=center, inner sep=0}, from=3-2, to=3-4]
	\arrow[from=3-2, to=4-3]
	\arrow["{1_{\ttU_{[1]}}}"', dotted, from=3-2, to=5-2]
	\arrow[from=3-4, to=2-5]
	\arrow["{1_{\ttU_{[3]}}}", dotted, from=3-4, to=5-4]
	\arrow[from=4-3, to=3-4]
	\arrow[from=4-3, to=5-4]
	\arrow[from=5-2, to=4-3]
	\arrow[""{name=2, anchor=center, inner sep=0}, from=5-2, to=5-4]
	\arrow[from=5-2, to=6-3]
	\arrow[from=6-1, to=5-2]
	\arrow[""{name=3, anchor=center, inner sep=0}, from=6-1, to=6-3]
	\arrow[from=6-1, to=7-2]
	\arrow[from=6-3, to=5-4]
	\arrow[from=7-2, to=6-3]
	\arrow["{\eta_{[012]}}", between={0.1}{0.8}, Rightarrow, from=1-4, to=0]
	\arrow["{\eta_{[032]}^T}", between={0.2}{0.9}, Rightarrow, from=0, to=3-4]
	\arrow["{\eta_{[103]}}", between={0.1}{0.8}, Rightarrow, from=2-3, to=1]
	\arrow["{\eta_{[123]}^T}", between={0.2}{0.9}, Rightarrow, from=1, to=4-3]
	\arrow["{\eta_{[123]}}", between={0.1}{0.8}, Rightarrow, from=4-3, to=2]
	\arrow["{\eta_{143}^T}", between={0.2}{0.9}, Rightarrow, from=2, to=6-3]
	\arrow["{\eta_{[214]}}", between={0}{0.8}, Rightarrow, from=5-2, to=3]
	\arrow["{\eta_{[234]}^T}", between={0.2}{1}, Rightarrow, from=3, to=7-2]
\end{tikzcd} =
\begin{tikzcd}
	&&&& {\ttU_{[1]}} \\
	&&& {\ttU_{[0]}} && {\ttU_{[2]}} \\
	& {\ttU_{[1]}} &&& {\ttU_{[3]}} \\
	{\ttU_{[2]}} && {\ttU_{[4]}} \\
	& {\ttU_{[3]}}
	\arrow[from=1-5, to=2-6]
	\arrow[from=2-4, to=1-5]
	\arrow[""{name=0, anchor=center, inner sep=0}, from=2-4, to=2-6]
	\arrow[from=2-4, to=3-5]
	\arrow[from=3-2, to=2-4]
	\arrow[""{name=1, anchor=center, inner sep=0}, curve={height=-12pt}, from=3-2, to=3-5]
	\arrow[""{name=2, anchor=center, inner sep=0}, curve={height=12pt}, from=3-2, to=3-5]
	\arrow[from=3-2, to=4-3]
	\arrow[from=3-5, to=2-6]
	\arrow[from=4-1, to=3-2]
	\arrow[""{name=3, anchor=center, inner sep=0}, from=4-1, to=4-3]
	\arrow[from=4-1, to=5-2]
	\arrow[from=4-3, to=3-5]
	\arrow[from=5-2, to=4-3]
	\arrow["{\eta_{[012]}}", between={0.1}{0.8}, Rightarrow, from=1-5, to=0]
	\arrow["{\eta_{[032]}^T}", between={0.2}{0.9}, Rightarrow, from=0, to=3-5]
	\arrow["{\eta_{[103]}}"{pos=0}, between={0.1}{0.8}, Rightarrow, from=2-4, to=1]
	\arrow["{\eta_{[123]}^T\ast_B\eta_{[123]}}", shift right=4, between={0.2}{0.8}, Rightarrow, from=1, to=2]
	\arrow["{\eta_{[214]}}", between={0}{0.8}, Rightarrow, from=3-2, to=3]
	\arrow["{\eta_{[143]}^T}"'{pos=1}, between={0.2}{1}, Rightarrow, from=2, to=4-3]
	\arrow["{\eta_{[234]}^T}", between={0.2}{1}, Rightarrow, from=3, to=5-2]
\end{tikzcd}\]
On the other hand, attaching $\tau_4,\tau_2$ at the face $[301]$ corresponds to the \textit{horizontal} composition 
\begin{equation}\hspace{-1.7cm}\begin{tikzcd}
	&& {\ttU_{[4]}} && {\ttU_{[1]}} \\
	& {\ttU_{[3]}} && {\ttU_{[0]}} && {\ttU_{[2]}} \\
	{\ttU_{[4]}} && {\ttU_{[1]}} && {\ttU_{[3]}} \\
	& {\ttU_{[0]}} && {\ttU_{[2]}}
	\arrow[from=1-3, to=2-4]
	\arrow[from=1-5, to=2-6]
	\arrow[from=2-2, to=1-3]
	\arrow[""{name=0, anchor=center, inner sep=0}, from=2-2, to=2-4]
	\arrow[from=2-2, to=3-3]
	\arrow[from=2-4, to=1-5]
	\arrow[""{name=1, anchor=center, inner sep=0}, from=2-4, to=2-6]
	\arrow[from=2-4, to=3-5]
	\arrow[from=3-1, to=2-2]
	\arrow[""{name=2, anchor=center, inner sep=0}, from=3-1, to=3-3]
	\arrow[from=3-1, to=4-2]
	\arrow[from=3-3, to=2-4]
	\arrow[""{name=3, anchor=center, inner sep=0}, from=3-3, to=3-5]
	\arrow[from=3-3, to=4-4]
	\arrow[from=3-5, to=2-6]
	\arrow[from=4-2, to=3-3]
	\arrow[from=4-4, to=3-5]
	\arrow["{\eta_{[320]}}", between={0.1}{0.8}, Rightarrow, from=1-3, to=0]
	\arrow["{\eta_{[012]}}", between={0.1}{0.8}, Rightarrow, from=1-5, to=1]
	\arrow["{\eta_{[310]}^T}", between={0.2}{0.9}, Rightarrow, from=0, to=3-3]
	\arrow["{\eta_{[431]}}", between={0.1}{0.8}, Rightarrow, from=2-2, to=2]
	\arrow["{\eta_{[032]}^T}"', between={0.2}{0.9}, Rightarrow, from=1, to=3-5]
	\arrow["{\eta_{[103]}}"', between={0.1}{0.8}, Rightarrow, from=2-4, to=3]
	\arrow["{\eta_{[401]}^T}"', between={0.2}{0.9}, Rightarrow, from=2, to=4-2]
	\arrow["{\eta_{[123]}^T}", between={0.2}{1}, Rightarrow, from=3, to=4-4]
\end{tikzcd}=
\begin{tikzcd}
	&& {\ttU_{[4]}} && {\ttU_{[1]}} \\
	& {\ttU_{[3]}} && {\ttU_{[0]}} && {\ttU_{[2]}} \\
	{\ttU_{[4]}} && {\ttU_{[1]}} && {\ttU_{[3]}} \\
	& {\ttU_{[0]}} && {\ttU_{[2]}}
	\arrow[from=1-3, to=2-4]
	\arrow[from=1-5, to=2-6]
	\arrow[from=2-2, to=1-3]
	\arrow[""{name=0, anchor=center, inner sep=0}, from=2-2, to=2-4]
	\arrow[from=2-2, to=3-3]
	\arrow[from=2-4, to=1-5]
	\arrow[""{name=1, anchor=center, inner sep=0}, from=2-4, to=2-6]
	\arrow[from=2-4, to=3-5]
	\arrow[from=3-1, to=2-2]
	\arrow[""{name=2, anchor=center, inner sep=0}, from=3-1, to=3-3]
	\arrow[from=3-1, to=4-2]
	\arrow[""{name=3, anchor=center, inner sep=0}, from=3-3, to=3-5]
	\arrow[from=3-3, to=4-4]
	\arrow[from=3-5, to=2-6]
	\arrow[from=4-2, to=3-3]
	\arrow[from=4-4, to=3-5]
	\arrow["{\eta_{[320]}}", between={0.1}{0.8}, Rightarrow, from=1-3, to=0]
	\arrow["{\eta_{[012]}}", between={0.1}{0.8}, Rightarrow, from=1-5, to=1]
	\arrow["{\eta_{[431]}}", between={0.1}{0.8}, Rightarrow, from=2-2, to=2]
	\arrow["{\eta^T_{[310]}\ast_A\eta_{[103]}}"{description}, between={0.1}{0.8}, Rightarrow, from=0, to=3]
	\arrow["{\eta_{[032]}^T}"', between={0.2}{0.9}, Rightarrow, from=1, to=3-5]
	\arrow["{\eta_{[401]}^T}"', between={0.2}{0.9}, Rightarrow, from=2, to=4-2]
	\arrow["{\eta_{[123]}^T}", between={0.2}{1}, Rightarrow, from=3, to=4-4]
\end{tikzcd}\label{thetanethorizontalcomposite}\end{equation}
from which one can obtain the $\theta$-nets; see also \S \ref{pentagonatorscalars} later.

\begin{table}
    \centering
    \begin{tabular}{c|c|c}
         & $|i-j|=1$ & $|i-j|>1$\\
        \hline
        gluing type & cyclic $e_{ij}\in E_\text{c}$  & non-cyclic $e_{ij}\in E_\text{nc}$ \\ 
        composition type & {\it \th}-net $\eta^T\ast_B\eta$ & $\theta$-net $\eta'^T\ast_A \eta$\\
        attaching map $f_{{ij}}$ & reverses orientation & preserves orientation\\ 
        scalar factor & $\beta$ & $\gamma$
    \end{tabular}
    \caption{This table displays the 2-morphism compositions/nets, as well as the resulting scalar factors, which correspond to the gluing laws between the two tetrahedra $\tau_i,\tau_j$.}
    \label{tab:types}
\end{table}

By contracting all tetrahedron states on $\tau_0,\dots,\tau_4$ with respect to the prescription displayed in Tab. \ref{tab:types}, we complete the 4-simplex boundary:
\[\begin{tikzcd}
	&&&& {\ttU_{[3]}} && {\ttU_{[0]}} \\
	&&& {\ttU_{[2]}} && {\ttU_{[4]}} && {\ttU_{[1]}} \\
	&& {\ttU_{[3]}} && {\ttU_{[0]}} && {\ttU_{[2]}} \\
	{\ttU_{[4]}} &&& {\ttU_{[1]}} && {\ttU_{[3]}} \\
	& {\ttU_{[0]}} & {\ttU_{[2]}} && {\ttU_{[4]}} \\
	&&& {\ttU_{[3]}}
	\arrow[from=1-5, to=2-6]
	\arrow[from=1-7, to=2-8]
	\arrow[from=2-4, to=1-5]
	\arrow[""{name=0, anchor=center, inner sep=0}, from=2-4, to=2-6]
	\arrow[from=2-4, to=3-5]
	\arrow[from=2-6, to=1-7]
	\arrow[""{name=1, anchor=center, inner sep=0}, from=2-6, to=2-8]
	\arrow[from=2-6, to=3-7]
	\arrow[from=3-3, to=2-4]
	\arrow[""{name=2, anchor=center, inner sep=0}, curve={height=-6pt}, from=3-3, to=3-5]
	\arrow[""{name=3, anchor=center, inner sep=0}, curve={height=6pt}, from=3-3, to=3-5]
	\arrow[from=3-3, to=4-4]
	\arrow[""{name=4, anchor=center, inner sep=0}, curve={height=-6pt}, from=3-5, to=3-7]
	\arrow[""{name=5, anchor=center, inner sep=0}, curve={height=6pt}, from=3-5, to=3-7]
	\arrow[from=3-5, to=4-6]
	\arrow[from=3-7, to=2-8]
	\arrow[from=4-1, to=3-3]
	\arrow[""{name=6, anchor=center, inner sep=0}, from=4-1, to=4-4]
	\arrow[from=4-1, to=5-2]
	\arrow[""{name=7, anchor=center, inner sep=0}, curve={height=-6pt}, from=4-4, to=4-6]
	\arrow[""{name=8, anchor=center, inner sep=0}, curve={height=6pt}, from=4-4, to=4-6]
	\arrow[from=4-4, to=5-5]
	\arrow[from=4-6, to=3-7]
	\arrow[from=5-2, to=4-4]
	\arrow[from=5-3, to=4-4]
	\arrow[""{name=9, anchor=center, inner sep=0}, from=5-3, to=5-5]
	\arrow[from=5-3, to=6-4]
	\arrow[from=5-5, to=4-6]
	\arrow[from=6-4, to=5-5]
	\arrow["{\eta_{[234]}}", color={rgb,255:red,255;green,51;blue,68}, between={0.1}{0.8}, Rightarrow, from=1-5, to=0]
	\arrow["{\eta_{[401]}}"', color={rgb,255:red,51;green,54;blue,255}, between={0.1}{0.8}, Rightarrow, from=1-7, to=1]
	\arrow["{\eta_{[320]}}", color={rgb,255:red,214;green,153;blue,92}, between={0.1}{0.8}, Rightarrow, from=2-4, to=2]
	\arrow[between={0.2}{0.8}, Rightarrow, from=0, to=4]
	\arrow["{\eta_{[421]}^T}", color={rgb,255:red,0;green,219;blue,33}, between={0.2}{0.9}, Rightarrow, from=1, to=3-7]
	\arrow[between={0.2}{0.8}, Rightarrow, from=2, to=3]
	\arrow["{\eta_{[431]}}", color={rgb,255:red,150;green,51;blue,255}, between={0.1}{0.8}, Rightarrow, from=3-3, to=6]
	\arrow[between={0.2}{0.8}, Rightarrow, from=3, to=7]
	\arrow[between={0.2}{0.8}, Rightarrow, from=4, to=5]
	\arrow["{\eta_{[032]}^T}", color={rgb,255:red,214;green,153;blue,92}, between={0.2}{0.9}, Rightarrow, from=5, to=4-6]
	\arrow["{\eta_{[401]}^T}"', color={rgb,255:red,51;green,54;blue,255}, between={0.2}{0.9}, Rightarrow, from=6, to=5-2]
	\arrow[between={0.2}{0.8}, Rightarrow, from=7, to=8]
	\arrow["{\eta_{[214]}}"', color={rgb,255:red,0;green,219;blue,33}, between={0.2}{0.8}, Rightarrow, from=4-4, to=9]
	\arrow["{\eta_{[134]}^T}", color={rgb,255:red,150;green,51;blue,255}, between={0.2}{0.8}, Rightarrow, from=8, to=5-5]
	\arrow["{\eta_{[234]}^T}", color={rgb,255:red,255;green,51;blue,68}, between={0.2}{1}, Rightarrow, from=9, to=6-4]
\end{tikzcd}\]
where the pairs of 2-morphisms that can be contracted into \textit{\th}-nets or $\theta$-nets are colour-coded.


To achieve this contraction, we leverage the cyclic symmetry \eqref{cublicycle} afforded by the {toroidal trace} $\Tr$ \S \ref{torustrace}. Modulo $\operatorname{ker}\Tr$, the above 2-morphism is identified with the following,
\begin{equation} \psi_\mathcal{K} = 
\begin{tikzcd}
	&& {\ttU_{[2]}} && {\ttU_{[4]}} && {\ttU_{[1]}} \\
	\\
	{\ttU_{[3]}} && {\ttU_{[0]}} && {\ttU_{[2]}} && {\ttU_{[4]}} \\
	\\
	{\ttU_{[1]}} && {\ttU_{[3]}} && {\ttU_{[0]}} \\
	\\
	{\ttU_{[4]}} && {\ttU_{[1]}}
	\arrow[""{name=0, anchor=center, inner sep=0}, curve={height=-12pt}, from=1-3, to=1-5]
	\arrow[""{name=1, anchor=center, inner sep=0}, curve={height=12pt}, from=1-3, to=1-5]
	\arrow[from=1-3, to=3-3]
	\arrow[""{name=2, anchor=center, inner sep=0}, curve={height=-12pt}, from=1-5, to=1-7]
	\arrow[""{name=3, anchor=center, inner sep=0}, curve={height=12pt}, from=1-5, to=1-7]
	\arrow[from=1-5, to=3-5]
	\arrow[from=1-7, to=3-7]
	\arrow[""{name=4, anchor=center, inner sep=0}, curve={height=-12pt}, from=3-1, to=3-3]
	\arrow[""{name=5, anchor=center, inner sep=0}, curve={height=12pt}, from=3-1, to=3-3]
	\arrow[from=3-1, to=5-1]
	\arrow[""{name=6, anchor=center, inner sep=0}, curve={height=-12pt}, from=3-3, to=3-5]
	\arrow[""{name=7, anchor=center, inner sep=0}, curve={height=12pt}, from=3-3, to=3-5]
	\arrow[from=3-3, to=5-3]
	\arrow[""{name=8, anchor=center, inner sep=0}, from=3-5, to=3-7]
	\arrow[from=3-5, to=5-5]
	\arrow[""{name=9, anchor=center, inner sep=0}, curve={height=-12pt}, from=5-1, to=5-3]
	\arrow[""{name=10, anchor=center, inner sep=0}, curve={height=12pt}, from=5-1, to=5-3]
	\arrow[from=5-1, to=7-1]
	\arrow[""{name=11, anchor=center, inner sep=0}, from=5-3, to=5-5]
	\arrow[from=5-3, to=7-3]
	\arrow[""{name=12, anchor=center, inner sep=0}, from=7-1, to=7-3]
	\arrow[between={0.2}{0.8}, Rightarrow, from=0, to=1]
	\arrow["{\eta_{[204]}^T\ast_A \eta_{[042]}}"{description}, between={0.1}{0.9}, Rightarrow, from=1, to=6]
	\arrow[between={0.2}{0.8}, Rightarrow, from=2, to=3]
	\arrow["{\eta_{[421]}^T\ast_A \eta_{[214]}}"{description}, between={0.1}{0.9}, Rightarrow, from=3, to=8]
	\arrow[between={0.2}{0.8}, Rightarrow, from=4, to=5]
	\arrow["{\eta_{[310]}^T\ast_A\eta_{[103]}}"{description}, between={0.1}{0.9}, Rightarrow, from=5, to=9]
	\arrow[between={0.2}{0.8}, Rightarrow, from=6, to=7]
	\arrow["{\eta_{[032]}^T\ast_A \eta_{[320]}}"{description}, between={0.1}{0.9}, Rightarrow, from=7, to=11]
	\arrow[between={0.2}{0.8}, Rightarrow, from=9, to=10]
	\arrow["{\eta_{[134]}^T\ast_A \eta_{[431]}}"{description}, between={0.1}{0.9}, Rightarrow, from=10, to=12]
\end{tikzcd}\label{scatteringstate}
\end{equation}
The \textit{non-planar} tensor network and string diagram presentations for $\psi_\mathcal{K}$ are shown in Fig. \ref{fig:scatteringvector}. 

\begin{figure}
    \centering
    \includegraphics[width=1\linewidth]{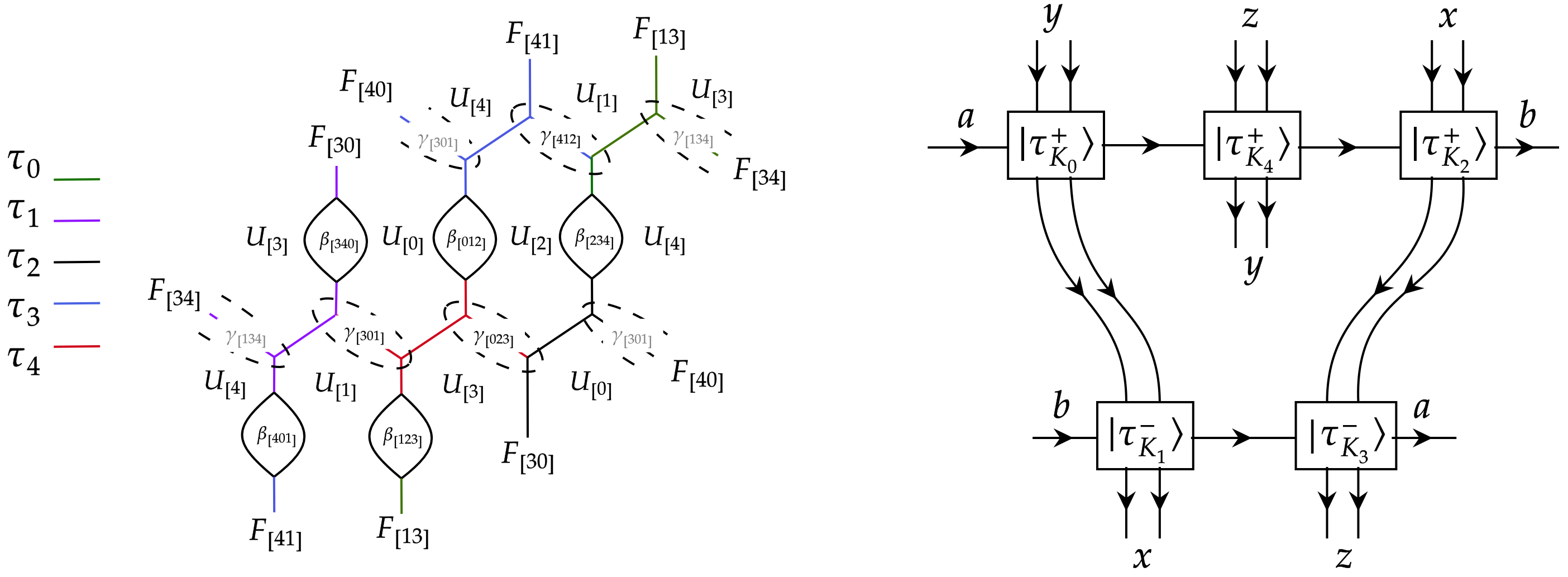}
    \caption{The string diagram (left) and tensor network (right) representing a 4-simplex scattering vector $\psi_\mathcal{K}\in\cH_{\partial\sigma}^+$ \eqref{scatteringstate}. On the left, the colours label contributions from the different tetrahedra. On the right, the symbols $a,b\,,x,y,z$ label the external tensor legs that can be contracted to form the scattering amplitude; the way these labels are contracted makes the diagram non-planar.}
    \label{fig:scatteringvector}
\end{figure}

\begin{definition}\label{4simpscattering}
    The \textbf{4-simplex scattering states} are given by 2-morphisms of the form  in \eqref{scatteringstate},
    \begin{equation*}
        \psi_{\mathcal{K}_s} \in \cH_{\partial\sigma}^+ \subset \bigotimes_{i=0}^5 \cH_{\tau_i}^{\epsilon_i}\,,
    \end{equation*}
    where $\mathcal{K}_s:\partial\sigma^+\to \cC$ is a simple  $\cC$-decoration on the boundary of a positively-oriented 4-simplex $\sigma^+$. 
\end{definition}
Throughout the following, we will consider the Hilbert space $\cH_{\partial\sigma}^+$ modulo $\operatorname{ker}\Tr$. 

\subsubsection{Recoupling of $\theta$-nets and the pentagonators}\label{pentagonatorscalars}
Experts on TQFTs may wonder where the \textit{pentagonator} datum $\pi$ of $\cC$ appears in the 4-simplex state $\psi_\mathcal{K}$, as it plays the role of the "10$j$-symbols" in the usual 4d state-sum construction \cite{Douglas:2018,Mackaay:ek,Crane:1994ty,Inamura:2023qzl}. Indeed, the 4-simplex scattering state $\psi_\mathcal{K}$ \eqref{scatteringstate} \textit{does} in fact involve the pentagonator $\pi$, albeit indirectly. 

To see this, let us first focus on pasting diagrams of the form 
\[\begin{tikzcd}
	& {\ttU_2} && {\ttU_4} \\
	{\ttU_1} && {\ttU_3} && {\ttU_5}
	\arrow[""{name=0, anchor=center, inner sep=0}, from=1-2, to=1-4]
	\arrow["{F_2}", from=1-2, to=2-3]
	\arrow["{F_4}", from=1-4, to=2-5]
	\arrow["{F_1}", from=2-1, to=1-2]
	\arrow[""{name=1, anchor=center, inner sep=0}, from=2-1, to=2-3]
	\arrow["{F_3}"', from=2-3, to=1-4]
	\arrow[""{name=2, anchor=center, inner sep=0}, from=2-3, to=2-5]
	\arrow["\eta_1",between={0}{0.8}, Rightarrow, from=1-2, to=1]
	\arrow["\eta_2",between={0.2}{1}, Rightarrow, from=0, to=2-3]
	\arrow["\eta_3",between={0}{0.8}, Rightarrow, from=1-4, to=2]
\end{tikzcd}\]
involving the horizontal composition $\eta_1\ast_A\eta_2\ast_A\eta_3$ of three 2-morphisms. There are various ways to evaluate such a pasting diagram, two of which is given by simply forming $\eta_1\ast_A\eta_2$ or $\eta_2\ast_A\eta_3$ first,
\[
\begin{tikzcd}
	& {\ttU_2} && {\ttU_4} \\
	{\ttU_1} && {\ttU_3} && {\ttU_5}
	\arrow[""{name=0, anchor=center, inner sep=0}, from=1-2, to=1-4]
	\arrow["{F_4}", from=1-4, to=2-5]
	\arrow["{F_1}", from=2-1, to=1-2]
	\arrow[""{name=1, anchor=center, inner sep=0}, from=2-1, to=2-3]
	\arrow["{F_3}"', from=2-3, to=1-4]
	\arrow[""{name=2, anchor=center, inner sep=0}, from=2-3, to=2-5]
	\arrow[between={0.2}{0.8}, Rightarrow, from=0, to=1]
	\arrow[between={0}{0.8}, Rightarrow, from=1-4, to=2]
\end{tikzcd},\qquad \begin{tikzcd}
	& {\ttU_2} && {\ttU_4} \\
	{\ttU_1} && {\ttU_3} && {\ttU_5}
	\arrow[""{name=0, anchor=center, inner sep=0}, from=1-2, to=1-4]
	\arrow["{F_2}", from=1-2, to=2-3]
	\arrow["{F_4}", from=1-4, to=2-5]
	\arrow["{F_1}", from=2-1, to=1-2]
	\arrow[""{name=1, anchor=center, inner sep=0}, from=2-1, to=2-3]
	\arrow[""{name=2, anchor=center, inner sep=0}, from=2-3, to=2-5]
	\arrow[between={0}{0.8}, Rightarrow, from=1-2, to=1]
	\arrow[between={0.2}{0.8}, Rightarrow, from=0, to=2]
\end{tikzcd}\]
which involve the following composition associators respectively
\begin{equation*}
    \alpha_{F_1,F_2,F_3}\circ F_4,\qquad F_1\circ\alpha_{F_2,F_3,F_4}\,.
\end{equation*}
Similarly, the remaining other ways to evaluate this diagram 

involve the associators 
\begin{equation*}
    \alpha_{F_1\circ F_2,F_3,F_4},\qquad \alpha_{F_1,F_2\circ F_3,F_4},\qquad \alpha_{F_1,F_2,F_3\circ F_4}.
\end{equation*}
The consistency of three horizontal composite 2-morphisms $\eta_1\ast_A\eta_2\ast_A\eta_3$ therefore hinges on the pentagon axiom satisfied by the composition associator $\alpha$. 

However, for self-enriched $\cC$ having hom-objects (or more generally, $\cC$ can be enriched in some other monoidal 2-category $\cV$), the pentagon axiom for $\alpha$ involves the {\it pentagonator} $\pi$ of $\cC$ itself \cite{GARNER20161},
    \begin{align}
        \pi_{\ttU_{12},\ttU_{23},\ttU_{34},\ttU_{45}}\ast (F_1\circ \alpha_{F_2,F_3,F_4})\,\ast &\, \alpha_{F_1,F_2\circ F_3,F_4}\ast (\alpha_{F_1,F_2,F_3}\circ F_4) \nonumber\\
&=\alpha_{F_1,F_2,F_3\circ F_4}\ast\alpha_{F_1\circ F_2,F_3,F_4} \,,\label{pentagonaxiom}
    \end{align}
    where we have used a shorthand $\ttU_{ij} = \operatorname{Hom}_\cC(\ttU_i,\ttU_j)=\cC(\ttU_i,\ttU_j)$ for the enriched hom's in $\cC$. What this means is that \textbf{a pentagonator $\pi$ hides in each evaluation of $\eta_1\ast_A\eta_2\ast_A\eta_3$}!

\medskip

The relevance of this fact for us lies in \eqref{thetanethorizontalcomposite}, where two tetrahedral states (say on $\tau_2,\tau_4$) are attached. Here, the goal is to evaluate the horizontal composition of the four relevant 2-morphisms to
\begin{equation*}
    \eta_{[431]}\ast_A( \eta_{[310]}^T\ast_A\eta_{[103]})\ast_A\eta_{[032]}^T
\end{equation*}
in order to form a $\theta$-net in the middle. For this, we must use the pentagon axioms  \eqref{pentagonaxiom} to rewrite composites of the associators in the tetrahedral states \eqref{taustate+}, which leads to the appearance of pentagonators in \eqref{thetanethorizontalcomposite}. More precisely, given the shorthand
$\pi_{\ttU_{[i][j]},\ttU_{[j][k]},\ttU_{[k][l]},\ttU_{[l] [j]}} = \pi_{[ijkl]} $
\eqref{thetanethorizontalcomposite} should take the form
\begin{equation}\pi_{[4310]}\ast \begin{tikzcd}
	&& {\ttU_{[4]}} && {\ttU_{[1]}} \\
	& {\ttU_{[3]}} && {\ttU_{[0]}} && {\ttU_{[2]}} \\
	{\ttU_{[4]}} && {\ttU_{[1]}} && {\ttU_{[3]}} \\
	& {\ttU_{[0]}} && {\ttU_{[2]}}
	\arrow[from=1-3, to=2-4]
	\arrow[from=1-5, to=2-6]
	\arrow[from=2-2, to=1-3]
	\arrow[""{name=0, anchor=center, inner sep=0}, from=2-2, to=2-4]
	\arrow[from=2-2, to=3-3]
	\arrow[from=2-4, to=1-5]
	\arrow[""{name=1, anchor=center, inner sep=0}, from=2-4, to=2-6]
	\arrow[from=2-4, to=3-5]
	\arrow[from=3-1, to=2-2]
	\arrow[""{name=2, anchor=center, inner sep=0}, from=3-1, to=3-3]
	\arrow[from=3-1, to=4-2]
	\arrow[""{name=3, anchor=center, inner sep=0}, from=3-3, to=3-5]
	\arrow[from=3-3, to=4-4]
	\arrow[from=3-5, to=2-6]
	\arrow[from=4-2, to=3-3]
	\arrow[from=4-4, to=3-5]
	\arrow["{\eta_{[320]}}", between={0.1}{0.8}, Rightarrow, from=1-3, to=0]
	\arrow["{\eta_{[012]}}", between={0.1}{0.8}, Rightarrow, from=1-5, to=1]
	\arrow["{\eta_{[431]}}", between={0.1}{0.8}, Rightarrow, from=2-2, to=2]
	\arrow["{\eta^T_{[310]}\ast_A\eta_{[103]}}"{description}, between={0.1}{0.8}, Rightarrow, from=0, to=3]
	\arrow["{\eta_{[032]}^T}"', between={0.2}{0.9}, Rightarrow, from=1, to=3-5]
	\arrow["{\eta_{[401]}^T}"', between={0.2}{0.9}, Rightarrow, from=2, to=4-2]
	\arrow["{\eta_{[123]}^T}", between={0.2}{1}, Rightarrow, from=3, to=4-4]
\end{tikzcd}\label{withpentagonators}\end{equation}
Note $\pi$ only depend on the hom-categories $\ttU_{[i][j]} = \operatorname{Hom}_\cC(\ttU_{[i]},\ttU_{[j]})$ themselves not the specific 1-morphisms $F_{[ij]}$. 

\medskip

We note here that the pentagonators are \textit{strict} $\pi=\id$ in the context of higher-gauge theory. A bit more generally, if the pentagnoators only have identity components, then we can define the  \textbf{$\pi$-scalars}
\begin{equation}
    \pi_{[ijkl]} = \pi_{ijkl} \cdot \id_{F_{[lj]}\circ F_{[kl]}\circ F_{[jk]}\circ F_{[ij]}}\,\label{piscalars}.
\end{equation}
The consistency of attaching these pentagonators $\pi$ to the $\theta$-net recoupling moves \eqref{thetanethorizontalcomposite} rests upon the \textit{associahedron equation} satisfied by $\pi$. This allows us to freely associate the $\theta$-nets; we will use this fact in \S \ref{localmoves}.

\subsubsection{4-simplex amplitudes as decorated 2-torii}\label{explicitdef}
We now return to the scattering amplitude and evaluate the toroidal trace $\Tr(\psi_\mathcal{K})$.

\begin{figure}
    \centering
    \includegraphics[width=0.55\linewidth]{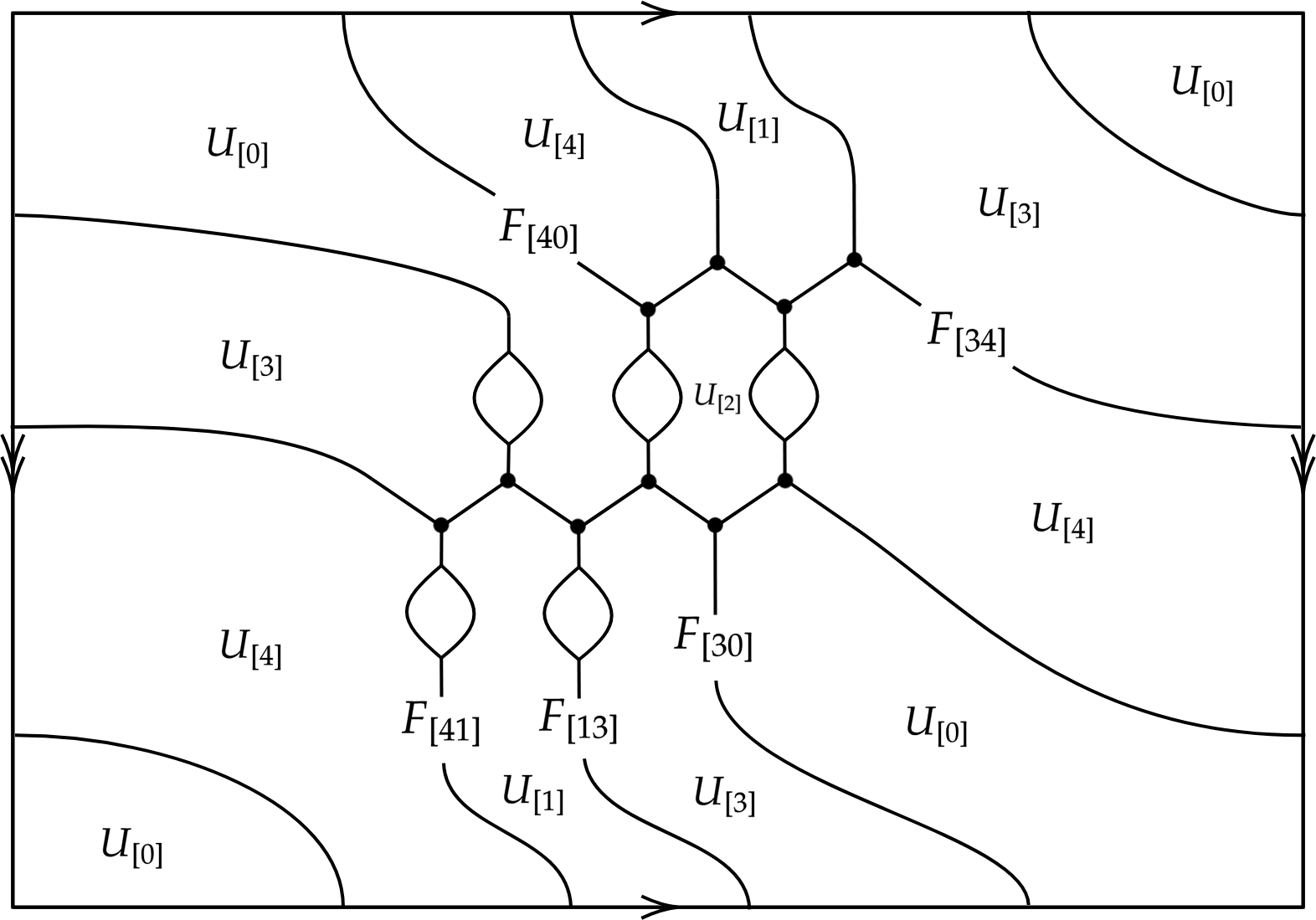}
    \caption{Presentation of the 4-simplex amplitude $\Tr(\psi_\mathcal{K})$ as a decorated  2-torus $\mathbb{T}^2=[0,1]^2/\sim$.}
    \label{fig:amplitudegraph}
\end{figure}

Leveraging the cyclicity of $\Tr$ (or from manipulating the string diagram Fig. \ref{fig:amplitudegraph} on a torus), we express the pasting diagram \eqref{scatteringstate} in the following form (neglecting the {\it \th}-nets for now)
\begin{equation}\left(\begin{tikzcd}
	& {U_2} && {U_4} & \\
	{\pi_{[2403]}} & {U_0} && {U_2} \\
	& {U_3} && {U_0} & {\pi_{[0231]}} \\
	{\pi_{[3014]}} & {U_1} && {U_3} \\
	& {U_4} && {U_1} & {\pi_{[1342]}} \\
	{\pi_{[4120]}} & {U_2} && {U_4}
	\arrow[""{name=0, anchor=center, inner sep=0}, from=1-2, to=1-4]
	\arrow[from=1-2, to=2-2]
	\arrow[from=1-4, to=2-4]
	\arrow["\ast"{description}, draw=none, from=2-1, to=2-2]
	\arrow[""{name=1, anchor=center, inner sep=0}, from=2-2, to=2-4]
	\arrow[from=2-2, to=3-2]
	\arrow[from=2-4, to=3-4]
	\arrow[""{name=2, anchor=center, inner sep=0}, from=3-2, to=3-4]
	\arrow[from=3-2, to=4-2]
	\arrow["\ast"{description}, draw=none, from=3-4, to=3-5]
	\arrow[from=3-4, to=4-4]
	\arrow["\ast"{description}, draw=none, from=4-1, to=4-2]
	\arrow[""{name=3, anchor=center, inner sep=0}, from=4-2, to=4-4]
	\arrow[from=4-2, to=5-2]
	\arrow[from=4-4, to=5-4]
	\arrow[""{name=4, anchor=center, inner sep=0}, from=5-2, to=5-4]
	\arrow[from=5-2, to=6-2]
	\arrow["\ast"{description}, draw=none, from=5-4, to=5-5]
	\arrow[from=5-4, to=6-4]
	\arrow["\ast"{description}, draw=none, from=6-1, to=6-2]
	\arrow[""{name=5, anchor=center, inner sep=0}, from=6-2, to=6-4]
	\arrow["{\eta_{[204]}^T\ast_A \eta_{[042]}}"{description}, between={0.1}{0.9}, Rightarrow, nfold, from=0, to=1]
	\arrow["{\eta_{[032]}^T\ast_A \eta_{[320]}}"{description}, between={0.1}{0.9}, Rightarrow, nfold, from=1, to=2]
	\arrow["{\eta_{[310]}^T\ast_A\eta_{[103]}}"{description}, between={0.1}{0.9}, Rightarrow, nfold, from=2, to=3]
	\arrow["{\eta_{[134]}^T\ast_A \eta_{[431]}}"{description}, between={0.1}{0.9}, Rightarrow, nfold, from=3, to=4]
	\arrow["{\eta_{[421]}^T\ast_A \eta_{[214]}}"{description}, between={0.1}{0.9}, Rightarrow, nfold, from=4, to=5]
\end{tikzcd}\right):G\circ F_{[24]} \Rightarrow F_{[24]}\circ H\,,\label{scatteringstate2} \end{equation}
where
\begin{equation*}
    G= F_{[14]}\circ F_{[31]}\circ F_{[03]}\circ F_{[20]}\circ F_{[42]},\qquad H= F_{[42]}\circ F_{[14]}\circ F_{[31]}\circ F_{[03]}\circ F_{[20]}\,,
\end{equation*}
Here, the pentagonators $\pi$ appear from \eqref{pentagonaxiom} to contract against the consecutive cubical 2-morphisms in \eqref{scatteringstate2}, in accordance to the recoupling theory of \S \ref{pentagonatorscalars}.

This brings $\psi_\mathcal{K}$ to a form on which the shadow trace can be applied (cf. \S \ref{shadotrace}), 
\begin{equation*}
    \operatorname{tr}_\textbf{T}(\psi_\mathcal{K}): \bbra G \kket_{\ttU_{[4]}} \to \bbra H\kket_{\ttU_{[2]}}\,.
\end{equation*}
Note the 1-morphisms $G,H$ here differ by a cyclic permutation of its composites, hence we can use the cyclicty operator $\theta$ to construct the endomorphism\footnote{Alternatively we can precompose with $\theta$ to achieve an endomorphism $\operatorname{tr}_\textbf{T}(\psi_\mathcal{K})\circ\theta\in\operatorname{End}_\textbf{T}(\bbra H \kket_{\ttU_{[2]}})$. But by cyclicity their $\textbf{T}$-traces coincide.}
\begin{equation}
    \operatorname{tr}'_\textbf{T}(\psi_\mathcal{K})=\theta_{F_{[42]},F_{[14]}\circ F_{[31]}\circ F_{[03]}\circ F_{[20]}} \circ \operatorname{tr}_\textbf{T}(\psi_\mathcal{K})\in\operatorname{End}_\textbf{T}(\bbra G \kket_{\ttU_{[4]}})\label{cycledmorphism}
\end{equation}
We shall define "$\Tr(\psi_\mathcal{K})$" as the trace of this endomorphism in $\textbf{T}$;      see also Fig. \ref{fig:amplitudegraph}.

\medskip


\begin{definition}\label{4simpscatterdef}
    The \textbf{4-simplex scattering amplitude} on (positively-oriented) $\sigma^+$ is given by the toroidal trace,
    \begin{equation}
        \langle\sigma^+\rangle:  \cH_{\partial\sigma}^+\to \operatorname{End}_\textbf{T}(e)\cong \bbC,\qquad \psi_{\mathcal{K}_s} \mapsto \langle\sigma^+(\mathcal{K}_s)\rangle =\Tr(\psi_{\mathcal{K}_s})\,. \label{amplitude}
    \end{equation}
\end{definition}
\noindent Now if the pentagonators have only identity components, then we can rewrite \eqref{scatteringstate2}. Put
\begin{align*}
    \prod_{\tau\in\sigma}\pi_\tau& = \pi_{3014}\pi_{2403}\pi_{1342}\pi_{0231}\pi_{4120}\\
    \prod_{v\in \sigma}\Dim(\ttU_v) &= \Dim(\ttU_{[0]})\Dim(\ttU_{[1]})\Dim(\ttU_{[2]})\Dim(\ttU_{[3]})\Dim(\ttU_{[4]})\,,
\end{align*}
where $\mathbb{D}\text{im}(\ttU) = \Tr(\id_{1_\ttU})$ denotes the \textbf{toroidal dimension} of $\ttU\in\cC$.
\begin{definition}\label{20jsymboldef}
    Given the $\pi$-scalars \eqref{piscalars}, we put $$\langle\sigma^+(\mathcal{K}_s)\rangle = \frac{\prod_{v\in \sigma}\Dim(\ttU_v)}{\prod_{\tau\in\sigma}\pi_\tau}\cdot|20j|_\sigma(\beta,\gamma,\delta)\,$$ 
    in terms of the \textbf{normalized 20$j$-symbols} $|20j|_\sigma(\beta,\gamma,\delta)$.
\end{definition}
The planar string diagram of the 20$j$-symbols is displayed in Fig. \ref{fig:amplitudegraph}.



\begin{rmk}\label{3sphericality}
    We shall assume that the 4-simplex amplitude $\langle\sigma^+(\mathcal{K})\rangle$ is independent on the choice of taking the toroidal trace from the left or from the right. Equivalently, this can be thought of as the invariance under isotopy of the decorated 2-torii Fig. \ref{fig:amplitudegraph} embedded in the 3-sphere $S^3$;  see Fig. \ref{fig:3spherical}. This is a certain manifestation of "\textbf{3-sphericality}" of the traced bicategory $(\cC,\bbra-\kket,\textbf{T})$. 
\end{rmk}


\begin{figure}
    \centering
    \includegraphics[width=0.6\linewidth]{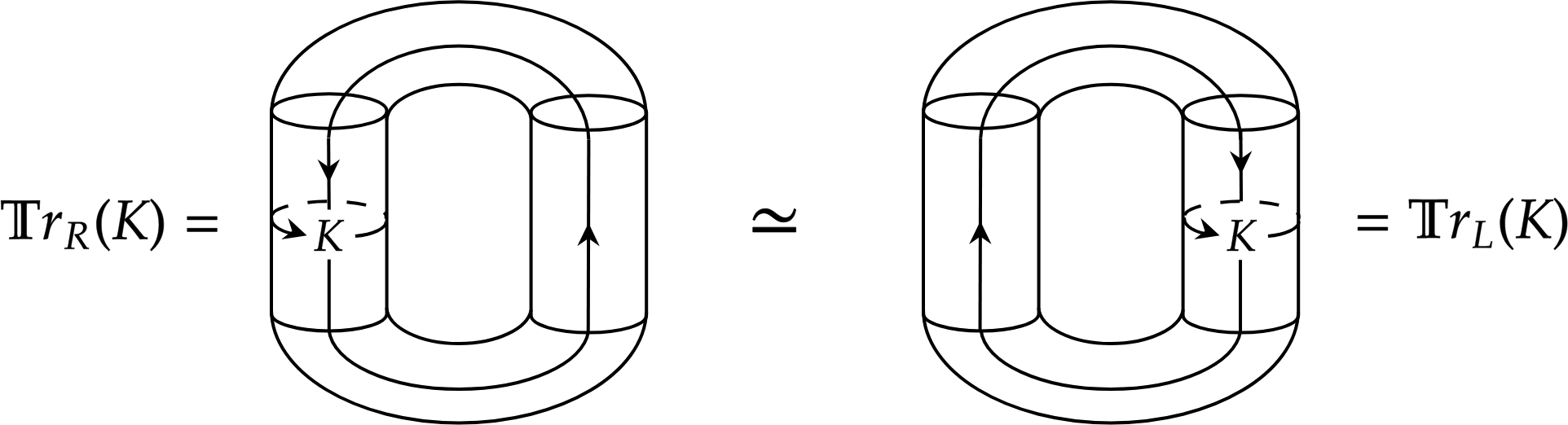}
    \caption{The agreement of the left- and right-toroidal trace of a cubical 2-morphism $K$ exhibits its \textit{3-sphericality}. This can be understood as the invariance of the decorated 2-torus (or any decorated surface) under isotopies of its embedding $\mathbb{T}^2\hookrightarrow S^3$ into the 3-sphere.}
    \label{fig:3spherical}
\end{figure}

\subsection{Evaluation of the Kirby-coloured 20$j$-symbols}\label{eval}
Define the following quantities,
\begin{align*}
    \prod_{\Delta\in \sigma_B}\beta_\Delta &=\beta_{[012]}\beta_{[123]}\beta_{[234]}\beta_{[340]}\beta_{[401]}  \\ 
    \prod_{\Delta\in \sigma_A} \gamma_\Delta & = \gamma_{[240]} \gamma_{[412]}\gamma_{[134]}  \gamma_{[301]}\gamma_{[023]}\\
     \prod_{e\in\sigma_A}\operatorname{dim}(F_e) &= \delta_{20}\delta_{42}\delta_{14}\delta_{31}\delta_{03} 
\end{align*}
where the $\beta,\gamma,\delta$-scalars were given in \eqref{betascalar}, \eqref{gammascalar}, \eqref{deltascalars} respectively.

By a  {\bf Kirby $\cC$-colour}, we mean a $\cC$-decoration for which all 1-simplices are decorated by the direct sum $\bigoplus_F F$ of admissible simple 1-morphisms. This name is inspired by the move Fig. \ref{fig:deltascalar} and \cite{Liu:2023dhj}.
\begin{theorem}\label{20jeval}
    Suppose $\cC$ is $\mathrm{strictly}$ planar-pivotal. Let $\mathcal K=\mathcal K_{\mathrm{Kirb}}$ be a {\bf Kirby $\cC$-colour} on the positively-oriented 4-simplex $\sigma^+$, then we have
    $$ |20j|_{\mathrm{Kirb}}(\beta,\gamma,\delta)= \sum_{\{F_e\}_{e\in\sigma_A}}\frac{\prod_{\Delta\in\sigma_B}\beta_\Delta \prod_{\Delta\in\sigma_A}\gamma_\Delta}{\prod_{e\in\sigma_A}\operatorname{dim}(F_e) }\,.$$ 
\end{theorem}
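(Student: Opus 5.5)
The plan is to evaluate $\Tr(\psi_{\mathcal{K}})$ for the scattering state \eqref{scatteringstate2} by contracting the toroidal pasting diagram one local net at a time, using the three scalar-producing moves of \S\ref{scalaringredients}. Definition~\ref{20jsymboldef} has already factored out the vertex toroidal dimensions and the $\pi$-scalars, so what is left to compute is the normalized $|20j|_\sigma$.

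\textbf{Step 1: the thorn nets.} First I would remove the five {\it\th}-nets sitting on the cyclic gluings $E_{\mathrm c}$, namely the vertical composites $\eta^T_{[ijk]}\ast_B\eta_{[ijk]}$ attached along the faces $[012],[123],[234],[340],[401]$. By \eqref{betascalar} each collapses to $\beta_\Delta\cdot\id_F$. Tightening in Theorem~\ref{shadowproperties} lets us pull each scalar out of the shadow trace, and this gives the factor $\prod_{\Delta\in\sigma_B}\beta_\Delta$.

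\textbf{Step 2: the remaining stack.} What is left is the stack of five cubical cells $\eta^T\ast_A\eta$ in \eqref{scatteringstate2}, whose vertical edges are $F_{[24]},F_{[20]},F_{[03]},F_{[31]},F_{[14]}$. These are exactly the edges $e\in\sigma_A$. For a Kirby colour every such edge carries $\bigoplus_F F$. I would expand each of these by linearity into a sum over simple $F_e$, using local semisimplicity. On the glued pairs I would then apply \eqref{deltadecouple}: each surface labelled $F_e$ together with its adjoint, appearing across the trace, is replaced via $p_{F_e}=\iota'_{F_e^\dagger}\ast e_{F_e}$. Normalizing the projector costs a factor $\delta_{F_e}^{-1}=(\operatorname{dim}F_e)^{-1}$.

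This step needs strict planar-pivotality, i.e.\ $\phi_F=\id$. With it, the units and counits produced by the projectors are exactly the $e_G,\iota_F$ appearing in \eqref{gammascalar}, and no extra $\phi$-twists survive. Conjugation invariance and 1-functoriality in Theorem~\ref{shadowproperties} then split the shadow trace into a composite of five maps on $\bbra 1_{\ttU_v}\kket$. Each such map is a closed $\theta$-net $e_G\ast(\eta\ast_A\eta'^T)\ast\iota_F=\gamma_\Delta\,\id_{1_\ttU}$ for $\Delta\in\{[240],[412],[134],[301],[023]\}$.

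\textbf{Step 3: assembling the answer.} Because each $\ttU_v$ is simple, $\bbra 1_{\ttU_v}\kket\cong\mathbb{C}$, so the composite endomorphism is the scalar $\prod_{\sigma_A}\gamma_\Delta\prod_{\sigma_B}\beta_\Delta/\prod_{\sigma_A}\operatorname{dim}F_e$ times $\id$. Its trace is this scalar times the toroidal dimension of the base vertex. That dimension, together with the cyclicity operators, is absorbed into the prefactor $\prod_v\Dim(\ttU_v)$ of Definition~\ref{20jsymboldef}. Summing over the simple labels $\{F_e\}_{e\in\sigma_A}$ gives the stated formula.

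\textbf{Main obstacle.} The hard part is Step~2: justifying that, after the Kirby merge, the torus decomposes cleanly into five independent closed $\theta$-nets. This requires checking that the cyclicity isomorphisms $\theta$ of the shadow, together with the cubical cyclicity \eqref{cublicycle}, cancel around the $B$-cycle and leave no residual phase. I would try to show that strict pivotality makes $\tilde\theta$ from Example~\ref{2charexample} compatible with the $\iota_F$ insertions. The cancellation would then reduce to the snake equations together with 3-sphericality (Remark~\ref{3sphericality}), which I would use to move each $e_{F_e}$ across the seam.
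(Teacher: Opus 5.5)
Your outline (remove the $\beta$'s, expand the Kirby colour, insert normalized projectors, close off $\theta$-nets) matches the paper's. However, Step~3 does not give the stated formula, and the trouble starts in Step~2.

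Definition~\ref{20jsymboldef} divides $\Tr(\psi_{\mathcal K})$ by the product $\prod_{v\in\sigma}\Dim(\ttU_v)$ of all five toroidal dimensions. So the theorem really says that, once the $\beta,\gamma,\delta$ scalars are extracted, what remains traces to exactly $\prod_v\Dim(\ttU_v)$. Your organization produces a single $\Dim(\ttU_{\mathrm{base}})$. That would make $|20j|$ the claimed sum divided by $\prod_{v\neq\mathrm{base}}\Dim(\ttU_v)$. Saying the extra dimension is ``absorbed together with the cyclicity operators'' does not fix this. The $\theta$'s are invertible comparison maps, and your own last paragraph aims to show they cancel around the $B$-cycle, so they cannot also supply four dimension factors. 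Also, $\bbra 1_{\ttU_v}\kket\cong\mathbb{C}$ is only established for the end shadow of Example~\ref{2charexample}. If it means the unit of $\textbf{T}$, every $\Dim(\ttU_v)$ equals $1$ and the prefactor is vacuous, whereas the theorem is stated for a general $\textbf{T}$-shadow.

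The factorization you invoke in Step~2 is also unjustified. 1-functoriality in Theorem~\ref{shadowproperties} applies to cells composed along the traced 1-morphism. The five cubical cells of \eqref{scatteringstate2} are instead stacked along $G,H$, with $F_{[24]}$ being traced. Moreover, a closed $\theta$-net is a 2-endomorphism of a single $1_{\ttU}$. By property~3 it gives an endomorphism of one $\bbra 1_{\ttU}\kket$, not a map $\bbra 1_{\ttU_v}\kket\to\bbra 1_{\ttU_{v'}}\kket$. Any such transition map would have to come from tracing the connecting $F_e$'s, and you never compute their contribution.

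In the paper, the $\gamma$'s are scalars inside the 2-cell, not factors of a composite in $\textbf{T}$. Once each closed net equals $\gamma_\Delta\,\id_{1_{\ttU}}$ by \eqref{gammascalar}, tightening (properties~1 and~3 of Theorem~\ref{shadowproperties}) pulls the scalar out. What is left is a residual cubical cell made of the five identity 2-cells $\id_{1_{\ttU_v}}$, strung along $F_{[20]},F_{[03]},F_{[31]},F_{[14]}$. The paper takes the toroidal trace of this residual cell to be $\prod_v\Dim(\ttU_v)$, one factor per vertex, which cancels the prefactor exactly.

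The seam you flag as the main obstacle is handled without $\tilde\theta$ or 3-sphericality. The paper decouples the last $\id_{F_{[42]}}$ as $\iota_{F_{[24]}}\ast(\cdots)$. It then uses the cyclicity \eqref{cublicycle} of $\Tr$ to carry $\iota_{F_{[24]}}$ to the top of the stack. Finally, the snake equation with $e_F^T=\iota_{F^\dagger}$ from \eqref{unitaryadjoint} rewrites $F^\dagger\circ\iota_{F^\dagger}$ as the conjugate of $e_F^T\circ F$, which supplies exactly the cap and cup needed in \eqref{gammascalar}.
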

\begin{proof}
       We can immediately extract the factor $\prod_{\Delta\in\sigma_B}\beta_\Delta$ from the {\it \th}-nets \eqref{betascalar}. The main task is to extract the $\theta$-nets. To begin, consider the state \eqref{scatteringstate2} (without the pentagonators) with all 1-morphisms given by  $\bigoplus_F F$. We then insert resolutions of unity through the completeness of the idempotents $p_F$ \eqref{deltadecouple}. Writing $p_F$ in terms of the co/units of the adjunction \eqref{deltascalars} then yields,
{\footnotesize \[ \sum_{F_{[20]}}\frac{1}{\delta_{20}}\times \begin{tikzcd}
	& {\ttU_{[2]}} & {\ttU_{[4]}} \\
	{\ttU_{[2]}} && {\ttU_{[0]}} & {\ttU_{[2]}} \\
	{\ttU_{[0]}} && {\ttU_{[2]}} \\
	{\ttU_{[3]}} && {\ttU_{[0]}} \\
	{\ttU_{[1]}} && {\ttU_{[3]}} \\
	{\ttU_{[4]}} && {\ttU_{[1]}} \\
	{\ttU_{[2]}} && {\ttU_{[4]}}
	\arrow[""{name=0, anchor=center, inner sep=0}, from=1-2, to=1-3]
	\arrow[""{name=1, anchor=center, inner sep=0}, from=1-2, to=2-3]
	\arrow[""{name=2, anchor=center, inner sep=0}, curve={height=30pt}, dotted, from=1-2, to=2-4]
	\arrow[from=1-3, to=2-4]
	\arrow[from=2-1, to=3-1]
	\arrow[""{name=3, anchor=center, inner sep=0}, dotted, from=2-1, to=3-3]
	\arrow[""{name=4, anchor=center, inner sep=0}, from=2-3, to=2-4]
	\arrow[""{name=5, anchor=center, inner sep=0}, from=3-1, to=3-3]
	\arrow[from=3-1, to=4-1]
	\arrow[from=3-3, to=4-3]
	\arrow[""{name=6, anchor=center, inner sep=0}, from=4-1, to=4-3]
	\arrow[from=4-1, to=5-1]
	\arrow[from=4-3, to=5-3]
	\arrow[""{name=7, anchor=center, inner sep=0}, from=5-1, to=5-3]
	\arrow[from=5-1, to=6-1]
	\arrow[from=5-3, to=6-3]
	\arrow[""{name=8, anchor=center, inner sep=0}, from=6-1, to=6-3]
	\arrow[from=6-1, to=7-1]
	\arrow[from=6-3, to=7-3]
	\arrow[""{name=9, anchor=center, inner sep=0}, from=7-1, to=7-3]
	\arrow[between={0.3}{0.7}, Rightarrow, from=0, to=4]
	\arrow[between={0.3}{0.8}, Rightarrow, from=1, to=2]
	\arrow["\ast"{description}, draw=none, from=2, to=3]
	\arrow[between={0.2}{0.8}, Rightarrow, from=3, to=3-1]
	\arrow["{\eta_{[032]}^T\ast_A \eta_{[320]}}"{description}, between={0.1}{0.9}, Rightarrow, from=5, to=6]
	\arrow["{\eta_{[310]}^T\ast_A\eta_{[103]}}"{description}, between={0.1}{0.9}, Rightarrow, from=6, to=7]
	\arrow["{\eta_{[134]}^T\ast_A \eta_{[431]}}"{description}, between={0.1}{0.9}, Rightarrow, from=7, to=8]
	\arrow["{\eta_{[421]}^T\ast_A \eta_{[214]}}"{description}, between={0.1}{0.9}, Rightarrow, from=8, to=9]
\end{tikzcd}\]}
{\footnotesize\[ =\sum_{F_{[20]},F_{[03]},F_{[31]},F_{[14]}}\frac{1}{\delta_{20}\delta_{03}\delta_{31}\delta_{14}}\times\begin{tikzcd}
	&&&& {\ttU_{[2]}} & {\ttU_{[4]}} \\
	&&& {\ttU_{[2]}} && {\ttU_{[0]}} & {\ttU_{[2]}} \\
	&&& {\ttU_{[0]}} & {\ttU_{[2]}} \\
	&& {\ttU_{[0]}} && {\ttU_{[3]}} & {\ttU_{[0]}} \\
	&& {\ttU_{[3]}} & {\ttU_{[0]}} \\
	& {\ttU_{[3]}} && {\ttU_{[1]}} & {\ttU_{[3]}} \\
	& {\ttU_{[1]}} & {\ttU_{[3]}} \\
	{\ttU_{[1]}} && {\ttU_{[4]}} & {\ttU_{[1]}} \\
	{\ttU_{[4]}} & {\ttU_{[1]}} \\
	& {\ttU_{[2]}} & {\ttU_{[4]}}
	\arrow[""{name=0, anchor=center, inner sep=0}, from=1-5, to=1-6]
	\arrow[""{name=1, anchor=center, inner sep=0}, from=1-5, to=2-6]
	\arrow[""{name=2, anchor=center, inner sep=0}, curve={height=30pt}, dotted, from=1-5, to=2-7]
	\arrow[from=1-6, to=2-7]
	\arrow[from=2-4, to=3-4]
	\arrow[""{name=3, anchor=center, inner sep=0}, dotted, from=2-4, to=3-5]
	\arrow[""{name=4, anchor=center, inner sep=0}, from=2-6, to=2-7]
	\arrow[""{name=5, anchor=center, inner sep=0}, from=3-4, to=3-5]
	\arrow[""{name=6, anchor=center, inner sep=0}, from=3-4, to=4-5]
	\arrow[""{name=7, anchor=center, inner sep=0}, curve={height=30pt}, dotted, from=3-4, to=4-6]
	\arrow[from=3-5, to=4-6]
	\arrow[from=4-3, to=5-3]
	\arrow[""{name=8, anchor=center, inner sep=0}, dotted, from=4-3, to=5-4]
	\arrow[""{name=9, anchor=center, inner sep=0}, from=4-5, to=4-6]
	\arrow[""{name=10, anchor=center, inner sep=0}, from=5-3, to=5-4]
	\arrow[""{name=11, anchor=center, inner sep=0}, from=5-3, to=6-4]
	\arrow[""{name=12, anchor=center, inner sep=0}, curve={height=30pt}, dotted, from=5-3, to=6-5]
	\arrow[from=5-4, to=6-5]
	\arrow[from=6-2, to=7-2]
	\arrow[""{name=13, anchor=center, inner sep=0}, dotted, from=6-2, to=7-3]
	\arrow[""{name=14, anchor=center, inner sep=0}, from=6-4, to=6-5]
	\arrow[""{name=15, anchor=center, inner sep=0}, from=7-2, to=7-3]
	\arrow[""{name=16, anchor=center, inner sep=0}, from=7-2, to=8-3]
	\arrow[""{name=17, anchor=center, inner sep=0}, curve={height=30pt}, dotted, from=7-2, to=8-4]
	\arrow[from=7-3, to=8-4]
	\arrow[from=8-1, to=9-1]
	\arrow[""{name=18, anchor=center, inner sep=0}, dotted, from=8-1, to=9-2]
	\arrow[""{name=19, anchor=center, inner sep=0}, from=8-3, to=8-4]
	\arrow[""{name=20, anchor=center, inner sep=0}, from=9-1, to=9-2]
	\arrow[from=9-1, to=10-2]
	\arrow[from=9-2, to=10-3]
	\arrow[""{name=21, anchor=center, inner sep=0}, from=10-2, to=10-3]
	\arrow[between={0.3}{0.7}, Rightarrow, from=0, to=4]
	\arrow[between={0.3}{0.8}, Rightarrow, from=1, to=2]
	\arrow["\ast"{description}, draw=none, from=2, to=3]
	\arrow[between={0.2}{0.8}, Rightarrow, from=3, to=3-4]
	\arrow[between={0.3}{0.7}, Rightarrow, from=5, to=9]
	\arrow[between={0.2}{0.8}, Rightarrow, from=6, to=7]
	\arrow["\ast"{description}, draw=none, from=8, to=7]
	\arrow[between={0.2}{1}, Rightarrow, from=8, to=5-3]
	\arrow[between={0.3}{0.7}, Rightarrow, from=10, to=14]
	\arrow[between={0.2}{0.8}, Rightarrow, from=11, to=12]
	\arrow["\ast"{description}, draw=none, from=12, to=13]
	\arrow[between={0.2}{1}, Rightarrow, from=13, to=7-2]
	\arrow[between={0.3}{0.7}, Rightarrow, from=15, to=19]
	\arrow[between={0.2}{0.8}, Rightarrow, from=16, to=17]
	\arrow["\ast"{description}, draw=none, from=17, to=18]
	\arrow[between={0.2}{1}, Rightarrow, from=18, to=9-1]
	\arrow[between={0.2}{0.8}, Rightarrow, from=20, to=21]
\end{tikzcd}\]}
Decoupling the final $\id_{F_{[42]}}$, we have 
{\footnotesize \[=\sum_{\{F_e\}_{e\in\sigma_A}}\frac{1}{\prod_{e\in\sigma_A}\delta_e}\times \iota_{F_{[24]}}\ast \begin{tikzcd}
	&&&& {\ttU_{[2]}} & {\ttU_{[4]}} \\
	&&& {\ttU_{[2]}} && {\ttU_{[0]}} & {\ttU_{[2]}} \\
	&&& {\ttU_{[0]}} & {\ttU_{[2]}} \\
	&& {\ttU_{[0]}} && {\ttU_{[3]}} & {\ttU_{[0]}} \\
	&& {\ttU_{[3]}} & {\ttU_{[0]}} \\
	& {\ttU_{[3]}} && {\ttU_{[1]}} & {\ttU_{[3]}} \\
	& {\ttU_{[1]}} & {\ttU_{[3]}} \\
	{\ttU_{[1]}} && {\ttU_{[4]}} & {\ttU_{[1]}} \\
	{\ttU_{[4]}} & {\ttU_{[1]}} \\
	& {\ttU_{[2]}} & {\ttU_{[4]}}
	\arrow[""{name=0, anchor=center, inner sep=0}, from=1-5, to=1-6]
	\arrow[""{name=1, anchor=center, inner sep=0}, from=1-5, to=2-6]
	\arrow[""{name=2, anchor=center, inner sep=0}, curve={height=30pt}, dotted, from=1-5, to=2-7]
	\arrow[from=1-6, to=2-7]
	\arrow[from=2-4, to=3-4]
	\arrow[""{name=3, anchor=center, inner sep=0}, dotted, from=2-4, to=3-5]
	\arrow[""{name=4, anchor=center, inner sep=0}, from=2-6, to=2-7]
	\arrow[""{name=5, anchor=center, inner sep=0}, from=3-4, to=3-5]
	\arrow[""{name=6, anchor=center, inner sep=0}, from=3-4, to=4-5]
	\arrow[""{name=7, anchor=center, inner sep=0}, curve={height=30pt}, dotted, from=3-4, to=4-6]
	\arrow[from=3-5, to=4-6]
	\arrow[from=4-3, to=5-3]
	\arrow[""{name=8, anchor=center, inner sep=0}, dotted, from=4-3, to=5-4]
	\arrow[""{name=9, anchor=center, inner sep=0}, from=4-5, to=4-6]
	\arrow[""{name=10, anchor=center, inner sep=0}, from=5-3, to=5-4]
	\arrow[""{name=11, anchor=center, inner sep=0}, from=5-3, to=6-4]
	\arrow[""{name=12, anchor=center, inner sep=0}, curve={height=30pt}, dotted, from=5-3, to=6-5]
	\arrow[from=5-4, to=6-5]
	\arrow[from=6-2, to=7-2]
	\arrow[""{name=13, anchor=center, inner sep=0}, dotted, from=6-2, to=7-3]
	\arrow[""{name=14, anchor=center, inner sep=0}, from=6-4, to=6-5]
	\arrow[""{name=15, anchor=center, inner sep=0}, from=7-2, to=7-3]
	\arrow[""{name=16, anchor=center, inner sep=0}, from=7-2, to=8-3]
	\arrow[""{name=17, anchor=center, inner sep=0}, curve={height=30pt}, dotted, from=7-2, to=8-4]
	\arrow[from=7-3, to=8-4]
	\arrow[from=8-1, to=9-1]
	\arrow[""{name=18, anchor=center, inner sep=0}, dotted, from=8-1, to=9-2]
	\arrow[""{name=19, anchor=center, inner sep=0}, from=8-3, to=8-4]
	\arrow[""{name=20, anchor=center, inner sep=0}, from=9-1, to=9-2]
	\arrow[""{name=21, anchor=center, inner sep=0}, from=9-1, to=10-2]
	\arrow[""{name=22, anchor=center, inner sep=0}, curve={height=30pt}, dotted, from=9-1, to=10-3]
	\arrow[from=9-2, to=10-3]
	\arrow[""{name=23, anchor=center, inner sep=0}, from=10-2, to=10-3]
	\arrow[between={0.3}{0.7}, Rightarrow, from=0, to=4]
	\arrow[between={0.3}{0.8}, Rightarrow, from=1, to=2]
	\arrow["\ast"{description}, draw=none, from=2, to=3]
	\arrow[between={0.2}{0.8}, Rightarrow, from=3, to=3-4]
	\arrow[between={0.3}{0.7}, Rightarrow, from=5, to=9]
	\arrow[between={0.2}{0.8}, Rightarrow, from=6, to=7]
	\arrow["\ast"{description}, draw=none, from=8, to=7]
	\arrow[between={0.2}{1}, Rightarrow, from=8, to=5-3]
	\arrow[between={0.3}{0.7}, Rightarrow, from=10, to=14]
	\arrow[between={0.2}{0.8}, Rightarrow, from=11, to=12]
	\arrow["\ast"{description}, draw=none, from=12, to=13]
	\arrow[between={0.2}{1}, Rightarrow, from=13, to=7-2]
	\arrow[between={0.3}{0.7}, Rightarrow, from=15, to=19]
	\arrow[between={0.2}{0.8}, Rightarrow, from=16, to=17]
	\arrow["\ast"{description}, draw=none, from=17, to=18]
	\arrow[between={0.2}{1}, Rightarrow, from=18, to=9-1]
	\arrow[between={0.2}{0.8}, Rightarrow, from=20, to=23]
	\arrow[between={0.2}{0.8}, Rightarrow, from=21, to=22]
\end{tikzcd}\]}
We now apply the toroidal trace and use its cyclicity to bring the trailing $\iota_{F_{[24]}}$ to the top right,
{\footnotesize \[\sum_{\{F_e\}_{e\in\sigma_A}}\frac{1}{\prod_{e\in\sigma_A}\delta_e} \times\Tr\left(\begin{tikzcd}
	&&&& {\ttU_{[4]}} \\
	&&&& {\ttU_{[2]}} & {\ttU_{[4]}} \\
	&&& {\ttU_{[2]}} && {\ttU_{[0]}} & {\ttU_{[2]}} \\
	&&& {\ttU_{[0]}} & {\ttU_{[2]}} \\
	&& {\ttU_{[0]}} && {\ttU_{[3]}} & {\ttU_{[0]}} \\
	&& {\ttU_{[3]}} & {\ttU_{[0]}} \\
	& {\ttU_{[3]}} && {\ttU_{[1]}} & {\ttU_{[3]}} \\
	& {\ttU_{[1]}} & {\ttU_{[3]}} \\
	{\ttU_{[1]}} && {\ttU_{[4]}} & {\ttU_{[1]}} \\
	{\ttU_{[4]}} & {\ttU_{[1]}} \\
	& {\ttU_{[2]}} & {\ttU_{[4]}}
	\arrow[from=1-5, to=2-5]
	\arrow[""{name=0, anchor=center, inner sep=0}, dotted, from=1-5, to=2-6]
	\arrow[""{name=1, anchor=center, inner sep=0}, from=2-5, to=2-6]
	\arrow[""{name=2, anchor=center, inner sep=0}, from=2-5, to=3-6]
	\arrow[""{name=3, anchor=center, inner sep=0}, curve={height=30pt}, dotted, from=2-5, to=3-7]
	\arrow[from=2-6, to=3-7]
	\arrow[from=3-4, to=4-4]
	\arrow[""{name=4, anchor=center, inner sep=0}, dotted, from=3-4, to=4-5]
	\arrow[""{name=5, anchor=center, inner sep=0}, from=3-6, to=3-7]
	\arrow[""{name=6, anchor=center, inner sep=0}, from=4-4, to=4-5]
	\arrow[""{name=7, anchor=center, inner sep=0}, from=4-4, to=5-5]
	\arrow[""{name=8, anchor=center, inner sep=0}, curve={height=30pt}, dotted, from=4-4, to=5-6]
	\arrow[from=4-5, to=5-6]
	\arrow[from=5-3, to=6-3]
	\arrow[""{name=9, anchor=center, inner sep=0}, dotted, from=5-3, to=6-4]
	\arrow[""{name=10, anchor=center, inner sep=0}, from=5-5, to=5-6]
	\arrow[""{name=11, anchor=center, inner sep=0}, from=6-3, to=6-4]
	\arrow[""{name=12, anchor=center, inner sep=0}, from=6-3, to=7-4]
	\arrow[""{name=13, anchor=center, inner sep=0}, curve={height=30pt}, dotted, from=6-3, to=7-5]
	\arrow[from=6-4, to=7-5]
	\arrow[from=7-2, to=8-2]
	\arrow[""{name=14, anchor=center, inner sep=0}, dotted, from=7-2, to=8-3]
	\arrow[""{name=15, anchor=center, inner sep=0}, from=7-4, to=7-5]
	\arrow[""{name=16, anchor=center, inner sep=0}, from=8-2, to=8-3]
	\arrow[""{name=17, anchor=center, inner sep=0}, from=8-2, to=9-3]
	\arrow[""{name=18, anchor=center, inner sep=0}, curve={height=30pt}, dotted, from=8-2, to=9-4]
	\arrow[from=8-3, to=9-4]
	\arrow[from=9-1, to=10-1]
	\arrow[""{name=19, anchor=center, inner sep=0}, dotted, from=9-1, to=10-2]
	\arrow[""{name=20, anchor=center, inner sep=0}, from=9-3, to=9-4]
	\arrow[""{name=21, anchor=center, inner sep=0}, from=10-1, to=10-2]
	\arrow[""{name=22, anchor=center, inner sep=0}, from=10-1, to=11-2]
	\arrow[""{name=23, anchor=center, inner sep=0}, curve={height=30pt}, dotted, from=10-1, to=11-3]
	\arrow[from=10-2, to=11-3]
	\arrow[""{name=24, anchor=center, inner sep=0}, from=11-2, to=11-3]
	\arrow[between={0.2}{1}, Rightarrow, from=0, to=2-5]
	\arrow[between={0.3}{0.7}, Rightarrow, from=1, to=5]
	\arrow[between={0.3}{0.8}, Rightarrow, from=2, to=3]
	\arrow["\ast"{description}, draw=none, from=3, to=4]
	\arrow[between={0.2}{0.8}, Rightarrow, from=4, to=4-4]
	\arrow[between={0.3}{0.7}, Rightarrow, from=6, to=10]
	\arrow[between={0.2}{0.8}, Rightarrow, from=7, to=8]
	\arrow["\ast"{description}, draw=none, from=9, to=8]
	\arrow[between={0.2}{1}, Rightarrow, from=9, to=6-3]
	\arrow[between={0.3}{0.7}, Rightarrow, from=11, to=15]
	\arrow[between={0.2}{0.8}, Rightarrow, from=12, to=13]
	\arrow["\ast"{description}, draw=none, from=13, to=14]
	\arrow[between={0.2}{1}, Rightarrow, from=14, to=8-2]
	\arrow[between={0.3}{0.7}, Rightarrow, from=16, to=20]
	\arrow[between={0.2}{0.8}, Rightarrow, from=17, to=18]
	\arrow["\ast"{description}, draw=none, from=18, to=19]
	\arrow[between={0.2}{1}, Rightarrow, from=19, to=10-1]
	\arrow[between={0.2}{0.8}, Rightarrow, from=21, to=24]
	\arrow[between={0.2}{0.8}, Rightarrow, from=22, to=23]
\end{tikzcd}\right)\]}
By dagger adjunction $e_F^T  = \iota_{F^\dagger} $ \eqref{unitaryadjoint}, the planar snake equation allows us to replace $F^\dagger\circ \iota_{F^\dagger}$ with the dagger conjugate of $e_{F}^T\circ F$. Inserting this into the above diagram then  makes the $\gamma$-scalars manifest,
{\footnotesize\[\hspace{-0.3cm}\sum_{\{F_e\}_{e\in\sigma_A}}\frac{1}{\prod_{e\in\sigma_A}\delta_e}\times \Tr \left(\begin{tikzcd}
	&&&&&&&& {\ttU_2} & {\ttU_4} \\
	&&&&&& {U_0} & {\ttU_2} && {U_0} & {\ttU_2} \\
	&&&& {\ttU_3} & {U_0} && {\ttU_3} & {U_0} \\
	&& {\ttU_1} & {\ttU_3} && {\ttU_1} & {\ttU_3} \\
	{\ttU_4} & {\ttU_1} && {\ttU_4} & {\ttU_1} \\
	& {\ttU_2} & {\ttU_4}
	\arrow[""{name=0, anchor=center, inner sep=0}, from=1-9, to=1-10]
	\arrow[draw={rgb,255:red,163;green,163;blue,163}, curve={height=6pt}, from=1-9, to=2-7]
	\arrow[""{name=1, anchor=center, inner sep=0}, from=1-9, to=2-10]
	\arrow[""{name=2, anchor=center, inner sep=0}, curve={height=30pt}, dotted, from=1-9, to=2-11]
	\arrow[""{name=3, anchor=center, inner sep=0}, curve={height=-30pt}, dotted, from=1-9, to=2-11]
	\arrow[""{name=4, anchor=center, inner sep=0}, from=1-10, to=2-11]
	\arrow[""{name=5, anchor=center, inner sep=0}, from=2-7, to=2-8]
	\arrow[draw={rgb,255:red,163;green,163;blue,163}, curve={height=6pt}, from=2-7, to=3-5]
	\arrow[""{name=6, anchor=center, inner sep=0}, from=2-7, to=3-8]
	\arrow[""{name=7, anchor=center, inner sep=0}, curve={height=30pt}, dotted, from=2-7, to=3-9]
	\arrow[""{name=8, anchor=center, inner sep=0}, curve={height=-30pt}, dotted, from=2-7, to=3-9]
	\arrow[""{name=9, anchor=center, inner sep=0}, from=2-8, to=3-9]
	\arrow[""{name=10, anchor=center, inner sep=0}, from=2-10, to=2-11]
	\arrow[""{name=11, anchor=center, inner sep=0}, from=3-5, to=3-6]
	\arrow[draw={rgb,255:red,163;green,163;blue,163}, curve={height=6pt}, from=3-5, to=4-3]
	\arrow[""{name=12, anchor=center, inner sep=0}, from=3-5, to=4-6]
	\arrow[""{name=13, anchor=center, inner sep=0}, curve={height=30pt}, dotted, from=3-5, to=4-7]
	\arrow[""{name=14, anchor=center, inner sep=0}, curve={height=-30pt}, dotted, from=3-5, to=4-7]
	\arrow[""{name=15, anchor=center, inner sep=0}, from=3-6, to=4-7]
	\arrow[""{name=16, anchor=center, inner sep=0}, from=3-8, to=3-9]
	\arrow[""{name=17, anchor=center, inner sep=0}, from=4-3, to=4-4]
	\arrow[draw={rgb,255:red,163;green,163;blue,163}, curve={height=6pt}, from=4-3, to=5-1]
	\arrow[""{name=18, anchor=center, inner sep=0}, from=4-3, to=5-4]
	\arrow[""{name=19, anchor=center, inner sep=0}, curve={height=30pt}, dotted, from=4-3, to=5-5]
	\arrow[""{name=20, anchor=center, inner sep=0}, curve={height=-30pt}, dotted, from=4-3, to=5-5]
	\arrow[""{name=21, anchor=center, inner sep=0}, from=4-4, to=5-5]
	\arrow[""{name=22, anchor=center, inner sep=0}, from=4-6, to=4-7]
	\arrow[""{name=23, anchor=center, inner sep=0}, from=5-1, to=5-2]
	\arrow[""{name=24, anchor=center, inner sep=0}, from=5-1, to=6-2]
	\arrow[""{name=25, anchor=center, inner sep=0}, curve={height=30pt}, dotted, from=5-1, to=6-3]
	\arrow[""{name=26, anchor=center, inner sep=0}, curve={height=-30pt}, dotted, from=5-1, to=6-3]
	\arrow[""{name=27, anchor=center, inner sep=0}, from=5-2, to=6-3]
	\arrow[""{name=28, anchor=center, inner sep=0}, from=5-4, to=5-5]
	\arrow[""{name=29, anchor=center, inner sep=0}, from=6-2, to=6-3]
	\arrow[between={0.3}{0.7}, Rightarrow, from=0, to=10]
	\arrow[between={0.3}{0.8}, Rightarrow, from=1, to=2]
	\arrow[between={0.2}{0.8}, Rightarrow, from=3, to=4]
	\arrow["\ast"{description}, draw=none, from=8, to=2]
	\arrow[between={0.3}{0.7}, Rightarrow, from=5, to=16]
	\arrow[between={0.2}{0.8}, Rightarrow, from=6, to=7]
	\arrow[between={0.2}{0.8}, Rightarrow, from=8, to=9]
	\arrow["\ast"{description}, draw=none, from=14, to=7]
	\arrow[between={0.3}{0.7}, Rightarrow, from=11, to=22]
	\arrow[between={0.2}{0.8}, Rightarrow, from=12, to=13]
	\arrow[between={0.2}{0.8}, Rightarrow, from=14, to=15]
	\arrow["\ast"{description}, draw=none, from=20, to=13]
	\arrow[between={0.3}{0.7}, Rightarrow, from=17, to=28]
	\arrow[between={0.2}{0.8}, Rightarrow, from=18, to=19]
	\arrow[between={0.2}{0.8}, Rightarrow, from=20, to=21]
	\arrow["\ast"{description}, draw=none, from=26, to=19]
	\arrow[between={0.2}{0.8}, Rightarrow, from=23, to=29]
	\arrow[between={0.2}{0.8}, Rightarrow, from=24, to=25]
	\arrow[between={0.2}{0.8}, Rightarrow, from=26, to=27]
\end{tikzcd}\right)\]}
Evaluating with \eqref{gammascalar} and using the tightening of the shadow trace (1. and 3. of \textbf{Theorem \ref{shadowproperties}}) then gives
\begin{align*}
&=\sum_{\{F_e\}_{e\in\sigma_A}}\prod_{e\in \sigma_A} \frac{1}{\delta_{e}}\prod_{\Delta\in\sigma_A}\gamma_\Delta \times \Tr\left(
\begin{tikzcd}[ampersand replacement=\&]
	{} \& {} \& {} \& {} \& {} \\
	{\ttU_2} \& {U_0} \& {\ttU_3} \& {\ttU_1} \& {\ttU_4}
	\arrow["{\id_{1_{\ttU_{[2]}}}}"{description, pos=0.3}, draw=none, from=1-1, to=2-1]
	\arrow["{\id_{1_{\ttU_{[0]}}}}"{description, pos=0.3}, draw=none, from=1-2, to=2-2]
	\arrow["{\id_{1_{\ttU_{[3]}}}}"{description, pos=0.3}, draw=none, from=1-3, to=2-3]
	\arrow["{\id_{1_{\ttU_{[1]}}}}"{description, pos=0.3}, draw=none, from=1-4, to=2-4]
	\arrow["{\id_{1_{\ttU_{[4]}}}}"{description, pos=0.2}, draw=none, from=1-5, to=2-5]
	\arrow[from=2-1, to=2-1, loop, in=50, out=130, distance=15mm]
	\arrow["{F_{[20]}}"', draw={rgb,255:red,163;green,163;blue,163}, curve={height=6pt}, from=2-1, to=2-2]
	\arrow[from=2-2, to=2-2, loop, in=50, out=130, distance=15mm]
	\arrow["{F_{[03]}}"', draw={rgb,255:red,163;green,163;blue,163}, curve={height=6pt}, from=2-2, to=2-3]
	\arrow[from=2-3, to=2-3, loop, in=50, out=130, distance=15mm]
	\arrow["{F_{[31]}}"', draw={rgb,255:red,163;green,163;blue,163}, curve={height=6pt}, from=2-3, to=2-4]
	\arrow[from=2-4, to=2-4, loop, in=50, out=130, distance=15mm]
	\arrow["{F_{[14]}}"', draw={rgb,255:red,163;green,163;blue,163}, curve={height=6pt}, from=2-4, to=2-5]
	\arrow[from=2-5, to=2-5, loop, in=50, out=130, distance=15mm]
\end{tikzcd}\right)\\
  &= \sum_{\{F_e\}_{e\in\sigma_A}}\prod_{e\in \sigma_A} \frac{1}{\delta_{e}}\prod_{\Delta\in\sigma_A}\gamma_\Delta \times\prod_{v\in \sigma}\Dim(\ttU_v)\,.
\end{align*}
Bringing out the toroidal dimensions, we finally have
\begin{align*}
     |20j|_\mathrm{Kirb}(\beta,\gamma,\delta)&= \sum_{\{F_e\}_{e\in\sigma_A}} \frac{\prod_{\Delta\in\sigma_B}\beta_\Delta \prod_{\Delta\in\sigma_A}\gamma_\Delta }{\prod_{e\in \sigma_A} \delta_{e}} \,,
\end{align*}
as desired.
\end{proof}
\noindent When the pentagonators of $\cC$ has only identity components, then this result allows us to explicitly compute the amplitude on the 4-sphere (see \S \ref{pachner}). It also explains the name "20$j$-symbols" --- its dependence on the vertices decorations drops, and it only depends on the 10 faces and 10 edges of the 4-simplex.

\subsection{Reflection-positivity}\label{reflectionpositive}
Let us examine the dependence of the scattering states on the orientation. 
\begin{proposition}\label{4dorientation}
    The scattering amplitude associated to a $\cC$-decoration $\bar{\mathcal{K}}$ on a negatively-oriented 4-simplex $\sigma^-$ is given by 
    \begin{equation*}
        \langle\sigma^-(\bar{\mathcal{K}})\rangle = \Tr(\psi_{\mathcal{K}}^{\dagger T}) 
    \end{equation*}
    for some $\psi_{\mathcal{K}}\in \cH_\tau^+$.
\end{proposition}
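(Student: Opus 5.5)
The plan is to reduce the 4-dimensional statement to the 3-dimensional orientation reversal of Proposition \ref{3dreversal}, applied tetrahedron by tetrahedron, and then to show that the gluing prescriptions of Tab. \ref{tab:types} and the toroidal trace are compatible with the conjugation $-^{\dagger T}$.

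First, note that reversing the orientation of $\sigma$ flips every sign $\epsilon_i=(-1)^i$ on the boundary tetrahedra $\tau_i$. The boundary state space for $\sigma^-$ is therefore built from the factors $\cH^{-\epsilon_i}_{\bar{\mathcal K}(\tau_i)}$. By Proposition \ref{3dreversal}, each such factor is identified with $(\cH^{\epsilon_i}_{\mathcal K(\tau_i)})^{\dagger T}$, up to the $A_4$-action $f_x$ of \eqref{orientationrep}. I would fix the relabelling of vertices once for all of $\sigma$, namely a single permutation of $\{0,\dots,4\}$, so that the induced even permutations $x_i\in A_4$ on the individual $\tau_i$ are mutually consistent on the shared faces $\Delta_{ij}$. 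This gives a candidate $\psi_{\mathcal K}\in\cH^+_{\partial\sigma}$ whose conjugate is the negatively oriented scattering state.

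Second, I check that the contractions defining $\psi$ commute with $-^{\dagger T}$. Since $-^\dagger$ is 1-op and $-^T$ is 2-op, their composite reverses both horizontal and vertical composition. A \th-net $\eta\ast_B\eta^T$ is sent to $(\eta^T)^{\dagger T}\ast_B\eta^{\dagger T}$, which by item 3 of Definition \ref{2decorate} ($\eta^\dagger_{[ijk]}=\eta^T_{[kji]}$) is again a \th-net, now for reversed-order simplices. Similarly, a $\theta$-net $\eta'^T\ast_A\eta$ goes to a $\theta$-net. The associators and pentagonators are sent to their inverses or mates, since $-^{\dagger T}$ is a 2-functor to $\cC^{\text{1-op,2-op}}$. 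Hence the pasting diagram \eqref{scatteringstate} for $\bar{\mathcal K}$ on $\sigma^-$ equals $\psi_{\mathcal K}^{\dagger T}$, modulo $\ker\Tr$, for the decoration $\mathcal K$ obtained from $\bar{\mathcal K}$ by the above relabelling.

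Third, I verify that the amplitude is computed by the same toroidal trace. Geometrically, $-^{\dagger T}$ reverses the orientation of the decorated torus of Fig. \ref{fig:amplitudegraph} (Remark \ref{2dagger}). Its trace is then well defined by the 3-sphericality assumption (Remark \ref{3sphericality}), and the cyclicity \eqref{cublicycle} lets me re-cut the torus into the standard form \eqref{scatteringstate2}. Definition \ref{4simpscatterdef} then gives $\langle\sigma^-(\bar{\mathcal K})\rangle=\Tr(\psi^{\dagger T}_{\mathcal K})$.

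I expect the main obstacle to be the consistency in the first step. The even permutations supplied by Proposition \ref{3dreversal} are chosen per tetrahedron, and they must agree on all ten shared faces so that the gluings of Tab. \ref{tab:types} are preserved; the parity bookkeeping of $\epsilon_i$ versus $A_4$ is delicate. I would first try the global reflection $i\mapsto 4-i$ of the vertex labels. This maps $E_\text{c}$ to $E_\text{c}$ and $E_\text{nc}$ to $E_\text{nc}$, so \th-nets go to \th-nets and $\theta$-nets to $\theta$-nets, which is exactly what the second step needs.
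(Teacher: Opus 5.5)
Your outline starts where the paper does: Proposition \ref{3dreversal}, then the toroidal trace. But the step that carries the proof, Step 1, is left open, and as you formulate it, it cannot be closed. A single permutation $\pi$ of $\{0,\dots,4\}$ induces a permutation of the vertices of $\tau_i$ only when $\pi(i)=i$. More generally, permutations $x_i$ of the vertex sets of the individual $\tau_i$ that agree on all shared faces glue to a bijection of $\{0,\dots,4\}$ preserving every $\{0,\dots,4\}\setminus\{i\}$, which is the identity. So face-consistent relabellings acting on each $\tau_i$ separately are trivial, whereas Proposition \ref{3dreversal} needs a nontrivial even $x$. Your fallback $i\mapsto 4-i$ is not of that form: it fixes only $\tau_2$ and sends $\tau_i\mapsto\tau_{4-i}$, so it also permutes the tensor factors of the boundary space. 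You never check that the induced maps $\tau_i\to\tau_{4-i}$ are the specific even permutations Proposition \ref{3dreversal} extracts from the shapes of \eqref{taustate+} and \eqref{taustate-}. Preserving $E_\text{c}$ and $E_\text{nc}$ only shows that the gluing types of Tab. \ref{tab:types} are respected.

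The idea you are missing, and the whole content of the paper's proof, is that no face-by-face matching is needed. Whatever relabelling Proposition \ref{3dreversal} requires, the identification $(\cH^+_{\partial\sigma})^{\dagger T}\cong\cH^-_{\partial\sigma}$ acts on the contracted boundary state as a conjugation $\psi_{\mathcal K}\mapsto R_x\psi_{\mathcal K}R_x^T$ with $R_x$ orthogonal. The cyclicity of the toroidal trace (Theorem \ref{shadowproperties}, \eqref{cublicycle}) makes $\Tr$ blind to such a conjugation. That one observation replaces your Step 1 and most of Step 3. In Step 3, 3-sphericality is not what makes $\Tr(\psi^{\dagger T}_{\mathcal K})$ defined, since $\psi^{\dagger T}_{\mathcal K}$ is again a cubical 2-cell.

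Step 2 also contains an error. By Tab. \ref{tab:adjoints}, both $-^\dagger$ and $-^T$ reverse 2-cells, so $-^{\dagger T}$ is covariant on 2-morphisms. It therefore preserves vertical composition and reverses only horizontal composition, consistent with Remark \ref{2dagger} (a $\pi$-rotation followed by a reflection is a reflection). The thorn-net $\eta_{[ijk]}\ast_B\eta_{[ijk]}^T$ is sent to $\eta_{[ijk]}^{\dagger T}\ast_B\eta_{[ijk]}^{\dagger}=\eta_{[kji]}\ast_B\eta_{[kji]}^T$. This is an endomorphism of the simple $F_{[ki]}$, hence genuinely a thorn-net. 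Your expression $\eta^\dagger\ast_B\eta^{\dagger T}=\eta_{[kji]}^T\ast_B\eta_{[kji]}$ is instead an endomorphism of the non-simple composite $F_{[ji]}\circ F_{[kj]}$, namely $\beta$ times an idempotent, not a multiple of the identity. Your conclusion that thorn-nets go to thorn-nets survives, but only after this correction.
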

\begin{proof}
    Since a 4-simplex with the opposite orientation, $\sigma^-$, naturally has oppositely-oriented boundary, this follows from \textbf{Proposition \ref{3dreversal}}. There exists an orientation-preserving rotation $x\in A_4$ for which we have a linear isomorphism
    \begin{equation*}
        \big(\cH_{\partial\sigma}^+\big)^{\dagger T} \xrightarrow{\sim}  \cH_{\partial\sigma}^-\,,\qquad \psi_{\mathcal{K}}\mapsto R_x\psi_\mathcal{K}R_x^{T}
    \end{equation*}
    represented by some orthogonal matrix $R_x$. By taking a torodidal trace, its cyclicity and the orthogonality of $x\in A_4\subset SO(3)$ then gives 
    \begin{equation*}
        \Tr(\psi_{\bar{\mathcal{K}}}) = \Tr(R_x^{T}\psi_{\mathcal{K}}R_x)= \Tr(\psi_\mathcal{K}^{\dagger T})\,.
    \end{equation*}
\end{proof}
\noindent Due to this result, we will denote simply the negatively-oriented 4-simplex amplitude by $\langle\sigma^-(\bar{\mathcal{K}})\rangle = \langle\sigma(\bar{\mathcal{K}})\rangle$.

\medskip

We now wish to use \textbf{Proposition \ref{4dorientation}} to characterize the reflection-positivity of the scattering amplitudes. For this, we will assume that there exists a $\textbf{T}$-shadow $\bbra-\kket^\text{op}$  on $\cC^\text{1-op}$ such that we have a \textit{shadow 2-functor} \cite{ponto2013shadows}
\begin{equation}
  (-^\dagger ,-^\vee,d):(\cC,\textbf{T},\bbra-\kket) \to (\cC^\text{1-op},\textbf{T}^\text{m-op},\bbra-\kket^\text{op});\label{opshadow}  
\end{equation}
that is, for each $F\in\operatorname{End}_\cC(\ttU)$ and $\ttU\in\cC$ there exist a natural isomorphism $d_F:\bbra F^\dagger\kket_\ttU^\text{op} \cong (\bbra F\kket_\ttU)^\vee$ in $\textbf{T}$ which commutes with the cyclicity operator
\[\begin{tikzcd}
	{\bbra G^\dagger \circ F^\dagger\kket^\text{op}_\ttV} && { \bbra F^\dagger\circ G^\dagger\kket^\text{op}_\ttU} \\
	{(\bbra F\circ G\kket_\ttV)^\vee} && {(\bbra G\circ F\kket_\ttU)^\vee}
	\arrow["{\theta_{G^\dagger,F^\dagger}}", from=1-1, to=1-3]
	\arrow["{d_{F\circ G}}"', from=1-1, to=2-1]
	\arrow["{d_{G\circ F}}", from=1-3, to=2-3]
	\arrow["{\theta_{G,F}^\vee}"', from=2-1, to=2-3]
\end{tikzcd}\]
 for each $F: \ttU\to \ttV$ and $G:\ttV\to \ttU$. See \textit{Example \ref{serrefunctor}}. We do \textit{not}, however, require any other data (eg. monoidal product or the co/evaluations) to be preserved.

Consider two cubical 2-morphisms in $\cC$ of the form $\eta: G\circ F\Rightarrow F\circ H$ and $\tilde \eta: F^\dagger\circ G^\dagger\Rightarrow H^\dagger \circ F^\dagger$. Beginning from their shadow traces
\begin{align*}
    \operatorname{tr}_\textbf{T}\left(\eta\right)&: \bbra G\kket_\ttV \to \bbra H\kket_\ttU \\
    \operatorname{tr}_\textbf{T}\left(\tilde \eta\right) &:\bbra G^\dagger\kket_\ttV^\text{op}\to \bbra H^\dagger\kket_\ttU^\text{op}\,,
\end{align*}
we can form the following scalar-valued functional (see also Fig. \ref{fig:positivetorus})
\begin{align}
    Z(\eta,\tilde\eta) &= \mathfrak{e}_{\bbra H\kket^\vee} \circ \big(\operatorname{tr}_\textbf{T}\left(\eta\right)\otimes d_H\operatorname{tr}_\textbf{T}\left(\tilde\eta\right)d_G^{-1}\big ) \circ \mathfrak{i}_{\bbra G\kket} \,.\label{functional}
\end{align}

\begin{figure}
    \centering
    \includegraphics[width=0.35\linewidth]{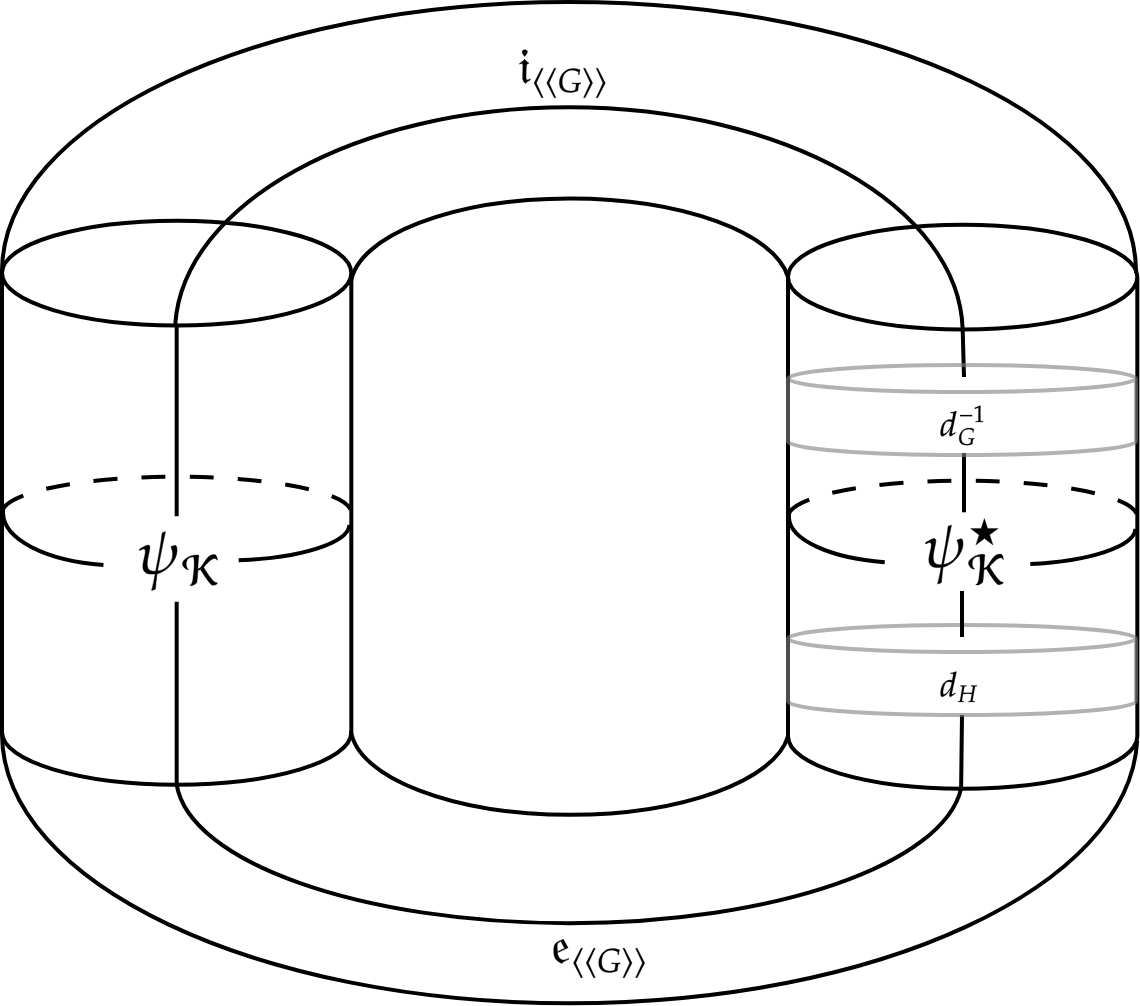}
    \caption{The presentation of the functional $Z$ \eqref{functional} as a surface graph on a 2-torus --- presumed to be invariant under isotopies in $S^3$. For the 4-simplex state $\psi_\mathcal{K}$, the 1-morphisms $G,H$ are given in \eqref{scatteringstate2}. This quantity is crucial for reflection-positivity.}
    \label{fig:positivetorus}
\end{figure}

Now consider two $\cC$-decorated 4-simplices $\mathcal{K}_{1,2}:\sigma_{1,2}\to \cC$ sharing a boundary tetrahedron. Given the $\cC$-decoration on this boundary tetrahedron is the same, the corresponding scattering states $\psi_{\mathcal{K}_1},\psi_{\mathcal{K}_2}$ share legs that be contracted and put onto a decorated torus; see Fig. \ref{fig:positivetorus}. The inner product
    $$\langle \sigma_1(\mathcal{K}_1),\sigma_2(\mathcal{K}_2)\rangle=Z(\psi_{\mathcal{K}_1},\psi_{\mathcal{K}_2}^\star) = \Tr \left(\Psi_{\psi_{\mathcal{K}_1},\psi_{\mathcal{K}_2}} \right)\,,$$
defined by the functional $Z$ \eqref{functional}, is therefore symmetric and non-degenerate modulo $\operatorname{ker}\Tr$; cf. the universal construction of \cite{BLANCHET1995883}.


\begin{definition}
    Denote by $-^{\star} = -^{\dagger T}$. We say that the functional $Z$  \eqref{functional} is \textbf{Hermitian} iff for each pair $\eta,\tilde\eta$ of cubical 2-morphisms of the form above, we have
    \begin{equation*}
        \overline{Z(\eta,\tilde\eta)} = Z(\tilde\eta^\star,\eta^\star),
    \end{equation*}
    where $\bar z$ denotes complex conjugation. The graphical presentation of the pairing $Z$ is given in Fig. \ref{fig:positivetorus}.
\end{definition}

\begin{example}\label{serrefunctor}
Consider once again the example of $\cC=\operatorname{2Rep}(H)$, which we recall consist of $\mathsf{Hilb}$-linear unitary \textit{exact} $H$-modules. Exactness provides a (unique) \textit{relative Serre functor} $\mathbb{S}=\mathbb{S}_\ttU: \ttU\xrightarrow{\sim} \ttU$ \cite{fuchs2025spherical} and a natural isomorphism
    \begin{equation*}
        (s_U)_{x,y}:\operatorname{Hom}_\ttU(x,y)^\vee\cong \operatorname{Hom}_\ttU(y,\mathbb{S}(x))
    \end{equation*}
    which dualizes the hom-spaces of $\ttU$. Together with the $\mathsf{Hilb}$-shadow $\bbra-\kket_\ttU: F\mapsto  \operatorname{Nat}(1_\ttU,F)$ as given in \textit{Example \ref{2charexample}}, we will also consider on $\cC^\text{1-op}$ the $\mathsf{Hilb}$-shadow given by the \textit{coend}
    \begin{equation*}
        \bbra \tilde F\kket_\ttU^\text{op} = \int^{x\in \ttU}\operatorname{Hom}_\ttU(x,\tilde F(x)),\qquad\tilde F\in\operatorname{End}_{\cC^\text{1-op}}(\ttU)\,.
    \end{equation*}
    By definition of the adjunction $F\vdash F^\dagger$ and using the Serre functor, we have
    \begin{align*}
        \operatorname{Nat}(1_\ttU,F)^\vee \cong \int^{x\in \ttU}\operatorname{Hom}_\ttU(x,F(x))^\vee&\cong \int^{x\in \ttU}\operatorname{Hom}_\ttU(F^\dagger(x),x)^\vee\\
        &\cong\int^{x\in \ttU}\operatorname{Hom}_\ttU(x,\mathbb{S}F^\dagger(x))=\bbra \mathbb{S} F^\dagger\kket^\text{op}.
    \end{align*}
    Now given the trivialization\footnote{This $\psi$ has been studied in relation to oriented fully-extended 2d TQFTs \cite{fuchs2025spherical,gainutdinov2026fullyexactfullydualizable,Schommer-Pries:2011rhj,douglas2020dualizable}.} $\psi_U:\mathbb{S}\simeq 1_\ttU$  provided by the $H^*$-structure underlying $\mathtt{U}\in\mathsf{2Hilb}$,  we can construct from $s_U$ and $\psi_U$ the natural isomorphism
    \begin{equation*}
        d_U:\bbra F^\dagger\kket_\ttU^\text{op} \cong \bbra \mathbb{S}F^\dagger\kket_\ttU^\text{op}\cong  \operatorname{Nat}(1_\ttU,F)^\vee = (\bbra F\kket_\ttU)^\vee
    \end{equation*}
    needed to form the functional $Z$.
\end{example}

Now consider $\tilde\eta=\eta^\star$. If $Z$ were Hermitian, then $Z(\eta,\eta^\star)\in\R$ is in fact real-valued. We can therefore define the following (see also \S 4.3 of \cite{Liu:2024qth}).
\begin{definition}\label{refpos}
    We say that a Hermitian $Z$ is \textbf{reflection-positive} iff, for every $\cC$-decorated 4-simplex $\mathcal{K}: \partial\sigma^+ \to \cC$, we have
    \begin{equation*}
        Z(\psi_\mathcal{K},\psi_\mathcal{K}^\star)\geq 0
    \end{equation*}
    for each scattering vector $\psi_\mathcal{K}\in\cH_{\sigma}^+$, with equality iff $\mathcal{K}=0$. In this case, we put
    \begin{equation}
        |\langle\sigma(\mathcal{K})\rangle|^2=Z(\psi_\mathcal{K},\psi_\mathcal{K}^\star).\label{normsquared}
    \end{equation}
\end{definition}
\noindent Together with \textbf{Proposition \ref{4dorientation}}, we then have:
\begin{corollary}\label{refpos2}
    Suppose $Z$ \eqref{functional} is reflection-positive, then $\langle\sigma^-(\bar{\mathcal{K}})\rangle = \overline{\langle \sigma^+(\mathcal{K})\rangle}$. 
\end{corollary}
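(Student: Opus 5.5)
The plan is to reduce the statement to a comparison between the pairing $Z$ of \eqref{functional} and the toroidal trace $\Tr$ of \textbf{Definition \ref{4simpscatterdef}}, and then use Hermiticity. By \textbf{Proposition \ref{4dorientation}} we already have $\langle\sigma^-(\bar{\mathcal{K}})\rangle=\Tr(\psi_\mathcal{K}^{\dagger T})=\Tr(\psi_\mathcal{K}^\star)$. It therefore suffices to show that
\begin{equation*}
\Tr(\psi_\mathcal{K}^\star)=\overline{\Tr(\psi_\mathcal{K})}
\end{equation*}
modulo $\operatorname{ker}\Tr$, in the same spirit as for unitary matrices, where $\operatorname{tr}(A^\dagger)=\overline{\operatorname{tr}A}$.

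\textbf{Step 1: express the toroidal traces through $Z$.} I would realize both traces as values of $Z$ by pairing against a reference cubical 2-cell. For this I would use the identity-type cell $\epsilon=\id_{G\circ F_{[24]}}$ with $G\cong H$ up to $\theta$ (or more simply the cell whose shadow trace is the cyclicity operator $\theta$ of \eqref{cycledmorphism}). Its shadow trace is the identity on $\bbra G\kket$, after transport through $d$. Unwinding \eqref{functional} with the snake identities for $(\mathfrak{i},\mathfrak{e})$ in the spherical category $\textbf{T}$ should give
\begin{equation*}
Z(\psi_\mathcal{K},\epsilon^\star)=\Tr(\psi_\mathcal{K}),\qquad Z(\epsilon,\psi_\mathcal{K}^\star)=\Tr(\psi_\mathcal{K}^\star).
\end{equation*}
Here the compatibility of $d$ with $\theta$ (the commuting square defining the shadow 2-functor \eqref{opshadow}) is what lets the cyclicity operator pass through $d$. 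In addition, $\epsilon^\star=\epsilon$ holds up to the unitary adjunction data \eqref{unitaryadjoint}.

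\textbf{Step 2: apply Hermiticity.} Reflection-positivity includes the assumption that $Z$ is Hermitian. Hence
\begin{equation*}
\overline{\Tr(\psi_\mathcal{K})}=\overline{Z(\psi_\mathcal{K},\epsilon^\star)}=Z(\epsilon^{\star\star},\psi_\mathcal{K}^\star)=Z(\epsilon,\psi_\mathcal{K}^\star)=\Tr(\psi_\mathcal{K}^\star).
\end{equation*}
This uses the involutivity $-^{\star\star}=-^{\dagger T\dagger T}\cong 1$, which follows from the strong commutativity of $-^T$ and $-^\dagger$ together with the pivotal isomorphisms $\phi_F$. Combined with \textbf{Proposition \ref{4dorientation}}, this gives $\langle\sigma^-(\bar{\mathcal{K}})\rangle=\overline{\langle\sigma^+(\mathcal{K})\rangle}$.

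\textbf{Main obstacle.} The hard part is Step 1. The reference cell $\epsilon$ must be an admissible argument of $Z$, with the 1-morphisms $G,H$ matching, and the isomorphisms $d_G,d_H$ must cancel so that they do not leave a residual scalar such as a pivotal dimension. If $\epsilon$ is awkward to use, I would try a fallback. First, evaluate $Z(\psi_\mathcal{K},\psi_\mathcal{K}^\star)=|\Tr\psi_\mathcal{K}|^2$ via the normalization \eqref{normsquared}. Then use the polarization identity for the Hermitian form $Z$ to identify $\Tr(\psi^\star)$ with the conjugate of $\Tr(\psi)$. In that route, the independence of left and right traces (Remark \ref{3sphericality}) would be needed to move the $\theta$ in \eqref{cycledmorphism} to either side.
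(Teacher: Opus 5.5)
Your strategy works and takes a genuinely different route from the paper's, but Step 1 needs repair. Three points fail as written. First, $\id_{G\circ F_{[24]}}$ is not a cell $G\circ F_{[24]}\Rightarrow F_{[24]}\circ H$. Second, any cell of that shape has shadow trace $\bbra G\kket\to\bbra H\kket$, so it can only realize the \emph{inverse} of the operator in \eqref{cycledmorphism}. Third, ``$\epsilon^\star=\epsilon$'' is not meaningful, because the two cells sit in different slots of $Z$; you only need $\epsilon^{\star\star}=\epsilon$.

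The cell you need can be written down explicitly. Put $F=F_{[24]}$, $F^\dagger=F_{[42]}$ and $A=F_{[14]}\circ F_{[31]}\circ F_{[03]}\circ F_{[20]}$, so that $G=A\circ F^\dagger$ and $H=F^\dagger\circ A$. Take $\epsilon=(\iota_F\circ A)\ast(A\circ e_F)$.
\begin{itemize}
\item In \eqref{shadowtrace}, the snake identity $(e_F\circ F^\dagger)\ast(F^\dagger\circ\iota_F)=\id_{F^\dagger}$ and naturality of $\theta$ give $\operatorname{tr}_{\textbf{T}}(\epsilon)=\theta_{A,F_{[42]}}=\theta^{-1}_{F_{[42]},A}$.
\item By \eqref{unitaryadjoint} (up to $\phi$), $\epsilon^\star=(A^\dagger\circ\iota_F)\ast(e_F\circ A^\dagger)$. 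This is the same snake cell read in $\cC^{\text{1-op}}$, so its $\bbra-\kket^{\mathrm{op}}$-trace is the op-cyclicity operator.
\item The commuting square attached to \eqref{opshadow} turns $d_H(\cdot)d_G^{-1}$ of that operator into $\theta^\vee_{F_{[42]},A}$.
\item Note that \eqref{functional} equals $\operatorname{Tr}(C\circ\operatorname{tr}_{\textbf{T}}(\eta))$ whenever $d_H\operatorname{tr}_{\textbf{T}}(\tilde\eta)d_G^{-1}=C^\vee$. Hence $Z(\psi_{\mathcal K},\epsilon^\star)=\Tr(\psi_{\mathcal K})$ exactly, with no residual dimensions.
\item The same square with the roles of $A$ and $F_{[42]}$ exchanged, together with $\operatorname{Tr}(X^\vee)=\operatorname{Tr}(X)$ in spherical $\textbf{T}$, gives $Z(\epsilon,\psi_{\mathcal K}^\star)=\operatorname{Tr}(\theta^{\mathrm{op}}\circ\operatorname{tr}_{\textbf{T}}(\psi^\star_{\mathcal K}))$, where $\theta^{\mathrm{op}}:\bbra H^\dagger\kket^{\mathrm{op}}\to\bbra G^\dagger\kket^{\mathrm{op}}$ and the trace of $\psi^\star_{\mathcal K}$ is taken in $\bbra-\kket^{\mathrm{op}}$.
\end{itemize}
This last expression is the only sensible reading of $\Tr(\psi_{\mathcal K}^{\dagger T})$ in Proposition \ref{4dorientation}, since $\psi^\star_{\mathcal K}$ has the mirrored shape $F^\dagger\circ G^\dagger\Rightarrow H^\dagger\circ F^\dagger$. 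Hermiticity is quantified over all cells of these shapes, not just scattering states, so Step 2 then goes through verbatim. Drop the fallback. \eqref{normsquared} is a definition, not an evaluation, and $Z(\psi,\psi^\star)$ is a single Hilbert--Schmidt-type trace (like $\operatorname{tr}(MM^\ast)$), not $|\Tr\psi|^2$.

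As for the comparison: the paper gives no separate proof. The corollary is presented as immediate from Proposition \ref{4dorientation} and Definition \ref{refpos}. It implicitly reads the glued $\sigma\cup\bar\sigma$ amplitude $Z(\psi_{\mathcal K},\psi_{\mathcal K}^\star)$ of Remark \ref{amplitudenorm} as the product $\langle\sigma^+\rangle\langle\sigma^-\rangle=|\langle\sigma\rangle|^2$. Your argument uses only Hermiticity; positivity is never needed. It also supplies the bridge between the pairing $Z$ and the toroidal trace $\Tr$ that the paper's reading leaves implicit, and that bridge is what does the work. Nonnegativity of $Z(\psi,\psi^\star)$, which is one trace rather than a product of two amplitudes, says nothing by itself about how $\Tr(\psi^\star)$ relates to $\Tr(\psi)$. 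The cost is two uses of the op-shadow compatibility square and sphericality of $\textbf{T}$, both of which the paper assumes anyway.
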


\section{Topological invariance of the state-sum}\label{pachner}
Let $\mathscr{C}$ be as in \eqref{input}. Through the heavy lifting of \S \ref{decorated2graphs} and \S \ref{20jsymbols}, we define an assignment of
\begin{enumerate}
    \item the total tetrahedral Hilbert space \eqref{totaltetspace}
\begin{equation*}
        Z_\mathscr{C}(M^3,T_{M^3}) = \cH_{T_{M^3}}= \bigoplus_{K:T_{M^3}\to \cC}\bigotimes_{\tau^\pm\subset M^3}\cH_{K(\tau)}^\pm \,
    \end{equation*} 
    to each triangulated oriented closed 3-manifold $(M^3,T_{M^3})$, and
    \item a state-sum over the 4-simplex amplitudes \eqref{amplitude} 
\begin{equation}
        Z_\mathscr{C}(M^4,T_{M^4}) = \frac{1}{{N}}\sum_{\mathcal{K}:T_{M^4}\to \cC}\prod_{\sigma^\pm\subset T_{M^4}}\langle\sigma^\pm(\mathcal{K})\rangle\,\label{partition}
    \end{equation}
    to each triangulated closed oriented 4-manifold $(M^4,T_{M^4})$, where we have set $\sigma^-(\mathcal{K})=\sigma(\bar{\mathcal{K}})$ thanks to \textbf{Proposition \ref{4dorientation}}. 
\end{enumerate} 
The normalization constant $N$ can be chosen to be the value of \eqref{4dorientation} on the 4-sphere with $T_{S^4} = \sigma\cup\bar\sigma$, which by \textbf{Definition \ref{4simpscatterdef}} and reflection-positivity \eqref{normsquared} gives $${N}= Z_\mathscr{C}(S^4,T_{S^4})=\sum_\mathcal{K}|\langle\sigma(\mathcal{K})\rangle|^2\,;$$ see also \textit{Remark \ref{amplitudenorm}}.

\medskip

The main goal of the section is the following.
\begin{theorem}\label{invariant}
    The 4d partition function $Z_\mathscr{C}(M^4,T_{M^4})=Z_\mathscr{C}(M^4)$ \eqref{partition}  does not depend on the triangulation of $M^4$, and is therefore a topological invariant.
\end{theorem}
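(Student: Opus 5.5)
By Pachner's theorem, any two triangulations of a closed PL (equivalently, smooth) 4-manifold are related by a finite sequence of the bistellar moves $1\leftrightarrow 5$, $2\leftrightarrow 4$ and $3\leftrightarrow 3$. It therefore suffices to show that the state sum \eqref{partition} is unchanged under each of these moves. Each move replaces a PL 4-ball $B$, made of $k$ 4-simplices, by the complementary ball $B'$, made of $6-k$ 4-simplices, with the same boundary $\partial B=\partial B'\subset\partial\sigma_6$, where $\sigma_6$ is the boundary of the 5-simplex. Fix a simple $\cC$-decoration on $\partial B$. The claim then reduces to a local identity: the sum over simple decorations of the interior simplices of $B$ of $\prod_{\sigma\subset B}\langle\sigma^\pm(\mathcal K)\rangle$ should equal the corresponding sum for $B'$, up to the vertex and normalization factors that appear when an interior vertex is created or destroyed. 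Those factors would be absorbed by $N$ and by the $\prod_v\Dim(\ttU_v)$ normalization of Definition \ref{20jsymboldef}. The orientation signs $\epsilon_i$ are handled through Proposition \ref{4dorientation} and Corollary \ref{refpos2}, so reversed simplices contribute complex-conjugate amplitudes.

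I would prove the local identities in the order $3\leftrightarrow3$, then $2\leftrightarrow4$, then $1\leftrightarrow5$, deriving the later moves from the first together with two basic lemmas.

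The first lemma is gluing. Two 4-simplex amplitudes sharing a tetrahedron combine, after summing over the decoration of that shared tetrahedron, into a single toroidal trace over the union. For this I would use the reflection-positive pairing $Z$ of \eqref{functional}, the completeness relation \eqref{deltadecouple} (the Kirby-colour merging of Fig. \ref{fig:deltascalar}) for the summed internal edges, and the \th-net normalization \eqref{betascalar} for the summed internal faces.

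The second lemma is recoupling. The horizontally composed $\theta$-nets can be reassociated, with pentagonators appearing as in \eqref{withpentagonators}. Their consistency follows from the associahedron equation for $\pi$, so different bracketings of a triple $\eta_1\ast_A\eta_2\ast_A\eta_3$ agree after multiplying by the $\pi$-scalars \eqref{piscalars}. These are what make the $1/\prod_\tau\pi_\tau$ factors cancel across a move.

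With these lemmas, the $3\leftrightarrow3$ move follows from the cyclicity and interchange identities \eqref{cublicycle} of the toroidal trace, together with Theorem \ref{shadowproperties}. Both sides compute the same decorated torus (Fig. \ref{fig:amplitudegraph}) glued along the common interior face. Their difference is only a cyclic relabelling along the $A$- and $B$-cycles and a rebracketing of $\theta$-nets, which the second lemma controls.

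The $2\leftrightarrow4$ and $1\leftrightarrow5$ moves create internal edges and vertices. For these I would use the explicit Kirby-colour evaluation of Theorem \ref{20jeval}. Summing over an internal edge decoration uses \eqref{deltadecouple} to merge parallel sheets at the cost of $\delta_F=\dim F$. Summing over an internal vertex object $\ttU$ produces closed toroidal components, and these should evaluate to $\Dim(\ttU)$ times bubble factors. Collecting these, the leftover constant for $1\leftrightarrow5$ should equal $N=Z(S^4)=\sum_\mathcal K|\langle\sigma(\mathcal K)\rangle|^2$, using \eqref{normsquared} and the fact that $S^4=\sigma\cup\bar\sigma$.

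The main obstacle is the $1\leftrightarrow5$ move. Here the bubble removal has to produce exactly a global constant that is independent of the boundary decoration. This relies on 3-sphericality (Remark \ref{3sphericality}), so that the torus pieces can be isotoped in $S^3$, and on the finiteness of the sum over simple vertex objects, which may need finitely many connected components of $\cC$. My first attempt would be to show that the interior sum factorizes as $\langle\sigma(\mathcal K)\rangle$ times $\sum_{\mathcal K'}|\langle\sigma(\mathcal K')\rangle|^2$, using the reflection-positive inner product to contract the four new simplices against the old one.
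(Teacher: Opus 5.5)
Your skeleton (Pachner's theorem, prove $3\leftrightarrow3$ first, then deduce the other moves) matches the paper's. However, the step you use for the $3\leftrightarrow3$ move would fail. The two sides are not one decorated torus up to relabelling. The simplices $\sigma_1,\sigma_3,\sigma_5$ meet along the interior triangle $[024]$, with interior tetrahedra $[0245],[0234],[0124]$. The simplices $\sigma_0,\sigma_2,\sigma_4$ meet along $[135]$, with interior tetrahedra $[1345],[1235],[0135]$. Each side sums over the 2-morphism decorating its own interior triangle. The cyclicity \eqref{cublicycle}, Theorem \ref{shadowproperties} and pentagonator rebracketing only rearrange a fixed pasting diagram, so they cannot turn a sum over decorations of $[024]$ into one over $[135]$.

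What is needed is a genuine recoupling identity, the paper's \eqref{3-3identity}. For the three tetrahedra sharing a triangle $[ijk]$ it reads $\sum K_{ab}\ast K_{bc}\ast K_{ca}=(\delta_{ij}\delta_{jk})^{-2}\sum K_{aa}\ast K_{bb}\ast K_{cc}$. It is proved by inverse $\delta$-moves (Fig. \ref{fig:deltascalar}, i.e. the completeness relation \eqref{deltadecouple} used to split and merge parallel sheets), with the resulting circles evaluated to $\dim F$ by planar pivotality. The associahedron equation enters only to reassociate contractions. Applying this identity once on each side, with isotopies in $S^3$ in between, gives \eqref{33move}. That result carries an explicit factor $(\delta_{02}\delta_{24}\delta_{13}\delta_{35})^2$, so the move is not the scalar-free relabelling you anticipate.

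For $2\leftrightarrow4$ and $1\leftrightarrow5$, Theorem \ref{20jeval} is the wrong tool. It evaluates an amplitude in which every 1-simplex is Kirby-coloured, i.e. summed over, and it assumes strict planar pivotality. It therefore yields no local identity with fixed boundary decoration, and the paper never needs the vertex-bubble evaluation you sketch. The idea you are missing is the section--retraction property, Proposition \ref{sectionretraction}: $\sum_{\mathcal K}\operatorname{tr}_{\textbf{T}}(\sigma(\mathcal K)\ast\sigma(\bar{\mathcal K}))=N\cdot\id$, with $N$ as in \eqref{constant}.

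With this property, the right side of \eqref{42move}, $\sigma(\mathcal K_0)\ast\sigma(\mathcal K_2)\ast\sigma(\mathcal K_4)\ast\sigma(\bar{\mathcal K}_5)$, is turned by the $3\leftrightarrow3$ move into $\sigma(\mathcal K_1)\ast\sigma(\mathcal K_3)\ast\sigma(\mathcal K_5)\ast\sigma(\bar{\mathcal K}_5)$. The pair $\sigma(\mathcal K_5),\sigma(\bar{\mathcal K}_5)$ then collapses to $N_5$. The $5\leftrightarrow1$ move follows from \eqref{42move} by collapsing $\sigma(\mathcal K_3),\sigma(\bar{\mathcal K}_3)$ to $N_3$ in the same way.

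Your closing idea of contracting simplices against each other with the reflection-positive pairing gestures at this. The actual mechanism, though, cancels a simplex against its own orientation reversal after a $3\leftrightarrow3$ move. The constant it produces is $(\delta_{02}\delta_{24}\delta_{13}\delta_{35})^{-2}N_5N_3$, built from the $\delta_e$ and $|K(\tau)|^2$ entering \eqref{constant}. It is not a single global $Z(S^4)$ that the overall $1/N$ in \eqref{partition} could absorb.
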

\begin{proof}
    It suffices to prove the invariance of 20$j$-symbol summations \eqref{partition} under the 4-dimensional Pachner moves. This is done in \S \ref{pachnerinvariance} below.
\end{proof}

\subsection{$\delta$-recoupling and invertible 4-simplex summation}\label{localmoves}
Before we proceed to prove \textbf{Theorem \ref{invariant}}, we first prove a few identities that allows us to perform local moves on the tensor network diagrams. These concern, for a (simple) $\cC$-decoration $K: \tau\coprod\tau'\to \cC$ of two tetrahedra $\tau,\tau'$ sharing three vertex labels. 

\begin{figure}
    \centering
    \includegraphics[width=0.95\linewidth]{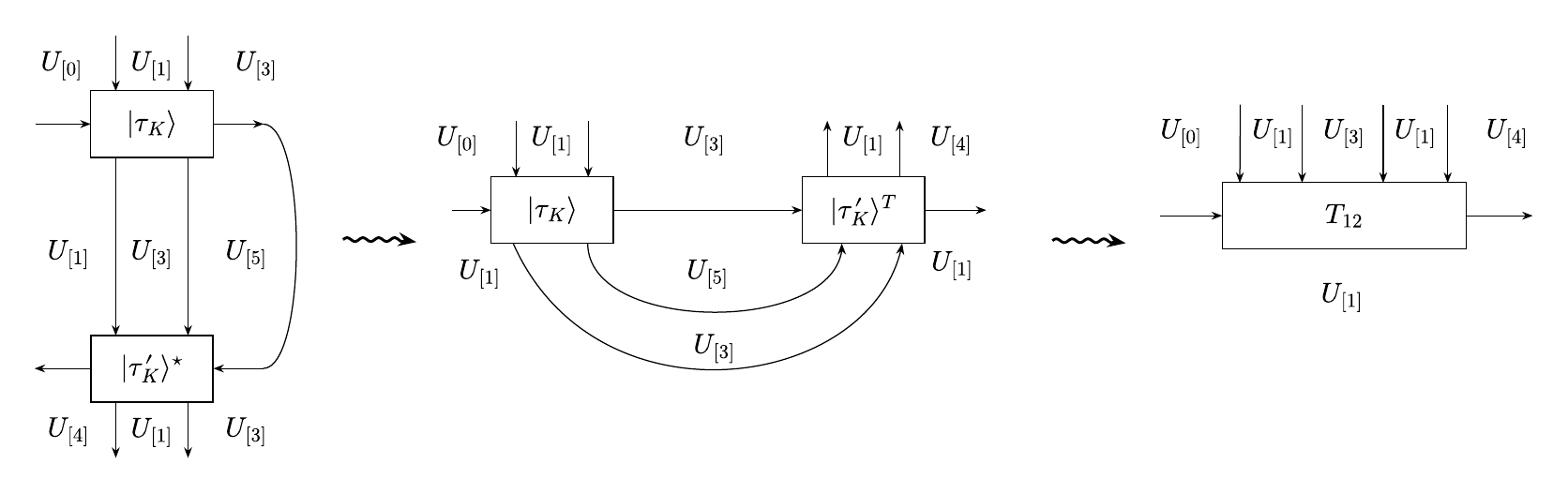}
    \caption{The left-side of the figure displays the tensor network diagram representing the $\beta$-$\gamma$ contraction $\bar K \# K$ for the choices $\tau=\tau_{24}=[0135],\, \tau' =\tau_{02}= [3451]$ of the labelled tetrahedra. The 1-morphism labels are omitted for clarity. We have also used the pivotality of $\cC$ to reorient part of the diagram. The tensor $T_{12}$ is introduced to simplify the notation.}
    \label{fig:contracted}
\end{figure}

Under an appropriate application of co/evaluations $e,\iota$, we can form the "partial contraction" $\bar K' \# K$ of the corresponding (contracted) tetrahedral states $K' = |\tau'_{K}\rangle,\, K= |\tau_K\rangle$ along their shared legs; see Fig. \ref{fig:contracted}.  Here, $\bar K'=|\tau'_K\rangle^\star$ denotes the conjugate/orientation reversed $\cC$-decoration.

In the following, we will assume that the contraction of tensor network diagrams can be associated freely.\footnote{Given the recoupling theory of \S \ref{pentagonatorscalars}, this requires the pentagonator $\pi$ to satisfy the associahedron equation.} All tensor network diagrams below will implicitly have its decorations summed over.

\medskip

For each triple $(i,j,k)$ of indices with $0\leq i<j<k\leq 5$, we denote by $\{\tau_x(ijk)\}_{x=a,b,c}$ the set of three (and exactly three) tetrahedra which share $i,j,k$ as vertex labels (i.e. $a\in\mathbb{Z}_5$ but $a\neq i,j,k$). 
\begin{proposition}
    Let $\kappa\to\cC$ denote a $\cC$-decoration on the 5-simplex $\kappa=[012345]$. For each triple $(i,j,k)$ with $0\leq i<j<k\leq 5$, we denote for $0\leq a<b\leq 5$ by 
    \begin{equation*}
        K_{ab} = \bar K_a(ijk)\# K_b(ijk),\qquad a,b\not\in\{i,j,k\}
    \end{equation*}
    the $\beta$-$\gamma$ contraction of the states corresponding to the tetrahedra $\tau_x(ijk)$. Then, we have the identity
    \begin{equation}
        \sum_{a,b,c}K_{ab} \ast K_{bc}\ast K_{ca} = (\delta_{ij}\delta_{jk})^{-2}\sum_{a,b,c} K_{aa}\ast K_{bb}\ast  K_{cc}\label{3-3identity}
    \end{equation}
    where $\delta_{ij}=\operatorname{dim}F_{[ij]}$ are the $\delta$-scalars described in \S \ref{scalaringredients}.
\end{proposition}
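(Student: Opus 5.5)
The plan is to reduce the identity \eqref{3-3identity} to repeated applications of the $\delta$-recoupling move of Fig. \ref{fig:deltascalar}, i.e.\ the completeness and orthogonality of the idempotents \eqref{deltadecouple}. The key feature of the three tetrahedra $\tau_a(ijk),\tau_b(ijk),\tau_c(ijk)$ is that they all contain the face $[ijk]$. As a result, every $\beta$-$\gamma$ contraction $K_{ab}$ carries open legs decorated by the 1-morphisms $F_{[ij]},F_{[jk]}$, together with the 2-simplex decoration $\eta_{[ijk]}$ and its transpose.

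First, write each $K_{ab}$ in the normal form of Fig. \ref{fig:contracted}. After using pivotality to reorient, it is a tensor $T_{ab}$ sandwiched between a $\theta$-net on the shared face $[ijk]$ and the remaining legs. Then use the assumption that tensor-network contractions may be associated freely, which relies on the associahedron equation for $\pi$ (\S \ref{pentagonatorscalars}). With this, reorder the composite $K_{ab}\ast K_{bc}\ast K_{ca}$ so that the three copies of the $[ijk]$-face sit consecutively along the legs $F_{[ij]}$ and $F_{[jk]}$.

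Second, on each of these two legs there are parallel sheets decorated by adjoint pairs $F^\dagger\vdash F$ coming from consecutive contractions. Insert the resolution of unity $\sum_F p_F=\id$ and write $p_F=\delta_F^{-1}\iota'_{F^\dagger}\ast e_F$ using \eqref{deltascalars}. Because the Kirby sum runs over all decorations, the internal labels in the cyclic product $ab\to bc\to ca$ can be rerouted: the legs connecting $a$ to $b$ (and so on) are cut and reglued so that each tetrahedron closes on itself, giving $K_{aa}\ast K_{bb}\ast K_{cc}$.

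Each such cut-and-reglue on the legs $F_{[ij]}$ and $F_{[jk]}$ produces a normalization factor $\delta_{ij}^{-1}$ or $\delta_{jk}^{-1}$. Closing the cycle requires two reroutings per leg, since the third is forced by the cyclic structure. This accounts for the factor $(\delta_{ij}\delta_{jk})^{-2}$.

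The main obstacle is the bookkeeping in this second step. I must verify that exactly two independent projector insertions are needed on each of $F_{[ij]}$ and $F_{[jk]}$, and that the leg $F_{[ik]}$ contributes no factor. The latter should follow because $F_{[ik]}$ is absorbed into the $\theta$-nets through the $\gamma$-scalars \eqref{gammascalar} rather than into a $\delta$-recoupling. I also need that the associators and pentagonators generated during reordering cancel between the two sides; I would argue this by noting that both sides involve the same multiset of faces, hence the same $\pi$-scalars \eqref{piscalars}. I would check the count first on the explicit choice shown in Fig. \ref{fig:contracted} and then argue by the symmetry of the index triple.
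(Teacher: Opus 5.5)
Your plan is essentially the paper's argument. The paper first uses associativity of $\ast$ together with the associahedron equation for $\pi$ to regroup $\sum_{a,b,c}(\bar K_a\# K_b)\ast(\bar K_b\# K_c)\ast(\bar K_c\# K_a)$ as $\sum_{a,b,c}(K_a\ast\bar K_a)\#(K_b\ast\bar K_b)\#(K_c\ast\bar K_c)$, and then applies inverse $\delta$-moves on the $F_{[ij]}$ and $F_{[jk]}$ sheets (Fig.~\ref{fig:33proof}) to exchange the roles of $\ast$ and $\#$, arriving at $K_{aa}\ast K_{bb}\ast K_{cc}$ with the factor $(\delta_{ij}\delta_{jk})^{-2}$. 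In your bookkeeping, note that the paper's net factor also involves closed oriented circles evaluated to $\delta_{ij},\delta_{jk}$ by planar-pivotality, so you must track those positive powers alongside the $\delta^{-1}$'s from inserting $\sum_F p_F$.
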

\begin{proof}

Fix $i,j,k$, we denote by $K_a=K_a(ijk)$. Then the identity \eqref{3-3identity} can be obtained from the computation
\begin{align*}
        \sum_{a,b,c}K_{ab} \ast K_{bc}\ast K_{ca} &= \sum_{a,b,c}  (\bar K_a\#  K_b)\ast(\bar K_b\# K_c )\ast (\bar K_c\# K_a) \\
        &= \sum_{a,b,c} ( K_a \ast\bar K_a)\#(K_b\ast \bar K_b)\# (K_c \ast  \bar K_c) \\
        &\stackrel{\text{fig. } \ref{fig:33proof}}{ =} \delta_{ij}^{-2}\delta_{jk}^{-2} \sum_{a,b,c} (K_a\# \bar K_a)\ast (K_b \#  \bar K_b)\ast (\bar K_c\# K_c) \\ 
        &= \delta_{ij}^{-2}\delta_{jk}^{-2} \sum_{a,b,c} K_{aa}\ast K_{bb}\ast K_{cc}\,,
\end{align*}
where in the second line we have used the associativity of the 2-morphism composition and the associahedron equation for the pentagonator $\pi$. For the equality in the third line, we prove it diagrammatically as displayed in Fig. \ref{fig:33proof}.

\begin{figure}
    \centering
    \includegraphics[width=0.85\linewidth]{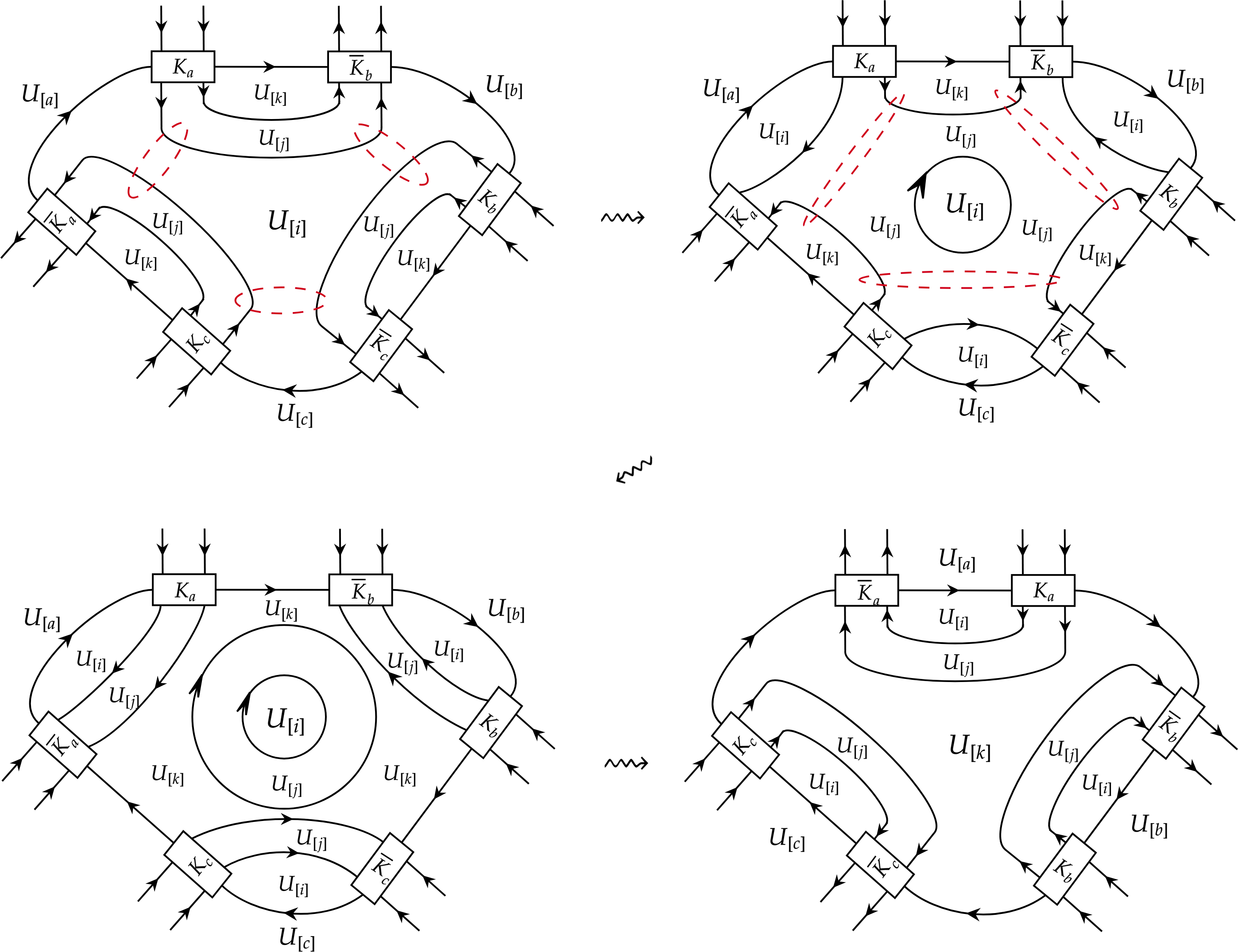}
    \caption{This figure displays the diagrammatic proof of \eqref{3-3identity}, but with the scalar factors appearing in each step suppressed. Here, each red circle is a inverse local $\delta$-move as in Fig. \ref{fig:deltascalar}, and the oriented circles are evaluated to $\delta_{ij}=\operatorname{dim}F_{[ij]},\,\delta_{jk}=\operatorname{dim}F_{[jk]}$ by using planar-pivotality.}
    \label{fig:33proof}
\end{figure}
\end{proof}


\begin{figure}
        \centering
        \includegraphics[width=0.65\linewidth]{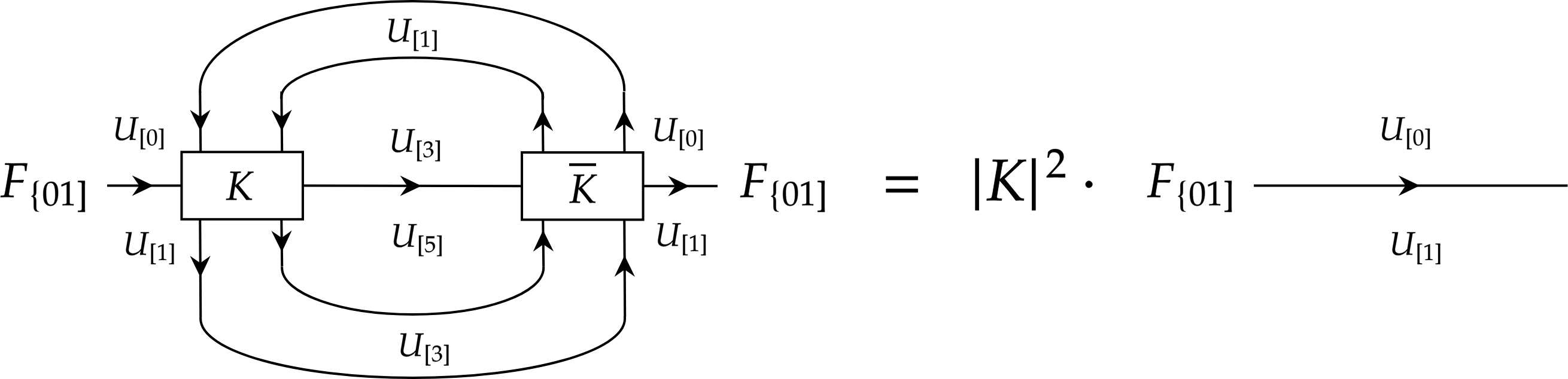}
        \caption{The tensor network presentation of the full contraction of a $\cC$-decorated tetrahedron $K$ with its orientation-reversal $\bar K$.}
        \label{fig:contracted3simplex}
    \end{figure}
    
We now prove the analogue of Corollary 4.3.10 of \cite{douglas2020dualizable}.
\begin{proposition}\label{sectionretraction}
    Let $\mathcal{K}: \sigma\to \cC$ denote a simple $\cC$-decorated 4-simplex and $\bar{\mathcal{K}}$ its orientation-reversal. Then upon summing over $\mathcal{K}$, the pair $\sum_\mathcal{K}(\sigma(\mathcal{K}),\sigma(\bar{\mathcal{K}}))$ can be made into a section-retraction pair.
\end{proposition}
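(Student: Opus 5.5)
We interpret the statement in the spirit of Corollary 4.3.10 of \cite{douglas2020dualizable}. We regard the decorated 4-simplex $\sigma(\mathcal{K})$ as a map out of the boundary tetrahedral Hilbert space $\cH^+_{\partial\sigma}$ into $\bbC$, given by the amplitude $\langle\sigma^+\rangle$ of Definition \ref{4simpscatterdef}. We regard its orientation reversal $\sigma(\bar{\mathcal{K}})$ as the corresponding map $\bbC\to\cH^-_{\partial\sigma}\cong(\cH^+_{\partial\sigma})^{\dagger T}$, using Proposition \ref{3dreversal} and Proposition \ref{4dorientation}. The claim is then that the composite "retraction $\circ$ section" is the identity, after suitable normalisation by $\delta$-scalars and toroidal dimensions.

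\textbf{Step 1: Reduce to gluing along a common boundary.} We glue $\sigma$ to $\bar\sigma$ along the whole boundary $\partial\sigma$. By Definition \ref{refpos} and \eqref{normsquared}, the composite on a fixed decoration equals $Z(\psi_\mathcal{K},\psi_\mathcal{K}^\star)=|\langle\sigma(\mathcal{K})\rangle|^2$. After summing over $\mathcal{K}$, this is the normalisation $N=Z_\mathscr{C}(S^4,T_{S^4})$, which is nonzero by reflection-positivity. So the plan is to rescale the section by $N^{-1}$.

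\textbf{Step 2: Show the glued tetrahedron amplitudes are idempotents.} This is the more substantive half, and it is where \eqref{tetrahedronscalar} gets computed. For each boundary tetrahedron $\tau_i$ we form the full contraction $\bar K\# K$ of Fig. \ref{fig:contracted3simplex}. Using the \th-nets \eqref{betascalar} and the $\theta$-nets \eqref{gammascalar}, it evaluates to $|K|^2\,\id_{F_{[12]}}$. Then:
\begin{itemize}
\item by the $\delta$-move of Fig. \ref{fig:deltascalar} and completeness \eqref{deltadecouple}, summing over the simple decorations of the interior edges turns the composite into a sum of projectors $p_F$ weighted by $\delta_F^{-1}$;
\item the 3-3 identity \eqref{3-3identity} lets us rearrange the cyclic products $K_{ab}\ast K_{bc}\ast K_{ca}$ into diagonal ones $K_{aa}K_{bb}K_{cc}$, at the cost of $(\delta_{ij}\delta_{jk})^{-2}$.
\end{itemize}
Combining these, $\sum_\mathcal{K}\sigma(\bar{\mathcal{K}})\circ\sigma(\mathcal{K})$ on $\cH_{\partial\sigma}$ is proportional to the projector $\sum_F p_F$. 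This projector is the identity by \eqref{deltadecouple}, modulo $\ker\Tr$.

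\textbf{Step 3: Fix the normalisation.} We compute the constant using Theorem \ref{20jeval}, giving a ratio of $\beta,\gamma$ products over $\prod\delta_e$ and $\prod\Dim(\ttU_v)$. We absorb it into the section, so the pair becomes an honest section-retraction.

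\textbf{Main obstacle.} I expect the hardest part to be the bookkeeping of scalars, in particular showing that the $\gamma$-scalars arising from the non-cyclic gluings cancel against $|K|^2$ consistently for every tetrahedron. First I would verify this on a single $\theta$-net, using conjugation invariance (Theorem \ref{shadowproperties}, item 5) and the cyclicity \eqref{cublicycle}. I would then propagate it around the torus of Fig. \ref{fig:amplitudegraph}, relying on 3-sphericality (Remark \ref{3sphericality}) to make the left and right traces agree.
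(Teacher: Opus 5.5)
There is a genuine gap, and it starts with how you read the statement. You take $\sigma(\mathcal{K})$ to be the scalar functional $\langle\sigma^+\rangle$ and $\sigma(\bar{\mathcal{K}})$ to be a vector, so your Step~1 composite is just the number $\sum_{\mathcal{K}}|\langle\sigma(\mathcal{K})\rangle|^2$. That reduces the proposition to $N\neq 0$. You then obtain $N\neq 0$ only by importing reflection-positivity (Definition~\ref{refpos}), which the proposition does not assume.

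This reading is also too weak for how the result is used. In the 4-2 and 5-1 moves, the pair $\sigma(\mathcal{K}_5),\sigma(\bar{\mathcal{K}}_5)$ sits inside a larger network with open legs. What is needed there is the operator identity \eqref{invertibility}, $\sum_{\mathcal{K}}\operatorname{tr}_\textbf{T}(\sigma(\mathcal{K})\ast\sigma(\bar{\mathcal{K}}))=N\cdot\id$, where $\sigma(\mathcal{K})$ is the scattering 2-morphism $\psi_{\mathcal{K}}$ before the final trace.

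Your Step~2 has the right local ingredients: the full contraction of Fig.~\ref{fig:contracted3simplex} giving $|K|^2$, and the $\delta$-moves. But it aims them at the wrong composite. With $r:\cH_{\partial\sigma}\to\bbC$ and $s:\bbC\to\cH_{\partial\sigma}$, the map $s\circ r$ is a sum of rank-one maps. It cannot be proportional to $\id_{\cH_{\partial\sigma}}$ in general. And if both composites were identities, you would have an isomorphism rather than a section-retraction pair.

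The paper's argument is local, as follows.
\begin{itemize}
\item Under the $\textbf{T}$-shadow, contract every leg of each boundary tetrahedron $K_i$ in $\psi_{\mathcal{K}}$ against the matching leg of its mirror $\bar K_i$ in $\psi_{\bar{\mathcal{K}}}$.
\item Each full contraction collapses via the dagger adjunctions \eqref{unitaryadjoint} to $|K_i|^2$ times an identity, which is nonzero by simplicity.
\item The remaining sums over edge labels are merged by inverse $\delta$-moves \eqref{deltadecouple}, each contributing a factor $\delta_e^{-1}$.
\item This yields the constant \eqref{constant} directly, and rescaling both maps by $1/\sqrt{N}$ finishes the proof.
\end{itemize}

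Neither of the tools you invoke is needed. The 3-3 identity \eqref{3-3identity} concerns three 4-simplices sharing a triangle inside a 5-simplex. Theorem~\ref{20jeval} evaluates a single Kirby-coloured amplitude under strict pivotality, and in any case $|\sum|^2\neq\sum|\cdot|^2$.

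The $\gamma$-cancellation you flag as the main obstacle never arises. No $\theta$-net between $\sigma$ and $\bar\sigma$ is formed, because each tetrahedron is paired only with its own mirror. Reflection-positivity enters only afterwards (Remark~\ref{amplitudenorm}), to choose the positive root and identify $N$ with $\sum_{\mathcal{K}}|\langle\sigma(\mathcal{K})\rangle|^2$.
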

\begin{proof}
    For a $\cC$-decorated tetrahedron $K$ and its orientation-reversal $\bar K$, we can form its "full contraction" by evaluating off all of its matched legs; this is displayed in Fig. \ref{fig:contracted3simplex}. By dagger adjunctions \eqref{unitaryadjoint}, this gives rise to some scalar multiple $|K|^2\in\mathbb{C}$ of the identity 2-endomorphism, which is nothing but the square of \eqref{tetrahedronscalar}. By simplicity of the $\cC$-decoration, $|K|^2\neq 0$ does not vanish. 
    
    Now to prove the statement, consider a $\cC$-decorated 4-simplex $\mathcal{K}:\sigma=\sigma_5\to\cC$. To be explicit, we pick a specific presentation of $\mathcal{K}$ as the tensor network diagram on the right-hand side of Fig. \ref{fig:amplitudegraph}, but other configurations (modulo $\operatorname{ker}\Tr$; see also \S \ref{explicitdef}) yield the same result. 

    \begin{figure}
        \centering
        \includegraphics[width=0.8\linewidth]{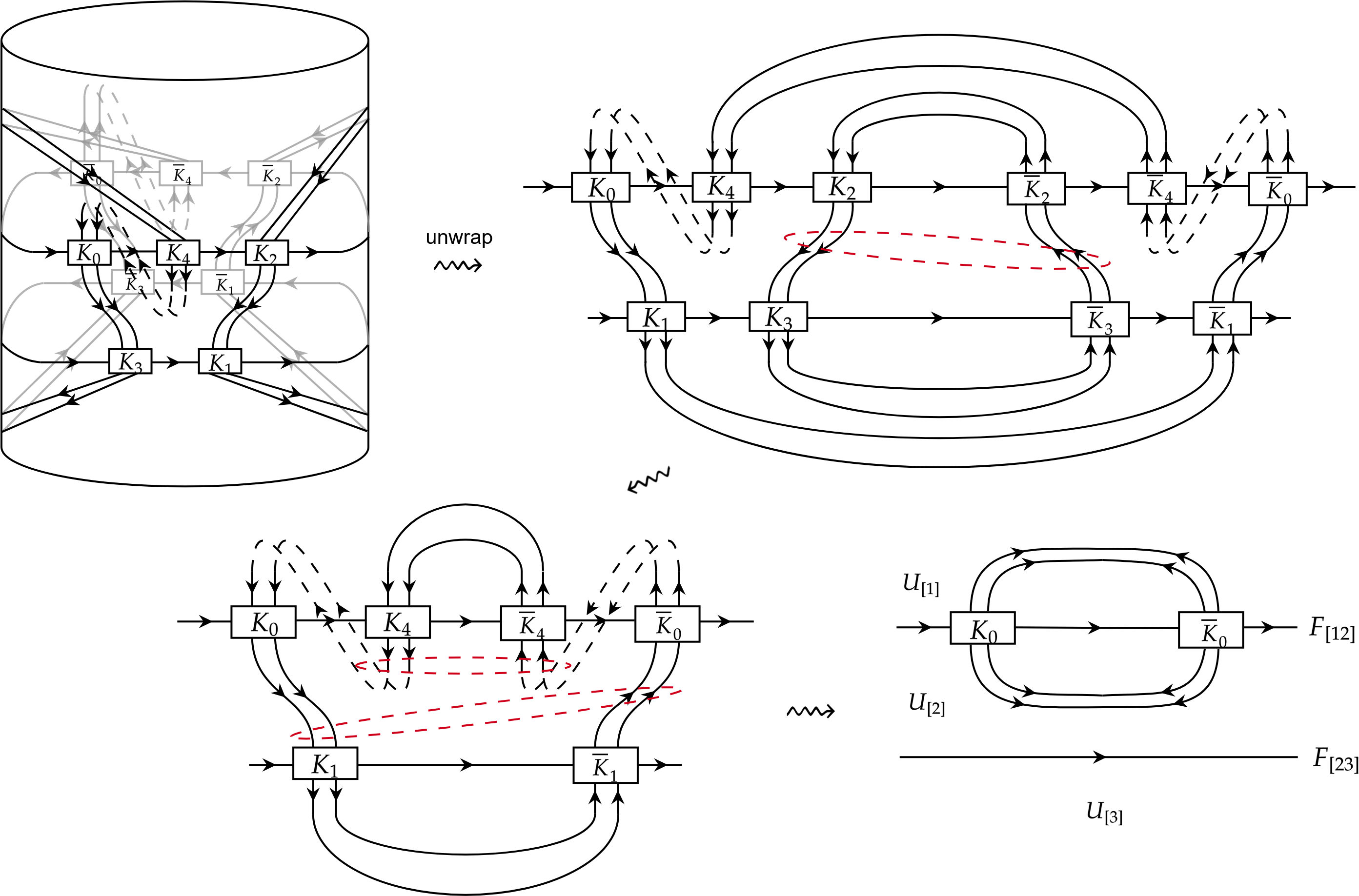}
        \caption{The diagrammatic manipulation which demonstrates \eqref{invertibility}. The dashed red circles are applications of an inverse $\delta$-move as in Fig. \ref{deltascalars}, and the resulting scalars, ie. $\delta$'s and $|K|$'s, are suppressed for clarity.}
        \label{fig:retsecfigure}
    \end{figure}
    
    The strategy is to contract all legs between $K_i,\bar K_i$ for each tetrahedron $K_i\subset \partial\sigma$ under a \textbf{T}-shadow. This is displayed on the top left of Fig. \ref{fig:retsecfigure}.  By following the diagrammatic manipulations shown, we see that we have 
        \begin{equation}
        \sum_\mathcal{K}\operatorname{tr}_\textbf{T}(\sigma(\mathcal{K})\ast\sigma(\bar{\mathcal{K}})) = N\cdot \id\label{invertibility}
    \end{equation}
    with the normalization constant given by 
    \begin{equation}    N=\sum_\mathcal{K}\delta_{50}^{-1}\delta_{01}^{-1}\delta_{12}^{-1}\delta_{23}^{-1}\delta_{34}^{-1}\delta_{45}^{-1}\prod_{\tau_i\subset\partial\sigma}|K_i|^2= \sum_\mathcal{K}\delta_{50}^{-1}\delta_{01}^{-1}\delta_{12}^{-1}\delta_{23}^{-1}\delta_{34}^{-1}\delta_{45}^{-1}\prod_{\tau\in\sigma_5}|K(\tau)|^2\neq0 \,.\label{constant}
    \end{equation}
    By choosing a square-root for $N$, we see that $\sum_\mathcal{K}1/\sqrt{N}(\sigma(\mathcal{K}),1/\sqrt{N}\sigma(\bar{\mathcal{K}}))$ is a section-retraction pair.
\end{proof}


\begin{rmk}\label{amplitudenorm}
    We claim that reflection-positivity uniquely determines the positive square root $\sqrt{N}$. To see this, consider the full trace of the quantity \eqref{invertibility} (up to the natural isomorphism $d$) $$N=\sum_\mathcal{K}\Tr(\sigma(\mathcal{K})\ast\sigma(\bar{\mathcal{K}})) = \sum_\mathcal{K}\operatorname{Tr} \left[\operatorname{tr}_\textbf{T}(\sigma(\mathcal{K})\ast\sigma(\bar{\mathcal{K}}))\right] = \sum_\mathcal{K}|\langle\sigma(\mathcal{K})\rangle|^2\,,$$ which is precisely the squared 4-simplex summation/the normalization constant in \eqref{partition}. This is computed by recoupling the $\theta$-nets, or diagrammatically by closing the torus in Fig. \ref{fig:retsecfigure} and wrapping $\sigma(\bar{\mathcal{K}})$ to the front of the other side to obtain Fig. \ref{fig:positivetorus}.
\end{rmk}

\subsection{Invariance under the 4d Pachner moves}\label{pachnerinvariance}

In 4-dimensions, Pachner moves are associated with bipartitions of the boundary ($\varepsilon_i=\pm$ denotes the orientation of the 4-simplex $\sigma_i$)
\begin{equation*}
    \partial \kappa = \bigcup_{i=0}^5 \sigma_i^{\varepsilon_i},\qquad \sigma_i = [0\cdots \hat i\cdots 5]
\end{equation*}
of an ordered 5-simplex $\kappa=[012345]$ into two non-empty disjoint sets $(k,l)$ of 4-simplices \cite{Kashaev2018,Kashaev:2015}. For $k+l=6$, we have $(3,3)$, $(2,4)$ and $(1,5)$.

For each $\sigma_i$, we denote by $\mathcal{K}_i: \sigma_i\to \cC$ the corresponding $\cC$-decorated 4-simplex, and denote by $K_{ij}:\tau_{ij}=\partial_j\sigma_i\to \cC$  the corresponding decorated boundary tetrahedra.


\subsubsection{3-3 Pachner move}
We begin with the 3-3 Pachner move. Fix the $\cC$-decorated 4-simplices $\mathcal{K}_i$, $i=0,\dots, 5$ lying on the boundary of the 5-simplex $\kappa=[012345]$. We recall that the corresponding 4-simplex scattering amplitudes $\langle\sigma(\mathcal{K}_i)\rangle$ can be represented as $\cC$-decorated torii, as in Fig. \ref{fig:amplitudegraph}.

In the following, we shall always assume that these $\cC$-decorated torii are embedded in $S^3$ (see also \textit{Remark \ref{3sphericality}}). By the notation "$\sum$" below we mean a summation over all relevant decorations. 
\begin{theorem}
    Up to isotopy, the 4-simplex scattering amplitudes satisfy the \textbf{3-3 Pachner move}
    \begin{equation}
        \sum \langle\sigma({\mathcal{K}_1}) \ast  \sigma({\mathcal{K}_3}) \ast  \sigma({\mathcal{K}_5})\rangle = \sum (\delta_{02}\delta_{24}\delta_{13}\delta_{35})^2  \langle\sigma({\mathcal{K}_0})\ast\sigma({\mathcal{K}_2})\ast\sigma({\mathcal{K}_4})\rangle\,.\label{33move}
    \end{equation}
\end{theorem}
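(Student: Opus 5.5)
The plan is to reduce the 3-3 move to the local identity \eqref{3-3identity} for the $\beta$-$\gamma$ contractions $K_{ab}$, applied to the interior triangles of the two sides of the move. Both sides of \eqref{33move} are unions of three 4-simplices in $\partial\kappa$. The odd ones $\sigma_1,\sigma_3,\sigma_5$ share the interior 2-simplex $[024]$. The even ones $\sigma_0,\sigma_2,\sigma_4$ share the interior 2-simplex $[135]$. The interior tetrahedra on the left are $\tau_{13},\tau_{35},\tau_{51}$, namely those containing $[024]$. On the right they are the three containing $[135]$.

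First, I will write each $\langle\sigma(\mathcal{K}_i)\rangle$ as a decorated torus (Fig. \ref{fig:amplitudegraph}), embedded in $S^3$ as in \textit{Remark \ref{3sphericality}}. Gluing two 4-simplices along a common tetrahedron $\tau_{ij}$ and summing over its interior decoration is then the partial contraction $\bar K_{ij}\# K_{ji}$ of the matching tetrahedral states. I will use Proposition \ref{4dorientation} to replace the oppositely oriented copy by its $\star$-conjugate. With this, the left side of \eqref{33move} becomes a cyclic composite $\sum K_{ab}\ast K_{bc}\ast K_{ca}$ with $\{a,b,c\}=\{1,3,5\}$ and triple $(i,j,k)=(0,2,4)$, dressed by the remaining exterior tetrahedra. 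Those exterior tetrahedra are the same boundary data $\partial\kappa\setminus$(interior) on both sides, and they are spectators throughout.

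Second, I will apply \eqref{3-3identity}. It splits the cyclic contraction into the diagonal pieces $K_{aa}\ast K_{bb}\ast K_{cc}$ at the cost of the factor $(\delta_{02}\delta_{24})^{-2}$. By Proposition \ref{sectionretraction} and the full contraction Fig. \ref{fig:contracted3simplex}, each diagonal piece $K_{aa}$ is the scalar $|K_a|^2$ times an identity. Its legs can therefore be re-expanded via the completeness relation \eqref{deltadecouple} (the inverse $\delta$-move of Fig. \ref{fig:deltascalar}) into an insertion of the tetrahedra containing $[135]$.

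Third, running the same identity backwards for the triple $(1,3,5)$ reassembles these pieces into the cyclic contraction representing $\sigma_0\ast\sigma_2\ast\sigma_4$, with a factor $(\delta_{13}\delta_{35})^{2}$. Combined with the inverse of the first factor, this gives the prefactor $(\delta_{02}\delta_{24}\delta_{13}\delta_{35})^2$ in \eqref{33move}. Throughout, the pentagonator scalars are absorbed by the associahedron equation, so contractions may be re-associated freely, as in \S\ref{pentagonatorscalars}.

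The main obstacle is the middle step: showing that the diagonal product over $[024]$ and the diagonal product over $[135]$ are the same closed diagram. This requires an isotopy in $S^3$ of the glued tori, which is where 3-sphericality and the cyclicity \eqref{cublicycle} of $\Tr$ are used essentially. It also requires careful bookkeeping of the $|K|^2$ and $\delta$ normalisations so that only the stated $\delta$-factors survive. I would first check this on Kirby colours using Theorem \ref{20jeval}, where everything reduces to $\beta,\gamma,\delta$ scalars, and then lift to general decorations by linearity.
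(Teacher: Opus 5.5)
Your skeleton matches the paper's: recouple the three interior tetrahedra on each side with \eqref{3-3identity}, then bridge the two sides by an isotopy in $S^3$. You also correctly locate the interior triangles: $[024]$ for $\sigma_1,\sigma_3,\sigma_5$ and $[135]$ for $\sigma_0,\sigma_2,\sigma_4$.

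However, the step you call the ``main obstacle'' is the whole content of the theorem, and the proposal does not supply it. What you do say about it does not work. Take $K_{aa}=|K_a|^2\cdot\id$ as you do. The only tool you then invoke is \eqref{deltadecouple}, and it cannot produce what you need. It resolves the identity on $\bigoplus_F F^\dagger\circ F$ into the projectors $p_F$, i.e.\ it merges or splits parallel sheets labelled by 1-morphisms. It never creates the 2-morphism-decorated tetrahedral states $K_0(135),K_2(135),K_4(135)$. Inserting those amounts to reading a full contraction (Fig.~\ref{fig:contracted3simplex}) backwards. That introduces factors $\prod|K_a(135)|^{-2}$ and removes $\prod|K_a(024)|^{2}$. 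Nothing in your argument makes these cancel, and they do not appear in \eqref{33move}.

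The paper closes this step by explicit diagram manipulation (Figs.~\ref{fig:135step1}, \ref{fig:135step2}, \ref{fig:024}). It simplifies with the two-tetrahedron $\beta$-$\gamma$ contraction of Fig.~\ref{fig:contracted} and embeds the network in $S^3$. It then alternates isotopies, $\delta$-moves and a cyclic relabelling $\xi$ of $\kappa=[012345]$, until the left-hand diagram coincides with the one obtained from $\sigma_0,\sigma_2,\sigma_4$. For example, the shift $i\mapsto i+1$ exchanges odd and even labels, hence $[024]\leftrightarrow[135]$ and the two triples of 4-simplices. Your proposal has no substitute for this relabelling-plus-isotopy argument.

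The prefactor also does not come out of your chain. You have $\sum K_{ab}\ast K_{bc}\ast K_{ca}=(\delta_{02}\delta_{24})^{-2}\sum(\text{diagonal at }[024])$ and $\sum(\text{diagonal at }[135])=(\delta_{13}\delta_{35})^{2}\sum(\text{cyclic at }[135])$. Even granting the middle step, you get $(\delta_{13}\delta_{35})^{2}(\delta_{02}\delta_{24})^{-2}$, not $(\delta_{02}\delta_{24}\delta_{13}\delta_{35})^2$. ``Combining with the inverse of the first factor'' is not an allowed move, since that factor enters the equation with exponent $-2$. The paper instead obtains $(\delta_{13}\delta_{35})^2$ and $(\delta_{02}\delta_{24})^2$ each from an application of the \emph{inverse} of \eqref{3-3identity}. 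On the left-hand contraction it uses the labelling $(i,j,k)=(1,3,5)$, $(a,b,c)=(0,2,4)$; on the right it applies the inverse move to the $0$-$2$-$4$ contraction. Your accounting therefore has to be redone.

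Finally, the Kirby-colour check via Theorem~\ref{20jeval} cannot replace the missing step. That theorem evaluates a single 4-simplex after summing its own edge labels independently, and it needs strict planar-pivotality and the $\pi$-scalar assumption. A product of three Kirby-coloured $20j$-symbols therefore contains cross terms with mismatched labels on shared edges and faces, so it is not the glued contraction. Nor can an identity for the total sum be ``lifted by linearity'' to individual decorations.
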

\begin{proof}
    To prove this, we express the left hand side of the equation \eqref{33move} as diagram of tensor network in Fig. \ref{fig:135step1}. Let $i=1,j=3,k=5$ and $a=0,b=2,c=4$ in \eqref{3-3identity}.  The factor $(\delta_{13}\delta_{35})^2$ appears upon applying the inverse of the move \eqref{3-3identity} to the 1-3-5 contraction.

        \begin{figure}[t]
        \centering
\includegraphics[width=1.0\linewidth]{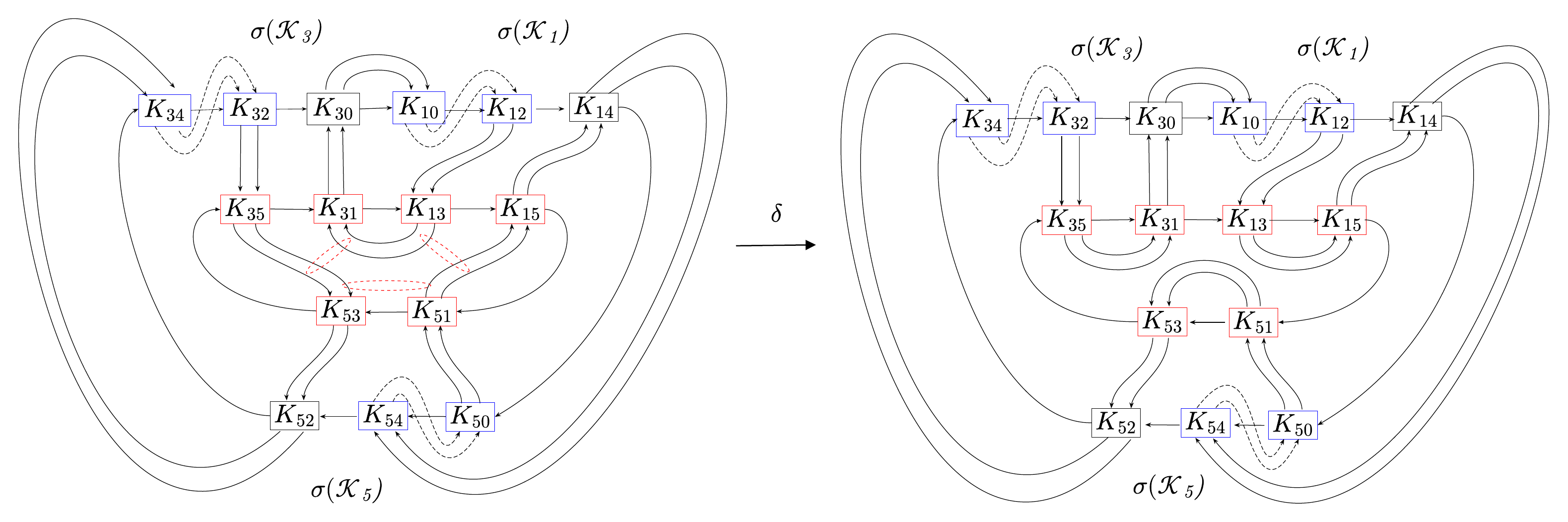}
        \caption{Tensor network diagram representing the contraction of the scattering amplitudes $\sigma({\mathcal{K}_1})$, $\sigma({\mathcal{K}_3})$ and $\sigma({\mathcal{K}_5})$ (left), and diagram after performing the $\delta$ moves highlighted in red (right).  }
        \label{fig:135step1}
    \end{figure}

   We will simplify the diagram by the identity in Fig. \ref{fig:contracted} and then embed the diagram inside $S^3$. 
   Following the Fig. \ref{fig:135step2}, we apply several isotopies and $\delta$ moves. 
   The last diagram of Fig. \ref{fig:135step2} can be obtained from the scattering amplitude of $\sigma({\mathcal{K}_0})$, $\sigma({\mathcal{K}_2})$ and $\sigma({\mathcal{K}_4})$ in Fig. \ref{fig:024}. 
     The factor $(\delta_{02}\delta_{24})^2$ appears upon applying the inverse of the move \eqref{3-3identity} to the 0-2-4 contraction.


        \begin{figure}[!htp]
        \centering
\includegraphics[width=0.8\linewidth]{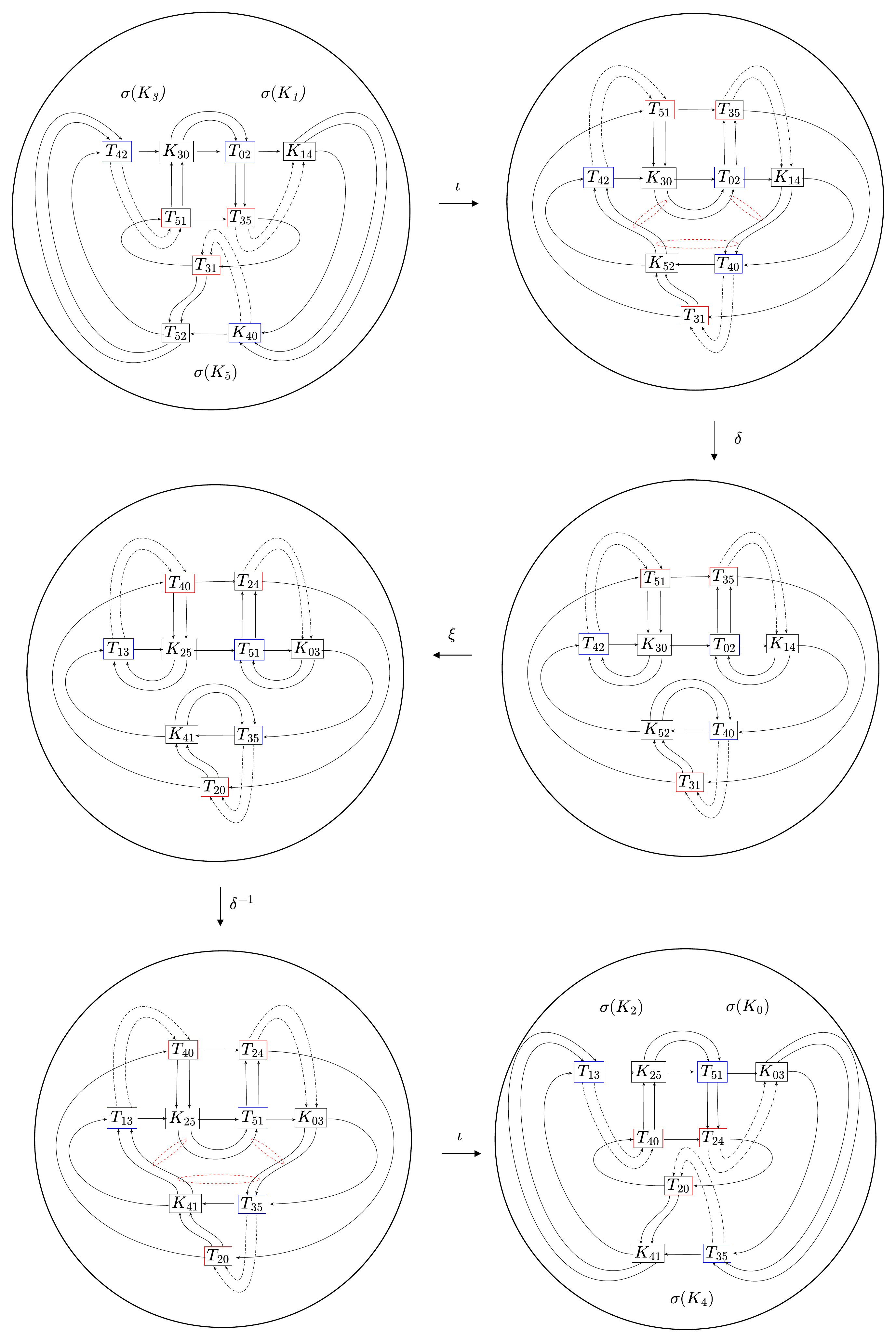}
        \caption{Manipulations of diagrams where  $\iota$ represents the isotropy by pulling the tensor outwards (red) or inwards (blue) inside $S^3$,  the operation $\xi$ is from cyclic permutation of labels $\kappa=[012345]$, and $\delta$ represents the manipulation \eqref{3-3identity}.    }
        \label{fig:135step2}
    \end{figure}

        \begin{figure}[!htp]
        \centering
\includegraphics[width=1.0\linewidth]{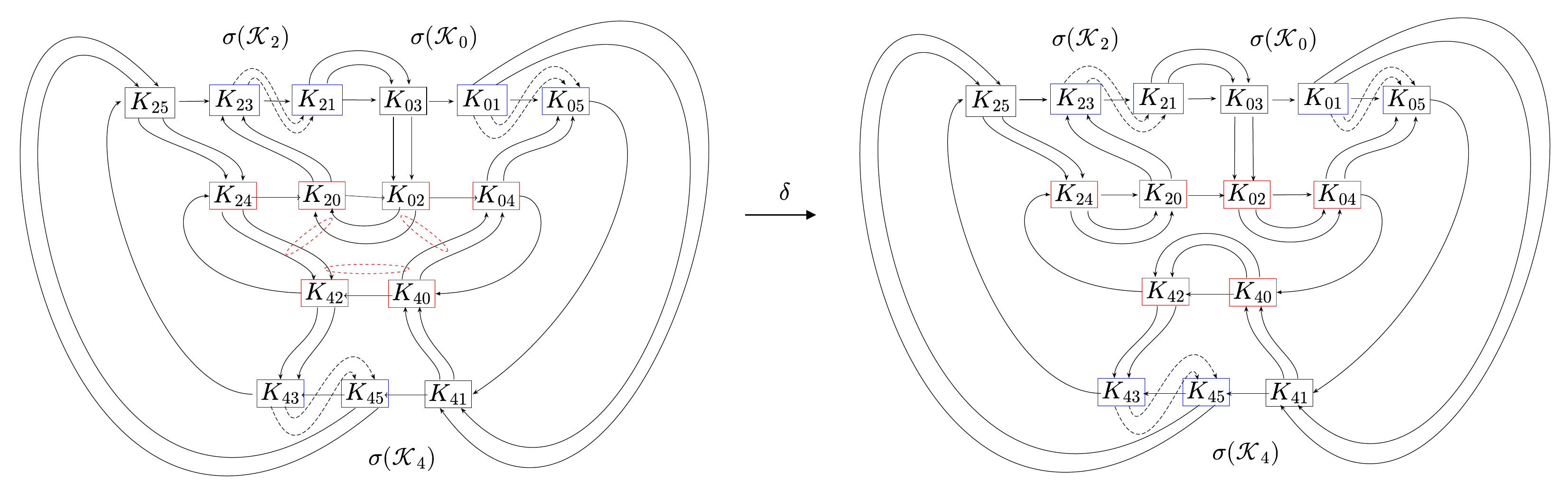}
        \caption{Tensor network diagram representing the contraction of the scattering amplitudes $\sigma({\mathcal{K}_0})$, $\sigma({\mathcal{K}_2})$ and $\sigma({\mathcal{K}_4})$ (left), and diagram after performing the $\delta$ moves highlighted in red (right).}
        \label{fig:024}
    \end{figure}

\end{proof}

We will now use the 3-3 Pachner move to deduce the rest of the Pachner moves. This explains why the 3-3 Pachner move is often regarded as the most important one \cite{Kashaev:2015,Kashaev2018,Douglas:2018}.

\subsubsection{4-2 and 5-1 Pachner moves}
We follow closely the proof strategy in \S 4.3.4 of \cite{Douglas:2018}.
\begin{theorem}
Given \eqref{33move}, the 4-simplex scattering amplitudes satisfy the \textbf{4-2 Pachner move}
    \begin{equation}
        \sum (\delta_{02}\delta_{24}\delta_{13}\delta_{35})^{-2}N_5\langle\sigma({\mathcal{K}_1})\ast\sigma({\mathcal{K}_3})\rangle = \sum \langle\sigma({\mathcal{K}_0})\ast\sigma({\mathcal{K}_2})\ast\sigma({\mathcal{K}_4})\ast\sigma(\bar{\mathcal{K}}_5)\rangle \,\label{42move}
    \end{equation}    
    up to isotopy. Given \eqref{42move}, the 4-simplex scattering amplitudes satisfy the \textbf{5-1 Pachner move}
    \begin{equation}
         \sum\left(\frac{\prod_{e\in\partial\kappa}\delta_e}{\prod_{\tau\in\sigma_3}|K(\tau)|\prod_{\tau\in\sigma_5}|K(\tau)|}\right)^2\langle\sigma({\mathcal{K}_1})\rangle =\sum \langle\sigma({\mathcal{K}_0})\ast\sigma({\mathcal{K}_2})\ast\sigma({\mathcal{K}_4})\ast\sigma(\bar{\mathcal{K}}_5)\ast\sigma(\bar{\mathcal{K}_3})\rangle \,\label{51move}
    \end{equation}    
    up to isotopy.
\end{theorem}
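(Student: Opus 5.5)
We derive both moves from the 3-3 move \eqref{33move}, using the section-retraction property of \textbf{Proposition \ref{sectionretraction}}, along the lines of \S 4.3.4 of \cite{Douglas:2018}. The basic idea is that a 4-simplex $\sigma(\mathcal{K}_5)$ contracted against its orientation reversal $\sigma(\bar{\mathcal{K}}_5)$ is, after summing over the interior decorations, equal to $N_5\cdot\id$ by \eqref{invertibility}. This lets us move one 4-simplex from one side of a Pachner identity to the other, at the cost of the normalisation constant \eqref{constant}.

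\textbf{4-2 move.} Start from the right-hand side of \eqref{42move}, namely $\sum\langle\sigma(\mathcal{K}_0)\ast\sigma(\mathcal{K}_2)\ast\sigma(\mathcal{K}_4)\ast\sigma(\bar{\mathcal{K}}_5)\rangle$. The first three factors are exactly the configuration on the right of \eqref{33move}. Since the contractions of tensor network diagrams associate freely (assuming the associahedron equation for $\pi$, as in \S\ref{localmoves}), we may apply \eqref{33move} to this sub-diagram. Reading it right-to-left, we replace $\sum\sigma(\mathcal{K}_0)\ast\sigma(\mathcal{K}_2)\ast\sigma(\mathcal{K}_4)$ by $(\delta_{02}\delta_{24}\delta_{13}\delta_{35})^{-2}\sum\sigma(\mathcal{K}_1)\ast\sigma(\mathcal{K}_3)\ast\sigma(\mathcal{K}_5)$. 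The product then contains the factor $\sigma(\mathcal{K}_5)\ast\sigma(\bar{\mathcal{K}}_5)$, glued along all boundary tetrahedra of $\sigma_5$ that are not shared with $\sigma_1,\sigma_3$. Summing over the decorations interior to this pair and using \eqref{invertibility} under the shadow trace, the pair collapses to $N_5\cdot\id$. What remains is $\sum(\delta_{02}\delta_{24}\delta_{13}\delta_{35})^{-2}N_5\langle\sigma(\mathcal{K}_1)\ast\sigma(\mathcal{K}_3)\rangle$, with the toroidal trace taken up to isotopy in $S^3$ (\textit{Remark \ref{3sphericality}}).

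\textbf{5-1 move.} Apply the same trick to \eqref{42move}, this time with $\sigma_3$. Multiply both sides by $\sigma(\bar{\mathcal{K}}_3)$ and contract along the tetrahedra that $\sigma_3$ shares with the rest. On the left, $\sigma(\mathcal{K}_3)\ast\sigma(\bar{\mathcal{K}}_3)$ collapses to $N_3\cdot\id$, so only $\langle\sigma(\mathcal{K}_1)\rangle$ survives, multiplied by the scalar $(\delta_{02}\delta_{24}\delta_{13}\delta_{35})^{-2}N_5N_3$. On the right, we obtain the five-simplex product appearing in \eqref{51move}.

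The final step is to expand $N_3$ and $N_5$ via \eqref{constant} into products of $\delta_e^{-1}$ and $|K(\tau)|^2$. Since these scalars are attached to fixed boundary simplices, we can pull them out of the sums. The $\delta$-factors, together with the $(\delta_{02}\ldots)^{2}$ prefactor, should combine into $\prod_{e\in\partial\kappa}\delta_e^2$ over the relevant edges. The $|K|^2$ factors are inverted, giving the displayed coefficient.

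\textbf{Main obstacle.} I expect the main difficulty to be justifying that the localised use of \eqref{invertibility} is legitimate inside a larger diagram. Proposition \ref{sectionretraction} is stated for the full contraction of $\sigma$ with $\bar\sigma$ under a \textbf{T}-shadow, whereas here only some legs are contracted and the remaining legs connect to other 4-simplices. I would handle this with the \th-net and $\delta$-recoupling moves: use Fig.~\ref{fig:deltascalar} and \eqref{3-3identity} to open up the shared legs and reduce to the fully contracted situation, tracking the resulting $\delta$ and $|K|$ factors. The second delicate point is the bookkeeping of which edges contribute to $\prod_e\delta_e$. I would check this against the explicit edge sets of $\sigma_3,\sigma_5\subset\partial\kappa$.
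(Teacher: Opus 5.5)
Your strategy is the paper's. For the 4-2 move you rewrite $\sigma(\mathcal K_0)\ast\sigma(\mathcal K_2)\ast\sigma(\mathcal K_4)$ via \eqref{33move} as $(\delta_{02}\delta_{24}\delta_{13}\delta_{35})^{-2}\,\sigma(\mathcal K_1)\ast\sigma(\mathcal K_3)\ast\sigma(\mathcal K_5)$, and then collapse $\sigma(\mathcal K_5)\ast\sigma(\bar{\mathcal K}_5)$ to $N_5$ using Proposition \ref{sectionretraction} and \eqref{invertibility}. For the 5-1 move you feed \eqref{42move} into the right-hand side of \eqref{51move} and collapse $\sigma(\mathcal K_3)\ast\sigma(\bar{\mathcal K}_3)$ to $N_3$. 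The paper applies \eqref{invertibility} to these partial contractions directly, without the localisation argument you sketch, so that part is extra care on your side rather than a deviation.

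The step that does not hold is the final matching of coefficients.

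\textbf{What the argument gives.} It shows that the right-hand side of \eqref{51move} equals $\sum(\delta_{02}\delta_{24}\delta_{13}\delta_{35})^{-2}N_5N_3\,\langle\sigma(\mathcal K_1)\rangle$. Note the exponent is $-2$, not the $+2$ you use when you combine the $\delta$'s. By \eqref{constant}, each $N$ puts $|K(\tau)|^2$ in the numerator and $\delta_e$'s in the denominator. So the coefficient is $\prod_{\tau\in\sigma_5}|K(\tau)|^2\prod_{\tau\in\sigma_3}|K(\tau)|^2/\prod_e\delta_e^2$, which is exactly what the paper's own proof writes down.

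\textbf{Why it does not match.} This is the reciprocal of the prefactor displayed in \eqref{51move}. Nothing in the argument ``inverts'' the $|K|^2$ factors. The paper's ``whence the result follows'' passes over the same mismatch: the displayed coefficient appears to be inverted, or equivalently belongs on the other side of the equation. You should report the coefficient the computation actually yields rather than assert agreement.

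\textbf{Pulling factors out of the sum.} You also cannot move these factors outside the sum on the grounds that they are ``attached to fixed boundary simplices''. First, $N$ in \eqref{constant} is itself a sum over decorations. Second, four of the five boundary tetrahedra of each of $\sigma_3=[01245]$ and $\sigma_5=[01234]$ contain the vertex $1$, which is interior on the five-simplex side of the move. Those $|K(\tau)|$ therefore depend on summed data and must stay inside $\sum$, as \eqref{51move} writes them.
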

\begin{proof}
    The strategy is the following. By the 3-3 Pachner move \eqref{33move}, the contractions on the right-hand side of \eqref{42move} can be written as 
    $$\text{r.h.s.} = \sum (\delta_{02}\delta_{24}\delta_{13}\delta_{35})^{-2}\langle\sigma({\mathcal{K}_1})\ast\sigma({\mathcal{K}_3})\ast\sigma({\mathcal{K}_5})\ast\sigma(\bar{\mathcal{K}}_5)\rangle\,.$$ However, by  \textbf{Proposition \ref{sectionretraction}} and \eqref{invertibility} the two contracted tensor networks $\sum_{\mathcal{K}_5}(\sigma({\mathcal{K}_5}),\sigma(\bar{\mathcal{K}}_5))$ form a section-retraction pair up to the constant $N=N_5\in\mathbb{C}^\times$ \eqref{constant}, hence we have $$ \text{r.h.s} = \sum (\delta_{02}\delta_{24}\delta_{13}\delta_{35})^{-2}N_5\langle\sigma({\mathcal{K}_1})\ast\sigma({\mathcal{K}_3})\rangle= \text{l.h.s}\,.$$
    
    Similarly for \eqref{51move}, by using the 4-2 Pachner move \eqref{42move} the contraction on the right-hand side can be written as $$\text{r.h.s} = \sum(\delta_{02}\delta_{24}\delta_{13}\delta_{35})^{-2}N_5 \langle\sigma({\mathcal{K}_1})\ast\sigma({\mathcal{K}_3}) \ast \sigma(\bar{\mathcal{K}}_3)\rangle \stackrel{\eqref{invertibility}}{=} \sum(\delta_{02}\delta_{24}\delta_{13}\delta_{35})^{-2}N_5N_3\langle\sigma(\mathcal{K}_1)\rangle = \text{l.h.s.}\,.$$ Now by \eqref{constant} we have
    \begin{equation*}
        (\delta_{02}\delta_{24}\delta_{13}\delta_{35})^{-2}N_5N_3 = \frac{\prod_{\tau\in\sigma_5}|K(\tau)|^2\prod_{\tau\in\sigma_3}|K(\tau)|^2}{\prod_{e\in\partial\kappa}\delta_e^2}\,,
    \end{equation*}
    where the denominator is a product over both the cyclic and non-cyclic edges (cf. Tab. \ref{tab:types})  in the 4-simplices contained in $\partial\kappa,$ whence the result follows. We leave the fun diagrammatic manipulations to the reader.
\end{proof}

\section{$\cC$-decorated cells}
\label{latticemodel}

In this section, we construct the membrane-net lattice model with the input datum $\cC$ \eqref{input}.

\subsection{Membrane-net lattice Hilbert space}
\label{3dlevinwen}

Consider an oriented combinatorial cell complex. We shall focus on its 3-skeleton, which we refer to as a \textit{membrane graph} $\Gamma$. 
\begin{definition}
    A \textbf{$\cC$-decoration on $\Gamma$} consists of the following assignments:
\begin{itemize}
    \item the 3-cells 
    are assigned with objects $\ttU\in\cC$,
    \item a 2-cell interfacing two 3-cells with decorations $\ttU,\ttV\in\cC$ is assigned a 1-morphism $F:\ttU\to \ttV$,
    \item a 1-cell interfacing two 2-cells with decorations $F,G\in\operatorname{Hom}_\cC(\ttU,\ttV)$ is assigned a 2-morphism $\eta:F\Rightarrow G$;
    \item 2-cells with the opposite orientation is assigned the 1-morphism adjoint $F^\dagger: {\ttV}\to {\ttU}$, and 
    \item 1-cells with the opposite orientation is assigned the 2-morphism adjoint $\eta^T: G\Rightarrow F$.
\end{itemize}
We denote such $\cC$-decorated cells by $\gamma\in \cC^\Gamma$, and call it a \textbf{membrane-net}. 
\end{definition}

The Hilbert space at a graph vertex/0-cell $v\in \Gamma$ is constructed as follows; see also Fig. \ref{fig:lwhilbspace}.
\begin{enumerate}
    \item \textbf{Edge degrees-of-freedom}: let the oriented surfaces $f_1,\dots,f_n$ incident upon an edge $e$, such that they partition a 3-ball containing $e$ into disjoint 3-cells. We decorate these 3-cells with objects $\ttU_1,\dots,\ttU_{n}\in\cC$. Given an ordering of these surfaces starting at, say, $f_1$, we set (note $F_n: \ttU_n\to \ttU_1$) $$ \cH_{e}(\gamma) =\operatorname{2Hom}_\cC\left(F_{f_n}\circ ( \dots\circ (F_{f_2}\circ F_{f_1}))\dots ),1_{\ttU_1}\right),\qquad F_i: \ttU_i\to \ttU_{i+1}\,.$$

    \item \textbf{Vertex Hilbert space}: let $\mathsf{st}^1(v)=\mathsf{st}^{1,+}(v)\cup\mathsf{st}^{1,-}(v)$ denote the set of oriented edges in $\Gamma$ incident upon a given vertex $v$.  Given an ordering of how the edges $e^\pm\in\mathsf{st}^{1,\pm}(v)$ are attached at $v$, we define 
    \begin{equation}
        \mathcal{H}_v(\gamma) = \bigotimes_{e\in\mathsf{st}^1(v)}\mathcal{H}_{e^\pm}(\gamma) \cong \operatorname{Hom}\left(\bigotimes_{e\in\mathsf{st}^{1,-}(v)}\cH_e(\gamma),\bigotimes_{e\in\mathsf{st}^{1,+}(v)}\cH_e(\gamma)\right) \,,\label{vertexspace}
    \end{equation} 
    where $\mathcal{H}_{e^+}(\gamma)=\mathcal{H}_e(\gamma)$ denotes the edge Hilbert space for incoming edges $e\in\mathsf{st}^{1,+}(v)$ and the dual space $\mathcal{H}_{e^-}=\mathcal{H}_e^*$ for outgoing edges $e\in\mathsf{st}(v)^{1,-}$. 
\end{enumerate}
The \textit{orientation reversal} $v^-$ of a vertex $v$ is one in which all edges and surfaces are attached to $v$ in the opposite orientation. 

\begin{figure}
    \centering
    \includegraphics[width=0.85\linewidth]{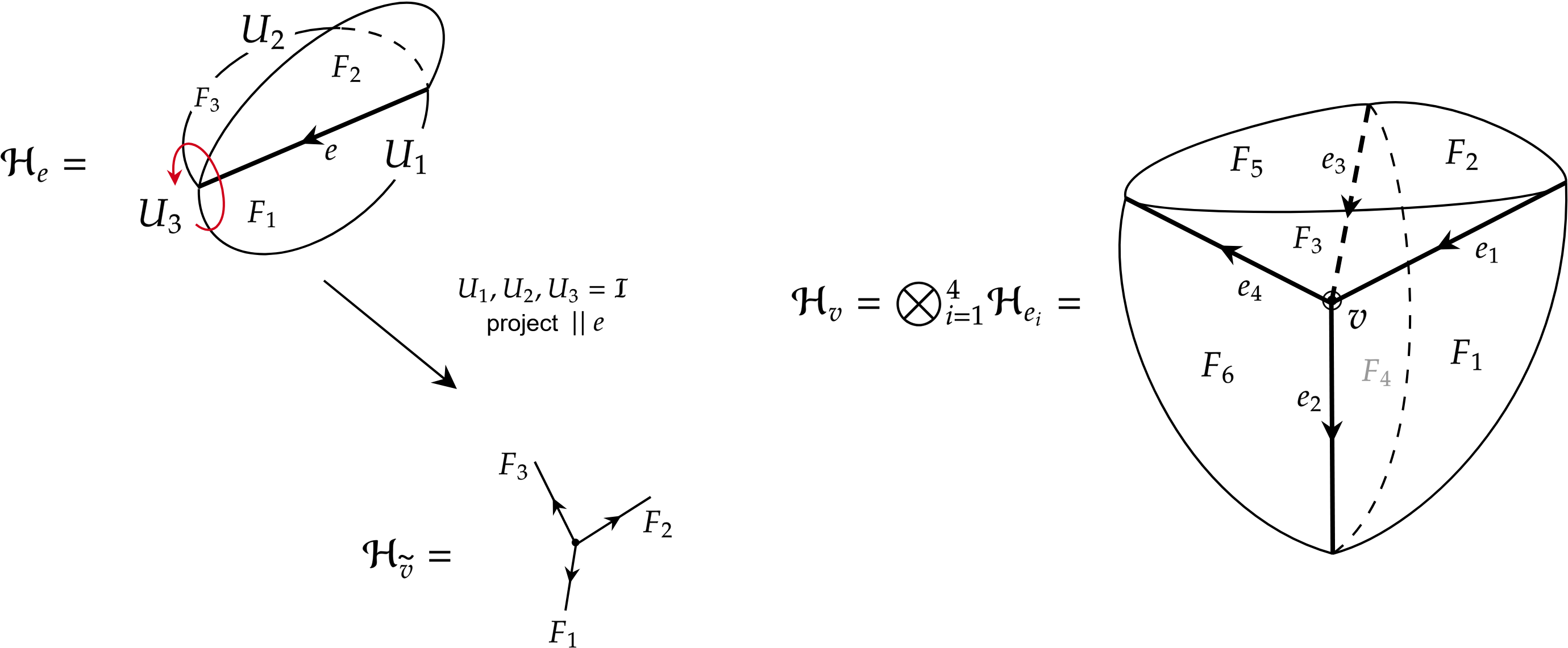}
    \caption{For the lattice graph $\Gamma$ Poincar{\'e} dual to a tetrahedron $\tau\subset D^3$ inscribed within a 3-ball $D^3$, these figures display the Hilbert spaces $\mathcal{H}_e,\mathcal{H}_v$ attached to the edges $e$ and vertices $v$ in our 3d higher Levin-Wen model. On the left, we also demonstrate the fact that, when the 3-cells are trivially-decorated, we recover the usual 2d Levin-Wen Hilbert space $\mathcal{H}^{\tilde v}$ by projecting the graph along the edge $e$.}
    \label{fig:lwhilbspace}
\end{figure}

\begin{definition}
For an oriented combinatorial cell complex $\Gamma$, we define the \textbf{membrane-graph Hilbert space} as
 \begin{equation}
     \cH^\Gamma=\bigoplus_{\gamma\in \operatorname{Irr}(\cC)^{\Gamma}}\,\bigotimes_{v^\pm\in\Gamma} \mathcal{H}_{v^\pm}(\gamma)\cong \bigotimes_{v^\pm\in\Gamma}\bigoplus_{\gamma\in\operatorname{Irr}\cC}\cH_{v^\pm}(\gamma)\label{totalvertex}
 \end{equation}
the direct sum over {all} simple $\cC$-decorations on the graph $\Gamma$.
\end{definition}
\noindent Specifically for the graph $\Gamma$ dual to a tetrahedron $\tau$, we have one unique vertex $v$ with $|\mathsf{st}^1(v)| =4$ edges incident upon it, and four edges $e_1,\dots,e_4$ each with $n=3$ surfaces $f_{ij}$ attached to it; see Fig. \ref{fig:lwhilbspace}. 

\begin{rmk}
    Let $\gamma\in (\Omega\cC)^\Gamma$ denote a $\cC$-decoration on $\Gamma$ where all 2-cell labels take values in the (braided) fusion category $\Omega\cC= \operatorname{End}_\cC(\ttI)$. By projecting/compactifying the configuration of surfaces along a chosen oriented edge (see Fig. \ref{fig:lwhilbspace}), we obtain a trivalent vertex $\tilde v$. By thinking of 1-endomorphisms $F_{f_i}\in\Omega\cC$ as objects $a,b,c,\dots$ in the fusion category $C=\Omega\cC$, then the edge Hilbert space becomes $$\cH_e(\gamma) =\operatorname{2Hom}_\cC(F_{f_3}\circ F_{f_2}\circ F_{f_1},1_{{\ttI}})  \xrightarrow{\sim}  H_{\tilde v} =\operatorname{Hom}_{C}(a_{\tilde e_1}\otimes b_{\tilde e_2}\otimes c_{\tilde e_3},\ttI)\,,$$ which is precisely the string-net Hilbert space \cite{Levin2004} with input $C$. See also \cite{fuchs2025spherical}.
\end{rmk}





\subsubsection{Equivalence with the triangulated Hilbert space}
We now consider a particular class of membrane graphs $\Gamma$ associated to a 3-manifold. 
\begin{theorem}\label{levinwentet}
    Let $(M^3,T_{M^3})$ be a triangulated closed oriented 3-manifold. If the graph complex $\Gamma$ is dual to the triangulation $T_{M^3}$, then the membrane-graph Hilbert space \eqref{totalvertex} and the triangulated Hilbert space \eqref{totaltetspace} are isomorphic.
\end{theorem}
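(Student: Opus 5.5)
The plan is to build the isomorphism locally, tetrahedron by tetrahedron, using Poincaré duality between $T_{M^3}$ and $\Gamma$, and then assemble it over the direct sums.

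Under duality, each tetrahedron $\tau\subset T_{M^3}$ corresponds to a unique vertex $v_\tau\in\Gamma$. Each face $[ijk]$ corresponds to an edge $e_{[ijk]}$ incident to $v_\tau$. Each edge $[ij]$ corresponds to a 2-cell $f_{[ij]}$, and each vertex $[i]$ to a 3-cell. A simple $\cC$-decoration $K:T_{M^3}\to\cC$ (as in Definition \ref{2decorate}) therefore transports to a simple membrane-net $\gamma_K\in\operatorname{Irr}(\cC)^\Gamma$, and conversely. The adjoint conditions ($F_{[ji]}=F_{[ij]}^\dagger$, $\eta^\dagger_{[ijk]}=\eta^T_{[kji]}$) match exactly the orientation-reversal rules in the definition of a membrane-net. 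This gives a bijection between the index sets of the direct sums in \eqref{totaltetspace} and \eqref{totalvertex}. It then suffices to produce, for each simple $K$ and each $\tau^\pm$, a linear isomorphism $\cH^\pm_{K(\tau)}\cong\cH_{v_\tau^\pm}(\gamma_K)$ compatible with gluing. Tensoring over tetrahedra and summing over $K$ then yields the theorem.

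For the local isomorphism, note that each dual edge $e_{[ijk]}$ is surrounded by three 2-cells $f_{[ij]},f_{[jk]},f_{[ki]}$. Hence $\cH_{e}(\gamma)=\operatorname{2Hom}_\cC(F_{[ki]}\circ F_{[jk]}\circ F_{[ij]},1_{\ttU_{[i]}})$. Using the unitary adjunction co/units and \eqref{unitaryadjoint} (bending $F_{[ki]}=F_{[ik]}^\dagger$ across), this is linearly isomorphic to $\operatorname{2Hom}_\cC(F_{[jk]}\circ F_{[ij]},F_{[ik]})$, which is the space where the face label $\eta_{[ijk]}$ lives. Its transpose/dual lives in $\cH_e^*$, which accounts for the edges with $e\in\mathsf{st}^{1,-}(v)$. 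The four faces of $\tau$ split as $\Delta^+\cup\Delta^-$. The factors $\eta_{\Delta^\pm}$ in \eqref{Deltastates} consist of one $\eta$ and one $\eta^T$ each, so the four-fold tensor product in \eqref{taustate+} matches \eqref{vertexspace}. Two faces enter as incoming legs (the $\eta$'s) and two as outgoing legs (the $\eta^T$'s). The whiskerings by $F_{[32]}$, $F_{[10]}$ and the associator $\alpha$ in \eqref{taustate+} are invertible operations; they only fix a bracketing and do not change dimensions. So the map $|\tau^+_K\rangle\mapsto$ (bent face labels) is an isomorphism, as it is built from compositions with invertible $\alpha$ and dagger co/units. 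The negative orientation is handled by Proposition \ref{3dreversal}, which identifies $\cH^-$ with the $\dagger T$-conjugate, matching $v^-$.

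The main obstacle is showing that these local isomorphisms are canonical enough to be independent of the choices involved: the edge orderings at $v$, the starting surface $f_1$ in $\cH_e$, and the bracketing. Only then do they glue consistently across shared faces of neighbouring tetrahedra. The cyclic reordering of surfaces around an edge is controlled by the pivotal structure, since rotating $F_{f_n}$ to the front uses $\phi_F$ and the co/units. I would first try to show it is absorbed by the $A_4$-representation \eqref{orientationrep}, whose existence the paper assumes. Rebracketings are handled by the associators together with the pentagon/associahedron coherence of \S\ref{pentagonatorscalars}. Finally, I would check that a face shared by $\tau,\tau'$ is assigned $\eta$ on one side and the dual leg on the other, so that the two local identifications use mutually inverse bendings. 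This follows from the snake equations.
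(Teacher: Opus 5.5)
Your proposal is correct and follows essentially the same route as the paper. Like the paper, you dualize cells, bend each edge space $\operatorname{2Hom}_\cC(F_{[ki]}\circ F_{[jk]}\circ F_{[ij]},1_{\ttU_{[i]}})$ into the face space $\operatorname{2Hom}_\cC(F_{[jk]}\circ F_{[ij]},F_{[ik]})$ (or its transpose for outgoing edges) using the unitary co/units, assemble the four face spaces into $\cH^+_{K(\tau)}$ up to the associator, treat $v^-$ analogously, and sum over simple decorations. The differences are minor. The paper handles the negative orientation by repeating the argument rather than invoking Proposition \ref{3dreversal}. It also needs none of your canonicity/gluing analysis: neither \eqref{totaltetspace} nor \eqref{totalvertex} imposes a matching condition on shared faces, so any fixed choice of orderings already gives an isomorphism, and ordering-independence is deferred to Remark \ref{branenetpivotal}.
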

\begin{proof}
    Consider first the graph $\Gamma$ dual to a (positively-oriented) tetrahedron $\tau^+=\tau\subset D^3$ inscribed within a 3-ball $D^3$; see Fig. \ref{fig:lwhilbspace}. For an edge $e\in\Gamma$ dual to an ordered 2-simplex $\Delta=[012]$, its attached three surfaces $f_1,f_2,f_3$ are dual to respectively the 1-simplices $[01],[12],[20]$ respectively, and the 3-cells in $\pi_0(D^3\setminus\Gamma)$ are dual to the 0-simplices $[0],[1],[2]$. These correspondences are understood implicitly in the following.
    
    Consider a $\cC$-decoration on $\Gamma$. We pick a "starting" 3-cell corresponding to the vertex decoration $\ttU_1=\ttU_{[0]}$. For each edge $e$, define a linear map which sends a 2-morphism $\eta: F_{f_3}\circ (F_{f_2}\circ F_{f_1})\Rightarrow 1_{\ttU_1}$ to a $\cC$-decorated 2-simplex 
    \begin{equation*}
        \eta_{[012]}=(e_{F_{f_3}}\circ (F_{f_2}\circ F_{f_1}))\ast(F_{f_3}^\dagger\circ \eta_\Delta)
    \end{equation*}
     given by \eqref{Deltastates}. By dagger adjunction \eqref{unitaryadjoint}, this define bijections which give
    \begin{align}
        \cH_{e^+}(\gamma)&=\operatorname{2Hom}(F_{3}\circ (F_{2}\circ F_{1}),1_{\ttU_1})\to \operatorname{2Hom}(F_{[12]}\circ F_{[01]},F_{[02]}), \nonumber\\
         \cH_{e^-}(\gamma) &=\operatorname{2Hom}((F_{2}\circ F_{1})\circ F_{3},1_{\ttU_1})\to  \operatorname{2Hom}(F_{[02]},F_{[12]}\circ F_{[01]})\label{dualmap}
    \end{align}
    for the two different orientations on the edge $e$.

    Now consider the vertex space $\mathcal{H}_v$. As can be seen in Fig. \ref{fig:lwhilbspace}, the constituent edge Hilbert spaces (with specified surface ordering) are given by 
    \begin{align*}
        \cH_{e_1^+}(\gamma) = \operatorname{2Hom}(F_3\circ (F_2\circ F_{1}),1_{\ttU_1}),\qquad \cH_{e_4^-}(\gamma)=\operatorname{2Hom}((F_5\circ F_6)\circ F_3,1_{\ttU_3}),\\
        \cH_{e_2^+}(\gamma) = \operatorname{2Hom}(F_2\circ (F_5\circ F_4),1_{\ttU_2}),\qquad \cH_{e_3^-}(\gamma) = \operatorname{2Hom}((F_4\circ F_1)\circ F_6,1_{\ttU_4}).
    \end{align*}
    The above maps \eqref{dualmap} then induce linear isomorphisms 
    \begin{align*}
        \cH_{e_1^+}(\gamma)\otimes\cH_{e_4^-}(\gamma)\xrightarrow{\sim} \operatorname{2Hom}_\cC(F_{[02]},F_{[32]}\circ F_{[03]} )\otimes  \operatorname{2Hom}_\cC(F_{[12]}\circ F_{[01]},F_{[02]}) \\ 
         \cH_{e_2^+}(\gamma)\otimes \cH_{e_3^-}(\gamma)\xrightarrow{\sim}  \operatorname{2Hom}_\cC(F_{[13]},F_{[03]}\circ F_{[10]})\otimes \operatorname{2Hom}_\cC(F_{[23]}\circ F_{[12]},F_{[13]})
    \end{align*}
    with the space spanned by a $\cC$-decoration $\eta_{\Delta^\pm}$ defined in \eqref{Deltastates}. Putting these together, up to a composition associator $\alpha$ we thus achieve a map
    \begin{equation*}
        \cH_{v^+}(\gamma)=\cH_v(\gamma) \xrightarrow{\sim}\cH_{K(\tau)}= \cH_{K(\tau)}^+\,,\qquad K: \tau\to \cC\,,
    \end{equation*}
    which lands precisely upon the tetrahedron states \eqref{taustate+}. 
    
    Through a similar argument with the opposite orientation, we have $\cH_{K(\tau)}^-\cong \cH_{v^-}(\gamma)$. Summing up all (simple) $\cC$-decorations on all oriented tetrahedra in a triangulation $T_{M^3}$ thus gives us $$\cH^\Gamma=\bigoplus_{\gamma\in\operatorname{Irr}(\cC)^{\Gamma}}\bigotimes_{v^\pm\in\Gamma}\cH_{v^\pm}(\gamma)\xrightarrow{\sim} \bigoplus_{K_s: T_{M^3}\to\cC}\,\,\bigotimes_{\tau^\pm\subset T_{M^3}}\cH_{K_s(\tau)}^\pm=\cH_{T_{M^3}}\,,$$ as desired.
    
\end{proof}

\begin{figure}[t]
\begin{subfigure}[h]{0.4\linewidth}
\hspace*{25pt}
\includegraphics[width=1\linewidth]{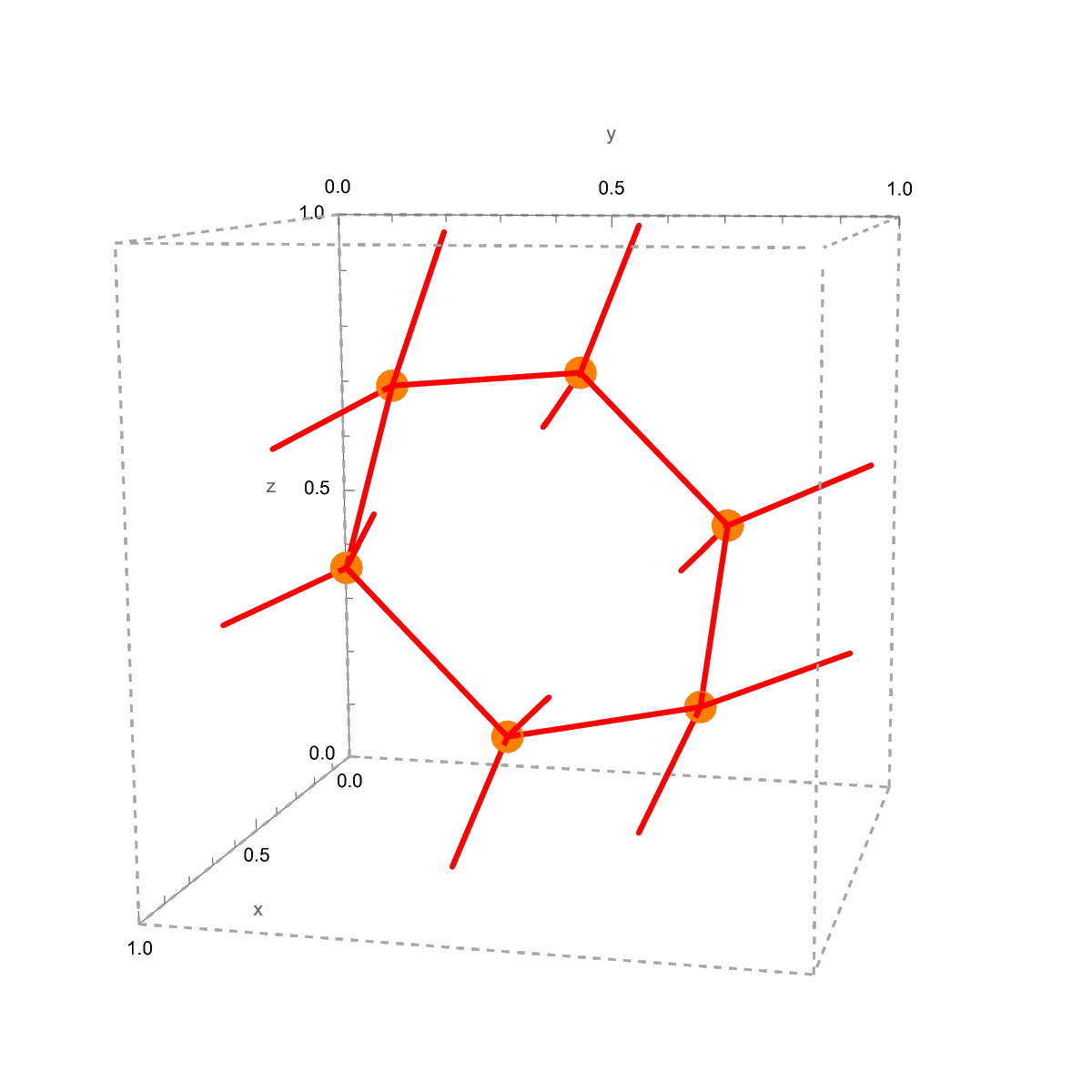}
\end{subfigure}
\hfill
\begin{subfigure}[h]{0.4\linewidth}
\includegraphics[width=0.65\linewidth]{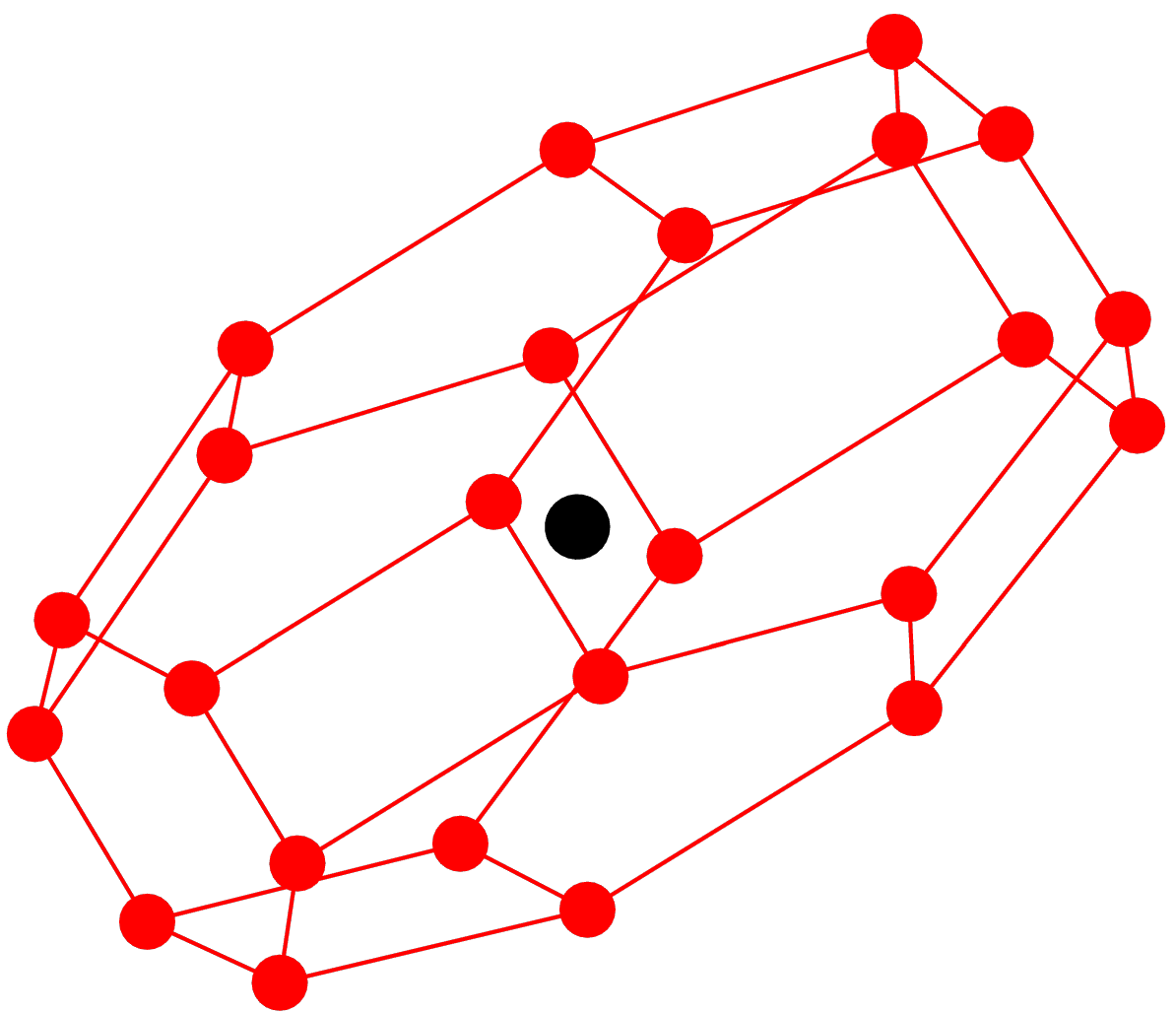}
\end{subfigure}%
\vspace*{-0.5cm}\caption{Illustration of dual lattice model of 3-simplexes forming cubes \cite{Xi:2021wto}. The left hand side figure shows the dualized triangulation of a single cube. The right hand side presents the 3d dual cell structure.}
    \label{fig:MembraneNet}
\end{figure}

    

The lattice given by the dual graph complex of a triangulation is homeomorphic to the so-called \textbf{diamond cubic}. In this case, the (orientation-preserving) point group symmetry $\Lambda^+$ of the graph $\Gamma$, preserving a vertex $v$, is isomorphic to the group of (orientation preserving) symmetries $A_4$ of the tetrahedron $\tau$ dual to $v$. If the 3d manifold admits a good cover, we can always partition the space into cubes, and then into tetrahedra. The corresponding lattice then takes shape of a multi-layer honeycomb-parallelepipeds as shown in Fig. \ref{fig:MembraneNet}.

\begin{rmk}\label{branenetpivotal}
    Note that, strictly speaking, the above definition of the vertex Hilbert space $\cH_v$ depends on the choice of the ordering of the surfaces $f$ and edges $e$ attached to each edge $e$ and the vertex $v$. Different orderings are related by an action of $x\in\Lambda^+$ --- the orientation-preserving lattice point group --- and hence $\cH_v$ should furnish a unitary $\Lambda^+$-representation. In analogy with the string-net construction \cite{Fuchs:2023pyg}, this representation (and equivalently the $A_4$-representation \eqref{orientationrep}) should arise from the {canonical} pivotality of $\cC$ as described in \S \ref{pivotality}.  The dependency on the ordering can then be removed by taking the limit
    \begin{equation*}
        \cH_v(\gamma) = \lim_{x\in\Lambda^+} \cH_{v}(x\cdot \gamma),\qquad x\cdot-:\Gamma^\cC\to {(x\cdot \Gamma)}^\cC\,.
    \end{equation*} 
     of the diagram formed by the isomorphisms induced by $x\in\Lambda^+$.
\end{rmk}

\subsubsection{Lattice local moves}\label{hamiltonian}

As in the original Levin-Wen string-net model, the total Hilbert space of the membrane-net model is defined as the linear span of all admissible labelings of the underlying lattice. Explicitly,
\[
\mathcal{H}_{\rm tot}
=
\operatorname{span}
\bigl\{
\text{fully labeled membrane-net configurations}
\bigr\}\,,
\]
which is isomorphic to $\bigoplus_{\gamma\in(\operatorname{Irr}\cC)^\Gamma}\cH^\Gamma(\gamma)\cong \cH^\Gamma$ \eqref{totalvertex}.

\begin{figure}
    \centering
    \includegraphics[width=0.85\linewidth]{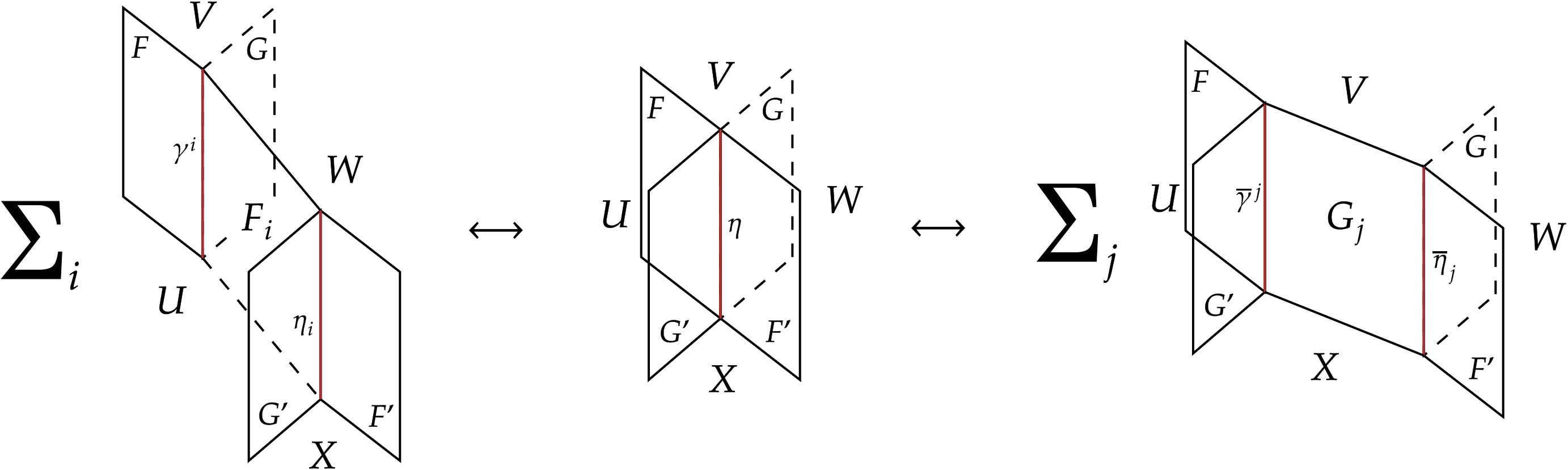}
    \caption{The graphical representation of the natural isomorphism implementing the F-move.}
    \label{fig:Fmove}
\end{figure}

\begin{proposition}\label{Fmove}
    Let $\ttU,\ttV,\ttX,\ttW\in\cC$ and $F:\ttU\to \ttV,\, G:\ttV\to \ttW $ and $G':\ttU\to \ttX,\,F': \ttX\to \ttW$. There are linear isomorphisms
$$\bigoplus_i \operatorname{2Hom}_\cC(G\circ F,F_i)\otimes\operatorname{2Hom}_\cC(F_i,F'\circ G') \cong \bigoplus_j \operatorname{2Hom}_\cC(F,G_j\circ G')\otimes\operatorname{2Hom}_\cC(G\circ G_j,F')\,,$$
where $F_i\in \operatorname{Hom}_\cC(\ttU, \ttW),\, G_j\in \operatorname{Hom}_\cC(\ttX,\ttV)$ are simple. This is the \textbf{F-move}, and is graphically represented by Fig. \ref{fig:Fmove}.
\end{proposition}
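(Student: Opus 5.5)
The plan is to show that both sides are linear-isomorphic to the single space $\operatorname{2Hom}_\cC(G\circ F,F'\circ G')$. For the left side this is semisimplicity of $\operatorname{Hom}_\cC(\ttU,\ttW)$. For the right side it is semisimplicity of $\operatorname{Hom}_\cC(\ttX,\ttV)$, combined with the adjunctions $F^\dagger\vdash F$, $G^\dagger\vdash G$ etc. and the conjugation isomorphism \eqref{conjugation} of Theorem \ref{shadowproperties}. Composing the two identifications then gives the F-move.

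\textbf{Left side.} By local semisimplicity and idempotent completeness (Definition \ref{presemisimple}), the hom-category $\operatorname{Hom}_\cC(\ttU,\ttW)$ is finite semisimple. The idempotents $p_{F_i}$ provide a resolution of the identity on $G\circ F$, as in \eqref{deltadecouple}. The map
\begin{equation*}
\bigoplus_i \operatorname{2Hom}_\cC(G\circ F,F_i)\otimes\operatorname{2Hom}_\cC(F_i,F'\circ G')\to\operatorname{2Hom}_\cC(G\circ F,F'\circ G')
\end{equation*}
is given by $a\otimes b\mapsto b\ast a$. It is an isomorphism because the composition pairing through a simple $F_i$ is the standard one for a semisimple category, with $\operatorname{End}(F_i)\cong\bbC$. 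Its inverse inserts a dual basis of $\operatorname{2Hom}(G\circ F,F_i)$ with respect to the \th-net pairing \eqref{betascalar}, normalized by the $\beta$-scalars.

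\textbf{Right side.} Using the unit $\iota_{G'}$ and counit $e_{G'}$ together with the dagger-adjunction compatibility \eqref{unitaryadjoint}, I rewrite
\begin{equation*}
\operatorname{2Hom}_\cC(G\circ F,F'\circ G')\cong \operatorname{2Hom}_\cC\big(G\circ (F\circ G'^\dagger),F'\big).
\end{equation*}
Here $F\circ G'^\dagger\in\operatorname{Hom}_\cC(\ttX,\ttV)$, and this step is exactly the conjugation map \eqref{conjugation}, with associators $\alpha$ inserted where needed. Semisimplicity of $\operatorname{Hom}_\cC(\ttX,\ttV)$ gives $F\circ G'^\dagger\cong\bigoplus_j G_j^{\oplus m_j}$ with $m_j=\dim\operatorname{2Hom}(G_j,F\circ G'^\dagger)$. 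Hence
\begin{equation*}
\operatorname{2Hom}_\cC\big(G\circ(F\circ G'^\dagger),F'\big)\cong\bigoplus_j\operatorname{2Hom}_\cC(G_j,F\circ G'^\dagger)\otimes\operatorname{2Hom}_\cC(G\circ G_j,F').
\end{equation*}
Transposing adjunctions once more, $\operatorname{2Hom}(G_j,F\circ G'^\dagger)\cong\operatorname{2Hom}(G_j\circ G',F)$. Applying the dagger $-^T$, or equivalently taking duals via the nondegenerate \th-net pairing, this is identified with $\operatorname{2Hom}(F,G_j\circ G')$, which yields the right side.

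\textbf{Main obstacle.} The delicate point is keeping the adjunction transports coherent: associators, the pivotal isomorphisms $\phi_F:F^{\dagger\dagger}\cong F$, and which side each adjoint is placed on. Mistakes here only change the maps by invertible factors, since $\operatorname{dim}\neq 0$ by presemisimplicity, so bijectivity is never at risk. To keep this under control, I would fix the unitary co/isometric choices of $e,\iota$ made after \eqref{unitaryadjoint}. With these choices every transport is an explicit isomorphism, and all normalizations reduce to the scalars $\beta$ and $\delta_{G_j}=\operatorname{dim}G_j$. Naturality in $F,G,F',G'$ then follows from naturality of each step.
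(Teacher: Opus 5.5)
Your proposal is correct and follows essentially the paper's route: decompose $\operatorname{2Hom}_\cC(G\circ F,F'\circ G')$ through simples $F_i$, transport along the adjunctions via \eqref{conjugation}, and decompose again through simples $G_j\in\operatorname{Hom}_\cC(\ttX,\ttV)$. One small correction simplifies your last step. Since $F\circ G'^\dagger$ sits in the contravariant slot, the multiplicity space that naturally appears is $\operatorname{2Hom}_\cC(F\circ G'^\dagger,G_j)$, not $\operatorname{2Hom}_\cC(G_j,F\circ G'^\dagger)$; the $G'$-adjunction identifies it directly with $\operatorname{2Hom}_\cC(F,G_j\circ G')$, as in the paper. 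This makes your extra $-^T$/duality step unnecessary, and that step is also shaky as justified: the thorn-net pairing \eqref{betascalar} needs a simple target, and $F$ need not be simple.
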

\begin{proof}
    By local semisimplicity, there is a natural isomorphisms
\begin{align*}
\operatorname{2Hom}_\cC(G\circ F,F'\circ G')&\cong \bigoplus_i \operatorname{2Hom}_\cC(G\circ F,F_i)\otimes\operatorname{2Hom}_\cC(F_i,F'\circ G'),\\
\operatorname{2Hom}_\cC(F\circ G'^\dagger,  G^\dagger\circ F')&\cong \bigoplus_j \operatorname{2Hom}_\cC(F\circ G'^\dagger,G_j)\otimes\operatorname{2Hom}_\cC(G_j,G^\dagger\circ F')
\end{align*}
 for simple 1-morphisms $F_i\in \operatorname{Hom}_\cC(\ttU, \ttW),\, G_j\in \operatorname{Hom}_\cC(\ttX,\ttV)$. Together with \eqref{conjugation}, we then have
\begin{align*}
    \bigoplus_i \operatorname{2Hom}_\cC(G\circ F,F_i)\otimes\operatorname{2Hom}_\cC(F_i,F'\circ G')&\cong \operatorname{2Hom}_\cC(G\circ F,F'\circ G')\stackrel{\eqref{conjugation}}{\cong}\operatorname{2Hom}_\cC(F\circ G'^\dagger,  G^\dagger\circ F')\\
    &\cong \bigoplus_j \operatorname{2Hom}_\cC(F\circ G'^\dagger,G_j)\otimes\operatorname{2Hom}_\cC(G_j,G^\dagger\circ F')\\
    &\cong  \bigoplus_j \operatorname{2Hom}_\cC(F,G_j\circ G')\otimes\operatorname{2Hom}_\cC(G \circ G_j,F')
\end{align*}
as desired.
\end{proof}
Upon a choice of basis, we will let $\mathsf{M}$ denote the matrix that implements the local semisimplicity decomposition,
\begin{equation*}
    \eta = \bigoplus_{F_i\in\operatorname{Irr}\cC(\mathtt{U},\mathtt{W})} \sum_{\substack{\gamma_i^a:F\circ G\Rightarrow F_i \\ \eta^{i,b}: F_i\Rightarrow F\circ G}} \mathsf{M}(\eta)^{ab}_{i} \gamma_i^a\otimes\eta^{i,b}\,,
\end{equation*}
where $\{\gamma^a_i\},\{\eta^i_b\}$ are bases for $\operatorname{2Hom}_\cC(G\circ F,F_i),\operatorname{2Hom}_\cC(F_i,F'\circ G')$ respectively.

The \textbf{parallel move} is a special case of the F-move, where $\mathtt{V}=\mathtt{X}$ and $\eta=\alpha\circ\beta$  is a composite given by  and $\alpha:F\Rightarrow F'$ and $\beta:G\Rightarrow G'$. 

\subsubsection{The 3-cell operator}
Given a fully-decorated lattice $\Psi\in\cH_\text{tot}$, we define the following operator.

\begin{definition}\label{celloper}
    Let $c \subset \Gamma$ denote a fundamental unit cell (right side of Fig. \ref{fig:MembraneNet}) labelled by an object $\ttU_c \in\cC$. Given a decorated 3-ball $B_3^F$ labelled by an object $\ttV$ and a 1-morphism $F=F_{ c}:\ttU_c\to\ttV$ on its boundary $\partial B^3=S^2$, we define an operator $B_c^F$ which inserts the $F$-decorated $B^3$ into the fundamental cell $c$,
    \begin{equation*}
        B_c^F \Psi  = \Psi\coprod B^3_F\,.
    \end{equation*}
    Of course, if no non-zero 1-morphism $F$ exists between the interior/exterior object labels $\ttV,\ttU_c$, then we put $B_c^F=0$.
\end{definition}

Let us show that the operator $\sum_FB^F_c = B^{\oplus F}_c$ is an idempotent. For simplicity, we assume $\cC$ is strictly planar-pivotal $\phi_F=\id_F$. Consider the product $B_c^FB_c^G$ for which $\ttU\xrightarrow{F}\ttV\xrightarrow{G}\ttW$ are composable simple 1-morphisms in $\cC$. By the parallel move \eqref{idparallel}, we attach the $F$- and $G$-decorated surfaces. Using the pivotality of $\cC$, we then isotopically enlarge this surface of attachment, turning the configuration into a sum of the $F_i$-decorated 3-balls with a decorated 2-sphere attached. This 2-sphere can be evaluated by contracting its 1-cell boundary, giving a sum of $\gamma_i\ast\gamma^i$ over the 2-morphisms $\gamma_i: F\circ G\Rightarrow F_i$.

Now since $\cC$ is $\mathsf{Hilb}$-enriched, it has a canonical dagger structure and hom-space inner product for which, given $\gamma^i=\gamma_i^T$, we have $\gamma_i\ast\gamma^i = \frac{\langle\gamma_i,\gamma_i\rangle}{\operatorname{dim}F_i}\id_{F_i}$. As such, provided the composition of simple 1-morphisms is normalized such that $\gamma_i\in \cC(F\circ G,F_i)$ has norm one, we have
\begin{align*}
         B_c^FB_c^G~ & = \sum_{F_i\in\operatorname{Irr}\cC(\ttU,\ttW)}\sum_{\gamma_i: F\circ G\Rightarrow F_i}\frac{\langle\gamma_i,\gamma_i\rangle}{\operatorname{dim}F_i}B^{F_i}_c = \sum_{F_i\in\operatorname{Irr}\cC(\ttU,\ttW)}\sum_{\gamma_i: F\circ G\Rightarrow F_i}\frac{1}{\operatorname{dim}F_i}B^{F_i}_c \\
         &=\sum_{F_i\in\operatorname{Irr}\cC(\ttU,\ttW)}B^{F_i}_c= B_c^{\oplus_iF_i}\,.
\end{align*}
By summing over simples $F,G$ we have
\begin{equation*}
    B_c^{\oplus F}B_c^{\oplus G} = B_c^{\oplus_iF_i}\,,
\end{equation*}
as desired.
     

\subsection{Membrane-net lattice Hamiltonian}
Given the aforementioned cell operator $B_c$, we can then begin the construction of the Hamiltonian.  Consider a membrane graph $\Gamma$. The \textit{unconstrained} Hilbert space 
$$V=\bigotimes_{v\in \Gamma}V_v = \bigotimes_{v\in \Gamma}\left(\bigotimes_{e\in\mathsf{st}(v)}\mathbb{C}^{L_e}\right)$$
is a composite of local Hilbert spaces $V_v$ at each vertex $v\in \Gamma$, which is itself composite; the dimension at each edge $e$ is given by
$$L_e =  n_e\cdot \#\{\text{simple 1-morphisms of $\cC$}\}\,,$$
where $n_e$ counts the number of faces of $\Gamma$ coincident upon $e$. 
\begin{definition}
      The \textbf{membrane-net lattice Hamiltonian} is
    \begin{equation*}
        \boldsymbol{H} = -\sum_v\mathcal{A}_v - \sum_c\mathcal{B}_c \in \mathcal{B}(V)\,,
    \end{equation*}
    where $\mathcal{A}_v: V_v\to \cH_v=\bigoplus_\gamma\cH_v(\gamma)$ is the projector onto the total vertex Hilbert spaces \eqref{vertexspace}, and
    \begin{equation*}
        \mathcal{B}_c = \sum_{F_c}B^{F_c}_c\,
    \end{equation*}
    in terms of the cell operator of \textbf{Definition \ref{celloper}}.
\end{definition}
The idea is that $\cA_v$ can be written as a collection of delta functions which enforce the decoration rules of \S \ref{3dlevinwen}. As such, the $+1$ eigenspace of $\prod_v\cA_v$ is precisely the membrane-net Hilbert space $\cH^\Gamma\subset V$,
\begin{equation*}
    \prod_v\mathcal{A}_v\Psi = \Psi,\qquad \forall~ \Psi\in\cH_\text{tot}\,.
\end{equation*}
We now evaluate the cell operator on this subspace.

\subsubsection{Pasting decorated surfaces with circle defects}
We begin with two composable decorated surfaces $\mathtt{U}\xrightarrow{F}\mathtt{V}\xrightarrow{G}\mathtt{W}$ with  "circle" 1-cells labelled by 2-morphisms $\alpha: F\Rightarrow H,\, \beta: G\Rightarrow L$. The goal is to perform a parallel move with this decorated stratification.

To describe these decorated circles, we must make use of the 2-dagger structure on $\cC$. We write
\begin{equation*}
    \alpha^T\ast \alpha = |\alpha|^2\cdot \id_F
\end{equation*}
in terms of the identity on $F$. The decorated semicircles with "upper/lower arc" decorated by $\alpha$/$\alpha^T$, and the 0-cell interfacing between these arcs is given by the 2-morphism composition map $\ast$, which produces the scalar $|\alpha|^2$; similarly for $\beta$. When $\cC$ is unitary, these scalars are trivial. Such a stratified surface is displayed on the left of Fig. \ref{fig:stackedUVW}.

\begin{figure}
\begin{subfigure}[h]{0.4\linewidth}
\hspace*{25pt}
\includegraphics[width=0.8\linewidth]{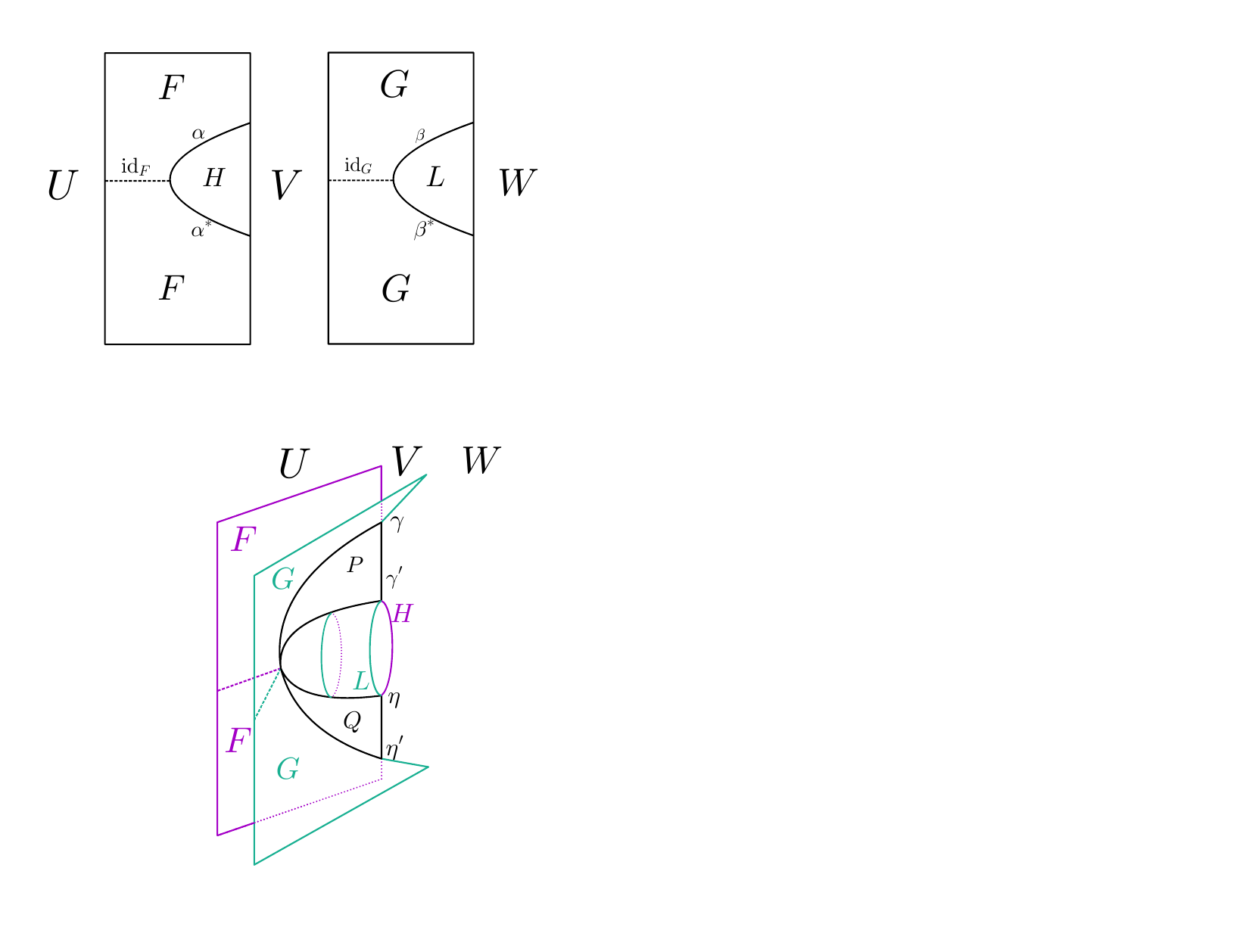}
\end{subfigure}
\hfill
\begin{subfigure}[h]{0.4\linewidth}
\includegraphics[width=0.5\linewidth]{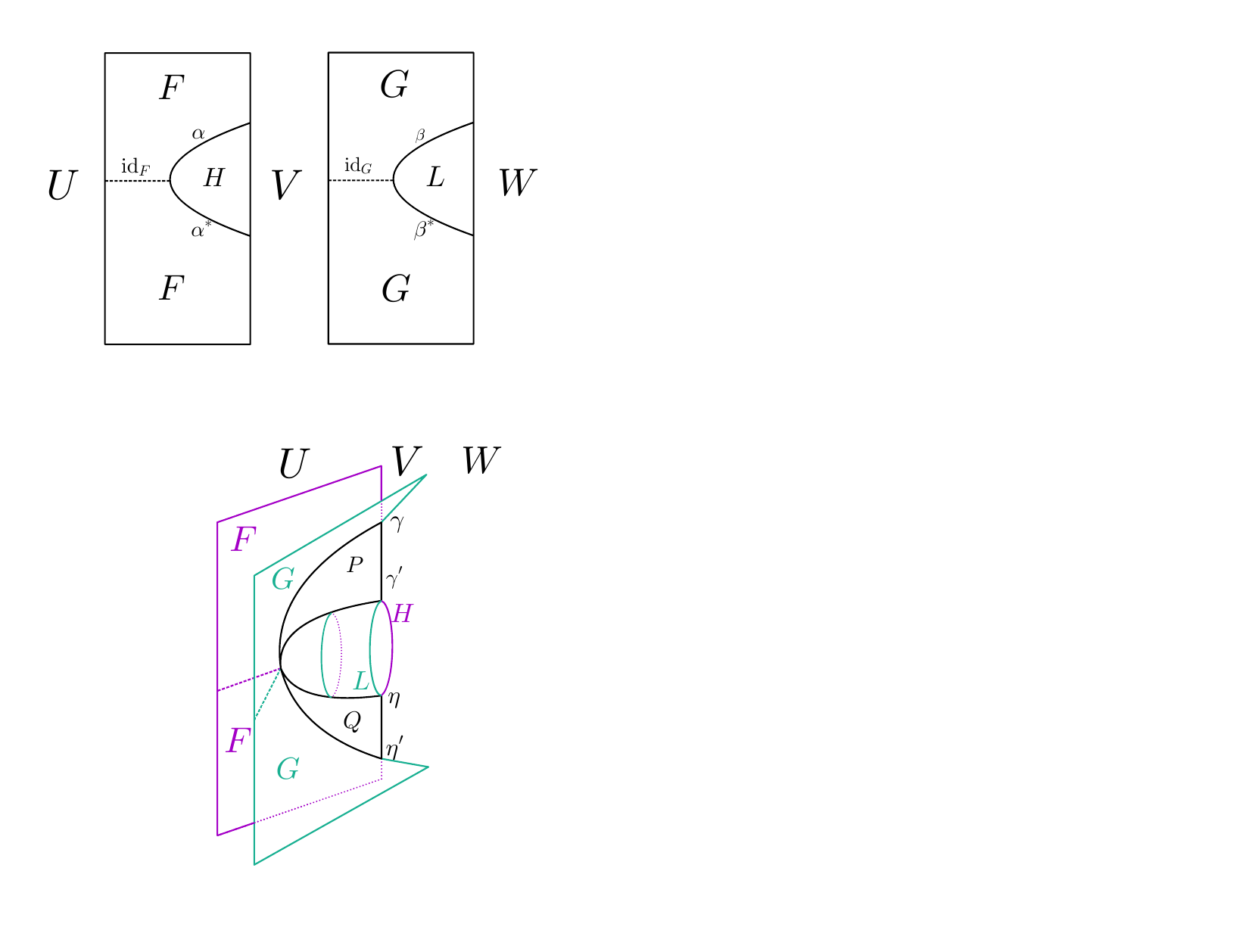}
\end{subfigure}%
\caption{Left: The decorated 2-cells with semicircular arcs as embedded decorated 1-cells. Right: The picture upon performing a parallel move at the 1-cell junctions $\alpha\circ\beta$ and $\alpha^*\circ \beta^* = \alpha^T\circ \beta^T$.}
\label{fig:stackedUVW}
\end{figure}

Then, we can proceed to perform a parallel move with the configurations 
$$F\xRightarrow{\alpha} H\xRightarrow{\alpha^T}F ,\qquad G\xRightarrow{\beta} L\xRightarrow{\beta^T}G\,, $$
which gives rise to a sum over the intermediate simple surfaces $P,Q: \mathtt{U}\to\mathtt{W}$, as well as their interfaces 
$$\gamma: F\circ G\Rightarrow P,\qquad\eta': Q\Rightarrow F\circ G$$
$$\gamma': P\Rightarrow H\circ L,\qquad\eta: H\circ L\Rightarrow Q$$
This is graphically presented as a "cone" made out of the two parallel faces $\mathtt{U}\xrightarrow{H}\mathtt{V}\xrightarrow{L}\mathtt{W}$, as displayed in the right of Fig. \ref{fig:stackedUVW}.

By attaching this configuration with its inverse, we obtain a local move which pastes together two parallel surfaces with circular decorations. This defines the fusion operation for two decorated surfaces, which after choosing a basis for the 2-hom-spaces involved reads (repeated indices are summed)
$$  (\alpha\circ \beta)\otimes(\alpha^T\circ\beta^T) = \bigoplus_{F_i,F_j\in\operatorname{Irr}\cC(\mathtt{U},\mathtt{W})} \sum_{\substack{\gamma_i^a:F\circ G\Rightarrow F_i \\ \eta_j^b: H\circ L\Rightarrow F_j}} \sum_{\substack{\eta^{i,c}:F_i\Rightarrow H\circ L \\ \gamma^{j,d}: F_j\Rightarrow F\circ G}}\mathsf{M}(\alpha\circ\beta)^{ac}_{i}\mathsf{M}(\alpha^T\circ\beta^T)^{bd}_j \gamma^a_i\otimes\eta^{i,c}\otimes\eta^b_j\otimes\gamma^{j,d} \,.$$
By contracting the "bubble" decorated by $H\circ L$, we get a 2-morphism $\eta^b_j\ast\eta^{i,c}:  F_i\Rightarrow H\circ L \Rightarrow F_j$ which by simplicity must be proportional to $\delta_{ij}$. Denote by $\sigma^b_c\delta_{ij}$ this scalar, we then have 
\begin{align*}
(\alpha\circ \beta)\otimes(\alpha^T\circ\beta^T) \xmapsto{\text{eval.}} &\bigoplus_{F_i,F_j\in\operatorname{Irr}\cC(\mathtt{U},\mathtt{W})} \sum_{\substack{\gamma_i^a:F\circ G\Rightarrow F_i \\ \gamma^{j,d}: F_j\Rightarrow F\circ G}}\left(\sum_{b,c}\mathsf{M}(\alpha\circ\beta)^{ac}_{i}\sigma^b_c\delta_{ij}\mathsf{M}(\alpha^T\circ\beta^T)^{bd}_i\right) \gamma^a_i\otimes\gamma^{j,d}\,.
\end{align*}
In the special case where $\alpha=\id_F,\beta=\id_G$ are both the identity (which are self-adjoint), then we put
\begin{equation}
    \id_F\circ \id_G= \bigoplus_{F_i\in\operatorname{Irr}\cC(\mathtt{U},\mathtt{W})} \sum_{\substack{\gamma_i^a:F\circ G\Rightarrow F_i \\ \gamma^{i,d}: F_i\Rightarrow F\circ G}}\mathcal{M}^{ad}_i \gamma^a_i\otimes\gamma^{i,d}\label{idparallel}
\end{equation}
where $\gamma^{i,d}=\gamma_{i,d}^T$ and the matrix $\mathcal{M}$ is constructed out of $\mathsf{M}(\id_{F}\circ \id_{G})=\mathsf{M}(\id_{F\circ G})$.

\medskip

We now apply \eqref{idparallel} to $B_c^F\Psi$, which pastes parts of the 3-ball $B^3$ onto the 2-cell boundaries $f_k\subset\partial c$ of the fundamental cell $c\subset \Gamma$. To be more explicit, given the 1-morphism decorations $\{G_{f_k}\}_{f_k\subset\partial c}$ on these boundary 2-cells, we then have (locally around $c)$
\begin{align}
    B_c^F\Psi\mid_c &= \bigotimes_{k} \id_F\circ \id_{G_{f_k}} \stackrel{\eqref{idparallel}}{=}\bigotimes_k \left(\bigoplus_{F_{i}\in\operatorname{Irr}\cC(\mathtt{U},\mathtt{W})} \sum_{\substack{\gamma_i^a:F\circ G_{f_k}\Rightarrow F_i \\ \gamma^{i,d}: F_i\Rightarrow F\circ G_{f_k}}}\mathcal{M}^{ad}_i \gamma^a_i\otimes\gamma^{i,d}\right)\nonumber\\
    &= \bigoplus_{\{F_{i_k}\}_k\in\operatorname{Irr}\cC(\mathtt{U},\mathtt{W})^{|\partial c|}}\bigotimes_k\left(\sum_{\substack{\gamma_{i_k}^a:F\circ G_{f_k}\Rightarrow F_{i_k} \\ \gamma^{i_k,d}: F_{i_k}\Rightarrow F\circ G_{f_k}}}\mathcal{M}^{ad}_{i_k} \gamma^a_{i_k}\otimes\gamma^{i_k,d}\right),\label{parallelball}
\end{align}
where the 1-morphism $F_{i_k}$ decorate the interior of the $k$-th boundary 2-cell and the 2-morphisms $\gamma_{i_k},\gamma^{i_k}=\gamma_{i_k}^T$ decorate its 1-cell boundary.

\begin{figure}
\begin{subfigure}[h]{0.4\linewidth}
\hspace*{25pt}
\includegraphics[width=0.8\linewidth]{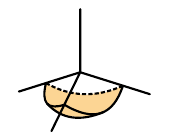}
\end{subfigure}
\hfill
\begin{subfigure}[h]{0.4\linewidth}
\includegraphics[width=0.8\linewidth]{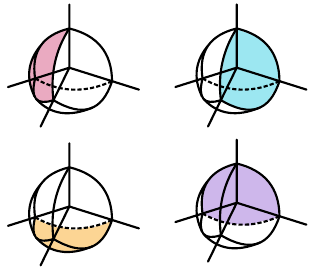}
\end{subfigure}%
\caption{Left: the local picture around a vertex of a fundamental cell $c$ upon an isotopy on the decorated stratification $B_c^F\Psi$, which pushed three of the combined 2-cells to the position of the coloured 2-cell shown here. Right: The contribution of all four cell operators $\prod_{c\in \mathsf{st}^3(v)}B_c^{F_c}\Psi$ around a central vertex $v$. The contribution from different cells are colour-coded.}
\label{fig:4branes}
\end{figure}

\subsubsection{Passing decorated surfaces through corners}
We now make the crucial assumption that an isotopy of the decorated stratification does not change its value. Under this assumption, we can then isotopically deform the decorated stratification obtained above $B_c^F\Psi\mid_c$. Concretely, this involves pushing the boundary 1-cells $\gamma_{i_k}$ to meet with the edges of the unit cell $c$. This can be accomplished away from a neighbourhood of the vertices. See the left of Fig. \ref{fig:4branes}.

Denote by $\mathsf{st}^n(v)$ the set of $n$-cells neighbouring a vertex $v\in\Gamma$. By applying $\prod_{c\in \mathsf{st}^3(v)}B_c^F$ to $\Psi$, we obtain a local picture of the form shown on the right of Fig. \ref{fig:4branes}. This stratified web of four decorated surfaces $G_{f_{k_1}}=G_{f^1},\dots,G_{f_{k_4}}=G_{f^4}$ surrounding $v$ pair-wise interface the \textit{new} decorations on the 2-cell boundaries, which are now given by the intermediate surfaces $F_{i_k}$ appearing in \eqref{parallelball}.

We enumerate these new boundary 2-cell simple decorations by $\left\{F_{i_m}\right\}_{{f_m\in\mathsf{st}^2(v)}}$. The 1-skeleton of the stratified web is labelled by the following 2-morphism
\begin{gather*}
    \gamma_{1}: F_{i_1}\rightarrow G_{f^1}\circ G_{f^2},\qquad \gamma_{2}: F_{i_2}\rightarrow G_{f^2}\circ G_{f^3},\qquad \gamma_{3}: F_{i_3}\rightarrow G_{f^3}\circ G_{f^4}\nonumber\\
    \gamma_{4}: F_{i_4}\rightarrow G_{f^4}\circ G_{f^1},\qquad
    \gamma_{5}: F_{i_5}\rightarrow G_{f^1}\circ G_{f^3},\qquad \gamma_{6}: F_{i_6}\rightarrow G_{f^2}\circ G_{f^4}
\end{gather*}
To evaluate this web, we decompose it into three pieces, as depicted in Fig. \ref{fig:piecutfromabove}. The gluing of a pair of these pieces can be  rearranged through the F-move for decorated surfaces \textbf{Proposition \ref{Fmove}} --- for instance, we have
\begin{gather}
  \cC(F_{i_3}\circ G_{f^1},G_{f^3}\circ F_{i_4})\xrightarrow{\sim}\cC(F_{i_3}\circ F_{i_4},G_{f^1}\circ G_{f^3}) \nonumber  \\
  (G_{f^3}\circ \gamma_4^T) \ast(\gamma_3\circ G_{f^1}) \mapsto \tilde\gamma_4^T\ast\tilde\gamma_3\,. \label{collar2morphis}
\end{gather}
This operation is depicted in Fig. \ref{fig:cuttingandgluing}.

\begin{figure}
    \centering
      \includegraphics[width=.6\linewidth]{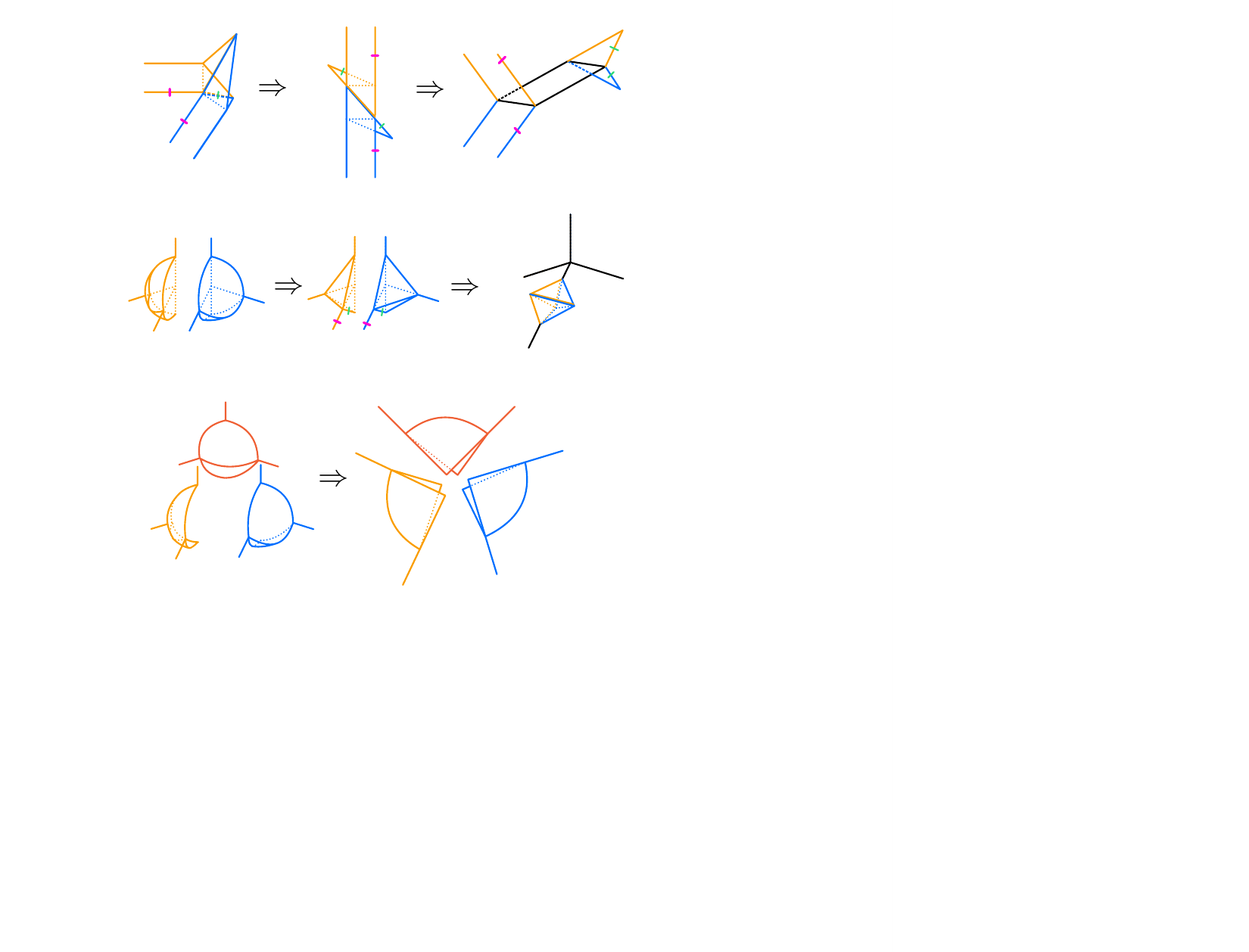}
  \caption{The decomposition of the decorated web arising from $\prod_{c\in\mathsf{st}^3(v)}B^{F_c}_c$ into three pieces  around a vertex $v$.}
  \label{fig:piecutfromabove}
\end{figure}

\begin{figure}
    \centering
  \includegraphics[width=0.8\linewidth]{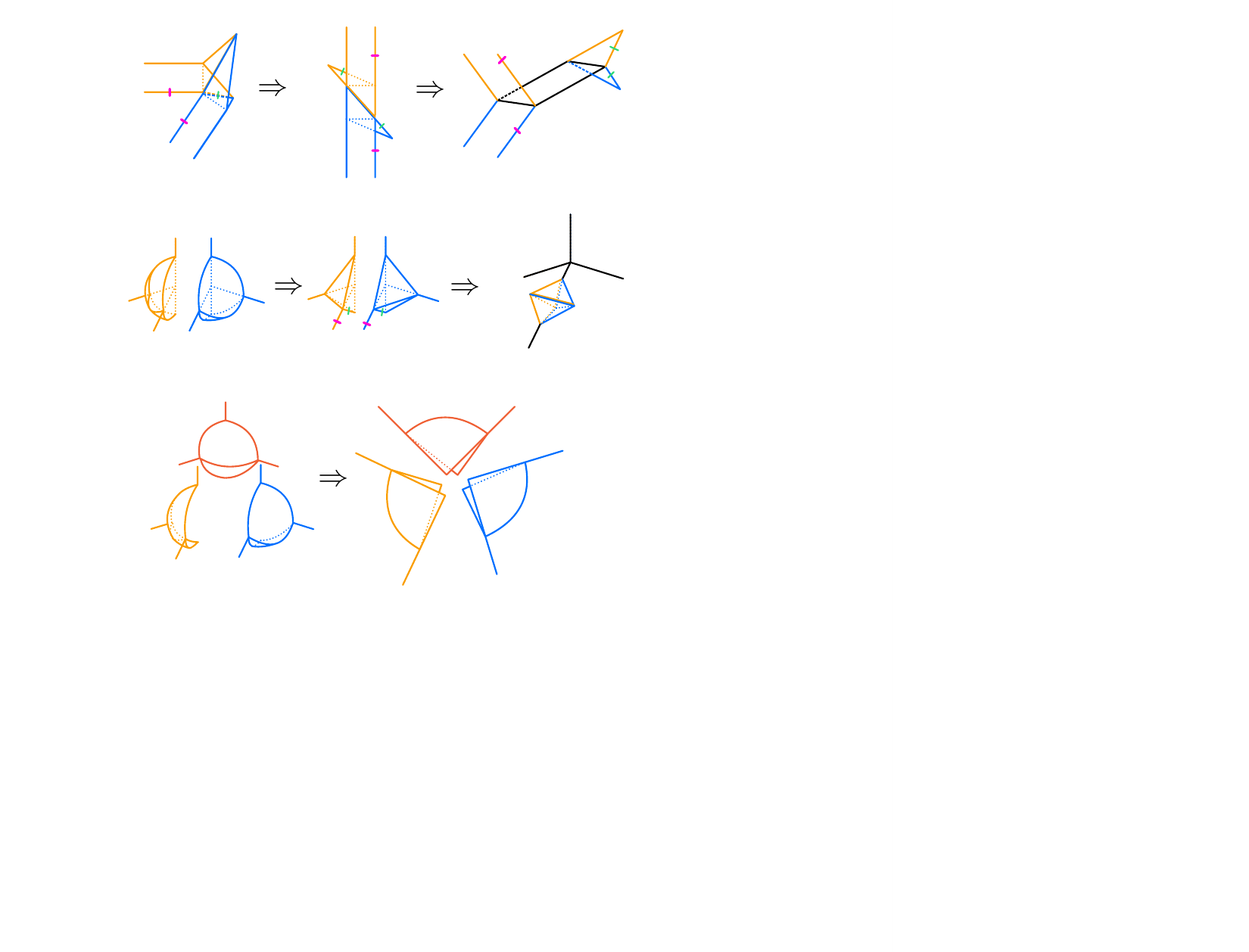}
  \caption{The F-move rearranges the partial-gluing of two pieces of the decorated web surrounding the vertex $v$. The edges that were originally attached prior to the decomposition are marked.} 
  \label{fig:cuttingandgluing}
\end{figure}

Upon rearranging the decorated surfaces, we reattach them according to the original gluing rules. This gives rise to a decorated web as depicted in Fig. \ref{fig:squeezing}. Let us  denote by $S_{v}^{\{F_c\}_c}(\{\tilde\gamma\})$ the resulting decorated web, where $c\in\mathsf{st}^3(v)$ and $\{\tilde\gamma\}$ is the collection of the set of 2-morphisms obtained via eg. \eqref{collar2morphis}. 

\begin{figure}
    \centering
    \includegraphics[width=0.8\linewidth]{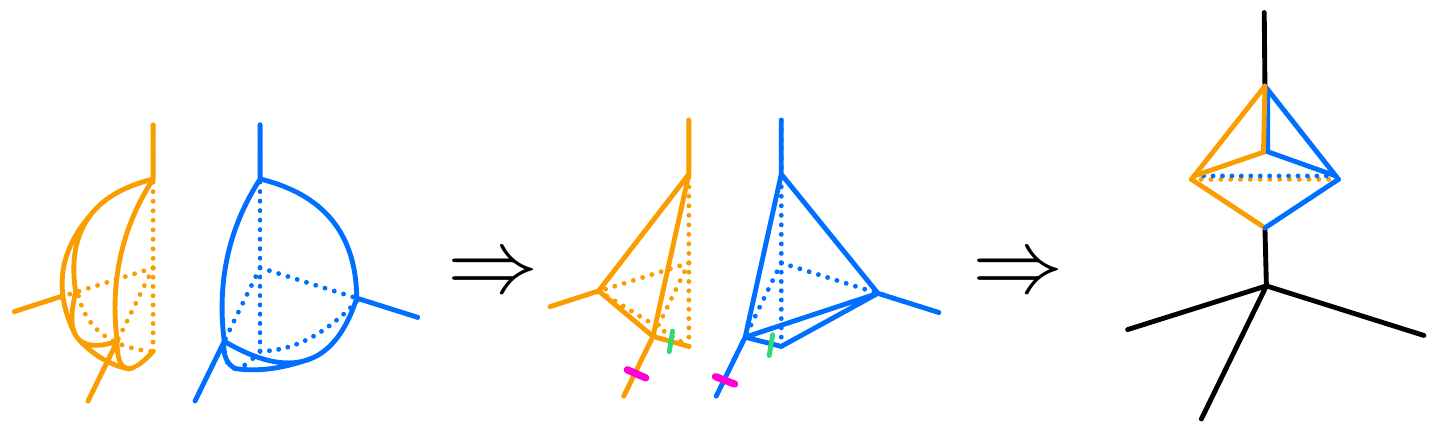}
    \caption{Upon applications of the F-move (Fig. \ref{fig:cuttingandgluing}) and reattaching the piece, we obtain a decorated web along an edge of which the vertex $v$ is attached. This operation in essence "squeezes" the web Fig. \ref{fig:4branes} into a neighbouring unit cell.}
    \label{fig:squeezing}
\end{figure}

\medskip

To summarize, the action of $\prod_{c\in\mathsf{st}^3(v)}B_c^{F_c}$ brings $\Psi$  to a linear combination of new configurations $\Psi'=\Psi_{\{F_{i_k}\}}\in\cH_\text{tot}$ of fully-decorated lattices. The above computations then imply that the coefficients must be proportional to the decorated web,
\begin{equation*}
        \prod_{c\in\mathsf{st}^3(v)}B_c^{F_c}\Psi = \sum_{{\tilde\gamma}} S_{v}^{\{F_c\}_c}(\{\tilde\gamma\})\Psi_{\{F_{i_k}\}} \,.
    \end{equation*}
        Given a complete configuration $\mathsf{F}=\{F_c\}_{c\subset\Gamma}$ of decorated 3-balls, the product over all vertices $v\in\Gamma$ gives rise to a linear map 
        $$\mathcal{B}_\mathsf{F}= \prod_{c\subset\Gamma}B_c^{F_{c}}: \mathcal{H}^\Gamma \to \mathcal{H}^\Gamma$$ 
        which can be evaluated as a contraction of a product of webs associated to each vertex; formally, we write this as
        $$\mathcal{B}_\mathsf{F}\Psi = \sum\left(\prod_vS_{v}^\mathsf{F}\right) \Psi',$$
        where the sum is taken over the data on all intermediate surfaces and their boundaries.

\subsubsection{Evaluation of the bubble $S_v$}

As explained above, the bubble $S_v$ can be explicitly computed. However, here we provide an alternative argument based on the geometry. To start, we observe that the graph $\Gamma_v$ associated to the decorated bubble $S_v$ (top right side of Fig. \ref{fig:squeezing}) is graph isomorphic to \eqref{dual4-simplex}, as shown in Fig. \ref{fig:bubble20j}. \textbf{Theorem \ref{levinwentet}} then tells us that the decorations $(\mathsf{F},\{\tilde\gamma\})$ on this web induce a corresponding decoration $\mathcal{K}=\mathcal{K}_{\mathsf{F},\{\tilde\gamma\}}$ on the dual 4-simplex $\sigma_v$, in accordance with \textbf{Definition \ref{2decorate}}. 

\begin{figure}
    \centering
    \includegraphics[width=0.65\linewidth]{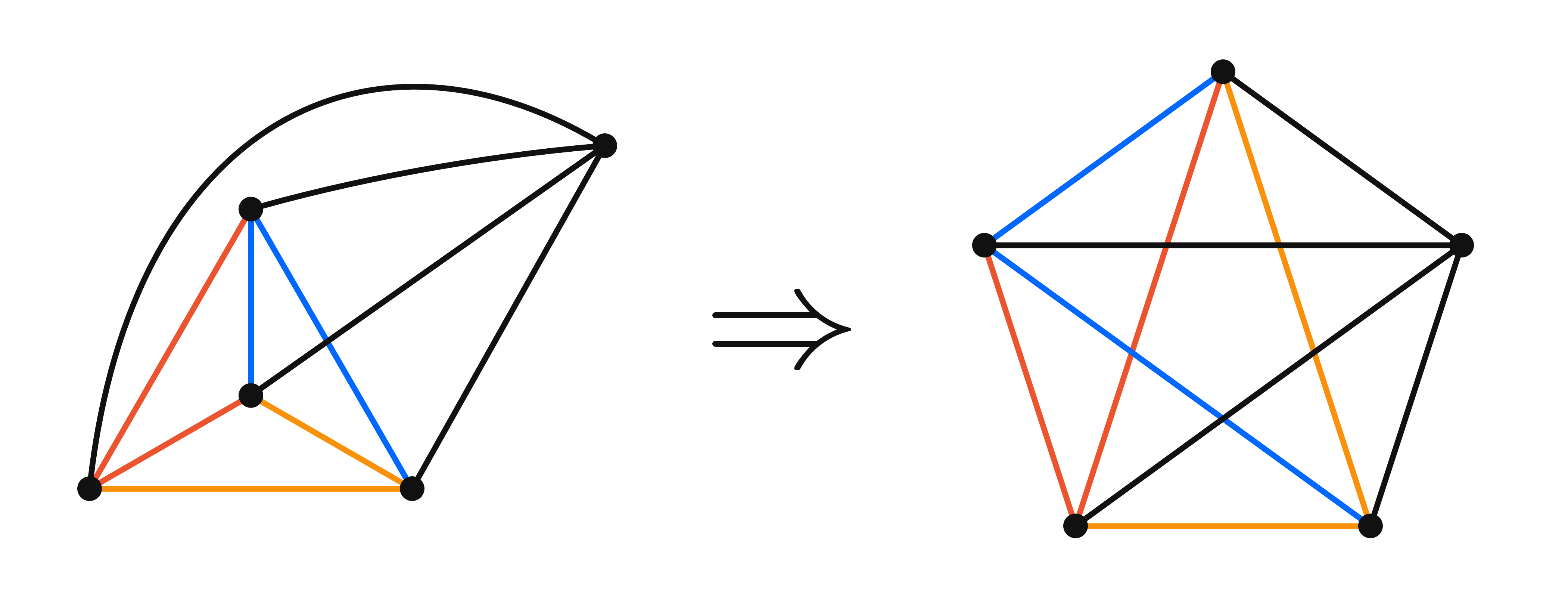}
    \caption{A flat projection of the graph associated to $S_{v}$.}
    \label{fig:bubble20j}
\end{figure}

Formally, this implies that $S_v$ must give the 4-simplex amplitude/20$j$-symbols \eqref{amplitude},
    $$S_{v}^{\mathsf{F}}(\{\tilde\gamma\})=\langle\sigma_v(\mathcal{K}_{\mathsf{F},\{\tilde\gamma\}})\rangle\,,$$
as computed in \S \ref{20jsymbols}. This is reasonable, as given the correspondence $(\mathsf{F},\{\tilde\gamma\})\mapsto\mathcal{K}_{\mathsf{F},\{\tilde\gamma\}}$ is  bijective,  one can then compute the partition function \eqref{partition} on $M^4=M^3\times S^1$ directly from the Hamiltonian $\boldsymbol{H}$,
\begin{align*}
    \operatorname{tr}_{V}e^{-\boldsymbol{H}}&=\operatorname{tr}_{\mathcal{H}^\Gamma}e^{\sum_cB_c}=\sum_{\mathsf{F},{\tilde\gamma}}\prod_{v\in\Gamma}S_v^{\mathsf{F}}(\{\tilde\gamma\})\stackrel{}{=} \sum_{\cal K}\prod_{\sigma\subset T_{M^4}}\langle \sigma( \mathcal K)\rangle=Z_\mathscr{C}(M^4)\,.
\end{align*}
This can be leveraged to carry out the universal construction of the TQFT $Z_\mathscr{C}$.

\section{Outlook}\label{outlook}

It is interesting to work out physically relevant examples of the general explicit construction developed here. As explained in \S \ref{highergauexamples}, for any finite $2$-group $\mathbb{G}$ we can evaluate the $20j$-symbols of the Yetter-Dijkgraaf-Witten theory directly in terms of the higher-categorical data $\mathcal{C}=2\mathrm{Rep}(\mathbb{G})$ \cite{Bartsch:2023wvv}; similarly for the state-sum amplitudes, and membrane-net ground-state Hilbert spaces on simple manifolds. In particular, it would be useful  to identify explicitly the point-, loop-, and membrane-like excitations encoded by the lattice operators. The same can also be done for $\mathbb{G}$ any Lie 2-group, up to the choice of a model for $H_\mathbb{G}$ as described in \textit{Remark \ref{bicategorybgd}}; we emphasize here that, to the best of our knowledge, the computation of the 20$j$-symbols for non-split Lie 2-groups has never been done in the literature.


Given the 3d lattice model $Z_\cC(M^3)$ in \S \ref{latticemodel}, one can  verify  the duality of \textbf{Conjecture \ref{2-duality}} at the level of the Hamiltonian,
\begin{gather*}
    \cH^\Gamma=Z_\cC(M^3,\Gamma) \xrightarrow{\text{EM duality}} Z^\cD(M^3,\Gamma),\qquad \forall~ \text{closed }M^3\,,
\end{gather*}
where $\cC,\cD$ are Morita dual fusion 2-categories and $Z^\cD(M^3,\Gamma)$ is the membrane-net lattice Hilbert space given in, eg., \cite{xi2021latticerealizationgeneralthreedimensional}. This, as well as explicit results mentioned in the previous paragraph, are important for our understanding on $(3+1)$-dimensional
topological phases \cite{Wen:2019}. We will pursue this direction in a followup work.


\medskip

Having a lattice model also provides a concrete physical framework in which the theory of higher-categorical algebra can be probed. By studying the topological defects it hosts, one can explicitly construct the associated tube algebra $A$. Similar task has been undertaken from the dual perspective in \cite{xi2021latticerealizationgeneralthreedimensional}; indeed, one produces the Drinfeld centre $\operatorname{Mod}_\mathsf{2Vect}(A)\simeq Z_1(\cC)\simeq Z_1(\cD)$. This makes $A$ a key algebraic gadget in capturing the duality defect associated to higher-group symmetries. However, in contrast to the dual perspective, our lattice model realizes (in the case $\cC= \operatorname{2Rep}(H)$) the tube algebra $A$ as a model of the categorical double $D(H)$ of the underlying Hopf category $H$ \cite{Huang2026-dr,huang2023tannaka}. This fact is significant, as the canonical quasitriangularity \cite{neuchl1997representation,Chen:2025?} of $D(H)$ directly provides a class of solutions of the Zamolodchikov tetrahedron equations \cite{Zamolodchikov:1980,Kapranov:1994} associated to $H$.

Viewing the $(3+1)$d TQFT as a SymTFT suggests extending the lattice construction to spatial manifolds with physical boundaries and interfaces, providing a microscopic framework for studying $3$d phases with 2-group symmetry $\bbG$ or Morita-dual categorical symmetries. A concrete goal is to derive boundary face, edge, and corner operators compatible with the bulk $20j$ amplitudes, determine which bulk excitations can terminate or condense at each boundary, and identify the fusion and junction degrees of freedom of interfaces between different boundary conditions. With a fixed topological symmetry boundary, varying the physical boundary conditions and interactions would then allow one to investigate symmetry-protected topological phases~\cite{Delcamp:2018wlb}, spontaneously symmetry-breaking phases~\cite{Liu:2024znj,Brito:2026eqi}, and symmetry-enriched topological or gapless phases. Cases with nontrivial 2-group anomalies could be studied through the corresponding anomaly inflow, using the appropriate oriented or spin bordism-based description~\cite{Gu:2026trc}; a spin realization would require a corresponding refinement of the lattice construction. This approach would make the boundary realization of higher categorical symmetries explicit and provide a setting for exploring dualities between these phases through gauging suitable $E_1$-algebras in the symmetry categories.

The local $20j$ amplitudes also suggest an explicit tensor-network representation of the ground-state subspace. A useful next step is to obtain the ground-state tensors from the cylinder state sum, identify their virtual defect operators, and derive the local transformations induced by changes of triangulation. This would provide a computational formulation of ground-state wavefunctions and linked-defect amplitudes, and could make the proposed Morita duality explicit at the level of ground-state projectors and observables. The resulting tensors would also provide a setting for testing the conjectured relation between genuine multipartite entanglement and TQFT partition functions proposed in~\cite{DelZotto:2026fpw}. In the present $(3+1)$d setting, a concrete goal is to construct five-partite replica invariants associated with triangulated four-manifolds and determine whether their genuine entanglement signals reproduce $|Z_{\mathcal C}(M^4)|^2$ with the prescribed sphere normalization. In finite 2-group examples, these entanglement diagnostics could complement comparisons of models with different Postnikov and action classes through suitable loop-braiding processes.

\newpage

\printbibliography

\end{document}